\documentclass[10pt]{article}
\PassOptionsToPackage{labelstoglobalaux}{bibunits}
\makeatletter

\RequirePackage[english]{babel}
\RequirePackage[shortlabels]{enumitem}
\RequirePackage[mathscr]{euscript}
\RequirePackage[margin=1in]{geometry}
\RequirePackage[small]{titlesec}
\RequirePackage[dvipsnames]{xcolor}
\RequirePackage{adjustbox,amsbsy,amsmath,amssymb,amsthm,bibunits,bm,booktabs,commath,chngcntr,dsfont,econometrics,etoolbox,fancyhdr,float,gensymb,graphicx,IEEEtrantools,longtable,marginnote,mathrsfs,mathtools,mdframed,natbib,needspace,parskip,pgf,placeins,setspace,siunitx,subfigure,tabularx,textcomp,threeparttable,tikz}

\RequirePackage[bookmarks=false,%
                colorlinks=true,%
                citecolor=ucladarkblue,%
                linkcolor=pastelred,%
                pdftoolbar=false,%
                pdfmenubar=true]{hyperref}
\RequirePackage{cleveref}

\AddToHook{cmd/appendix/after}{%
  \crefalias{section}{appendix}%
  \crefalias{subsection}{subappendix}%
  \crefalias{subsubsection}{subsubappendix}%
}

\allowdisplaybreaks

\newenvironment{tabnotes}
  {\begin{tablenotes}[flushleft]\footnotesize\item[]\emph{Notes:}\ }
  {\end{tablenotes}}

\newcommand{\floatnotes}[2][\linewidth]{%
  \par\medskip\parbox{#1}{\footnotesize\emph{Notes:}\ #2}}

\newcommand\@shorttitle{}
\newcommand\shorttitle[1]{\renewcommand\@shorttitle{#1}}
\newcommand\@supplementtitle{}
\newcommand\@supplementauthor{}
\newcommand\@supplementdate{}
\newcommand\@supplementand{}
\let\@setupmaketitle\maketitle
\renewcommand{\maketitle}{%
  \global\let\@supplementtitle\@title
  \global\let\@supplementauthor\@author
  \global\let\@supplementdate\@date
  \global\let\@supplementand\and
  \@setupmaketitle
}

\renewcommand{\E}{\mathbb{E}}
\renewcommand{\P}{\mathbb{P}}
\newcommand{\ind}{\mathds{1}}

\DeclareMathOperator*{\argmin}{arg\,min}
\DeclareMathOperator{\Cov}{Cov}
\DeclareMathOperator{\supp}{supp}
\DeclareMathOperator{\sign}{sign}
\DeclareMathOperator{\Var}{Var}

\definecolor{uclablue}{HTML}{2774AE}
\definecolor{ucladarkblue}{RGB}{0,85,135}
\definecolor{pastelred}{HTML}{d62d0e}

\theoremstyle{definition}

\newtheorem*{example*}{Example}

\newtheorem*{remark*}{Remark}

\newcommand{\remarkendsymbol}{\ensuremath{\blacksquare}}
\newcommand{\remarkend}{%
  \ifmmode\else
    \unskip\nobreak\hfil\penalty50\hskip2em\hbox{}\nobreak\hfill
  \fi
  \remarkendsymbol}
\AtEndEnvironment{remark}{\remarkend}
\AtEndEnvironment{remark*}{\remarkend}

\theoremstyle{plain}
\newtheorem{assumption}{Assumption}
\crefname{assumption}{Assumption}{Assumptions}
\Crefname{assumption}{Assumption}{Assumptions}
\newtheorem*{assumption*}{Assumption}

\newtheorem*{corollary*}{Corollary}

\newtheorem*{definition*}{Definition}
\newtheorem{lemma}{Lemma}
\newtheorem*{lemma*}{Lemma}
\newtheorem{prop}{Proposition}
\newtheorem*{prop*}{Proposition}
\newtheorem{theorem}{Theorem}
\newtheorem*{theorem*}{Theorem}

\newif\if@supplementbibunit
\newcommand{\resetsupplementcounters}{%
  \setcounter{section}{0}%
  \setcounter{subsection}{0}%
  \setcounter{subsubsection}{0}%
  \setcounter{paragraph}{0}%
  \setcounter{subparagraph}{0}%
  \setcounter{page}{1}%
  \setcounter{equation}{0}%
  \setcounter{figure}{0}%
  \setcounter{table}{0}%
  \setcounter{footnote}{0}%
  \setcounter{assumption}{0}%
  \setcounter{corollary}{0}%
  \setcounter{definition}{0}%
  \setcounter{example}{0}%
  \setcounter{lemma}{0}%
  \setcounter{prop}{0}%
  \setcounter{remark}{0}%
  \setcounter{theorem}{0}%
}
\newcommand{\supplement}{%
  \clearpage
  \appendix
  \counterwithout{equation}{section}%
  \counterwithout{corollary}{section}%
  \counterwithout{example}{section}%
  \counterwithout{lemma}{section}%
  \counterwithout{prop}{section}%
  \resetsupplementcounters
  \renewcommand{\thesection}{S\arabic{section}}%
  \renewcommand{\thepage}{S\arabic{page}}%
  \renewcommand{\theequation}{S\arabic{equation}}%
  \renewcommand{\thefigure}{S\arabic{figure}}%
  \renewcommand{\thetable}{S\arabic{table}}%
  \renewcommand{\theassumption}{S\arabic{assumption}}%
  \renewcommand{\thecorollary}{S\arabic{corollary}}%
  \renewcommand{\thedefinition}{S\arabic{definition}}%
  \renewcommand{\theexample}{S\arabic{example}}%
  \renewcommand{\thelemma}{S\arabic{lemma}}%
  \renewcommand{\theprop}{S\arabic{prop}}%
  \renewcommand{\theremark}{S\arabic{remark}}%
  \renewcommand{\thetheorem}{S\arabic{theorem}}%
  \@ifundefined{theHpage}{}{\renewcommand{\theHpage}{supp.\arabic{page}}}%
  \@ifundefined{theHsection}{}{\renewcommand{\theHsection}{supp.\arabic{section}}}%
  \@ifundefined{theHsubsection}{}{\renewcommand{\theHsubsection}{supp.\arabic{section}.\arabic{subsection}}}%
  \@ifundefined{theHsubsubsection}{}{\renewcommand{\theHsubsubsection}{supp.\arabic{section}.\arabic{subsection}.\arabic{subsubsection}}}%
  \@ifundefined{theHequation}{}{\renewcommand{\theHequation}{supp.equation.\arabic{equation}}}%
  \@ifundefined{theHfigure}{}{\renewcommand{\theHfigure}{supp.figure.\arabic{figure}}}%
  \@ifundefined{theHtable}{}{\renewcommand{\theHtable}{supp.table.\arabic{table}}}%
  \@ifundefined{theHassumption}{}{\renewcommand{\theHassumption}{supp.assumption.\arabic{assumption}}}%
  \@ifundefined{theHcorollary}{}{\renewcommand{\theHcorollary}{supp.corollary.\arabic{corollary}}}%
  \@ifundefined{theHdefinition}{}{\renewcommand{\theHdefinition}{supp.definition.\arabic{definition}}}%
  \@ifundefined{theHexample}{}{\renewcommand{\theHexample}{supp.example.\arabic{example}}}%
  \@ifundefined{theHlemma}{}{\renewcommand{\theHlemma}{supp.lemma.\arabic{lemma}}}%
  \@ifundefined{theHprop}{}{\renewcommand{\theHprop}{supp.proposition.\arabic{prop}}}%
  \@ifundefined{theHremark}{}{\renewcommand{\theHremark}{supp.remark.\arabic{remark}}}%
  \@ifundefined{theHtheorem}{}{\renewcommand{\theHtheorem}{supp.theorem.\arabic{theorem}}}%
  \gdef\@extra@b@citeb{.supp}%
  \gdef\@extra@binfo{.supp}%
  \@supplementbibunittrue
  \begin{bibunit}%
}
\newcommand{\supplementtitlepage}[1][Supplemental Appendix to ``\@supplementtitle'']{%
  \begingroup
  \renewcommand\thefootnote{\@fnsymbol\c@footnote}%
  \let\thanks\@gobble
  \let\and\@supplementand
  \thispagestyle{plain}%
  \null
  \vskip 2em%
  \begin{center}
    {\LARGE #1\par}%
    \vskip 1.5em%
    {\large
      \lineskip .5em%
      \begin{tabular}[t]{c}%
        \@supplementauthor
      \end{tabular}\par}%
    \vskip 1em%
    {\large \@supplementdate}%
  \end{center}%
  \endgroup
  \par\bigskip
}
\newcommand{\supplementbibliography}[1]{%
  \clearpage
  \putbib[#1]%
  \end{bibunit}%
  \gdef\@extra@b@citeb{}%
  \gdef\@extra@binfo{}%
  \@supplementbibunitfalse
}
\AtEndDocument{\if@supplementbibunit\end{bibunit}\fi}

\defaultbibliographystyle{chicago}

\newcommand{\En}{\E_n}
\newcommand{\Ebar}{\overline{\E}}
\DeclareMathOperator{\rank}{rank}
\DeclareMathOperator{\diver}{div}
\makeatother

\title{Choosing the Dictionary and Penalty for IV-LASSO\thanks{An \texttt{R} package implementing IV-LASSO with the first-stage implementation guided by this paper can be found on \href{https://github.com/mnavjeev/ivamse}{GitHub}. This package, along with the proofs of several technical lemmas in this paper, were developed with the help of Claude Fable. We take responsibility for all errors.}}
\author{Yukun Ma\\ University of Rochester\\ yma69@ur.rochester.edu
\and
Manu Navjeevan\\ Texas A\&M University\\ mnavjeevan@tamu.edu
\and
Bohdan Salahub\\ University of California, Los Angeles\\ bsalahub@g.ucla.edu}
\date{\today}

\begin{document}

\maketitle

\begin{abstract}
Estimating the first stage of an instrumental variables (IV) model with the least
absolute shrinkage and selection operator (LASSO)
requires choosing a dictionary of technical instruments and a penalty level.
First-order asymptotic theory offers no guidance on these choices, as any consistent implementation yields a structural parameter estimator with the same limiting distribution. In finite samples, however, these choices can have a substantial impact on the resulting structural parameter estimate.  Working in a model with a single endogenous regressor and homoskedastic Gaussian errors, we use first- and second-order Stein identities to derive the approximate mean squared error (AMSE) of the instrumental-variables LASSO (IV-LASSO) estimator, which can be consistently estimated and used to rank a prespecified list of dictionary--penalty candidates.  The AMSE reveals a bias--variance trade-off: more complex first-stage fits better approximate the conditional mean of the endogenous variable but are also more correlated with the structural errors, with complexity measured by the degrees of freedom of the LASSO fit.  The weight on this bias rises with the endogeneity of the regressor, a quantity that neither plug-in nor cross-validation penalty rules take into account.
Despite the AMSE being derived in a Gaussian model, penalty selection by minimizing the feasible AMSE criterion delivers up to a one-third lower mean squared error compared to cross-validation and plug-in penalty rules in Gaussian and non-Gaussian simulation designs calibrated to the data of \citet{gilchrist-glassberg-2016}.
\end{abstract}

\begin{center}
\textsc{Keywords:} Instrumental Variables, LASSO, Model Selection,
Second-Order Asymptotics\\[1mm]
\textsc{JEL Codes:} C26, C36, C52
\end{center}

\section{Introduction}
\label{sec:introduction}

When a large set of technical instruments is available, LASSO makes estimation of an
IV first stage feasible but leaves applied researchers to choose how heavily to penalize the fit and which dictionary of technical instruments to use.  Under standard sparse-first-stage conditions, \citet{BCCH-2012} show that all sufficiently accurate LASSO fits yield IV estimators with the same limiting distribution.  Standard asymptotic theory
therefore does not rank admissible implementations, even though their
finite-sample performance can differ.  \Cref{fig:bcch-paths-intro} illustrates
this issue in the eminent-domain application of \citet{BCCH-2012}.  Across
three nested first-stage dictionaries, the IV estimate changes materially
along the penalty path.  Moreover, different methods of selecting the first-stage penalty parameter can change whether the 95\% confidence interval for the resulting structural parameter estimate contains zero.
For example, for the Case-Shiller home price index outcome, using cross-validation \citep{ChetverikovLiaoChernozhukov-2021} yields a significant result on a restricted instrument dictionary while using Stein's Unbiased Risk Estimate (SURE, \citet{TibshiraniTaylor-2012}) does not.
Given this variation, how should an applied researcher choose among possible first-stage implementations?

\begin{figure}[!htbp]
\centering
\includegraphics[width=\textwidth]{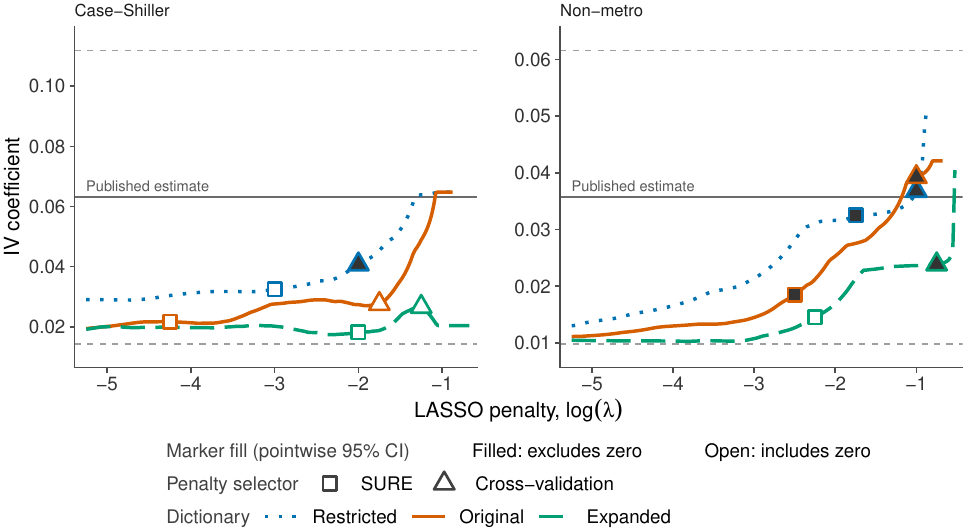}
\caption{IV Estimates and Conventional Significance across First-Stage
Implementations}
\label{fig:bcch-paths-intro}
\floatnotes{Lines trace IV-LASSO estimates over LASSO penalties within three nested dictionaries. Original is the published dictionary; Restricted removes cross-products, and Expanded adds all nondegenerate second-order
products. Filled markers denote pointwise 95\% heteroskedasticity robust confidence intervals that exclude zero.  The solid gray line and dashed bounds show the published IV-post-LASSO estimate and confidence interval.  The panels report the two outcomes for which that interval excludes zero (183 Case--Shiller and 110 non-metro circuit-year observations).}
\end{figure}

Existing first-stage tuning rules are not designed to maximize the precision of resulting structural parameter estimates. The calibrated penalty rule of \citet{BCCH-2012} and the bootstrap-after-cross-validation method of \citet{ChetverikovSorensen-2025} are designed to control first-stage estimation error, while
SURE and ordinary cross-validation target first-stage prediction
\citep{TibshiraniTaylor-2012,ChetverikovLiaoChernozhukov-2021}.  These
objectives do not account for how first-stage estimation noise enters the
structural estimator.  A LASSO fit with a lighter penalty or richer dictionary may better approximate the true first-stage, but a more complex fit will also be more correlated with the endogenous variable.  When the first-stage and structural
errors are correlated, this in turn introduces bias into the structural estimate.  The optimal first-stage implementation therefore depends not only on goodness of first-stage fit and instrument strength, but also on the level of endogeneity.

This paper develops a principled method for choosing the first-stage implementation by developing second-order asymptotic theory for the full-sample IV-LASSO estimator. We focus on the leading case of a single endogenous regressor and work in a homoskedastic Gaussian model. Following the higher-order tradition of \citet{Nagar-1959} and \citet{DonaldNewey-2001}, we prespecify a finite list of first-stage LASSO dictionary--penalty candidates and rank them by the approximate mean squared error (AMSE) of their resulting IV-LASSO estimators.  The candidate specific part of the AMSE can be decomposed into a first-stage approximation cost and a complexity cost, with complexity measured by a collinearity-robust analogue of the number of instruments selected by the LASSO. The size of this complexity cost rises with the squared covariance
between the first-stage and structural errors, reflecting the discussion above.  Our characterization yields a
feasible criterion that can be constructed from each candidate's fitted first-stage. Minimizing this criterion selects an estimator whose  AMSE is as small as that of the best candidate in the list, up to vanishing relative error.

Deriving the criterion is nonstandard.  Unlike ordinary least-squares, the LASSO coefficient estimate does not yield a closed-form expression and the LASSO fit is not a fully differentiable function of the data. These features prevent standard tools from being applied to analyze higher-order terms in the asymptotic expansion. We overcome these obstacles by leveraging a geometric characterization of the LASSO-fit from \citet{TibshiraniTaylor-2012} along with newly developed second-order Stein identities from \citet{BellecZhang-2021} to examine otherwise problematic second-order terms. In addition to the distributional restrictions stated above, the result requires standard sparsity conditions to ensure convergence of the first-stage estimates as well as a partial beta-min condition which allows for consistent estimation of the complexity of the LASSO fits. This partial beta-min condition requires that a growing core of first-stage coefficients must be sufficiently strong and is notably weaker than previous beta-min conditions in the literature on model selection via the LASSO \citep{ZhaoYu-2006,Wainwright-2009,BelloniChernozhukov-2013}.

We evaluate the performance of the proposed criterion through a simulation study calibrated to the data of \citet{gilchrist-glassberg-2016}, who use weather conditions to instrument for opening-weekend movie ticket sales. We hold the original instruments fixed and vary the strength of identification and degree of endogeneity. 
At low levels of endogeneity, cross-validation and the proposed criterion perform comparably, in line with the theory developed in this paper. As the level of endogeneity rises, however, cross-validation fails to account for the bias passed on to the second-stage estimator. In contrast, our proposed criterion adjusts for the endogeneity and selects first-stages with lower levels of complexity, resulting in structural parameter estimates with up to a third lower mean-squared error than their cross-validation counterparts in designs with the highest levels of endogeneity. Interestingly, the relative performance of the criterion is essentially the same in both Gaussian and non-Gaussian designs.  This suggests that, although our theory is developed in a Gaussian model,  the AMSE of the IV-LASSO estimator in a general model may be well approximated by its AMSE in the Gaussian model.

We contribute to a growing literature on the use of high-dimensional or ``machine-learning'' methods in econometric analysis \citep{BelloniChernozhukovHansen-2014,CCDDHNR-2018,WagerAthey-2018,FarrellLiangMisra-2021} as well as to an established literature analyzing higher-order properties of econometric estimators \citep{Sargan1976,Rothenberg1984,NeweySmith2004,HahnHausmanKuersteiner2004}. A common feature of double/debiased machine learning estimators is that all suitably convergent nuisance parameter estimation procedures yield the same asymptotic distribution for the resulting target parameter estimator. Existing asymptotic theory therefore provides little guidance on how to choose among implementations satisfying certain high-level conditions. To our knowledge, we provide the first second-order mean squared error approximation for an econometric estimator with a high-dimensional LASSO first stage.\footnote{\citet{Velez-2026} develops second-order approximations for double/debiased machine learning estimators to study the choice of the number of cross-fitting folds. His higher-order analysis requires the first-stage estimators to admit a stochastic linear expansion. Our analysis instead directly accounts for the shrinkage and variable selection induced by the LASSO.} We show that the higher-order terms distinguishing alternative first-stage implementations can be consistently estimated and use them to select the first-stage dictionary and penalty. Our results thus extend the literature on selecting and regularizing instruments according to the approximate mean squared error of the structural parameter estimator \citep{DonaldNewey-2001,KuersteinerOkui-2010,Okui-2011,Carrasco-2012} to high-dimensional LASSO first stages.

The rest of this paper proceeds as follows.  \Cref{sec:framework} sets
up the model, the candidate first stages, and the effective dimension.
\Cref{sec:results} states the conditions and the two main results.
\Cref{sec:empirical-application} reports the calibrated simulation
study.  All proofs are deferred to the appendix.

\emph{Notation.}  For \(m\in\mathbb N\), let \([m]:=\{1,\ldots,m\}\).  For
an array \(\{a_i\}_{i\in[n]}\), define
\begin{equation*}
    \En[a_i]:=\frac1n\sum_{i\in[n]}a_i,
    \qquad
    \Ebar[a_i]:=\En[\E[a_i]].
\end{equation*}
\section{Model and Setup}
\label{sec:framework}

To simplify exposition, we present the analysis assuming that any exogenous regressors have already been partialled out of both the outcome and the endogenous regressor.\footnote{An appendix providing a formal treatment of the model with controls is available on request.} Our analysis assumes there is only a single endogenous variable and does not cover the general case where \(x_i\) is vector-valued. Along with the first-stage, the IV model can be written as a system of simultaneous equations: 
\begin{equation}
    \label{eq:model}
    \begin{split}
        y_i &= \beta x_i + \varepsilon_i \\ 
        x_i &= \Pi_i + v_i
    \end{split}
\end{equation}
We assume the researcher observes \(n\) independent observations of the outcome \(y_i \in \SR\), the endogenous variable \(x_i \in \SR\), and instruments \(z_i \in \SR^{p}\) but neither the structural error \(\varepsilon_i \in \SR\) nor the first-stage errors \(v_i \in \SR\). The structural error \(\varepsilon_i \in \SR\) is assumed to be conditional-mean independent of the instruments, \(\E[\varepsilon_i|z_i] = 0\) and we let \(\Pi_i = \E[x_i | z_i]\) denote the conditional expectation of the endogenous variable given the instruments so that \(\E[v_i | z_i] = 0\). We collect the random variables in the vectors
\begin{align*}
    x &\coloneqq (x_1,\dots,x_n)' \in \SR^n &
    \Pi &\coloneqq (\Pi_1,\dots,\Pi_n)' \in \SR^n \\
    \varepsilon &\coloneqq (\varepsilon_1,\dots,\varepsilon_n)' \in \SR^n &
    v &\coloneqq (v_1,\dots,v_n)'\in \SR^n
\end{align*}
Throughout the theoretical analysis, we condition on the realized instruments \(\{z_i\}_{i=1}^n\) and treat them as fixed. All expectations and probability statements below are therefore understood as conditional on this realization, i.e, \(\Pi_i = \E[x_i]\) and \(\E[\varepsilon_i] = \E[v_i] = 0\). We will assume strong identification and conditionally homoskedastic Gaussian errors. Formally,
\begin{assumption}[Conditional Gaussian Sampling]
\label{assm:gaussian} Assume
\begin{enumerate}[(i)] 
    \item The errors \(\{(\varepsilon_i, v_i)\}_{i=1}^n\) are independent of the instruments and independently and identically distributed according to 
    \begin{equation}
        \label{eq:gaussian-sampling}
            \begin{pmatrix}\varepsilon_i\\v_i\end{pmatrix}
        \sim N(0,\Omega), \qquad 
        \Omega :=\begin{pmatrix} \sigma_\varepsilon^2&\sigma_{\varepsilon v}\\ \sigma_{\varepsilon v}&\sigma_v^2
       \end{pmatrix},
    \end{equation}
    where \(\Omega\) is a fixed nonsingular matrix.
    \item For constants \(0<\underline H<\overline H<\infty\),
    \begin{equation}
        \label{eq:strength-endogeneity}
        \underline H\le H:=\En[\Pi_i^2]\le\overline H, \qquad \sigma_{\varepsilon v}\ne0.
    \end{equation}
\end{enumerate}
\end{assumption}
\Cref{assm:gaussian}(i) encodes the assumption of conditionally homoskedastic Gaussian errors while the first part of \Cref{assm:gaussian}(ii) encodes the strong identification assumption. The second part of \Cref{assm:gaussian}(ii) is a mild technical condition that assumes that there is an endogeneity problem to begin with, i.e \(\beta\) cannot be recovered from a simple regression of the outcome on the endogenous variable.

Under conditional homoskedasticity, the conditional mean \(\Pi_i\) is the optimal instrument for estimating \(\beta\): among IV estimators based on a nonrandom instrument, using \(\Pi_i\) itself minimizes the variance of the leading term.\footnote{For nonrandom \(g\in\SR^n\) with \(\En[g_i^2]\) bounded and \(\En[g_i\Pi_i]\) bounded away from zero, the IV estimator \(\widehat\beta_g:=\En[g_iy_i]/\En[g_ix_i]\) satisfies \(\sqrt n(\widehat\beta_g-\beta) =\sqrt n\,\En[g_i\varepsilon_i]/\En[g_i\Pi_i]+o_p(1)\), and the leading term is distributed exactly as \(N(0,\sigma_\varepsilon^2\En[g_i^2]/\En[g_i\Pi_i]^2)\).  By Cauchy--Schwarz this variance is at least \(\sigma_\varepsilon^2/H\), with equality if and only if \(g\) is proportional to \(\Pi\).  When the instruments are instead modeled as random draws, \(\sigma_\varepsilon^2/\E[\Pi_i^2]\) is the semiparametric efficiency bound for \(\beta\) under conditional homoskedasticity \citep{Chamberlain-1987,Newey-1990}; \(\sigma_\varepsilon^2/H\)
is its fixed-design counterpart.}  Since \(\Pi_i\) is unknown, feasible estimation replaces it with a first-stage estimate \(\widehat\Pi_i\).  First-order asymptotics provide no guidance on this choice.  Any estimator \(\widehat\beta\) based on a sufficiently well-estimated first-stage (\Cref{sec:results} states sufficient conditions) admits the same expansion,
\begin{equation}
    \label{eq:first-order}
    \sqrt n(\widehat\beta-\beta)
    =\frac{\sqrt n\,\En[\Pi_i\varepsilon_i]}{H}+o_p(1),
\end{equation}
whose leading term does not involve the first-stage estimate and, conditional on the
instruments, is distributed exactly as \(N(0,\sigma_\varepsilon^2/H)\) at every sample size. Comparing first-stage estimators thus requires analyzing terms hidden in the remainder. We provide a feasible criterion for this comparison when the first stage is estimated by the LASSO \citep{Tibshirani-1996}, as in \citet{BCCH-2012}, a leading approach when the instrument set is high-dimensional or when the researcher wants to use a high-dimensional basis of technical instruments.

As will be described below, LASSO estimation of the first stage is characterized by a choice of dictionary, i.e., the choice of instrument basis, and a penalty parameter.  The criterion developed in \Cref{sec:results} ranks these candidate dictionary--penalty pairs by analyzing second-order terms in a mean-squared-error expansion of the resulting second-stage estimator.  Our analysis is closely related to that of \citet{DonaldNewey-2001}, who provide a similar criterion when the first stage is estimated by ordinary least squares (OLS) on a chosen instrument basis.

\subsection{Candidate First Stages and the IV Estimator}

We assume that the researcher is interested in selecting a first-stage specification \(c\) from a fixed number \(J\) of candidates. Each candidate \(c\in[J]\) consists of a dictionary of technical instruments and a penalty level \(\lambda_c>0\). For each observation, candidate \(c\)'s dictionary maps the observed instruments \(z_i\) into a vector of technical instruments \(z_{c,i}:=(z_{c,i1},\dots,z_{c,ip_c})'\in\SR^{p_c}\), whose entries are fixed functions of \(z_i\), such as powers, interactions, or indicators. The dictionary dimension \(p_c\) may exceed the sample size \(n\). Although \(\lambda_c\) is nonrandom, both the dictionary and the penalty level may depend on \(n\). We collect the resulting technical instruments in the matrix
\begin{equation*}
    Z_c:=(z_{c,1},\dots,z_{c,n})'\in\SR^{n\times p_c},
    \qquad
    Z_{c,B}:=(Z_c)_{\cdot,B}
    \quad\text{for }B\subseteq[p_c],
\end{equation*}
Thus, the \(i\)th row of \(Z_c\) is \(z_{c,i}'\), while
\(Z_{c,B}\in\SR^{n\times |B|}\) contains the columns indexed by \(B\).

For candidate \(c\) and an arbitrary response vector
\(w\coloneqq(w_1,\dots,w_n)\in\SR^n\), define the LASSO coefficient
estimate and fitted vector \citep{Tibshirani-1996} by
\begin{equation}
    \widehat\pi_c(w)
    \in\argmin_{\pi\in\mathbb R^{p_c}}
    \left\{
       \frac12\En[(w_i-z_{c,i}'\pi)^2]
       +\lambda_c\lVert\pi\rVert_1
    \right\},
    \qquad
    \mu_c(w):=Z_c\widehat\pi_c(w).
    \label{eq:lasso}
\end{equation}
When the columns of \(Z_c\) are linearly dependent, the coefficient
minimizer need not be unique, but the fitted vector \(\mu_c(w)\) remains
unique \citep{TibshiraniTaylor-2012,Tibshirani-2013}. Applying this
procedure to the endogenous variable \(x\), define
\(\widehat\pi_c:=\widehat\pi_c(x)\),
\(\widehat\Pi_c:=\mu_c(x)\), and
\(\widehat\Pi_{c,i}:=z_{c,i}'\widehat\pi_c\). The corresponding
second-stage IV estimator is
\begin{equation}
    \label{eq:iv-estimator}
    \widehat\beta_c
    \coloneqq
    \begin{cases}
       \displaystyle
       \frac{\En[\widehat\Pi_{c,i}y_i]}
            {\En[\widehat\Pi_{c,i}x_i]},
       &\En[\widehat\Pi_{c,i}^2]>0,\\[2mm]
       0,
       &\En[\widehat\Pi_{c,i}^2]=0.
    \end{cases}
\end{equation}

A natural criterion for comparing the candidate estimators
\(\widehat\beta_c\) is their scaled mean squared error (MSE), \(\E[n(\widehat\beta_c-\beta)^2].\)
The exact MSE of an IV estimator, however, need not exist in finite samples.
To see this, consider the infeasible estimator that uses the optimal instrument
\(\Pi_i\) in place of \(\widehat\Pi_{c,i}\) in \eqref{eq:iv-estimator}. This
estimator satisfies
\begin{equation}
    \label{eq:infeasible-decomp}
    \widehat\beta-\beta
    =\frac{\En[\Pi_i\varepsilon_i]}{\En[\Pi_ix_i]}.
\end{equation}
Under \Cref{assm:gaussian}, the numerator and denominator on the right-hand
side are jointly Gaussian. Because \(\Omega\) is nonsingular, the numerator
is not an exact linear function of the denominator, and their ratio does not have a
finite second moment. Thus, even the infeasible estimator based on the optimal
instrument has no exact MSE. More generally, when the exact MSE exists, it
may be neither tractable nor consistently estimable.

Given this restriction, we instead extend the higher-order approach of
\citet{DonaldNewey-2001} to the present LASSO setting. Under suitable
conditions, we show that the scaled squared error of each estimator \(\widehat\beta_c\)
admits the decomposition
\begin{equation}
    \label{eq:decomp}
    n (\widehat\beta_c - \beta)^2 = C_0 + Q_c + r_c.
\end{equation}
Here, \(C_0\) is common to all candidates and captures the component of
squared error corresponding to the leading term in \eqref{eq:first-order}.
By contrast, \(Q_c\) captures the second-order contribution of candidate
\(c\)'s first-stage procedure, while \(r_c\) is asymptotically negligible in
the sense that
\(
    \max_{c\in[J]}|r_c|/\E[Q_c]\to_p 0.
\)

We define \(\E[C_0+Q_c]\) as the approximate mean squared error (AMSE) of
\(\sqrt n(\widehat\beta_c-\beta)\). For the infeasible estimator based on the
optimal instrument, the expectation of the candidate-specific component is
zero. Hence, \(\bar Q_c:=\E[Q_c]\) measures the excess AMSE of candidate
\(c\) relative to that benchmark. Because \(C_0\) is common across candidates, ranking candidates by AMSE is
equivalent to ranking them by \(\bar Q_c\). In \Cref{sec:results}, we derive a
closed-form asymptotic approximation to \(\bar Q_c\), construct a consistent
feasible counterpart, and use it to select among the first-order-equivalent
candidate estimators.

\subsection{The Effective Dimension}
\label{sec:effective-dimension}

A key element of the criterion in \Cref{sec:results} is a measure of the complexity, or ``degrees of freedom,'' of each first-stage estimate. When the first stage is OLS on a linearly independent set of instruments, complexity is simply the number of instruments used. For the LASSO, however, the number of selected instruments is not always well defined. When dictionary columns are collinear, the coefficient solution need not be unique, and different solutions may have different supports, even though the fitted values and the residual are unique.

Let \(u_c(w):=w-\mu_c(w)\) denote the LASSO residual and define the equicorrelation set by
\begin{equation}
    E_c(w)
    :=\left\{
       j\in[p_c]:
       \left|\En[z_{c,ij}u_{c,i}(w)]\right|=\lambda_c
     \right\},
    \label{eq:equicorrelation}
\end{equation}
This is the set of dictionary columns for which the LASSO Karush--Kuhn--Tucker (KKT) inequality binds. Because the residual is unique, \(E_c(w)\) does not depend on which coefficient solution is used, and it contains the support of every solution.
We define the effective dimension of the first-stage fit as
\begin{equation}
    d_c:=\rank(Z_{c,E_c(x)}).
    \label{eq:realized-rank}
\end{equation}
The quantity \(d_c\) is observable and invariant to the choice of coefficient solution. When the LASSO solution is unique, \(d_c\) agrees with the number of selected instruments almost surely under our sampling assumption.\footnote{A sufficient condition for uniqueness is that the dictionary columns are in general position, which holds with probability one when the dictionary matrix has a joint continuous distribution \citep{Tibshirani-2013}.} As shown by \citet{TibshiraniTaylor-2012}, the degrees of freedom of the LASSO fit are given by \(\E[d_c]\).
\section{Main Results}
\label{sec:results}

After stating the conditions required for our analysis, this section formally establishes the approximate mean-squared error decomposition in \eqref{eq:decomp} and provides a closed-form asymptotic approximation to the expectation of its candidate-specific term, which serves as an infeasible selection criterion (\Cref{thm:main}).  We then show that this infeasible criterion, and thus the expectation of the candidate-specific term itself, can be consistently estimated and use the resulting feasible criterion to select the optimal first-stage implementation (\Cref{thm:feasible}). Throughout, a candidate \(c\in[J]\) is a dictionary--penalty pair as described in \Cref{sec:framework}, and the number of candidates \(J\) is treated as fixed.

\subsection{Conditions on the Candidate First Stages}
\label{sec:primitive}

Along with the Gaussian sampling and strong identification conditions in \Cref{assm:gaussian}, our analysis proceeds under two additional sets of high-level assumptions. The first ensures that each candidate LASSO estimation procedure described in \eqref{eq:lasso} consistently estimates the first stage and thus that each candidate IV estimator in \eqref{eq:iv-estimator} is first-order equivalent. This set of high-level assumptions is standard in the literature on \(\ell_1\)-penalized estimation  \citep{BickelRitovTsybakov-2009,VandeGeerBuhlmann-2009,BCCH-2012}. The second is a partial beta-min condition used to establish concentration of the LASSO effective dimension. It is similar in spirit to the growing-basis condition in \citet{DonaldNewey-2001} but does not require exact support recovery by the LASSO.

To present the assumptions, we begin by introducing the notion of an approximately sparse representation. We say that the first stage \(\Pi_i\) admits an approximately sparse representation in the candidate dictionary \(Z_c\) if there is a nonrandom coefficient vector \(\pi_c^0\in\SR^{p_c}\) such that
\begin{equation}
    \Pi_i=z_{c,i}'\pi_c^0+\xi_{c,i},
    \qquad
    T_c:=\supp(\pi_c^0),
    \qquad
    s_c:=|T_c|,
    \label{eq:sparse-approximation}
\end{equation}
where \(\supp(b):=\{j:b_j\ne0\}\) and \(s_c\ge1\). The rate restrictions on \(s_c\) and the approximation error \(\xi_{c,i}\) are stated below. As in \citet{BCCH-2012}, approximate sparsity can be interpreted either as an assumption that only a few of the instruments in the dictionary are important for explaining variation in the endogenous variable, or as an assumption that the conditional mean \(\Pi_i\) can be accurately approximated using only a small number of dictionary terms.

For a vector \(\delta\) and an index set \(T\), write \(\delta_T:=(\delta_j)_{j\in T}\)
and \(\lVert\delta\rVert_0:=|\supp(\delta)|\).  Fix a constant
\(c_\lambda>1\), common to all candidates, and set
\[
    L_\lambda:=\frac{c_\lambda+1}{c_\lambda-1}.
\]
Following
\citet{BCCH-2012}, define the restricted eigenvalue \citep{BickelRitovTsybakov-2009,VandeGeerBuhlmann-2009}
\begin{equation}
    \kappa_c
    :=\min_{\substack{\delta\ne0\\
             \lVert\delta_{T_c^c}\rVert_1
             \le L_\lambda\lVert\delta_{T_c}\rVert_1}}
       \frac{\sqrt{s_c\En[(z_{c,i}'\delta)^2]}}
            {\lVert\delta_{T_c}\rVert_1},
    \label{eq:restricted-eigenvalue}
\end{equation}
and the upper and lower \(m\)-sparse eigenvalues
\begin{equation}
    \phi_{\max,c}(m)
    :=\max_{\substack{\delta\ne0\\
                     \lVert\delta\rVert_0\le m}}
       \frac{\En[(z_{c,i}'\delta)^2]}{\lVert\delta\rVert_2^2},
    \label{eq:sparse-eigenvalue}
\end{equation}
\begin{equation}
    \phi_{\min,c}(m)
    :=\min_{\substack{\delta\ne0\\
                     1\le\lVert\delta\rVert_0\le m}}
       \frac{\En[(z_{c,i}'\delta)^2]}{\lVert\delta\rVert_2^2}.
    \label{eq:lower-sparse-eigenvalue}
\end{equation}
The restricted eigenvalue bounds the design from below only over a cone of
vectors whose weight is concentrated on the support of the coefficient vector \(\pi_c^0\).  Bounds on sparse eigenvalues require that small collections of dictionary columns be well conditioned.  Conditions stated in terms of these quantities are weaker than restricting the full dictionary as larger collections of dictionary columns may still be collinear. In particular, bounds on restricted and sparse eigenvalues do not rule out cases where \(p_c \gg n\).

\Needspace{10\baselineskip}
\begin{assumption}[First-Stage Convergence]
\label{assm:primitive}
    There are constants \(0\le C_a<\infty\), \(\alpha>0\),
    \(0<\kappa_0\le1\),
    \(0<\underline\phi\le1\le\overline\phi<\infty\), and
    \(M>4L_\lambda^2\overline\phi/\kappa_0^2\), common to all candidates,
    such that the following hold for each \(c \in [J]\):
    \begin{enumerate}[(i)]
        \item The dictionary columns are normalized such that \(\En[z_{c,ij}^2]=1\) for each \(j\in[p_c]\).
        \item The approximation error satisfies
            \[
                \En[\xi_{c,i}^2]^{1/2}
                \le C_a\sqrt{\frac{s_c}{n}}.
            \]
        \item The penalty satisfies
            \[
                \lambda_c
                \ge c_\lambda\sqrt{2}\sigma_v
                \sqrt{
                   \frac{\log(2p_c)+\alpha\log\log(p_c\vee n)}{n}
                }.
            \]
        \item The design satisfies
        \begin{equation}
        \label{eq:design-conditions}
            \kappa_c\ge\kappa_0,
            \qquad
            \phi_{\max,c}(\lceil Ms_c\rceil\wedge p_c) \le\overline\phi,
            \qquad
            \phi_{\min,c}(\lceil(M+1)s_c\rceil\wedge p_c) \ge\underline\phi.
        \end{equation}
        \item The sparsity index, \(s_c\), and penalty \(\lambda_c\) satisfy
        \begin{equation}
            \frac{s_c}{\sqrt n}\to 0,
            \qquad
            \sqrt{s_c}\lambda_c\to 0.
            \label{eq:growth-rates}
        \end{equation}
    \end{enumerate}
\end{assumption}

Versions of each condition in \Cref{assm:primitive} appear in
\citet{BCCH-2012}, who also use them to establish convergence of first-stage
LASSO estimates. \Cref{assm:primitive}(i) is a normalization and can always be
satisfied by rescaling the columns of the dictionary.
\Cref{assm:primitive}(ii) requires that the approximately sparse
representation in \eqref{eq:sparse-approximation} be accurate in the sense
that the approximation error is \(O(\sqrt{s_c/n})\) in prediction norm.
The lower bound in \Cref{assm:primitive}(iii) ensures that the penalty dominates the empirical score with high probability, yielding the usual LASSO prediction-error bounds. Conditions of this form motivate the penalty choices in \citet{BickelRitovTsybakov-2009}, \citet{VandeGeerBuhlmann-2009}, \citet{BCCH-2012}, and \citet{ChetverikovSorensen-2025}. \Cref{assm:primitive}(iv) imposes the
restricted and sparse eigenvalue conditions discussed above. Finally,
part~(v) imposes two rate restrictions. Together with part~(ii),
\(\sqrt{s_c}\lambda_c\to0\) yields first-stage prediction consistency, while
\(s_c/\sqrt n\to0\) controls terms that arise in the higher-order expansion.

Combined with \Cref{assm:gaussian}, these conditions ensure that \eqref{eq:first-order} holds for every candidate, so the resulting IV estimators are first-order equivalent. Beyond guaranteeing suitable first-stage convergence of the candidate first-stage estimators, these conditions do little to restrict the candidate list. Penalties may vary between the bounds in parts~(iii) and~(v), and dictionaries may differ in both content and dimension. The criterion developed in \Cref{sec:leading-result} ranks these first-order-equivalent candidates.

The feasible version of the criterion, developed in \Cref{sec:feasible}, uses the observed effective dimension \(d_c\) in place of its unknown moments. The following condition justifies this substitution.

\Needspace{12\baselineskip}
\begin{assumption}[Partial Beta-Min]
\label{assm:strong-core}
For each candidate, let
\[
    \underline b_{n,c}
    :=\frac{4}{\sqrt{\underline\phi}}
      \left\{
        \frac{(1+c_\lambda^{-1})\lambda_c\sqrt{s_c}}{\kappa_0}
        +C_a\sqrt{\frac{s_c}{n}}
      \right\},
\]
where \(\kappa_0\) and \(\underline\phi\) are as in
Assumption~\ref{assm:primitive}(iv). Define the set of  strong coordinates by
\(A_c:=\{j\in T_c:|\pi_{c,j}^0|\ge\underline b_{n,c}\}\) with
\(k_c:=|A_c|\).  There is a fixed constant \(C_s\ge1\) such that,
uniformly over \(c\in[J]\),
\begin{equation}
    s_c\le C_sk_c,
    \qquad
    k_c\to\infty,
    \qquad
    \frac{\log(p_c/k_c)}{k_c}
    \to 0.
    \label{eq:strong-core-growth}
\end{equation}
\end{assumption}

Assumption~\ref{assm:strong-core} imposes a beta-min condition on a subset of the non-zero coefficients. In the literature on consistent model selection via the LASSO, beta-min conditions are imposed on the entire coefficient support \(T_c\) and, together with conditions on the design, imply recovery of the true support \citep{ZhaoYu-2006,Wainwright-2009}. The condition here plays a different role. The strong coordinates are selected with probability approaching one. Together with the concentration bound for \(d_c\), this yields ratio consistency of the observed effective dimension for its expectation; see \Cref{prop:ratio}. It does not impose a beta-min lower bound on the coefficients in \(T_c\setminus A_c\), nor does it require these coefficients to be selected with probability approaching one. The requirement that \(k_c\to\infty\) parallels the condition in \citet{DonaldNewey-2001} that the number of instruments used to estimate the first stage grows with the sample size, and plays a similar role in the technical analysis. Because \(\underline b_{n,c}\to0\), Assumption~\ref{assm:strong-core} allows the coefficients in the strong core to shrink with \(n\), even though the size of that core must diverge.

\subsection{The AMSE Criterion}
\label{sec:leading-result}

This subsection formally establishes the decomposition in \eqref{eq:decomp} and presents the infeasible criterion, a closed-form asymptotic approximation to the expectation of the candidate-specific term. Exact expressions for the terms in \eqref{eq:decomp} are given in the appendix but are not needed for the characterization below.

Two population quantities enter the infeasible criterion. The first is a measure of first-stage strength,
\begin{equation}
    h_c:=\Ebar[\widehat\Pi_{c,i}\Pi_i],
    \label{eq:h-def}
\end{equation}
the population cross-moment between the fitted values and the optimal instrument. This is the population counterpart of the denominator of the IV estimator in \eqref{eq:iv-estimator}. Under the conditions of \Cref{sec:primitive}, \(h_c\) converges to the second moment of the optimal instrument, \(H\), and so is bounded away from zero in large samples.

The second quantity measures how well the fitted instrument approximates the optimal instrument. Because the IV estimator is unchanged when its instrument is multiplied by a nonzero constant, a fitted instrument proportional to the optimal instrument should register no error. We therefore measure the first-stage approximation error against the best rescaling of the optimal instrument,
\begin{equation}
    \mathcal A_c
    :=\min_{a\in\mathbb R}
      \Ebar\!\left[
       (\widehat\Pi_{c,i}-a\Pi_i)^2
     \right]
     =\Ebar[\widehat\Pi_{c,i}^2]-\frac{h_c^2}{H}.
    \label{eq:amse-approx}
\end{equation}
The minimized value is the mean squared residual from a population regression of the fitted instrument on the optimal instrument. It reflects both the error from approximating the optimal instrument with the dictionary and the shrinkage and noise introduced by LASSO estimation. The ratio \(\mathcal A_c/h_c^2\) appearing in the infeasible criterion below is invariant to rescaling of the fitted instrument.

The closed-form approximation to the expectation of the candidate-specific term combines these two quantities with certain error moments and moments of the effective dimension of the LASSO fit defined in \Cref{sec:effective-dimension},
\begin{equation}
    S_c
    :=\frac{\sigma_\varepsilon^2\mathcal A_c}{h_c^2}
     +\frac{\sigma_{\varepsilon v}^2\,\E[d_c^2+d_c]}
            {nh_c^2}.
    \label{eq:closed-form-criterion}
\end{equation}
We refer to \(S_c\) as the infeasible criterion. Each of its components is a population quantity, so the infeasible criterion cannot be computed directly from the data. \Cref{sec:feasible} constructs a consistent feasible counterpart used to select a candidate.
 
\begin{theorem}[Infeasible AMSE Criterion]
\label{thm:main}
Suppose Assumptions~\ref{assm:gaussian}, \ref{assm:primitive}, and
\ref{assm:strong-core} hold, with \(J<\infty\) fixed.  Then, for all
sufficiently large \(n\), there exist random variables \(C_0\), 
    \(\{Q_c\}_{c\in[J]}\), and \(\{r_c\}_{c \in [J]}\) such that
\begin{enumerate}[(i)]
\item For every candidate,
\begin{equation}
    n(\widehat\beta_c-\beta)^2
    =C_0+Q_c+r_c,
    \label{eq:squared-decomposition}
\end{equation}
where \(C_0\) does not depend on the candidate,
\(\E[C_0]=\sigma_\varepsilon^2/H+O(n^{-1})\), and the remainder
satisfies \(\max_{c\in[J]}|r_c|/\E[Q_c]\longrightarrow_p0\).
\item Uniformly over candidates, \(\E[Q_c]>0\)
and
\begin{equation}
    \E[Q_c]=S_c\left(1+o(1)\right).
    \label{eq:leading-characterization}
\end{equation}
\end{enumerate}
\end{theorem}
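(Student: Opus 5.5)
The plan is to write the estimation error exactly as a ratio and expand it around the optimal instrument. Write \(\widehat\Pi_c=\Pi+(\widehat\Pi_c-\Pi)\). Then
\[
\sqrt n(\widehat\beta_c-\beta)=\frac{\sqrt n\,\En[\widehat\Pi_{c,i}\varepsilon_i]}{\En[\widehat\Pi_{c,i}x_i]},
\]
on the event \(\En[\widehat\Pi_{c,i}^2]>0\), which has probability approaching one because LASSO prediction consistency holds under Assumption~\ref{assm:primitive}. Set \(N_c:=\sqrt n\,\En[\widehat\Pi_{c,i}\varepsilon_i]\) and \(D_c:=\En[\widehat\Pi_{c,i}x_i]\). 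Both are expanded around their deterministic centerings \(h_c\) and \(H\), following \citet{DonaldNewey-2001}:
\[
\frac{N_c^2}{D_c^2}=\frac{N_c^2}{h_c^2}\Bigl(1-\tfrac{2(D_c-h_c)}{h_c}+\dots\Bigr).
\]
The common term is \(C_0:=n\En[\Pi_i\varepsilon_i]^2/H^2\). Its mean is \(\sigma_\varepsilon^2/H\) exactly, and this is adjusted by \(O(n^{-1})\) terms from the denominator that do not depend on \(c\). \(Q_c\) collects the candidate-specific quadratic pieces: \(N_c^2/h_c^2-C_0\), together with the cross term \(-2N_c^2(D_c-h_c)/h_c^3\), keeping only those whose expectations are of order \(\mathcal A_c+\E[d_c^2]/n\). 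Everything else goes into \(r_c\).

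Next I compute \(\E[Q_c]\). Decompose \(\varepsilon=\rho v+\eta\) with \(\rho=\sigma_{\varepsilon v}/\sigma_v^2\), where \(\eta\) is independent of \(v\) and hence of \(\widehat\Pi_c\), since \(\widehat\Pi_c=\mu_c(\Pi+v)\). The \(\eta\)-part of \(N_c\) then has conditional variance \(\sigma_\eta^2\En[\widehat\Pi_{c,i}^2]\). Combined with the centering by \(h_c^2/H\) and the contribution of \(\rho v\) at the linear level, this produces \(\sigma_\varepsilon^2\mathcal A_c/h_c^2\). For the \(v\)-part, \(\E[\En[\widehat\Pi_{c,i}v_i]]\) is evaluated by the first-order Stein identity with the divergence of the fit, which \citet{TibshiraniTaylor-2012} show equals \(d_c\) almost everywhere. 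This gives \(\sigma_v^2\E[d_c]/n\).

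The squared term \(n\E[\En[\widehat\Pi_{c,i}v_i]^2]\) needs the second-order Stein identity of \citet{BellecZhang-2021}:
\[
\E[(v'f(v)-\sigma^2\diver f)^2]=\sigma^2\E\|f\|^2+\sigma^4\E\,\mathrm{tr}\bigl((\nabla f)^2\bigr).
\]
Here \(f=\mu_c\), which is Lipschitz with Jacobian equal to the projection onto the column span of \(Z_{c,E_c}\). Since \(\mathrm{tr}(P^2)=d_c\), this yields \(\sigma_v^4\E[d_c^2+d_c]\) plus lower-order terms. Multiplying by \(\rho^2\) and dividing by \(h_c^2\) recovers \(\sigma_{\varepsilon v}^2\E[d_c^2+d_c]/(nh_c^2)\). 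Cross terms between the denominator fluctuation and the numerator are handled the same way, and I expect them to cancel into \(\mathcal A_c\) up to \(o(S_c)\). Positivity of \(\E[Q_c]\) follows because \(S_c>0\): \(\sigma_{\varepsilon v}\ne0\) and \(\E[d_c]\ge1\) with high probability.

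Finally, I control \(r_c\) relative to \(\E[Q_c]\), which is the main obstacle. The lower bound \(\E[Q_c]\gtrsim\E[d_c^2]/n\gtrsim k_c^2/n\) is used here, since Assumption~\ref{assm:strong-core} forces the strong core to be selected with high probability, so \(d_c\ge k_c\to\infty\). The cubic and higher terms are of order \(s_c^3/n^{3/2}\) and smaller, as can be seen from LASSO rates: \(\|\widehat\Pi_c-\Pi\|_n\lesssim\sqrt{s_c}\lambda_c\), with sparsity of the LASSO fit \(\lesssim s_c\) by the sparse-eigenvalue bound. These bounds hold on a high-probability event. Off that event, the contribution is handled in probability rather than in expectation, since \(r_c\) is only required to be \(o_p(\E[Q_c])\); this avoids moment issues with the ratio. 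The delicate part is showing that \(s_c/\sqrt n\to0\) and \(s_c\le C_sk_c\) make each remainder \(o_p(k_c^2/n+\mathcal A_c)\). This includes a concentration bound \(d_c-\E[d_c]=o_p(\E[d_c])\), obtained by Gaussian concentration together with Lipschitz-type behavior of the selected set. Uniformity over \(c\) is immediate because \(J\) is fixed.
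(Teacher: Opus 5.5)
There is a genuine gap, and it is in two places.

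\emph{First: the rule for sorting terms between $Q_c$ and $r_c$ fails.} You keep in $Q_c$ only the pieces whose \emph{expectations} are of order $\mathcal A_c+\E[d_c^2]/n$ and send the rest to $r_c$. But the theorem needs $r_c=o_p(\E[Q_c])$ in probability. All that is known about $\E[Q_c]\asymp S_c$ is $nS_c\to\infty$, so $S_c$ can be much smaller than $n^{-1/2}$. Write $L_n:=\sqrt n\,\En[\Pi_i\varepsilon_i]/H$ and $\widetilde\Pi_c:=\widehat\Pi_c-(h_c/H)\Pi$. Decompose $(D_c-h_c)/h_c=q+\varrho_c$, with $q:=\En[\Pi_iv_i]/H$ and $\varrho_c:=(\Pi+v)'\widetilde\Pi_c/(nh_c)$.

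Your cross term $-2N_c^2(D_c-h_c)/h_c^3$ then contains $-2L_n^2\varrho_c$. Its mean is $o(S_c)$, so your rule puts it in $r_c$. Yet it is of stochastic order $n^{-1/2}$ whatever $s_c$ is: its derivative in the $\Pi$-direction is of order $n^{-1/2}$, because $P\Pi\approx\Pi$. In $r_c$ it destroys $\max_c|r_c|/\E[Q_c]\to_p0$. It also shows that cubic terms are not $O(s_c^3/n^{3/2})$.

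Any term that is not $o_p(S_c)$ must stay in $Q_c$, or in $C_0$ if it is candidate-invariant (like $-2qL_n^2$), even though $Q_c$ then fluctuates at order $\sqrt{S_c}\gg S_c$. Once this is done, $r_c$ is the easy part, not the main obstacle. The paper keeps the full first-order denominator effect in $\Lambda_c:=N_c/h_c-\{(D_c-h_c)/h_c\}L_n$ and uses the exact identity $\sqrt n(\widehat\beta_c-\beta)-\Lambda_c=-\{(D_c-h_c)/h_c\}\{\sqrt n(\widehat\beta_c-\beta)-L_n\}=O_p(\sqrt{S_c/n})$. This is $o_p(S_c)$ only because $nS_c\ge C^{-1}\bar d_c^2\to\infty$.

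\emph{Second: the step you leave to ``handled the same way'' is the real difficulty.} You need the retained denominator terms to contribute $o(S_c)$ to $\E[Q_c]$. Nothing has to cancel into $\mathcal A_c$: the Stein computations already give $\E[N_c^2/h_c^2]-\E[L_n^2]=S_c$ exactly. The problem is that $\E[L_n^2\varrho_c]$ is not obviously small.

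\begin{itemize}
\item Conditioning on $v$ reduces $\E[L_n^2\varrho_c]$ to $\E[\varrho_c]$ and $\E[u_n^2\varrho_c]$, with $u_n:=\Pi'v/(\sigma_v\sqrt{nH})$.
\item A Poincar\'e bound on $\Var(\varrho_c)$ plus Cauchy--Schwarz gives only $|\E[(u_n^2-1)\varrho_c]|=O(n^{-1/2})$, which is not $o(S_c)$.
\item The paper splits $v$ into $u_n$ and an independent orthogonal component. A one-dimensional Stein identity then gives $\E[u_n^2\varrho_c]=\E[\varrho_c]+\E[u_n\dot\varrho_c]$.
\item It then uses $P\widehat\Pi_c=\widehat\Pi_c$, with $P=\dot\mu_c(\Pi+v)$, to write $P\Pi=\Pi+(H/h_c)(I-P)\widetilde\Pi_c$. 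This yields $|\E[u_n\dot\varrho_c]|\lesssim\sqrt{\mathcal A_c/n}+1/n=o(S_c)$, and similarly $\E[\varrho_c^2L_n^2]\le C/n$.
\end{itemize}

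Keeping all of $-2N_c^2(D_c-h_c)/h_c^3$ in $Q_c$ would also force you to bound $\E[U_c^2(D_c-h_c)]$, with $U_c:=N_c/h_c-L_n$. That needs moments of $d_c$ beyond its variance. That piece is $O_p(S_c/\sqrt n)$ and belongs in $r_c$.

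Two smaller omissions:
\begin{itemize}
\item The Bellec--Zhang identity gives $\E[\zeta_c^2]$ for $\zeta_c:=v'\widehat\Pi_c-\sigma_v^2d_c$. To reach $\sigma_v^4\E[d_c^2+d_c]$ you also need $\E[d_c\zeta_c]=0$. This follows from the covariance Stein identity with $g(v)=v'\widehat\Pi_c$ and $P\widehat\Pi_c=\widehat\Pi_c$; crude bounds on this cross term are too weak.
\item Part (ii) is about expectations, so off-event contributions cannot be handled in probability. In particular, $\bar d_c=o(\sqrt n)$ must hold in mean. Off the score event $d_c$ can be of order $n$, and that event's probability decays only like $(\log n)^{-\alpha}$. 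The paper obtains the bound from the Bellec--Zhang variance inequality for the divergence. Lipschitz concentration does not apply, because $d_c$ is integer-valued and discontinuous.
\end{itemize}
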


\Cref{thm:main}(i) establishes the validity of the decomposition in \eqref{eq:decomp}, while \Cref{thm:main}(ii) shows that \(S_c\) approximates the candidate-specific AMSE component \(\bar Q_c=\E[Q_c]\) with vanishing relative error. Together, these results justify ranking candidates by \(S_c\): minimizing \(S_c\) is asymptotically equivalent to minimizing \(\bar Q_c\), and hence AMSE, over the candidate list.

The infeasible criterion consists of two terms. The first is proportional to \(\mathcal A_c\), which measures how well the candidate LASSO approximates the optimal instrument.
This term vanishes only for a candidate estimator that is always proportional to the true first-stage \(\Pi\). The second term in \(S_c\) is proportional to \(\E[d_c^2 + d_c]\), a measure of the complexity of the LASSO fitted model.
A richer dictionary or a lighter penalty may reduce \(\mathcal A_c\) while increasing the effective dimension. To examine this tradeoff, write \(\rho=\operatorname{Corr}(\varepsilon_i,v_i)\), so that \(\sigma_{\varepsilon v}=\rho\sigma_\varepsilon\sigma_v\), and rewrite \eqref{eq:closed-form-criterion} as
\begin{equation}
    S_c
    =\frac{\sigma_\varepsilon^2}{h_c^2}
     \left\{
       \mathcal A_c
       +\rho^2\sigma_v^2\,
        \frac{\E[d_c^2+d_c]}{n}
     \right\}.
    \label{eq:criterion-exchange-rate}
\end{equation}
In this form the second term is weighted by \(\rho^2\sigma_v^2\), which is proportional to the squared covariance between the structural and first-stage errors. When the regressor is close to exogenous, the bias from even a complex fitted model is small and candidates are ranked mainly by the approximation term.\footnote{Exact exogeneity is ruled out, since Assumption~\ref{assm:gaussian}(ii) requires \(\sigma_{\varepsilon v}\ne0\). When \(\sigma_{\varepsilon v}=0\) the second term vanishes altogether and the remaining terms require a different expansion.} When the regressor is highly endogenous, the same complexity produces a larger bias and the criterion favors heavier regularization. This is the LASSO analogue of the many-instrument bias examined in \citet{Bekker-1994} and \citet{DonaldNewey-2001}.

Cross-validation on the first stage can be thought of as selecting the candidate that minimizes \(\mathcal A_c\) while ignoring the many-instrument bias.\footnote{Cross-validation targets a first stage estimator \(\widehat\Pi\) that minimizes \(\bar\E[(\widehat\Pi_i - \Pi_i)^2]\), which is an upper bound on \(\calA_c\).}  
When endogeneity is weak, the complexity term receives little weight, so the cross-validation criterion and the AMSE criterion may rank candidates similarly. As endogeneity strengthens, the AMSE criterion places greater weight on first-stage complexity, whereas cross-validation does not. In the simulations of \Cref{sec:application-simulation}, the effective dimension selected by the proposed rule falls sharply as the error correlation rises, while the cross-validation choice is unchanged because its objective uses only the first stage. The plug-in penalty of \citet{BCCH-2012} is chosen to dominate the first-stage noise with high probability so that \Cref{assm:primitive}(iii) is satisfied, but its construction does not take into account the measure of fit \(\calA_c\) nor the level of endogeneity.

\subsection{Feasible Candidate Selection}
\label{sec:feasible}

Although \(S_c\) is infeasible, we construct an observable criterion that estimates it up to a candidate-independent centering term and therefore preserves its ranking asymptotically. The only additional inputs not obtained from the candidate fits are the structural error moments \(\sigma_\varepsilon^2\) and \(\sigma_{\varepsilon v}\). We estimate these moments using residuals from a pilot candidate \(\check c\in[J]\).

For each candidate, the observed IV denominator
\begin{equation}
    \label{eq:sample-denominator}
    \widehat h_c:=\En[\widehat\Pi_{c,i}x_i]
\end{equation}
will serve as an estimator for first-stage cross-moment \(h_c\).  The KKT conditions for the LASSO optimization problem in \eqref{eq:lasso} yield \(\widehat h_c =\En[\widehat\Pi_{c,i}^2]+\lambda_c\lVert\widehat\pi_c\rVert_1\ge0\), so \(\widehat h_c=0\) if and only if \(\widehat\pi_c = 0\).  The structural error moments are estimated with residuals from the pilot candidate \(\check c\).  Setting \(\check\beta:=\widehat\beta_{\check c}\), let
\begin{equation}
    \label{eq:nuisance-estimators}
    \widehat\sigma_\varepsilon^2
       :=\En[(y_i-\check\beta x_i)^2]
    \qquad \text{and} \qquad
    \widehat\sigma_{\varepsilon v}
       :=\En[(y_i-\check\beta x_i)x_i].
\end{equation}
We put together these estimates to construct the feasible criterion
\begin{equation}
    \label{eq:feasible-score}
    \widehat S_c
    :=\frac{
         \widehat\sigma_\varepsilon^2
         \En[\widehat\Pi_{c,i}^2]
         +n^{-1}
         \widehat\sigma_{\varepsilon v}^{\,2}
         (d_c^2+d_c)
       }{
         \widehat h_c^2
       },
\end{equation}
with the convention \(\widehat S_c:=+\infty\) when \(\widehat h_c=0\). The selected
candidate minimizes the feasible criterion,
\begin{equation}
    \widehat c
    :=
       \min\left(
          \argmin_{c\in[J]}
             \widehat S_c
       \right),
    \label{eq:selected-candidate}
\end{equation}
with ties broken by the smallest index.

The feasible criterion \(\widehat S_c\) replaces the moment \(\E[d_c^2+d_c]\) in the infeasible \(S_c\) with the realized value \(d_c^2+d_c\).  This replacement is justified by
the following proposition.
\begin{prop}[Ratio Consistency of the Effective Dimension]
\label{prop:ratio}
Suppose Assumptions~\ref{assm:gaussian}, \ref{assm:primitive}, and
\ref{assm:strong-core} hold, with \(J<\infty\) fixed.  Uniformly over
candidates,
\begin{align}
    \frac{d_c}{\E[d_c]}
       &\longrightarrow_p1,
       \label{eq:rank-ratio-main}\\
    \frac{d_c^2+d_c}{\E[d_c^2+d_c]}
       &\longrightarrow_p1.
       \label{eq:rank-factor-ratio-main}
\end{align}
\end{prop}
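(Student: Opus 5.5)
The plan is to show that \(d_c\) concentrates around its mean at scale \(o(\E[d_c])\), and that \(\E[d_c]\to\infty\) at rate at least \(k_c\). Since \(J\) is fixed, uniformity over candidates follows from proving the statement for each \(c\) with bounds that depend only on the common constants.

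\textbf{Lower bound on the mean.} Work on the event \(\mathcal E_n\) that the score dominates the penalty,
\[
\max_j|\En[z_{c,ij}v_i]|\le\lambda_c/c_\lambda .
\]
By Assumption~\ref{assm:primitive}(iii), the Gaussian tail bound and a union bound give \(\P(\mathcal E_n^c)\lesssim(\log(p_c\vee n))^{-\alpha}\to0\). On \(\mathcal E_n\), the standard argument of \citet{BCCH-2012}, using the restricted eigenvalue, Assumption~\ref{assm:primitive}(ii) and the sparse eigenvalues, yields three facts:
\begin{itemize}
\item[] the prediction bound \(\En[(z_{c,i}'(\widehat\pi_c-\pi_c^0))^2]^{1/2}\le(1+c_\lambda^{-1})\lambda_c\sqrt{s_c}/\kappa_0+C_a\sqrt{s_c/n}\);
\item[] the sparsity bound \(|E_c(x)|\le Ms_c\), via the \(\phi_{\max}\) bound and the choice of \(M\);
\item[] an \(\ell_2\) bound \(\|\widehat\pi_c-\pi_c^0\|_2<\underline b_{n,c}\), via \(\phi_{\min}((M+1)s_c)\), since \(\widehat\pi_c-\pi_c^0\) is \((M+1)s_c\)-sparse.
\end{itemize}
The factor \(4\) in \(\underline b_{n,c}\) leaves room here. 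The \(\ell_2\) bound shows that every \(j\in A_c\) has \(\widehat\pi_{c,j}\ne0\), so \(A_c\subseteq E_c(x)\). The columns \(Z_{c,A_c}\) have full rank because \(\phi_{\min}(k_c)\ge\underline\phi>0\). Hence \(d_c\ge k_c\) on \(\mathcal E_n\), so \(\E[d_c]\ge k_c(1-o(1))\to\infty\). Also \(d_c\le Ms_c\le MC_sk_c\) on \(\mathcal E_n\), and \(d_c\le n\) always. Both \(d_c\) and \(\E[d_c]\) are therefore of exact order \(k_c\), once the contribution of \(\mathcal E_n^c\) to \(\E[d_c]\) is controlled. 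That contribution is at most \(n\,\P(\mathcal E_n^c)\), and this may not be small relative to \(k_c\). I would handle it either by a sharper tail bound, using a slightly larger penalty multiple to get exponential decay, or by working with a truncated mean and noting that the ratio statements only need \(d_c\) to concentrate around a deterministic sequence \(m_c\) with \(\E[d_c]/m_c\to1\).

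\textbf{Concentration (main obstacle).} The natural tool is Gaussian concentration for the map \(v\mapsto d_c\). This map is piecewise constant, not Lipschitz, so I would not bound its variance directly. Instead I would use the \citet{TibshiraniTaylor-2012} characterization \(d_c=\diver\mu_c(x)\), where \(\mu_c\) is \(1\)-Lipschitz and is a projection onto the column span of \(Z_{c,E_c}\), up to an offset, on each polyhedral piece. Then I would invoke the second-order Stein identity of \citet{BellecZhang-2021}, which gives
\[
\Var(\diver f)\le\E\|\nabla f\|_F^2+\text{(boundary terms)}
\]
for weakly differentiable \(f\). For the LASSO, \(\|\nabla\mu_c\|_F^2=d_c\) almost everywhere, and \citet{BellecZhang-2021} show that their bound applies to the LASSO fit, giving \(\Var(d_c)\lesssim\E[d_c]+(\text{lower order})\). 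If the boundary term is troublesome, the fallback is their explicit bound of the form \(\Var(d_c)\lesssim \E[d_c]\log(p_c/\E d_c)\). This is exactly where the condition \(\log(p_c/k_c)/k_c\to0\) in Assumption~\ref{assm:strong-core} enters: it makes \(\Var(d_c)/\E[d_c]^2\to0\). Chebyshev's inequality then gives \eqref{eq:rank-ratio-main}.

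\textbf{Second display.} For \eqref{eq:rank-factor-ratio-main}, write \(d_c=\E[d_c](1+o_p(1))\). Then \(d_c^2+d_c=\E[d_c]^2(1+o_p(1))\), because \(\E[d_c]\to\infty\). It remains to show \(\E[d_c^2]=\E[d_c]^2+\Var(d_c)=\E[d_c]^2(1+o(1))\), which uses the same variance bound. Uniformity over the finitely many \(c\) is immediate.
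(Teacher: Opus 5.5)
Your route is the paper's: on the score event (your \(\mathcal E_n\), the paper's \(\mathcal G_c\)) the strong core is selected, so \(d_c\ge k_c\) and \(\E[d_c]\ge k_c(1-o(1))\to\infty\). The Bellec--Zhang variance inequality plus an entropy bound gives \(\Var(d_c)\lesssim 1+\E[d_c]\log\{ep_c/\E[d_c]\}\). Chebyshev then yields the first display, and \(\E[d_c^2+d_c]=\E[d_c]^2+\Var(d_c)+\E[d_c]\) yields the second. One small difference is in your favour: you bound \(|E_c(x)|\) directly and use \(\phi_{\min}\) to get \(\rank(Z_{c,A_c})=k_c\). This avoids the measurable full-rank selection the paper uses.

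Two corrections are needed. First, the step you leave open, an upper bound on \(\E[d_c]\) when \(n\P(\mathcal E_n^c)\) need not be \(o(k_c)\), is not needed for this proposition. Since \(x\mapsto\log(ep_c/x)/x\) is decreasing on \((0,p_c]\) and \(d_c\le p_c\), the lower bound \(\E[d_c]\ge k_c/2\) alone gives \(\Var(d_c)/\E[d_c]^2\lesssim k_c^{-2}+\log(2ep_c/k_c)/k_c\to0\).

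Your proposed fixes would not have worked anyway. The penalty is fixed by the candidate, and Assumption~\ref{assm:primitive}(iii) only yields a \((\log(p_c\vee n))^{-\alpha}\) tail. The truncated-mean route still requires \(\E[d_c]/m_c\to1\), which is the same tail contribution, so it is circular.

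Where the paper does need the upper bound (for \(\E[d_c]/\sqrt n\to0\), used elsewhere), it obtains it from the variance bound itself. Suppose \(\E[d_c]>2(Ms_c+1)\). Since \(d_c<Ms_c+1\) on \(\mathcal G_c\), an event of probability at least \(1/2\), this forces \(\Var(d_c)\ge\E[d_c]^2/8\). That contradicts the entropy bound under the growth condition.

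Second, in the Bellec--Zhang inequality \(\Var\{\diver f(G)\}\le\E[\operatorname{tr}\{Df(G)^2\}]+\E[\lVert Df(G)G\rVert_2^2]\), the second term is not a boundary or lower-order term. With \(Df(G)=P\) a data-dependent projection, it equals \(\E\lVert PG\rVert_2^2\), and this is exactly where the logarithm enters. The paper controls it in three steps:
\begin{itemize}
\item a union bound over the at most \(\binom{p_c}{r}\) rank-\(r\) column-span projections;
\item Laurent--Massart \(\chi^2_r\) tail bounds;
\item Jensen's inequality for the concave function \(x\log(ep_c/x)\).
\end{itemize}
So your claim \(\Var(d_c)\lesssim\E[d_c]+\text{lower order}\) is unjustified and should be dropped; your ``fallback'' is the actual argument.
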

The result in \Cref{prop:ratio} relies on the condition from \Cref{assm:strong-core} that the number of nonzero coefficients in \(\pi_c^0\) diverges. Under this condition, second-order Stein identities from \citet{BellecZhang-2021} show that the variance of the effective dimension grows more slowly than its squared mean, yielding \eqref{eq:rank-ratio-main} and \eqref{eq:rank-factor-ratio-main}. \Cref{prop:ratio} does not require the
LASSO estimator to perfectly recover the support of the underlying
coefficient vector \(\pi_c^0\) in \eqref{eq:sparse-approximation}.

The numerator of \(\widehat S_c\) contains the fitted second moment
\(\En[\widehat\Pi_{c,i}^2]\) rather than an estimate of the approximation
error \(\mathcal A_c\).  By the decomposition in \eqref{eq:amse-approx}
applied to sample moments, this substitution adds the constant
\(\widehat\sigma_\varepsilon^2/H\) relative to a criterion that directly uses an empirical analogue of \(\mathcal A_c\). This added constant does not depend on the candidate and thus does not affect the minimizer of the feasible criterion.\footnote{The added constant estimates
\(\sigma_\varepsilon^2/H\), which \Cref{thm:main}(i) shows is the leading
expectation of the common term in \eqref{eq:decomp}.  The feasible
criterion therefore estimates the AMSE of each candidate rather than the
candidate-specific term alone.} The feasible criterion also uses the denominator \(\widehat h_c^2\) instead of the true \(h_c^2\).  To first-order, the error from using this estimate of the denominator is \(-2\widehat\sigma_\varepsilon^2\En[\Pi_iv_i]/H^2\), which also does not depend on the candidate. Let \(B_n\) collect these two candidate-independent terms,
\begin{equation}
    \label{eq:common-term}
    B_n \coloneqq\widehat\sigma_\varepsilon^2
       \left(\frac1H-\frac{2\En[\Pi_iv_i]}{H^2}\right).
\end{equation}
\Cref{thm:feasible} below shows that, uniformly over candidates,
\begin{equation}
    \widehat S_c
    =B_n+\bar Q_c\left(1+o_p(1)\right).
\end{equation}
Minimizing \(\widehat S_c\) is then asymptotically equivalent to minimizing
\(\bar Q_c\). Note that \(B_n\) itself need not be observed and may change with the sample size. 

\begin{theorem}[Feasible AMSE Selection]
\label{thm:feasible}
Suppose Assumptions~\ref{assm:gaussian}, \ref{assm:primitive}, and
\ref{assm:strong-core} hold, with \(J<\infty\) fixed.
\begin{enumerate}[(i)]
\item Up to the common centering \(B_n\), the feasible criterion
consistently estimates the infeasible criterion,
\begin{equation}
    \max_{c\in[J]}
    \frac{|\widehat S_c-B_n-S_c|}{S_c}
    \longrightarrow_p0,
    \label{eq:feasible-consistency}
\end{equation}
and thus the expectation of the candidate-specific term,
\begin{equation}
    \max_{c\in[J]}
    \frac{|\widehat S_c-B_n-\bar Q_c|}{\bar Q_c}
    \longrightarrow_p0.
    \label{eq:feasible-expansion}
\end{equation}
\item The selected candidate \(\widehat c\) satisfies
\begin{equation}
    \frac{\bar Q_{\widehat c}}
         {\min_{c\in[J]}\bar Q_c}
    \longrightarrow_p1.
    \label{eq:oracle-equivalence}
\end{equation}
\end{enumerate}
\end{theorem}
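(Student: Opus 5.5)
The plan is to prove part~(i) by estimating each piece of \(\widehat S_c\) separately and then treating part~(ii) as a routine argmin-consistency consequence. Expand numerator and denominator of \(\widehat S_c\) around their population counterparts. Write \(\widehat h_c=\En[\widehat\Pi_{c,i}\Pi_i]+\En[\widehat\Pi_{c,i}v_i]\). Split the second term as \(\En[\Pi_iv_i]+\En[(\widehat\Pi_{c,i}-\Pi_i)v_i]\). Use the first-order Stein identity on \(\E[\En[\widehat\Pi_{c,i}v_i]]=\sigma_v^2\E[d_c]/n\), which is valid because the LASSO fit is Lipschitz with divergence \(d_c\) \citep{TibshiraniTaylor-2012}. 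Combine it with the Gaussian Poincar\'e inequality for the fluctuation. Under \Cref{assm:primitive} this should give \(\widehat h_c=h_c+\En[\Pi_iv_i]+O_p(\E[d_c]/n+\text{(smaller)})\). Since \(\E[d_c]\lesssim s_c\) and \(s_c/\sqrt n\to0\), the remainder is \(o_p(n^{-1/2})\).

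The same device handles the numerator. Decompose \(\En[\widehat\Pi_{c,i}^2]=\Ebar[\widehat\Pi_{c,i}^2]+(\text{centered fluctuation})\). Then write \(\Ebar[\widehat\Pi_{c,i}^2]=\mathcal A_c+h_c^2/H\) by \eqref{eq:amse-approx}. For the pilot moments, \(\check\beta-\beta=O_p(n^{-1/2})\) by \eqref{eq:first-order}, so \(\widehat\sigma_\varepsilon^2\to_p\sigma_\varepsilon^2\) and \(\widehat\sigma_{\varepsilon v}\to_p\sigma_{\varepsilon v}\) by the law of large numbers under \Cref{assm:gaussian}. \Cref{prop:ratio} replaces \(d_c^2+d_c\) by \(\E[d_c^2+d_c](1+o_p(1))\).

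Substitute these pieces and linearize \(1/\widehat h_c^2\). The ratio \(\widehat\sigma_\varepsilon^2(\mathcal A_c+h_c^2/H)/\widehat h_c^2\) then equals
\[
\frac{\widehat\sigma_\varepsilon^2}{H}
-\frac{2\widehat\sigma_\varepsilon^2\En[\Pi_iv_i]}{H^2}
+\frac{\widehat\sigma_\varepsilon^2\mathcal A_c}{h_c^2}
+R_c ,
\]
and the complexity term becomes \(\sigma_{\varepsilon v}^2\E[d_c^2+d_c]/(nh_c^2)\,(1+o_p(1))\). The first two pieces are exactly \(B_n\). Along the way we replace \(h_c\) by \(H\) in the cross term, which costs \((h_c-H)\En[\Pi_iv_i]=o(1)\cdot O_p(n^{-1/2})\). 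Everything then reduces to showing \(R_c=o_p(S_c)\) uniformly in \(c\), and \(J\) is fixed, so the maximum over \(c\) is harmless.

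\textbf{Main obstacle.} The step I expect to be hardest is showing \(R_c=o_p(S_c)\). The difficulty is that \(S_c\) may itself be small, so the error terms must be compared against it rather than against a fixed rate. I would first lower-bound \(S_c\gtrsim \E[d_c^2]/n\gtrsim k_c^2/n\). This uses \(\sigma_{\varepsilon v}\ne0\), \(h_c\to H\), and the fact that the strong coordinates are selected with probability approaching one, so \(d_c\ge k_c\) w.p.a.1. Next I would bound each remainder, namely:
\begin{itemize}
\item the fluctuation of \(\En[\widehat\Pi_{c,i}^2]\) net of the \(2\En[\Pi_iv_i]h_c/H\) part, which is absorbed into \(B_n\);
\item the squared denominator error \((\En[\Pi_iv_i])^2=O_p(1/n)\);
\item the Stein remainder in \(\widehat h_c\);
\item the pilot error times \(\mathcal A_c\).
\end{itemize}
Each should be \(O_p(\sqrt{\E[d_c]}/n+ s_c/n+n^{-1}+\mathcal A_c\,o_p(1))\). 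Every one of these is \(o(k_c^2/n)+o_p(\mathcal A_c)\) because \(k_c\to\infty\) and \(s_c\le C_sk_c\). The delicate cases are the pure \(O_p(n^{-1})\) term and the cross term \(\En[(\widehat\Pi_{c,i}-\Pi_i)\Pi_i]\En[\Pi_iv_i]\). These are handled by \(k_c\to\infty\) and by Cauchy--Schwarz against \(\mathcal A_c^{1/2}n^{-1/2}\le\) \(\tfrac12(\epsilon\mathcal A_c+\epsilon^{-1}/n)\), respectively.

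This yields \eqref{eq:feasible-consistency}. Combining it with \Cref{thm:main}(ii), \(\bar Q_c=S_c(1+o(1))\), gives \eqref{eq:feasible-expansion}.

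For part~(ii), let \(c^*\) minimize \(\bar Q_c\). By (i), \(\widehat S_{\widehat c}\le\widehat S_{c^*}\) implies \(\bar Q_{\widehat c}(1+o_p(1))\le\bar Q_{c^*}(1+o_p(1))\), since \(B_n\) cancels. Hence \(\bar Q_{\widehat c}/\bar Q_{c^*}\le 1+o_p(1)\), and the ratio is at least one by definition, which gives \eqref{eq:oracle-equivalence}. We also need \(\widehat h_c>0\) with probability approaching one, so that no \(\widehat S_c=+\infty\) interferes. This follows from \(\widehat h_c\to_p H>0\).
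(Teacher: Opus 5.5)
Your proposal has a genuine gap in how it handles the $O_p(n^{-1/2})$ terms of $\widehat h_c$ and $\En[\widehat\Pi_{c,i}^2]$.

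Your expansion $\widehat h_c=h_c+\En[\Pi_iv_i]+O_p(\E[d_c]/n+\cdots)$ is false, because it drops $\En[\widehat\Pi_{c,i}\Pi_i]-h_c$. That term has mean zero but is of order $n^{-1/2}$, not smaller in general. Its $v$-gradient is $\dot\mu_c(x)\Pi/n$, which is close to $\Pi/n$, so it is essentially a second copy of $\En[\Pi_iv_i]$. For a fit of the form $a\Pi+Pv$ with $P\Pi=\Pi$, one gets $\widehat h_c-h_c=(1+a)\En[\Pi_iv_i]+v'Pv/n$. The same random variable, times $2h_c/H$, is the leading fluctuation of $\En[\widehat\Pi_{c,i}^2]$. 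Since $S_c\asymp\mathcal A_c+\bar d_c^2/n$ can be $o(n^{-1/2})$ (in the example below $S_c\asymp n^{-4/5}$), no candidate-dependent $O_p(n^{-1/2})$ term can be treated as a remainder ``bounded against $S_c$.'' It has to cancel exactly between numerator and denominator, and your term-by-term plan has no mechanism for that. Your own bookkeeping is also inconsistent. With your denominator $h_c+\En[\Pi_iv_i]$, keeping the numerator piece $(2h_c/H)\En[\Pi_iv_i]$ that you say is ``absorbed into $B_n$'' would cancel the $-2\widehat\sigma_\varepsilon^2\En[\Pi_iv_i]/H^2$ term rather than produce it. Your displayed formula only emerges by dropping the term in both places.

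A second, related problem is that you center at $\Pi$ rather than at the best rescaling $(h_c/H)\Pi$. Three of your steps are governed by $\Ebar[(\widehat\Pi_{c,i}-\Pi_i)^2]=\mathcal A_c+(h_c-H)^2/H$ rather than by $\mathcal A_c$: the cost $(h_c-H)\En[\Pi_iv_i]$ of replacing $h_c$ by $H$, the Poincar\'e fluctuation of $\En[(\widehat\Pi_{c,i}-\Pi_i)v_i]$, and your Cauchy--Schwarz bound on $\En[(\widehat\Pi_{c,i}-\Pi_i)\Pi_i]$. LASSO shrinkage that is roughly proportional to $\Pi$ makes $|h_c-H|$ large while $\mathcal A_c$ stays small. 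Take an orthonormal design with $p_c=2s_c$, $s_c=n^{1/10}$ equal coefficients, and $\lambda_c=n^{-1/4}$; all three assumptions hold. Then $|h_c-H|\asymp\lambda_c\sqrt{s_c}=n^{-1/5}$, so $|h_c-H|\,n^{-1/2}\asymp n^{-7/10}$, while $S_c\asymp n^{-4/5}$.

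The paper avoids both problems by working with $\widetilde\Pi_c=\widehat\Pi_c-(h_c/H)\Pi$. It writes $\widehat h_c-h_c=h_cq+\En[\Pi_i\widetilde\Pi_{c,i}]+\En[\widetilde\Pi_{c,i}v_i]$ and $\En[\widehat\Pi_{c,i}^2]=h_c^2/H+(2h_c/H)\En[\Pi_i\widetilde\Pi_{c,i}]+\En[\widetilde\Pi_{c,i}^2]$. It then substitutes the first identity into the second, so that $\En[\Pi_i\widetilde\Pi_{c,i}]$ disappears exactly. Dividing by $\widehat h_c^2$ yields $-2q/H=-2\En[\Pi_iv_i]/H^2$ with $H$ directly, up to $O_p(n^{-1})$. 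The only remaining terms are $\En[\widetilde\Pi_{c,i}^2]-\mathcal A_c=O_p(\sqrt{\mathcal A_c/n})$ and $\En[\widetilde\Pi_{c,i}v_i]=\widetilde\zeta_c/n+\sigma_v^2d_c/n$, where the second-order Stein identity gives $\E[\widetilde\zeta_c^2]=\sigma_v^2n\mathcal A_c+\sigma_v^4\bar d_c$. Each is $o_p(S_c)$ because $S_c\gtrsim\mathcal A_c+\bar d_c^2/n$ and $\bar d_c\to\infty$.

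Your treatment of the rank term, the pilot moments, the conversion to $\bar Q_c$, and the argmin step is fine as written.
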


By \Cref{thm:feasible}(ii), the selected candidate attains the smallest candidate-specific AMSE component in the candidate list up to vanishing relative error.

The conditional Gaussianity and homoskedasticity restrictions in \Cref{assm:gaussian} are used to derive the AMSE ranking. If the errors are non-Gaussian or heteroskedastic, \Cref{thm:feasible} no longer guarantees that the selected candidate is AMSE-optimal. If, under suitable alternative conditions, all candidates continue to admit the common first-order expansion in \eqref{eq:first-order} uniformly over the fixed candidate list, then the selected estimator retains the corresponding asymptotic distribution and first-order inference remains valid.
\section{Simulation Study}
\label{sec:empirical-application}

In this section, we examine the finite-sample performance of the proposed
criterion and compare to first-stage selection via cross-validation and the plug-in penalty of \citet{BCCH-2012}.  The simulations are calibrated to the data of \citet{gilchrist-glassberg-2016}, who study the effect of social spillovers on movie consumption. The authors originally instrument for ticket sales on a given opening weekend day with 48 linearly independent instruments that measure national weather conditions. We hold these initial instrumnts as fixed and consider two main error regimes. We first consider Gaussian
errors, which matches the setting of \Cref{sec:results}, and then repeat the exercise with heavier-tailed Laplace errors to asses the sensitivity of our proposed criterion based selection to non-Gaussian errors. \Cref{sec:sim-diagnostics} provides implementation details and complete
results.

\subsection{Calibrated Gaussian Designs}
\label{sec:application-simulation}

We hold the observed weather instruments fixed and generate the outcome
and endogenous variable from the IV model described in \eqref{eq:model} with \(\beta=1\).  The true first stage \(\Pi\) is proportional to the fitted values from a least-squares regression of opening-weekend sales on the original 48 linearly independent instruments.  We rescale these fitted values to vary identification strength, targeting expected first-stage \(F\) statistics of \(3.8\), \(6.6\), \(12\), or \(25\) at the original sample size of \(n=1{,}671\).  We also evaluate the performance of our proposed criterion on a smaller, fixed subsample of \(n=800\). Holding the instruments and first-stage direction fixed lets us examine how each rule's performance changes with identification strength and endogeneity in an empirically plausible setting.

The errors are generated homoskedastic Gaussian, with
\(\varepsilon_i=\rho v_i+\sqrt{1-\rho^2}\,e_i\) for \(v_i, e_i \overset{\text{iid}}{\sim} N(0,1)\).  The parameter \(\rho\) controls the degree of endogeneity.  We consider a value calibrated to the data, \(\rho\approx-0.29\), along with alternate values \(\rho=-0.40,-0.50,\ldots,-0.90\) to examine how the performance of various criterion changes with the level of endogeneity. Each of the 56 designs is simulated 1,000 times.  We report the average of \(n(\widehat\beta-\beta)^2\), which we refer to as either the risk or the mean squared error.

Each candidate first stage implementation uses one of three nested dictionaries: either 34 temperature instruments (Temperature), the original 48 weather instruments (Original), or 524 instruments including temperature--weather interactions (Expanded).  For each dictionary, we consider 13 multiples of the plug-in penalty of \citet{BCCH-2012} (``BCCH''), giving 39 candidates in total.  This plug-in penalty scale uses the known first-stage error variance in place of an estimate.  The proposed criterion selects among these candidate dictionary-penalty pairs by minimizing \(\widehat S_c\) in \eqref{eq:feasible-score}, with estimated error moments coming from a pilot estimator of \(\beta\). Cross-validation chooses from the same candidates using ten-fold first-stage prediction error, keeping observations from the same opening weekend in the same fold. BCCH uses its plug-in penalty on the Original dictionary.  When it selects no instruments, we use the pilot estimate instead.

\begin{figure}[!htp]
\centering
\includegraphics[width=\textwidth]{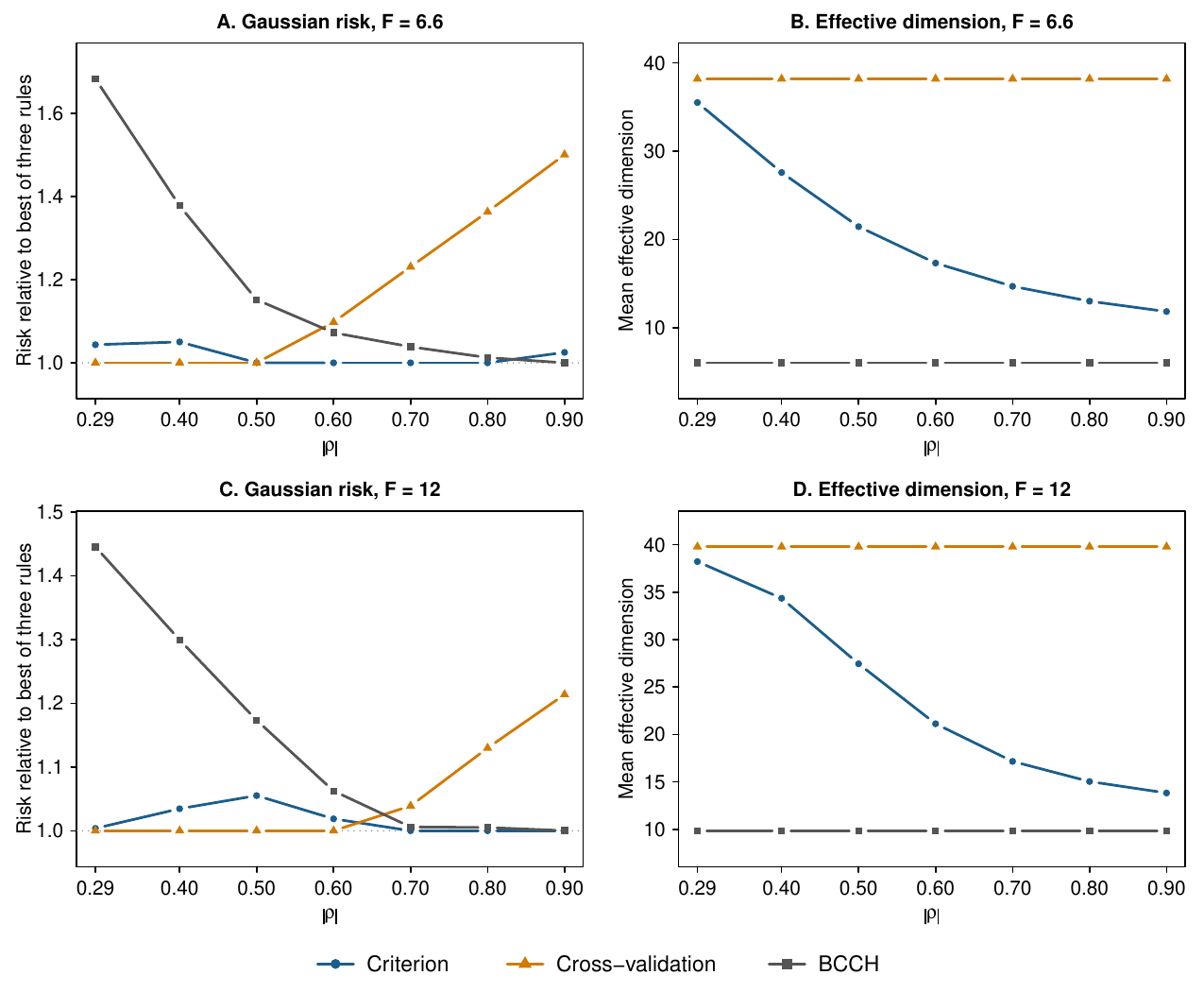}
\caption{Relative risk and effective dimension under Gaussian errors.}
\label{fig:gs-calibrated-summary}
\floatnotes{Results use \(n=1{,}671\) and 1,000 replications.  Panels A
and C report risk relative to the lowest risk among the three rules;
Panels B and D report mean effective dimension.  Panels A--B use
\(F=6.6\) and Panels C--D use \(F=12\).  The horizontal axis is the
absolute error correlation \(|\rho|\).}
\end{figure}

\Cref{fig:gs-calibrated-summary} shows how the rules respond to
endogeneity at the two intermediate levels of identification strength, \(F = 6.6\) and \(F = 12\).  Cross-validation performs well when the regressor is only
mildly endogenous, but its relative risk increases as endogeneity rises. In contrast, the BCCH rules seems to impose a high degree of regularization and, as such, performs poorly when endogeneity is low but similarly to the criterion based rule at the highest levels of endogeneity. Both rules do not account for the level of endogeneity and so their mean effective dimension remains the same across all values of \(|\rho|\). 

By comparasion, the proposed criterion adjusts the first-stage complexity based on the level of endogeneity, consistent with the increasing complexity cost seen in \eqref{eq:criterion-exchange-rate}. At both \(F = 6.6\) and \(F = 12\), the mean criterion effective dimension falls by nearly four-fold as \(|\rho|\) ranges from \(0.29\) to \(0.90\). As a result of this adjustment, the criterion is able to achieve nearly optimal performance across the entire range of regimes considered in \Cref{fig:gs-calibrated-summary}. The improvements in relative risk from using this criterion can be sizeable. At low levels of endogeneity, the criterion's risk is about 40\% lower than that of the BCCH plug-in rule while at the highest level of endogeneity, the criterion's risk is nearly 35\% lower than that of cross-validation. 

\begin{table}[!htp]
\centering
\begin{threeparttable}
\caption{Relative risk, Gaussian errors.}
\label{tab:gs-compact}
\setlength{\tabcolsep}{9pt}
\begin{tabular}{ll S[table-format=2.3] S[table-format=2.3] S[table-format=2.3] S[table-format=2.3] S[table-format=2.3]}
\toprule
    Design & \shortstack[l]{Summary\\Measure} & \multicolumn{1}{c}{\shortstack{Feasible\\Criterion}} & \multicolumn{1}{c}{\shortstack{Known\\Moments}} & \multicolumn{1}{c}{\shortstack{Infeasible\\Criterion}} & \multicolumn{1}{c}{\shortstack{Cross-\\validation}} & \multicolumn{1}{c}{BCCH} \\
\midrule
\multicolumn{7}{l}{\textit{Panel A: All $F$ and $\rho$, by sample size}} \\
\addlinespace
$n=800$ & Average & 1.045 & {--} & {--} & 1.135 & 1.189 \\
& Maximum & 1.266 & {--} & {--} & 1.665 & 1.833 \\
\addlinespace
$n=1{,}671$ & Average & 1.026 & {--} & {--} & 1.106 & 1.194 \\
& Maximum & 1.128 & {--} & {--} & 1.551 & 2.045 \\
\midrule
\multicolumn{7}{l}{\textit{Panel B: $n=1{,}671$, by first-stage strength}} \\
\addlinespace
$F=3.8$ & Average & 1.040 & {--} & {--} & 1.174 & 1.297 \\
& Maximum & 1.128 & {--} & {--} & 1.551 & 2.045 \\
\addlinespace
$F=6.6$ & Average & 1.017 & {--} & {--} & 1.171 & 1.191 \\
& Maximum & 1.050 & {--} & {--} & 1.501 & 1.684 \\
\addlinespace
$F=12$ & Average & 1.016 & {--} & {--} & 1.055 & 1.142 \\
& Maximum & 1.055 & {--} & {--} & 1.214 & 1.445 \\
\addlinespace
$F=25$ & Average & 1.030 & {--} & {--} & 1.023 & 1.146 \\
& Maximum & 1.072 & {--} & {--} & 1.117 & 1.379 \\
\midrule
\multicolumn{7}{l}{\textit{Panel C: All cells, relative to the oracle}} \\
\addlinespace
Full grid & Average & 1.164 & 1.116 & 1.000 & 1.268 & 1.326 \\
& Maximum & 1.440 & 1.289 & 1.000 & 1.811 & 2.010 \\
\bottomrule
\end{tabular}
 \begin{tabnotes}
Risk is the mean of \(n(\widehat\beta-\beta)^2\) over 1,000 replications.
Panels A and B present risk relative to the best of the three feasible rules in each design.  Panel A averages or maximizes over all 28 designs at each \(n\). Panel B averages over the seven values of \(\rho\) at each \(F\) at \(n = 1671\). Panel C reports risk relative to the candidate minimizing the infeasible  \(S_c\).  Known moments evaluates the feasible criterion at population error moments.  Details of these infeasible benchmarks are given in \Cref{sec:sim-diagnostics}.
\end{tabnotes}
\end{threeparttable}
\end{table}

\Cref{tab:gs-compact} summarizes performance across all designs.  The ``Known Moments'' column uses the proposed criterion with the error moments set to their population values, \(\sigma_\varepsilon^2=1\) and \(\sigma_{\varepsilon v}=\rho\).  The ``Infeasible Criterion'' column selects the candidate minimizing the infeasible criterion \(S_c\), with its population moments computed from 1,000 independent simulation replications.  Compared to the Cross-Validation and BCCH selection procedures, the feasible criterion has the smallest average and maximum relative risk at both sample sizes, over essentially all identification strength regimes, and over the entire grid of considered DGPs. At the original sample size, its largest relative risk is about 13\% above that of the best rule, compared with 55\% for cross-validation and 105\% for BCCH.  As identification strengthens, the criterion and cross-validation perform more similarly.  At \(n=1{,}671\) and \(F=25\), cross-validation has slightly lower average relative risk.  This is consistent with first-stage estimation error having less effect on the structural estimate as identification strengthens. Intuitively, the behavior of the structural parameter estimate is driven more by the first-stage ``signal'' than first-stage estimation noise under stronger indentification.

Panel C also shows the effect of estimating the error moments used in
the criterion.  Using their population values reduces average risk
relative to the infeasible criterion from 1.164 to 1.116, removing roughly 30\% of
the average excess risk and suggesting that the cost of estimating the error-variances is somewhat substantial at the \citet{gilchrist-glassberg-2016} sample size. Even with known error moments, the feasible
criterion still uses estimated first-stage quantities to rank
candidates, and its risk remains above that of the infeasible criterion in this
comparison. The 

\subsection{Robustness to Non-Gaussian Errors}
\label{sec:application-laplace}

Since the theory in \Cref{sec:results} is derived under Gaussian errors, a reasonable question might be whether the proposed criterion is still useful in models with non-Gaussian errors. In other words, is the characterization of IV-LASSO AMSE in the Gaussian model informative about the AMSE of IV-LASSO in more general models. To assess this, we repeat the simulations of \Cref{sec:application-simulation} with  variance-one Laplace errors. We retain the same first stages and candidate list and rerun the simulation experiment by drawing first-stage and structural errors from the heavier-tailed Laplace distribution. 

\begin{figure}[!htp]
\centering
\includegraphics[width=\textwidth]{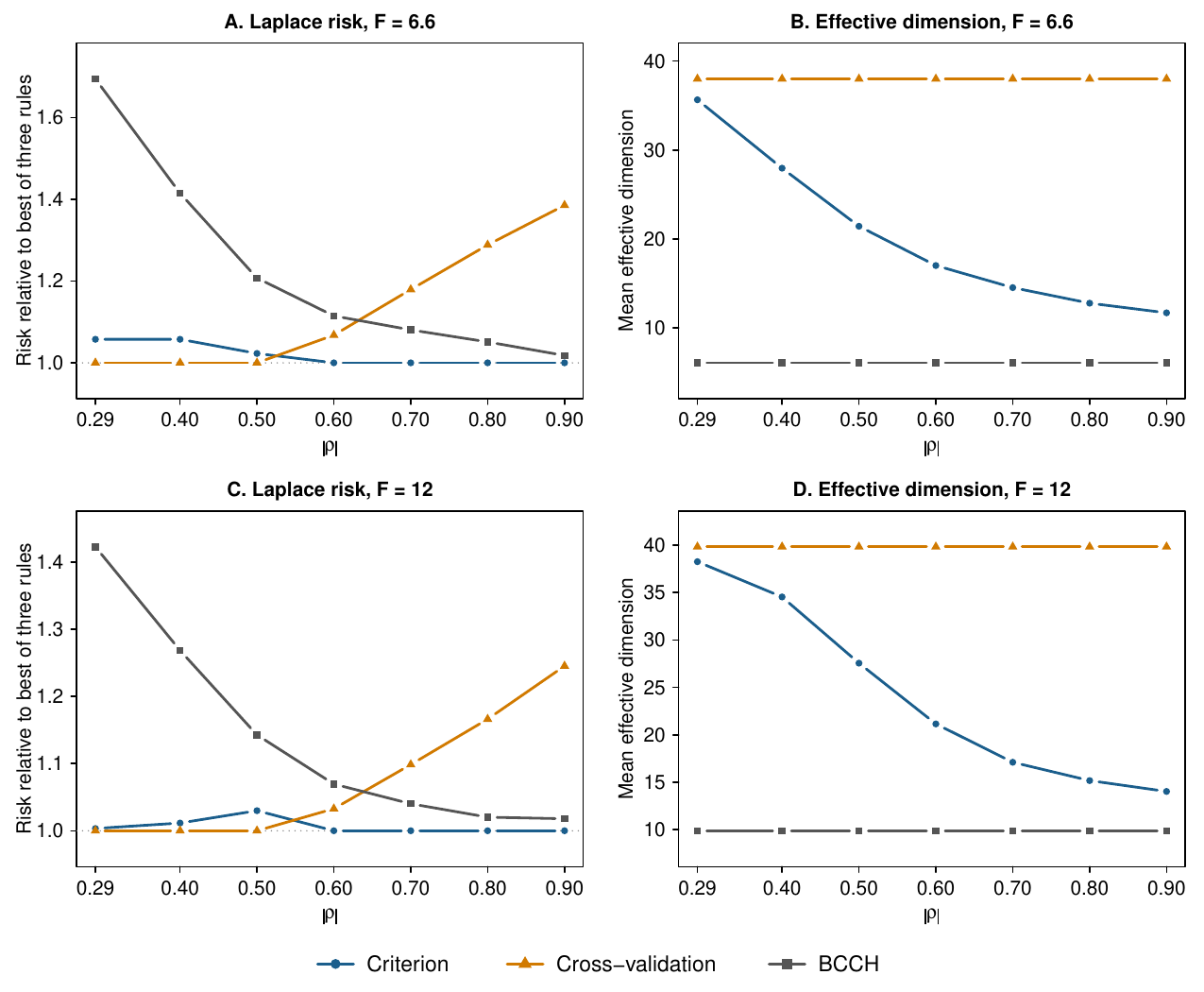}
\caption{Relative risk and effective dimension under Laplace errors.}
\label{fig:gs-laplace-summary}
\floatnotes{Results use \(n=1{,}671\) and 1,000 replications.  Panels A
and C report risk relative to the lowest risk among the three rules;
Panels B and D report mean effective dimension.  Panels A--B use
\(F=6.6\) and Panels C--D use \(F=12\).  The horizontal axis is the
absolute error correlation \(|\rho|\).}
\end{figure}

\Cref{fig:gs-laplace-summary} shows a similar response to endogeneity
as under Gaussian errors.  At \(n=1{,}671\) and \(F=6.6\),
cross-validation a slightly lower risk than the proposed criterion at the three smallest values of \(|\rho|\), while the proposed criterion has the lowest risk from \(|\rho|=0.60\) through \(0.90\).  At the highest level of endogeneity, the criterion's risk is about 28\% below that of cross-validation. The degrees of freedom of the criterion selected first-stage falls from 35.6 to 11.7, closely matching the decline under Gaussian errors.  The similarity in selected dimensions suggests that use of the non-Gaussian errors has little effect on how heavily first-stage complexity is punished.

\begin{table}[!htp]
\centering
\begin{threeparttable}
\caption{Relative risk, Laplace errors.}
\label{tab:gs-laplace-compact}
\setlength{\tabcolsep}{9pt}
\begin{tabular}{ll S[table-format=2.3] S[table-format=2.3] S[table-format=2.3] S[table-format=2.3] S[table-format=2.3]}
\toprule
    Design & \shortstack[l]{Summary\\Measure} & \multicolumn{1}{c}{\shortstack{Feasible\\Criterion}} & \multicolumn{1}{c}{\shortstack{Known\\Moments}} & \multicolumn{1}{c}{\shortstack{Infeasible\\Criterion}} & \multicolumn{1}{c}{\shortstack{Cross-\\validation}} & \multicolumn{1}{c}{BCCH} \\
\midrule
\multicolumn{7}{l}{\textit{Panel A: All $F$ and $\rho$, by sample size}} \\
\addlinespace
$n=800$ & Average & 1.044 & {--} & {--} & 1.130 & 1.205 \\
& Maximum & 1.261 & {--} & {--} & 1.631 & 2.010 \\
\addlinespace
$n=1{,}671$ & Average & 1.029 & {--} & {--} & 1.111 & 1.179 \\
& Maximum & 1.134 & {--} & {--} & 1.627 & 1.694 \\
\midrule
\multicolumn{7}{l}{\textit{Panel B: $n=1{,}671$, by first-stage strength}} \\
\addlinespace
$F=3.8$ & Average & 1.061 & {--} & {--} & 1.224 & 1.160 \\
& Maximum & 1.134 & {--} & {--} & 1.627 & 1.683 \\
\addlinespace
$F=6.6$ & Average & 1.020 & {--} & {--} & 1.132 & 1.226 \\
& Maximum & 1.058 & {--} & {--} & 1.385 & 1.694 \\
\addlinespace
$F=12$ & Average & 1.006 & {--} & {--} & 1.078 & 1.140 \\
& Maximum & 1.030 & {--} & {--} & 1.245 & 1.423 \\
\addlinespace
$F=25$ & Average & 1.028 & {--} & {--} & 1.011 & 1.191 \\
& Maximum & 1.049 & {--} & {--} & 1.079 & 1.435 \\
\midrule
\multicolumn{7}{l}{\textit{Panel C: All cells, relative to the oracle}} \\
\addlinespace
Full grid & Average & 1.162 & 1.115 & 1.000 & 1.263 & 1.325 \\
& Maximum & 1.432 & 1.288 & 1.000 & 1.784 & 1.860 \\
\bottomrule
\end{tabular}
 \begin{tabnotes}
Results are computed as in \Cref{tab:gs-compact}, using 1,000 replications under Laplace errors.  Risk is the mean of \(n(\widehat\beta-\beta)^2\) over 1,000 replications. Panels A and B present risk relative to the best of the three feasible rules in each design.  Panel A averages or maximizes over all 28 designs at each \(n\). Panel B averages over the seven values of \(\rho\) at each \(F\) at \(n = 1671\). Panel C reports risk relative to the candidate minimizing the infeasible  \(S_c\).  Known moments evaluates the feasible criterion at population error moments.  Details of these infeasible benchmarks are given in \Cref{sec:sim-diagnostics}.

\end{tabnotes}
\end{threeparttable}
\end{table}

This similarity between results in the Gaussian and non-Gaussian regimes is echoed in \Cref{tab:gs-laplace-compact} which shows that the feasible criterion again has the smallest average and maximum relative risk at both sample sizes. At \(n=1{,}671\), its risk is on average 2.9\% above the lowest risk in each design, compared with 2.6\% under Gaussian errors.  The relative performance of the competing rules does change in some designs.  At \(F=3.8\) and the original sample size, BCCH has lower average relative risk than cross-validation under Laplace errors, while the reverse holds under Gaussian errors.  The proposed criterion has the lowest average relative risk in both cases.  At the strongest level of identification, cross-validation again performs slightly better on average.

The comparison with the infeasible criterion is also similar across error
distributions.  The criterion's average risk relative to the infeasible criterion is
1.162 under Laplace errors, compared with 1.164 under Gaussian errors.
As in the Gaussian design, using the population error moments in place of their estimates removes roughly 30\% of average
excess risk .  Together, these results
suggest that the AMSE of the IV-LASSO estimator in a more general model
may be well approximated by its AMSE in the Gaussian model, in which
case the ranking of candidates derived in \Cref{sec:results} remains
informative.
\clearpage
\clearpage \appendix
\numberwithin{equation}{section}
\counterwithin{lemma}{section}
\counterwithin{prop}{section}
\counterwithin{example}{section}
\counterwithin{corollary}{section}
\counterwithin{remark}{section}

\section{Notation and Preliminary Reductions}
\label{sec:proof-notation}

Here we collect the notation used throughout the appendix and construct
the terms \(C_0\), \(Q_c\), and \(r_c\) of the decomposition
\eqref{eq:decomp}.

\paragraph{LASSO Notation.}
Lemma~\ref{lem:geometry}, below, recalls certain facts about the LASSO fit from \citet{TibshiraniTaylor-2012}. In particular, it notes that the LASSO fitted-value map
\(\mu_c:\SR^n\to\SR^n\) is one-Lipschitz, and hence
Lebesgue-almost-everywhere differentiable by Rademacher's theorem.  At a
differentiability point \(w \in \SR^n\), write
\[
    \dot\mu_c(w):=D\mu_c(w),
    \qquad
    \diver\mu_c(w):=\operatorname{tr}\{\dot\mu_c(w)\}.
\]
Under Assumption~\ref{assm:gaussian}(i) the vector \(\Pi+v\) has a
Lebesgue density, so \eqref{eq:geometry-properties} holds almost surely at
\(w=\Pi+v\).  Almost surely, then, \(\dot\mu_c(\Pi+v)\) is the
orthogonal projection onto \(\col(Z_{c,E_c(x)})\) and hence 
\begin{equation}
    \dot\mu_c(\Pi+v)=\dot\mu_c(\Pi+v)'=\dot\mu_c(\Pi+v)^2,
    \qquad
    \diver\mu_c(\Pi+v)=d_c,
    \qquad
    \dot\mu_c(\Pi+v)\widehat\Pi_c=\widehat\Pi_c,
    \label{eq:projection-properties}
\end{equation}
where the first property follows from symmetry and idempotence of projection onto \(\col(Z_{c,E_c(x)})\), the second property follows because the trace of a projection matrix is equal to its rank, and the third property comes from the fact that \(\widehat\Pi_c\) is supported on \(E_c(x)\). In addition, by nature of being projections, both \(\dot\mu_c(\Pi+v)\) and \(I-\dot\mu_c(\Pi+v)\) are contractions.

With \(d_c\) defined in \eqref{eq:realized-rank}, we will save notation later in the proofs and write 
\begin{equation}
    \bar d_c:=\E[d_c],
    \qquad
    R_{d,c}:=\E[d_c^2+d_c]
             =\bar d_c^2+\Var(d_c)+\bar d_c.
    \label{eq:rank-moments}
\end{equation}
For each candidate, define
\begin{equation}
    \begin{aligned}
        \widetilde\Pi_c(a)
        &:=\mu_c(\Pi+a)-\frac{h_c}{H}\Pi,
        \qquad a\in\SR^n,\\
        \widetilde\Pi_c&:=\widetilde\Pi_c(v),
        \qquad
        \widetilde\Pi_{c,i}:=\widehat\Pi_{c,i}-\frac{h_c}{H}\Pi_i.
    \end{aligned}
    \label{eq:centered-fit-definition}
\end{equation}
We refer to the vector \(\tilde\Pi_c\) as the ``centered'' LASSO fit. By \eqref{eq:h-def}, \(\En[\Pi_i^2]=H\), and \eqref{eq:amse-approx},
\begin{align}
    \Ebar[\Pi_i\widetilde\Pi_{c,i}]
    &=\Ebar[\Pi_i\widehat\Pi_{c,i}]-h_c=0,\notag\\
    \Ebar[\widetilde\Pi_{c,i}^2]
    &=\Ebar[\widehat\Pi_{c,i}^2]-\frac{h_c^2}{H}
      =\mathcal A_c,\notag\\
    \E[\Pi'\widetilde\Pi_c]
    &=0,
       \qquad
    \E[\lVert\widetilde\Pi_c\rVert_2^2]
      =n\mathcal A_c.
    \label{eq:centered-fit-properties}
\end{align}
Using the fact that \(\calA_c\) is minimized at \(a = \frac{h_c}{H}\), we can write
\begin{equation}
    \mathcal A_c
    =\Ebar\!\left[
       \left(\widehat\Pi_{c,i}-\frac{h_c}{H}\Pi_i\right)^2
     \right].
    \label{eq:approximation-representation}
\end{equation}

\paragraph{Gaussian error decomposition.}
Under Assumption~\ref{assm:gaussian}(i) set
\(e:=\varepsilon-\frac{\sigma_{\varepsilon v}}{\sigma_v^2}v\).  Then
\(e\) is Gaussian with mean zero and \(\Cov(e_i,v_i)=0\), hence
independent of \(v\):
\begin{equation}
    \varepsilon
    =\frac{\sigma_{\varepsilon v}}{\sigma_v^2}v+e,
    \qquad
    e\perp v,
    \qquad
    \Var(e_i)
    =\sigma_\varepsilon^2
      -\frac{\sigma_{\varepsilon v}^2}{\sigma_v^2}.
    \label{eq:comparison-error-decomposition}
\end{equation}
This decomposition will be useful in characterizing moments of the IV-LASSO fit, particularly in deriving expressions for the correlation between the structural error \(\varepsilon\) and the LASSO fitted values \(\widehat\Pi_c\).  For example, consider the expectation 
\[
    \E[\varepsilon'\widehat\Pi_c] = \E\left[\left(\frac{\sigma_{\varepsilon v}}{\sigma_v^2}v+e\right)'\widehat\Pi_c\right] = \frac{\sigma_{\varepsilon v}}{\sigma_v^2}\E[v'\widehat\Pi_c] + \E[e'\widehat\Pi_c]. 
\]
Stein's identity for Gaussian random variables \citep{Stein-1981} yields that \(\E[v'\widehat\Pi_c] = \sigma_v^2\bar d_c\) since, conditional on the instruments, \(\widehat\Pi_c\) is an almost-everywhere differentiable function of \(v\) whose divergence is \(d_c\) by \eqref{eq:projection-properties}.  The second term is zero since \(e\) is independent of \(v\), and thus independent of \(\widehat\Pi_c\), and has mean zero. Put together, we then have that \(\E[\varepsilon'\widehat\Pi_c] = \sigma_{\eps v}\bar d_c\). This expression appears in the second term of the infeasible criterion \(S_c\), defined at \eqref{eq:amse-approx}. 

\paragraph{Positive Denominator Event.}
With \(\widehat h_c\) defined by \eqref{eq:sample-denominator}, \Cref{lem:primitive-consequences} shows that there is a fixed constant \(\underline h > 0\) such that for \(\calH_n\) defined below, \(\Pr(\calH_n)\to 1\) as \(n\to\infty\).  
\begin{equation}
    \delta_{h,c}:=\widehat h_c-h_c,
    \qquad
    \mathcal H_n
    :=\left\{
       \min_{c\in[J]}\widehat h_c\ge\frac{\underline h}{2}
     \right\}.
    \label{eq:denominator-event}
\end{equation}
Since analyzing the properties of the IV-LASSO difficult is difficult when the denominator is close to zero, our proofs will often condition on the event \(\calH_c\) in order to show various stochastic bounds.

\paragraph{Squared-error decomposition.}
We now present the exact squared error decomposition that we use throughout the appendix. Under Assumptions~\ref{assm:gaussian} and~\ref{assm:primitive},
\eqref{eq:population-bounds} of Lemma~\ref{lem:primitive-consequences}
gives \(\min_c h_c\ge\underline h>0\) for all sufficiently large \(n\). Without loss of generality, assume all \(n\) satisfy this condition and define
\begin{align}
    L_n&:=\frac{\sqrt n\,\En[\varepsilon_i\Pi_i]}{H},
    \label{eq:optimal-score}\\
    \Gamma_c&:=\frac{\sqrt n\,\En[\varepsilon_i\widehat\Pi_{c,i}]}{h_c},
    \label{eq:auxiliary-criterion}\\
    \Lambda_c&:=\Gamma_c
      +\left(1-\frac{\widehat h_c}{h_c}\right)L_n,
    \label{eq:leading-term}\\
    q&:=\frac{\En[\Pi_iv_i]}{H}.
    \label{eq:oracle-statistic}
\end{align}
Here \(L_n\) is the first-order term in \eqref{eq:first-order}. For the infeasible estimator that uses the true optimal instrument \(\Pi\), it can be shown that \(\Lambda = (1 - q)L_n\).
Using the above, we can define the terms in \eqref{eq:decomp}:
\begin{equation}
    C_0:=(1-q)^2L_n^2,
    \qquad
    Q_c:=\Lambda_c^2-C_0,
    \qquad
    r_c:=n(\widehat\beta_c-\beta)^2-\Lambda_c^2.
    \label{eq:decomposition-construction}
\end{equation}
These definitions make \eqref{eq:decomp} an identity, and since \(q\)
and \(L_n\) are candidate-invariant, so is \(C_0\).  Write
\begin{equation}
    C_{0n}
    :=\frac{\sigma_\varepsilon^2}{H}
      +\frac{\sigma_\varepsilon^2\sigma_v^2
              +2\sigma_{\varepsilon v}^2}
             {nH^2},
    \qquad
    \bar Q_c:=\E[\Lambda_c^2]-C_{0n}.
    \label{eq:amse-definition}
\end{equation}
Note that \(Q_c\ge-C_0\) and, by Lemma~\ref{lem:oracle-baseline},
\(\E[C_0]=C_{0n}<\infty\).  Hence \(\E[Q_c]\) is well defined in
\((-\infty,\infty]\) and equals \(\E[\Lambda_c^2]-C_{0n}=\bar Q_c\), a
quantity that Lemma~\ref{lem:amse-comparison} shows is finite for all
sufficiently large \(n\).
\section{Proofs of the Main Results}
\label{sec:proofs-main}

\subsection{A Linear Expansion}

The proof of \Cref{thm:main} relies on the following linear expansion of the IV-LASSO estimator.

\begin{lemma}[Linear Expansion]
\label{lem:linear-expansion}
Suppose Assumptions~\ref{assm:gaussian}, \ref{assm:primitive}, and
\ref{assm:strong-core} hold, with \(J<\infty\) fixed.  Then
\begin{equation}
    \max_{c\in[J]}
    \frac{
       |\sqrt n(\widehat\beta_c-\beta)-\Lambda_c|
    }{\bar Q_c}
    \to_p0,
    \label{eq:linear-expansion}
\end{equation}
and
\begin{equation}
    \max_{c\in[J]}|\Lambda_c|=O_p(1),
    \qquad
    \max_{c\in[J]}|\sqrt n(\widehat\beta_c-\beta)|=O_p(1).
    \label{eq:linear-expansion-order}
\end{equation}
\end{lemma}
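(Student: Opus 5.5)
On the event \(\mathcal H_n\) (probability tending to one by Lemma~\ref{lem:primitive-consequences}) I would start from the exact algebraic identity for the IV estimator. Since \(y=\beta x+\varepsilon\),
\[
    \sqrt n(\widehat\beta_c-\beta)=\frac{\sqrt n\,\En[\widehat\Pi_{c,i}\varepsilon_i]}{\widehat h_c}=\Gamma_c\,\frac{h_c}{\widehat h_c}.
\]
Write \(h_c/\widehat h_c=1-\delta_{h,c}/h_c+\delta_{h,c}^2/(h_c\widehat h_c)\). This gives
\[
    \sqrt n(\widehat\beta_c-\beta)-\Lambda_c
    =-\frac{\delta_{h,c}}{h_c}\bigl(\Gamma_c-L_n\bigr)+\frac{\delta_{h,c}^2}{h_c\widehat h_c}\Gamma_c .
\]
Here I used \(\Lambda_c=\Gamma_c-(\delta_{h,c}/h_c)L_n\). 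The remainder is therefore a product of the denominator error \(\delta_{h,c}\) with either \(\Gamma_c-L_n\) or \(\delta_{h,c}\Gamma_c\). The plan is to show that each product is \(o_p(\bar Q_c)\), using the lower bound \(\bar Q_c\gtrsim S_c\) that I would take from Lemma~\ref{lem:amse-comparison}. Only finitely many candidates are involved, so the maximum over \(c\) is handled by a union bound.

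\textbf{Key rates.} First I would bound \(\delta_{h,c}\). Decompose \(\widehat h_c-h_c=(\En[\widehat\Pi_{c,i}\Pi_i]-h_c)+\En[\widehat\Pi_{c,i}v_i]\). The cross term splits as \(\En[\widehat\Pi_{c,i}v_i]=\En[\Pi_iv_i]+\En[(\widehat\Pi_{c,i}-\Pi_i)v_i]\). The first piece is \(O_p(n^{-1/2})\). The second has mean \(\sigma_v^2\bar d_c/n\) by Stein's identity, and its fluctuation is controlled by the prediction rate \(\En[(\widehat\Pi_{c,i}-\Pi_i)^2]^{1/2}=O_p(\sqrt{s_c}\lambda_c)\) from Lemma~\ref{lem:primitive-consequences}. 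The first bracket is a centered Lipschitz function of \(v\), with Lipschitz constant at most \(\|\Pi\|_2/n\). Gaussian concentration then gives \(O_p(n^{-1/2})\).

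Next I would bound \(\Gamma_c-L_n\). Split it as \(\sqrt n\,\En[\varepsilon_i\widetilde\Pi_{c,i}]/h_c+(h_c/H-H/h_c)\ldots\); equivalently, work with the centered fit \(\widetilde\Pi_c\). Using \(\varepsilon=(\sigma_{\varepsilon v}/\sigma_v^2)v+e\), the \(e\)-part has conditional variance of order \(\|\widetilde\Pi_c\|_2^2/n\), whose expectation is \(\mathcal A_c\). The \(v\)-part has mean \(\sigma_{\varepsilon v}\bar d_c/\sqrt n\). Its variance would come from the second-order Stein identity of \citet{BellecZhang-2021}, which gives a bound of order \(\E\|\widetilde\Pi_c\|^2/n+\E[d_c]/n\). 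Together these give
\[
    \Gamma_c-L_n=O_p\Bigl(\sqrt{\mathcal A_c}+\bar d_c/\sqrt n+n^{-1/2}\Bigr)=O_p\bigl(\sqrt{S_c}+n^{-1/2}\bigr).
\]

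Combining the pieces, the first remainder term is \(O_p\bigl((n^{-1/2}+\bar d_c/n+\sqrt{s_c}\lambda_c n^{-1/2})(\sqrt{S_c}+n^{-1/2})\bigr)\). The second is \(O_p(\delta_{h,c}^2)\), because \(\Gamma_c=O_p(1)\).

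\textbf{Main obstacle: comparing to \(\bar Q_c\).} The hard part is to show these bounds are \(o(\bar Q_c)\). This requires a lower bound of the form \(\bar Q_c\gtrsim \mathcal A_c+\bar d_c^2/n\). Under Assumption~\ref{assm:strong-core} we have \(\bar d_c\gtrsim k_c\to\infty\), since the strong coordinates are selected with probability approaching one. Hence \(\bar d_c^2/n\gg n^{-1}\), and the pure \(n^{-1}\) terms are negligible. For the cross term \(n^{-1/2}\sqrt{S_c}\) I need \(n^{-1/2}=o(\sqrt{S_c})\), which follows from \(\bar d_c/\sqrt n\to\infty\) relative to \(1\). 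For the term \((\bar d_c/n)\sqrt{S_c}\), I need \(\bar d_c/n\to0\), which holds because \(\bar d_c\lesssim s_c=o(\sqrt n)\) by the sparse-eigenvalue bounds. The delicate point is \(\delta_{h,c}^2\). Its mean-square part \(\En[\Pi_iv_i]^2\) is \(O(n^{-1})\), which is not automatically \(o(\bar d_c^2/n)\) unless \(\bar d_c\to\infty\). This is exactly where \(k_c\to\infty\) is needed, and it must be checked carefully.

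Finally, \eqref{eq:linear-expansion-order} follows quickly. \(L_n=O_p(1)\) and \(\Gamma_c-L_n=o_p(1)\), because \(S_c\to0\) when \(\sqrt{s_c}\lambda_c\to0\) and \(s_c/\sqrt n\to0\). Also \(\delta_{h,c}=o_p(1)\). So \(\Lambda_c=O_p(1)\), and the expansion just established transfers this bound to \(\sqrt n(\widehat\beta_c-\beta)\).
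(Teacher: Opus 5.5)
Your proposal is correct and is essentially the paper's proof. It uses the same identity \(\sqrt n(\widehat\beta_c-\beta)=\Gamma_c h_c/\widehat h_c\) and the same rates \(\delta_{h,c}=O_p(n^{-1/2})\) and \(\Gamma_c-L_n=O_p(\sqrt{S_c})\), which the paper imports from Lemmas~\ref{lem:denominator} and~\ref{lem:auxiliary-moments} rather than re-deriving. It also uses the same comparison via \(\bar Q_c\asymp S_c\) and \(n\bar Q_c\gtrsim\bar d_c^2\to\infty\) from Lemma~\ref{lem:amse-comparison}, and your two-term remainder is exactly the paper's \(-(\delta_{h,c}/h_c)\{\sqrt n(\widehat\beta_c-\beta)-L_n\}\) written out.

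One phrasing slip: the cross term \(n^{-1/2}\sqrt{S_c}\) is \(o(\bar Q_c)\) because \(\bar d_c\to\infty\) gives \(n^{-1/2}=o(\bar d_c/\sqrt n)\). It is not because \(\bar d_c/\sqrt n\) diverges; that ratio in fact tends to zero.
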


\begin{proof}
Work on \(\mathcal H_n\), whose probability tends to one by
Lemma~\ref{lem:denominator}.  Lemma~\ref{lem:auxiliary-moments},
Proposition~\ref{prop:rates}, and Markov's inequality give
\[
    \Gamma_c=O_p(1),
    \qquad
    \Gamma_c-L_n=O_p(\sqrt{S_c}),
    \qquad
    L_n=O_p(1),
\]
where the last bound also follows from
\(\E[L_n^2]=\sigma_\varepsilon^2/H\).  On \(\mathcal H_n\),
\(\widehat h_c\ge\underline h/2\) and
\begin{equation}
    \sqrt n(\widehat\beta_c-\beta)
    =\Gamma_c\frac{h_c}{\widehat h_c}.
    \label{eq:ratio-gamma-exact}
\end{equation}
By Lemmas~\ref{lem:denominator}, \ref{lem:amse-comparison},
and~\ref{lem:primitive-consequences}, \(\underline h\le h_c\le C\),
\(\delta_{h,c}=O_p(n^{-1/2})\), \(S_c/\bar Q_c\to1\), and
\(\min_c n\bar Q_c\to\infty\).  Hence
\begin{align}
    \frac{|\sqrt n(\widehat\beta_c-\beta)-L_n|}{\sqrt{\bar Q_c}}
    &\le\frac{h_c}{\widehat h_c}
       \frac{|\Gamma_c-L_n|}{\sqrt{\bar Q_c}}
       +\frac{|\delta_{h,c}|}{\widehat h_c}
        \frac{|L_n|}{\sqrt{\bar Q_c}}\notag\\
    &=O_p(1)+O_p\{(n\bar Q_c)^{-1/2}\}=O_p(1).
    \label{eq:ratio-minus-L-order}
\end{align}
Since \(\widehat h_c=h_c+\delta_{h,c}\), \eqref{eq:ratio-gamma-exact}
gives \(\Gamma_c=(1+\delta_{h,c}/h_c)\sqrt n(\widehat\beta_c-\beta)\).
Substituting this into \(\Lambda_c=\Gamma_c-(\delta_{h,c}/h_c)L_n\)
from \eqref{eq:leading-term} and rearranging yields
\begin{equation}
    \sqrt n(\widehat\beta_c-\beta)-\Lambda_c
    =-\frac{\delta_{h,c}}{h_c}
       \{\sqrt n(\widehat\beta_c-\beta)-L_n\}.
    \label{eq:exact-ratio-identity}
\end{equation}
Together with \eqref{eq:ratio-minus-L-order}, \(\delta_{h,c} = O_p(n^{-1/2})\), and \(n\bar Q_c \to \infty\) this gives
\begin{equation*}
    \frac{|\sqrt n(\widehat\beta_c-\beta)-\Lambda_c|}{\bar Q_c}
    =O_p\{(n\bar Q_c)^{-1/2}\}=o_p(1).
\end{equation*}
The boundedness conclusions follow from \eqref{eq:ratio-gamma-exact}
and \(\Lambda_c=\Gamma_c-(\delta_{h,c}/h_c)L_n\).  Since \(J\) is fixed
and \(\P(\mathcal H_n)\to1\), these bounds prove
\eqref{eq:linear-expansion} and \eqref{eq:linear-expansion-order}.
\end{proof}

\Needspace{9\baselineskip}
\subsection{Proof of Theorem~\ref{thm:main}}

\begin{proof}
\emph{Step 1: Decomposition of \(n(\widehat\beta_c-\beta)^2\).}

By \eqref{eq:population-bounds} of
Lemma~\ref{lem:primitive-consequences},
\(\min_c h_c\ge\underline h>0\) for all sufficiently large \(n\), so
\(\Gamma_c\) and \(\Lambda_c\) in \eqref{eq:auxiliary-criterion} and
\eqref{eq:leading-term} are well posed.  For
\eqref{eq:squared-decomposition}, the construction in
\eqref{eq:decomposition-construction} makes
\(n(\widehat\beta_c-\beta)^2=C_0+Q_c+r_c\) an identity with \(C_0\)
candidate invariant.  Apply Lemma~\ref{lem:oracle-baseline} to see that
\[
    \E[C_0]=C_{0n}=\frac{\sigma_\varepsilon^2}{H}+O(n^{-1}),
\]
while \(\E[Q_c]=\bar Q_c\) by the definitions in
\eqref{eq:decomposition-construction} and \eqref{eq:amse-definition}.
Lemma~\ref{lem:amse-comparison} gives
\(\bar Q_c=S_c\{1+o(1)\}>0\) for all sufficiently large \(n\), which is
Theorem~\ref{thm:main}(ii).  It remains to show that
\[
    \max_{c\in[J]}\frac{|r_c|}{\E[Q_c]}\to_p0.
\]

\Needspace{6\baselineskip}
\emph{Step 2: Bound for \(n(\widehat\beta_c-\beta)^2-\Lambda_c^2\).}

To this end, note that
\begin{align*}
    \max_{c\in[J]}
    \frac{|n(\widehat\beta_c-\beta)^2-\Lambda_c^2|}{\bar Q_c}
    &\le
    \max_{c\in[J]}
    \frac{|\sqrt n(\widehat\beta_c-\beta)-\Lambda_c|}{\bar Q_c}
    \cdot
    \max_{c\in[J]}
    |\sqrt n(\widehat\beta_c-\beta)+\Lambda_c|\\
    &\to_p0,
\end{align*}
where the first factor tends to zero by \eqref{eq:linear-expansion}, and
the second is \(O_p(1)\) by the triangle inequality and
\eqref{eq:linear-expansion-order}.  Since
\(r_c=n(\widehat\beta_c-\beta)^2-\Lambda_c^2\) by
\eqref{eq:decomposition-construction} and \(\E[Q_c]=\bar Q_c\), the
above display establishes
\(\max_{c\in[J]}|r_c|/\E[Q_c]\to_p0\), completing
Theorem~\ref{thm:main}(i).
\end{proof}

\Needspace{9\baselineskip}
\subsection{Proof of Proposition~\ref{prop:ratio}}

\begin{proof}
The first conclusion \eqref{eq:rank-ratio-main} is part of
\eqref{eq:primitive-rank-conclusions} of
Lemma~\ref{lem:primitive-consequences}.  For
\eqref{eq:rank-factor-ratio-main}, write
\begin{equation}
    \frac{d_c^2+d_c}{R_{d,c}}
    =
    \frac{
       (d_c/\bar d_c)^2+(d_c/\bar d_c)/\bar d_c
    }{
       R_{d,c}/\bar d_c^2
    }.
    \label{eq:rank-factor-algebra}
\end{equation}
By \eqref{eq:primitive-rank-conclusions},
\(d_c/\bar d_c\to_p1\), \(\bar d_c\to\infty\), and
\(R_{d,c}/\bar d_c^2\to1\).  Since \(J\) is fixed and
\(R_{d,c}=\E[d_c^2+d_c]\), \eqref{eq:rank-factor-algebra} proves
\eqref{eq:rank-factor-ratio-main} uniformly over candidates.
\end{proof}

\Needspace{9\baselineskip}
\subsection{Proof of Theorem~\ref{thm:feasible}}
\label{sec:proofs-feasible}

\begin{proof}
By Lemma~\ref{lem:denominator}, \(\P(\mathcal H_n)\to1\), and on
\(\mathcal H_n\) the definition in \eqref{eq:denominator-event} gives
\(\min_{c\in[J]}\widehat h_c\ge\underline h/2>0\).  Hence
\begin{equation}
    \P\!\left(
       \min_{c\in[J]}\widehat h_c>0
    \right)
    \to1.
    \label{eq:screen-inactive}
\end{equation}
\Needspace{6\baselineskip}
\emph{Step 1: The variance term.}

Work on \(\mathcal H_n\) through Step~2; since \(\P(\mathcal H_n)\to1\), the stochastic orders obtained there hold unconditionally.  Since \(\widehat\Pi_c=(h_c/H)\Pi+\widetilde\Pi_c\), \(x=\Pi+v\), \(\delta_{h,c}=\widehat h_c-h_c\), and \(\En[\Pi_i^2]=H\), we can write
\begin{align*}
    \En[\widehat\Pi_{c,i}^2]
    &=\frac{h_c^2}{H}
      +\frac{2h_c}{H}\En[\Pi_i\widetilde\Pi_{c,i}]
      +\En[\widetilde\Pi_{c,i}^2],\\
    \delta_{h,c}
    &=h_cq+\En[\Pi_i\widetilde\Pi_{c,i}]
      +\En[\widetilde\Pi_{c,i}v_i].
\end{align*}
Using the second equality to substitute for
\(\En[\Pi_i\widetilde\Pi_{c,i}]\) in the first gives
\begin{align}
    \En[\widehat\Pi_{c,i}^2]
    &=\frac{h_c^2+2h_c\delta_{h,c}}{H}
      -\frac{2h_c^2q}{H}
      +\En[\widetilde\Pi_{c,i}^2]
      -\frac{2h_c}{H}\En[\widetilde\Pi_{c,i}v_i]\notag\\
    &=\frac{\widehat h_c^2}{H}
      -\frac{2h_c^2q}{H}
      +\En[\widetilde\Pi_{c,i}^2]
      -\frac{2h_c}{H}\En[\widetilde\Pi_{c,i}v_i]
      -\frac{\delta_{h,c}^2}{H},
    \label{eq:fitted-numerator-rearrangement}
\end{align}
where the second equality uses
\(\widehat h_c^2=(h_c+\delta_{h,c})^2\).
Next, write
\begin{equation}
    \En[\widetilde\Pi_{c,i}^2]
    =\mathcal A_c+
      \bigl\{\En[\widetilde\Pi_{c,i}^2]-\mathcal A_c\bigr\}.
    \label{eq:fitted-moment-centering}
\end{equation}
Factoring \(h_c^2\) from \(-2h_c^2q/H+\mathcal A_c\) and substituting
\(h_c^2=\widehat h_c^2-(2h_c\delta_{h,c}+\delta_{h,c}^2)\), we obtain
\begin{align}
    -\frac{2h_c^2q}{H}+\mathcal A_c
    &=\widehat h_c^2
      \left(-\frac{2q}{H}+\frac{\mathcal A_c}{h_c^2}\right)\notag\\
    &\quad
      +\left(\frac{2q}{H}-\frac{\mathcal A_c}{h_c^2}\right)
       (2h_c\delta_{h,c}+\delta_{h,c}^2).
    \label{eq:fitted-leading-rearrangement}
\end{align}
Substituting \eqref{eq:fitted-moment-centering} and
\eqref{eq:fitted-leading-rearrangement} into
\eqref{eq:fitted-numerator-rearrangement}, dividing by
\(\widehat h_c^2\), and recalling \(q=\En[\Pi_iv_i]/H\), gives
\begin{equation}
    \frac{\En[\widehat\Pi_{c,i}^2]}{\widehat h_c^2}
    =
    \frac1H-\frac{2\En[\Pi_iv_i]}{H^2}
    +\frac{\mathcal A_c}{h_c^2}
    +R^{\mathrm{fit}}_{n,c},
    \label{eq:normalized-fitted-moment-expansion}
\end{equation}
where we define
\begin{align*}
    R^{\mathrm{fit}}_{n,c}
    &:=\frac{1}{\widehat h_c^2}\Bigl\{
      \En[\widetilde\Pi_{c,i}^2]-\mathcal A_c
      -\frac{2h_c}{H}\En[\widetilde\Pi_{c,i}v_i]\\
    &\qquad
      +\left(\frac{2q}{H}-\frac{\mathcal A_c}{h_c^2}\right)
       (2h_c\delta_{h,c}+\delta_{h,c}^2)
      -\frac{\delta_{h,c}^2}{H}\Bigr\}.
\end{align*}
With \(R^{\mathrm{fit}}_{n,c}:=0\) on \(\mathcal H_n^c\),
Lemma~\ref{lem:fitted-moment-expansion} gives
\(\max_{c\in[J]}|R^{\mathrm{fit}}_{n,c}|/\bar Q_c\to_p0\).
Multiply \eqref{eq:normalized-fitted-moment-expansion} by
\(\widehat\sigma_\varepsilon^2\) and use the definition of \(B_n\) in
\eqref{eq:common-term} to obtain the exact identity
\begin{equation}
    \frac{\widehat\sigma_\varepsilon^2
          \En[\widehat\Pi_{c,i}^2]}
         {\widehat h_c^2}
    =B_n+\frac{\sigma_\varepsilon^2\mathcal A_c}{h_c^2}
      +(\widehat\sigma_\varepsilon^2-\sigma_\varepsilon^2)
       \frac{\mathcal A_c}{h_c^2}
      +\widehat\sigma_\varepsilon^2R^{\mathrm{fit}}_{n,c}.
    \label{eq:variance-score-expansion}
\end{equation}
By \eqref{eq:Q-rates} and \eqref{eq:population-bounds},
\(\mathcal A_c/(h_c^2\bar Q_c)\le C\).  Thus
\eqref{eq:centered-moment-expansion} and \eqref{eq:nuisance-rates} give
\begin{align*}
    \frac{\left|
       (\widehat\sigma_\varepsilon^2-\sigma_\varepsilon^2)
          \mathcal A_c/h_c^2
          +\widehat\sigma_\varepsilon^2R^{\mathrm{fit}}_{n,c}
    \right|}{\bar Q_c}
    &\le C|\widehat\sigma_\varepsilon^2-\sigma_\varepsilon^2|
       +|\widehat\sigma_\varepsilon^2|
        \frac{|R^{\mathrm{fit}}_{n,c}|}{\bar Q_c}\\*
    &=o_p(1).
\end{align*}
Substituting this bound into \eqref{eq:variance-score-expansion}
leaves
\begin{equation}
    \frac{\widehat\sigma_\varepsilon^2
          \En[\widehat\Pi_{c,i}^2]}
         {\widehat h_c^2}
    =B_n+\frac{\sigma_\varepsilon^2\mathcal A_c}{h_c^2}
      +o_p(\bar Q_c).
    \label{eq:variance-score-reduced}
\end{equation}

\Needspace{6\baselineskip}
\emph{Step 2: The rank term and the expansion of \(\widehat S_c\).}

Write the rank term as
\begin{equation}
    \widehat\sigma_{\varepsilon v}^{\,2}\frac{d_c^2+d_c}{n\widehat h_c^2}
    =\frac{\sigma_{\varepsilon v}^2R_{d,c}}{nh_c^2}
     \cdot\frac{\widehat\sigma_{\varepsilon v}^{\,2}}{\sigma_{\varepsilon v}^2}
     \cdot\frac{d_c^2+d_c}{R_{d,c}}
     \cdot\frac{h_c^2}{\widehat h_c^2},
    \label{eq:rank-term-factorization}
\end{equation}
where, by Proposition~\ref{prop:ratio}, Lemmas~\ref{lem:denominator},
\ref{lem:pilot-rates}, and~\ref{lem:primitive-consequences}, and the fixed
\(\sigma_{\varepsilon v}\ne0\) in Assumption~\ref{assm:gaussian},
\begin{equation*}
    \begin{aligned}
        \frac{\widehat\sigma_{\varepsilon v}^{\,2}}{\sigma_{\varepsilon v}^2}
            &=1+O_p(n^{-1/2}),\\
        \frac{d_c^2+d_c}{R_{d,c}}&=1+o_p(1),\\
        \frac{h_c^2}{\widehat h_c^2}&=1+O_p(n^{-1/2}).
    \end{aligned}
\end{equation*}
Since \(\sigma_{\varepsilon v}^2R_{d,c}/(nh_c^2)\le S_c\) by
\eqref{eq:closed-form-criterion}, \eqref{eq:rank-term-factorization} gives
\begin{equation}
    \widehat\sigma_{\varepsilon v}^{\,2}
       \frac{d_c^2+d_c}{n\widehat h_c^2}
    =\sigma_{\varepsilon v}^2\frac{R_{d,c}}{nh_c^2}
      +o_p(S_c).
    \label{eq:rank-score-expansion}
\end{equation}
Since \(\bar Q_c/S_c\to1\) by \eqref{eq:Q-S-comparison}, the
\(o_p(\bar Q_c)\) remainder in \eqref{eq:variance-score-reduced} is equivalently \(o_p(S_c)\).  Substitute \eqref{eq:variance-score-reduced} and
\eqref{eq:rank-score-expansion} into \eqref{eq:feasible-score} and
collect the two leading terms into \(S_c\) using
\eqref{eq:closed-form-criterion}; the remainders are \(o_p(S_c)\), so
\begin{equation}
    \frac{|\widehat S_c-B_n-S_c|}{S_c}
    \to_p0.
    \label{eq:feasible-auxiliary-expansion}
\end{equation}
Since \(J\) is fixed, this proves \eqref{eq:feasible-consistency}.

\Needspace{6\baselineskip}
\emph{Step 3: Conversion to \(\bar Q_c\).}

Adding and subtracting \(S_c\) and applying the triangle inequality
gives
\begin{align*}
    \frac{|\widehat S_c-B_n-\bar Q_c|}{\bar Q_c}
    &\le
      \frac{|\widehat S_c-B_n-S_c|}{S_c}
      \cdot\frac{S_c}{\bar Q_c}
      +
      \frac{|S_c-\bar Q_c|}{\bar Q_c}.
\end{align*}
By \eqref{eq:feasible-auxiliary-expansion} and
\eqref{eq:Q-S-comparison}, both terms on the right-hand side
are \(o_p(1)\); since \(J\) is fixed, this proves
\eqref{eq:feasible-expansion} and hence Theorem~\ref{thm:feasible}(i).

\Needspace{6\baselineskip}
\emph{Step 4: The oracle inequality.}

Define
\begin{equation*}
    \delta_n
    :=\max_c
       \frac{|\widehat S_c-B_n-\bar Q_c|}{\bar Q_c}.
\end{equation*}
By Lemma~\ref{lem:amse-comparison}, \(\min_{c\in[J]}\bar Q_c>0\) for
all sufficiently large \(n\), so \(\delta_n\) is well defined.  Let
\begin{equation*}
    c_n^*:=\min\argmin_{c\in[J]}\bar Q_c.
\end{equation*}
Work on the event
\(\{\min_{c\in[J]}\widehat h_c>0,\ \delta_n<1\}\).
By the definition of \(\delta_n\) and optimality of \(\widehat c\),
\begin{align*}
    B_n+(1-\delta_n)\bar Q_{\widehat c}
    &\le\widehat S_{\widehat c}
     \le\widehat S_{c_n^*}
     \le B_n+(1+\delta_n)\bar Q_{c_n^*}.
\end{align*}
After cancelling \(B_n\) and using
\(\bar Q_{\widehat c}\ge \bar Q_{c_n^*}\),
\begin{equation}
    0
    \le\frac{\bar Q_{\widehat c}}{\bar Q_{c_n^*}}-1
    \le\frac{1+\delta_n}{1-\delta_n}-1
    =\frac{2\delta_n}{1-\delta_n}.
    \label{eq:oracle-ratio-bound}
\end{equation}
By \eqref{eq:feasible-expansion} and \eqref{eq:screen-inactive},
\begin{equation*}
    \P\!\left(
       \min_{c\in[J]}\widehat h_c>0,
       \ \delta_n<1
    \right)\to1,
    \qquad
    \delta_n\to_p0.
\end{equation*}
Equation~\eqref{eq:oracle-ratio-bound} therefore proves
\eqref{eq:oracle-equivalence}, and hence
Theorem~\ref{thm:feasible}(ii).
\end{proof}
\section{Supporting Results for Theorem~\ref{thm:main}}
\label{sec:supporting-main}

\begin{lemma}[Denominator Moments and Rate]
\label{lem:denominator}
(i) Under Assumption~\ref{assm:gaussian}, for each candidate,
\begin{align}
    \E[\widehat h_c]
       &=h_c+\frac{\sigma_v^2\bar d_c}{n},
       \label{eq:denominator-mean}\\
    \Var(\widehat h_c)^{1/2}
       &\le
       2\sigma_v\sqrt{\frac{H+\sigma_v^2}{n}}.
       \label{eq:denominator-l2}
\end{align}
(ii) If Assumptions~\ref{assm:primitive} and \ref{assm:strong-core} also
hold, with \(J<\infty\) fixed, then
\begin{equation}
    \max_{c\in[J]}|\widehat h_c-h_c|
    =O_p(n^{-1/2}).
    \label{eq:denominator-rate}
\end{equation}
(iii) Under the hypotheses of (ii), with \(\mathcal H_n\) defined by
\eqref{eq:denominator-event}, \(\P(\mathcal H_n)\to1\), and on
\(\mathcal H_n\) every candidate satisfies
\(\En[\widehat\Pi_{c,i}^2]>0\) and
\(\widehat\beta_c=\En[\widehat\Pi_{c,i}y_i]/\widehat h_c\).
\end{lemma}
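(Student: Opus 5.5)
Write \(\widehat h_c=\En[\widehat\Pi_{c,i}x_i]=n^{-1}\widehat\Pi_c'(\Pi+v)\) and treat \(\widehat\Pi_c=\mu_c(\Pi+v)\) as a function of the Gaussian vector \(v\sim N(0,\sigma_v^2I_n)\), with \(\Pi\) fixed after conditioning on the instruments. For the mean in (i), split \(\E[\widehat h_c]=n^{-1}\E[\Pi'\widehat\Pi_c]+n^{-1}\E[v'\widehat\Pi_c]\). The first term equals \(h_c\) by the definition \eqref{eq:h-def}. For the second, use Stein's identity. Since \(\mu_c\) is one-Lipschitz (Lemma~\ref{lem:geometry}), it is weakly differentiable with bounded derivative, so \(\E[v'\mu_c(\Pi+v)]=\sigma_v^2\E[\diver\mu_c(\Pi+v)]=\sigma_v^2\bar d_c\) by \eqref{eq:projection-properties}. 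This is the same computation as in the Gaussian error decomposition paragraph, and it gives \eqref{eq:denominator-mean}.

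For the variance bound \eqref{eq:denominator-l2}, apply the Gaussian Poincaré inequality to \(g(v):=n^{-1}\mu_c(\Pi+v)'(\Pi+v)\), which is locally Lipschitz with polynomial growth. Almost everywhere,
\[
\nabla g(v)=n^{-1}\bigl\{\dot\mu_c(\Pi+v)'(\Pi+v)+\mu_c(\Pi+v)\bigr\}.
\]
Because \(\dot\mu_c\) is a contraction, and \(\lVert\mu_c(w)\rVert_2\le\lVert w\rVert_2\) (this follows from \(\mu_c(0)=0\) and the Lipschitz property), we get \(\lVert\nabla g\rVert_2\le 2n^{-1}\lVert\Pi+v\rVert_2\). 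Hence
\[
\Var(\widehat h_c)\le\sigma_v^2\E\lVert\nabla g\rVert_2^2\le 4\sigma_v^2 n^{-2}\E\lVert\Pi+v\rVert_2^2=4\sigma_v^2(H+\sigma_v^2)/n,
\]
which is \eqref{eq:denominator-l2}. The only delicate point is justifying the chain rule and Stein/Poincaré for merely Lipschitz maps. Rademacher's theorem together with the standard Sobolev versions of these Gaussian inequalities covers it, since the gradient has finite second moment.

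For (ii), Chebyshev's inequality and (i) give \(\widehat h_c-\E[\widehat h_c]=O_p(n^{-1/2})\) for each \(c\), and hence for the maximum, because \(J\) is fixed. The remaining bias is \(\sigma_v^2\bar d_c/n\), so I need \(\bar d_c=O(\sqrt n)\). I expect this to be the main obstacle. The plan is to invoke the sparsity bound from Lemma~\ref{lem:primitive-consequences}: with high probability \(d_c\lesssim s_c\) by the usual sparse-eigenvalue argument of \citet{BCCH-2012} under Assumption~\ref{assm:primitive}(iii)--(iv). On the complementary event, use the trivial bound \(d_c\le n\) multiplied by an exponentially small probability. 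This gives \(\bar d_c=O(s_c)\), and then \(s_c/\sqrt n\to0\) from \eqref{eq:growth-rates} makes the bias \(o(n^{-1/2})\).

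For (iii), \eqref{eq:population-bounds} gives \(\min_c h_c\ge\underline h\) for large \(n\), and (ii) gives \(\max_c|\widehat h_c-h_c|\to_p0\). Therefore \(\P(\mathcal H_n)\to1\). On \(\mathcal H_n\), \(\widehat h_c>0\). The KKT identity \(\widehat h_c=\En[\widehat\Pi_{c,i}^2]+\lambda_c\lVert\widehat\pi_c\rVert_1\) then forces \(\widehat\pi_c\ne0\). If \(\En[\widehat\Pi_{c,i}^2]\) were \(0\), then \(\widehat\Pi_c=0\) and the objective could be strictly lowered by setting \(\pi=0\), a contradiction. So \(\En[\widehat\Pi_{c,i}^2]>0\), and the first branch of \eqref{eq:iv-estimator} gives \(\widehat\beta_c=\En[\widehat\Pi_{c,i}y_i]/\widehat h_c\).
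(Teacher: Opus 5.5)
Your parts (i) and (iii) are sound (your Poincar\'e bound on \(\nabla g\) is the same computation the paper performs through \eqref{eq:offset-product} at \(b=0\), \(c_0=0\)). The bias step in (ii), however, has a genuine gap. You bound \(\bar d_c\le Ms_c+1+n\,\P(\mathcal G_c^c)\) and claim that \(\P(\mathcal G_c^c)\) is exponentially small. It is not. \Cref{assm:primitive}(iii) only places \(\lambda_c\) at the scale \(\sqrt{\{\log(2p_c)+\alpha\log\log(p_c\vee n)\}/n}\). At that penalty the Gaussian tail bound yields only \(\P(\mathcal G_c^c)\le\eta_{n,c}\le\{\log(p_c\vee n)\}^{-\alpha}\), and when the dictionary columns are orthogonal this polylogarithmic order is attained up to a \(\sqrt{\log p_c}\) factor. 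Multiplying your bias bound by \(\sqrt n\) then leaves a term of order \(\sqrt n\{\log(p_c\vee n)\}^{-\alpha}\). When \(p_c\) grows polynomially in \(n\), this diverges for every fixed \(\alpha>0\). So your argument delivers \(\bar d_c=o(n)\), which is enough for \(\widehat h_c-h_c\to_p0\) and hence for (iii). It does not deliver \(\bar d_c=o(\sqrt n)\), which is what \eqref{eq:denominator-rate} needs. A symptom of the gap is that your proof of (ii) never uses \Cref{assm:strong-core}, although the statement imposes it there.

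The missing idea is to control \(\E[d_c]\) through concentration of \(d_c\), so that the bad event only needs probability below \(1/2\). The paper combines two ingredients:
\begin{itemize}
\item the Bellec--Zhang variance inequality of \Cref{lem:rank-variance}, \(\Var(d_c)\le C[1+\bar d_c\log\{ep_c/(\bar d_c\vee1)\}]\);
\item the bound \(k_c\le d_c<Ms_c+1\) on \(\mathcal G_c\) from \eqref{eq:rank-event-bound}.
\end{itemize}
Suppose \(\bar d_c>2(Ms_c+1)\). Then \(d_c<\bar d_c/2\) on \(\mathcal G_c\), so \(\Var(d_c)\ge\P(\mathcal G_c)\,\bar d_c^2/4\ge\bar d_c^2/8\) once \(\eta_{n,c}\le1/2\). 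The variance inequality then forces \(\bar d_c\le16C\log(ep_c/\bar d_c)\). Strong-core recovery gives \(\bar d_c\ge k_c(1-\eta_{n,c})\ge k_c/2\), and \eqref{eq:strong-core-growth} gives \(\log(p_c/k_c)/k_c\to0\), so this is impossible for large \(n\). Hence \(\bar d_c\le2(Ms_c+1)\), as in \eqref{eq:mean-rank-bounds}, and \(s_c/\sqrt n\to0\) makes the bias \(\sigma_v^2\bar d_c/n=o(n^{-1/2})\). This is exactly where \Cref{assm:strong-core} enters, via \eqref{eq:primitive-rank-conclusions}, which the paper's proof of (ii) cites directly. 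Citing that conclusion would have matched the paper; the argument you sketch in its place does not establish it.
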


\Needspace{6\baselineskip}
\begin{proof}
\emph{Step 1: Mean of \(\widehat h_c\).}

By the definition of \(\widehat h_c\) in \eqref{eq:sample-denominator} and
\(x=\Pi+v\),
\begin{align*}
    \E[\widehat h_c]
    &=\Ebar[\widehat\Pi_{c,i}x_i]\\
    &=\Ebar[\widehat\Pi_{c,i}\Pi_i]
      +\Ebar[\widehat\Pi_{c,i}v_i]\\
    &=h_c+\frac{\E[v'\widehat\Pi_c]}{n}
     =h_c+\frac{\sigma_v^2\bar d_c}{n},
\end{align*}
where the third equality uses the definition of \(h_c\) in
\eqref{eq:h-def} and the last uses \eqref{eq:first-stein} with
\(\bar d_c:=\E[d_c]\) as in \eqref{eq:rank-moments}.

\Needspace{6\baselineskip}
\emph{Step 2: Variance bound for \(\widehat h_c\).}

Note that \(n\widehat h_c=x'\mu_c(x)\).  At \(b=0\) and \(c_0=0\), the offset variance bounds in \eqref{eq:offset-product} of Lemma~\ref{lem:offset-fit-moments} give
\begin{equation*}
    \Var\{x'\mu_c(x)\}
    \le2\sigma_v^2
      \left\{
         \E[\lVert\widehat\Pi_c\rVert_2^2]+\E[\lVert x\rVert_2^2]
      \right\}
    \le4\sigma_v^2n(H+\sigma_v^2),
\end{equation*}
where the second inequality uses
\(\lVert\widehat\Pi_c\rVert_2\le\lVert x\rVert_2\) from
\eqref{eq:fit-contraction-at-zero} and
\(\Ebar[x_i^2]=\En[\Pi_i^2]+\sigma_v^2=H+\sigma_v^2\).  Dividing by
\(n^2\) and taking square roots proves \eqref{eq:denominator-l2}.

\Needspace{6\baselineskip}
\emph{Step 3: Root-\(n\) rate for \(\widehat h_c-h_c\) and the event
\(\mathcal H_n\).}

Equation~\eqref{eq:primitive-rank-conclusions} gives
\(\bar d_c=o(\sqrt n)\), so by \eqref{eq:denominator-mean},
\begin{equation*}
    \sqrt n|\E[\widehat h_c]-h_c|
    =\frac{\sigma_v^2\bar d_c}{\sqrt n}\to0.
\end{equation*}
Chebyshev's inequality and \eqref{eq:denominator-l2} give
\(\sqrt n|\widehat h_c-\E[\widehat h_c]|=O_p(1)\), because
\(n\Var(\widehat h_c)\le4\sigma_v^2(H+\sigma_v^2)\), which is bounded by
Assumption~\ref{assm:gaussian}.  The triangle inequality then yields
\begin{align*}
    \sqrt n|\widehat h_c-h_c|
    &\le\sqrt n|\widehat h_c-\E[\widehat h_c]|
       +\sqrt n|\E[\widehat h_c]-h_c|\\
    &=O_p(1)+o(1)=O_p(1).
\end{align*}
Since \([J]\) is finite, \eqref{eq:denominator-rate} follows; together
with \(\min_ch_c\ge\underline h\) from \eqref{eq:population-bounds} it
gives \(\P(\mathcal H_n)\to1\).  On \(\mathcal H_n\), by
Cauchy--Schwarz, for every \(c\in[J]\),
\begin{equation*}
    0<\frac{\underline h}{2}
    \le\widehat h_c
    =\En[\widehat\Pi_{c,i}x_i]
    \le\En[\widehat\Pi_{c,i}^2]^{1/2}\En[x_i^2]^{1/2},
\end{equation*}
so \(\En[\widehat\Pi_{c,i}^2]>0\) and \(\widehat\beta_c\) equals the ratio
in \eqref{eq:iv-estimator}.
\end{proof}

\begin{lemma}[Moments of \(\Gamma_c\)]
\label{lem:auxiliary-moments}
Suppose Assumptions~\ref{assm:gaussian} and~\ref{assm:primitive} hold,
with \(J<\infty\) fixed.  Then, for each candidate and all sufficiently
large \(n\),
\begin{equation}
    \E[\Gamma_c^2]=\frac{\sigma_\varepsilon^2}{H}+S_c,
    \label{eq:auxiliary-exact-second}
\end{equation}
and
\begin{equation}
    \E[(\Gamma_c-L_n)^2]\le2S_c.
    \label{eq:Gamma-L-bound}
\end{equation}
\end{lemma}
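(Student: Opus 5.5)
The plan is to reduce both claims to Gaussian moment identities for the LASSO fit, viewed as a function of the first-stage noise \(v\). First, rewrite both quantities as quadratic forms in \(\varepsilon\). By \eqref{eq:auxiliary-criterion}, \(\Gamma_c=\varepsilon'\widehat\Pi_c/(\sqrt n\,h_c)\), which is well defined for large \(n\) because \(\min_c h_c\ge\underline h>0\) (\eqref{eq:population-bounds}). Using \eqref{eq:centered-fit-definition}, \(\widehat\Pi_c/h_c-\Pi/H=\widetilde\Pi_c/h_c\), so
\[
    \Gamma_c-L_n=\frac{\varepsilon'\widetilde\Pi_c}{\sqrt n\,h_c}.
\]
It therefore suffices to compute \(\E[(\varepsilon'f)^2]\) for \(f=\mu_c(\Pi+v)-b\), where \(b\in\{0,(h_c/H)\Pi\}\) is a nonrandom offset.

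Next, apply the decomposition \eqref{eq:comparison-error-decomposition}, \(\varepsilon=(\sigma_{\varepsilon v}/\sigma_v^2)v+e\) with \(e\perp v\). Since \(f\) depends only on \(v\), and \(e\) has mean zero and is independent of \(v\), the cross term vanishes. This gives
\[
    \E[(\varepsilon'f)^2]
    =\Bigl(\sigma_\varepsilon^2-\frac{\sigma_{\varepsilon v}^2}{\sigma_v^2}\Bigr)\E\lVert f\rVert_2^2
     +\frac{\sigma_{\varepsilon v}^2}{\sigma_v^4}\E[(v'f)^2].
\]
The key input is the second-order Stein identity of \citet{BellecZhang-2021}, which I expect is the content of the offset-fit moment bounds in Lemma~\ref{lem:offset-fit-moments}:
\[
    \E[(v'f)^2]=\sigma_v^2\E\lVert f\rVert_2^2+\sigma_v^4\E\bigl[(\diver f)^2+\operatorname{tr}\{(\nabla f)^2\}\bigr].
\]
By \eqref{eq:projection-properties}, \(\nabla f=\dot\mu_c(\Pi+v)\) is almost surely an orthogonal projection of rank \(d_c\). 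Hence \(\diver f=d_c\) and \(\operatorname{tr}\{(\nabla f)^2\}=d_c\), so \(\E[(v'f)^2]=\sigma_v^2\E\lVert f\rVert_2^2+\sigma_v^4\E[d_c^2+d_c]\). Substituting, the \(\sigma_{\varepsilon v}^2/\sigma_v^2\) terms cancel and
\[
    \E[(\varepsilon'f)^2]=\sigma_\varepsilon^2\E\lVert f\rVert_2^2+\sigma_{\varepsilon v}^2\E[d_c^2+d_c].
\]

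Now specialize the offset. For \(b=0\), \eqref{eq:centered-fit-properties} and \eqref{eq:amse-approx} give \(\E\lVert\widehat\Pi_c\rVert_2^2=n\Ebar[\widehat\Pi_{c,i}^2]=n(h_c^2/H+\mathcal A_c)\). Dividing by \(nh_c^2\) yields \(\E[\Gamma_c^2]=\sigma_\varepsilon^2/H+\sigma_\varepsilon^2\mathcal A_c/h_c^2+\sigma_{\varepsilon v}^2\E[d_c^2+d_c]/(nh_c^2)\), which is \eqref{eq:auxiliary-exact-second} by \eqref{eq:closed-form-criterion}. For \(b=(h_c/H)\Pi\), \(\E\lVert\widetilde\Pi_c\rVert_2^2=n\mathcal A_c\) by \eqref{eq:centered-fit-properties}, so in fact \(\E[(\Gamma_c-L_n)^2]=S_c\), which is stronger than \eqref{eq:Gamma-L-bound}.

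The main obstacle is justifying the second-order Stein identity for the LASSO fit. The map \(\mu_c\) is only Lipschitz, and its Jacobian is piecewise constant with jumps across facets of the KKT polyhedral partition. The naive route of applying first-order Stein to \(d_c f\) fails, because it requires differentiating the discontinuous \(d_c\). I would instead invoke the \citet{BellecZhang-2021} result directly: it only requires \(f\) to be Lipschitz with square-integrable weak Jacobian, which holds here by Lemma~\ref{lem:geometry} (one-Lipschitz fit, Rademacher) and \(\lVert f\rVert_2\le\lVert x\rVert_2+\lVert b\rVert_2\). Because this identity is stated in terms of \(\operatorname{tr}\{(\nabla f)^2\}\) rather than derivatives of \(\diver f\), the jumps in \(d_c\) never need to be handled. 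Some care is still needed to confirm that the offset version (nonzero \(b\), nonzero mean \(\Pi\)) is covered, since Stein's identity applies in \(v\) with \(\Pi\) and \(b\) fixed.
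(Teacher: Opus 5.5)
\textbf{Gap: the ``key input'' identity is misstated, and it fails at the nonzero offset.} The identity you attribute to \citet{BellecZhang-2021} is false in general. Bellec--Zhang (and \eqref{eq:offset-stein} in the paper) give only the centered identity
\[
    \E\bigl[(v'f-\sigma_v^2\diver f)^2\bigr]
    =\sigma_v^2\E\lVert f\rVert_2^2
     +\sigma_v^4\E\bigl[\operatorname{tr}\{(\nabla f)^2\}\bigr].
\]
Your uncentered version follows from this only if the cross term \(\E[(v'f-\sigma_v^2\diver f)\diver f]\) vanishes. For smooth \(f\) that cross term equals \(\sigma_v^2\E[f'\nabla\diver f]\), which is generally nonzero. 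For example, with \(n=1\), \(\sigma_v=1\), and the Lipschitz map \(f=\sin\), your formula is off by \(1-e^{-2}\).

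With \(\diver f=d_c\), this cross term is exactly the ``differentiate the discontinuous \(d_c\)'' problem. Invoking Bellec--Zhang does not sidestep it. At \(b=0\) the cross term does vanish, but that needs its own argument. The paper applies \eqref{eq:stein-covariance} to the locally Lipschitz function \(g(a)=a'\mu_c(\Pi+a)\), whose gradient at \(v\) is \(\widehat\Pi_c+Pv\) with \(P=\dot\mu_c(\Pi+v)\). Using \(P\widehat\Pi_c=\widehat\Pi_c\) and \(\E[v'\widehat\Pi_c]=\sigma_v^2\bar d_c\), it obtains \(\E[\zeta_c\,v'\widehat\Pi_c]=\E[\zeta_c^2]\) for \(\zeta_c:=v'\widehat\Pi_c-\sigma_v^2d_c\). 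This is equivalent to \(\E[d_c\zeta_c]=0\), i.e.\ \eqref{eq:d-zeta-zero}. With that step inserted, your derivation of \eqref{eq:auxiliary-exact-second} is correct and matches the paper's.

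At \(b=(h_c/H)\Pi\) the cross term does not vanish, so your claim that \(\E[(\Gamma_c-L_n)^2]=S_c\) exactly is false. Since \(\widetilde\zeta_c=v'\widetilde\Pi_c-\sigma_v^2d_c=\zeta_c-(h_c/H)\Pi'v\), the previous step gives \(\E[\widetilde\zeta_cd_c]=-(h_c/H)\E[(\Pi'v)d_c]\). Substituting into your \(e\)-decomposition yields
\[
    \E[(\Gamma_c-L_n)^2]
    =S_c-\frac{2\sigma_{\varepsilon v}^2}{\sigma_v^2nh_cH}\E[(\Pi'v)d_c].
\]
The last expectation is typically nonzero. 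With a single dictionary column \(z\) and \(\Pi=\pi z\), \(\pi>0\), one has \(d_c=\ind\{|\pi+z'v/n|>\lambda_c\}\), which is strictly positively correlated with \(\Pi'v=\pi z'v\).

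The paper never needs this term. It writes \(\varepsilon'\widetilde\Pi_c=(\varepsilon'\widetilde\Pi_c-\sigma_{\varepsilon v}d_c)+\sigma_{\varepsilon v}d_c\) and applies \((a+b)^2\le2a^2+2b^2\). It then evaluates \(\E[(\varepsilon'\widetilde\Pi_c-\sigma_{\varepsilon v}d_c)^2]=\sigma_\varepsilon^2n\mathcal A_c+\sigma_{\varepsilon v}^2\bar d_c\) using the \(e\)-decomposition and the centered identity \eqref{eq:centered-fit-stein-bound}. Adding \(\sigma_{\varepsilon v}^2\E[d_c^2]\) and dividing by \(nh_c^2\) gives exactly \(2S_c\). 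The factor \(2\) in \eqref{eq:Gamma-L-bound} is the price of avoiding the cross term, not slack you can remove.
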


\begin{proof}
Suppress the candidate subscript.  By \eqref{eq:population-bounds} of
Lemma~\ref{lem:primitive-consequences}, \(h\ge\underline h>0\) for all
sufficiently large \(n\), so \(\Gamma\) is well defined eventually.
For \(\mathcal A\), \(R_d\), and \(\Gamma\) defined by
\eqref{eq:amse-approx}, \eqref{eq:rank-moments}, and
\eqref{eq:auxiliary-criterion}, the LASSO moments derived in \eqref{eq:structural-second} of Lemma~\ref{lem:gaussian-moments} give
\begin{equation*}
    \E[\Gamma^2]
    =\frac{\E[(\varepsilon'\widehat\Pi)^2]}{nh^2}
    =\frac{\sigma_\varepsilon^2\Ebar[\widehat\Pi_i^2]}{h^2}
      +\frac{\sigma_{\varepsilon v}^2R_d}{nh^2}
    =\frac{\sigma_\varepsilon^2}{H}+S,
\end{equation*}
where the last equality substitutes
\(\Ebar[\widehat\Pi_i^2]=\mathcal A+h^2/H\) from \eqref{eq:amse-approx}
and the definition of \(S\) in \eqref{eq:closed-form-criterion}.

For \(\widetilde\zeta:=v'\widetilde\Pi-\sigma_v^2d\), the decomposition
\eqref{eq:comparison-error-decomposition} gives
\(\varepsilon'\widetilde\Pi-\sigma_{\varepsilon v}d
=(\sigma_{\varepsilon v}/\sigma_v^2)\widetilde\zeta+e'\widetilde\Pi\).
Since \(e\) is independent of \(v\) with mean zero,
\eqref{eq:centered-fit-properties} and \eqref{eq:centered-fit-stein-bound}
give
\begin{equation*}
    \begin{aligned}
    \E[(\varepsilon'\widetilde\Pi-\sigma_{\varepsilon v}d)^2]
    &=\left(\sigma_\varepsilon^2
       -\frac{\sigma_{\varepsilon v}^2}{\sigma_v^2}\right)n\mathcal A
       +\frac{\sigma_{\varepsilon v}^2}{\sigma_v^4}\E[\widetilde\zeta^2]\\
    &=\sigma_\varepsilon^2n\mathcal A+\sigma_{\varepsilon v}^2\bar d.
    \end{aligned}
\end{equation*}
By \eqref{eq:centered-fit-definition},
\(\Gamma-L_n=\varepsilon'\widetilde\Pi/(\sqrt n\,h)\), so
\(R_d=\E[d^2]+\bar d\) and \eqref{eq:closed-form-criterion} yield
\begin{align*}
    \E[(\Gamma-L_n)^2] 
    &= \frac{1}{nh^2}\E[(\varepsilon'\widetilde\Pi)^2] \\ 
    &= \frac{1}{nh^2}\E[(\varepsilon'\widetilde\Pi - \sigma_{\eps v}d + \sigma_{\eps v}d)^2] \\ 
    &\le\frac{2}{nh^2}\left\{
       \E[(\varepsilon'\widetilde\Pi-\sigma_{\varepsilon v}d)^2]
       +\sigma_{\varepsilon v}^2\E[d^2]\right\}\\*
    &=2S.\qedhere
\end{align*}
\end{proof}

\begin{lemma}[AMSE Equivalence and Rates]
\label{lem:amse-comparison}
Suppose Assumptions~\ref{assm:gaussian}, \ref{assm:primitive}, and
\ref{assm:strong-core} hold, with \(J<\infty\) fixed.  Then, uniformly
over candidates,
\begin{equation}
    \frac{\bar Q_c}{S_c}\to1.
    \label{eq:Q-S-comparison}
\end{equation}
Moreover, uniformly over candidates,
\begin{equation}
    \bar Q_c\asymp\mathcal A_c+\frac{\bar d_c^2}{n},
    \qquad
    \max_{c\in[J]}\bar Q_c\to0,
    \qquad
    \min_{c\in[J]}n\bar Q_c\to\infty.
    \label{eq:Q-rates}
\end{equation}
For every candidate, \(\bar Q_c>0\) for all sufficiently large \(n\).
The bounds in \eqref{eq:score-ratio-bounds} continue to hold with
\(\bar Q_c\) in place of \(S_c\).
\end{lemma}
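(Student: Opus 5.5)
The plan is to expand \(\Lambda_c\) around the oracle term \((1-q)L_n\) and show that every correction to \(S_c\) is \(o(S_c)\). Substituting \(\delta_{h,c}=h_cq+\En[\Pi_i\widetilde\Pi_{c,i}]+\En[\widetilde\Pi_{c,i}v_i]\) (the identity used in Step 1 of the proof of Theorem~\ref{thm:feasible}) into \eqref{eq:leading-term} gives
\begin{equation*}
    \Lambda_c=(1-q)L_n+D_c,
    \qquad
    D_c:=(\Gamma_c-L_n)-\frac{\En[\Pi_i\widetilde\Pi_{c,i}]+\En[\widetilde\Pi_{c,i}v_i]}{h_c}\,L_n .
\end{equation*}
Since \(\E[C_0]=C_{0n}\) by Lemma~\ref{lem:oracle-baseline}, this yields
\begin{equation*}
    \bar Q_c=\E[D_c^2]+2\E[(1-q)L_nD_c].
\end{equation*}
I would show that \(\E[D_c^2]=S_c(1+o(1))\) and that the cross term is \(o(S_c)\). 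The rates in \eqref{eq:Q-rates} then follow from rates for \(S_c\), and positivity of \(\bar Q_c\) follows from \(S_c>0\).

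\emph{Step 1: rates for \(S_c\).} From \eqref{eq:primitive-rank-conclusions} I use \(R_{d,c}/\bar d_c^2\to1\) and \(\bar d_c\to\infty\), and from \eqref{eq:population-bounds} I use \(\underline h\le h_c\le C\). Together these give \(S_c\asymp\mathcal A_c+\bar d_c^2/n\). Next, \(\mathcal A_c\le\Ebar[(\widehat\Pi_{c,i}-\Pi_i)^2]\), which is \(O(s_c\lambda_c^2+s_c/n)\to0\) by the usual LASSO prediction bounds (Lemma~\ref{lem:primitive-consequences}). The sparse-eigenvalue bound gives \(\bar d_c\lesssim s_c\), so \(\bar d_c^2/n\lesssim s_c^2/n\to0\). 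This proves \(\max_c S_c\to0\). Finally, \(nS_c\gtrsim\bar d_c^2\to\infty\) because \(\bar d_c\ge k_c(1-o(1))\) under Assumption~\ref{assm:strong-core}.

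\emph{Step 2: the square term.} The leading piece is \(\E[(\Gamma_c-L_n)^2]=\E[(\varepsilon'\widetilde\Pi_c)^2]/(nh_c^2)\). The computation in the proof of Lemma~\ref{lem:auxiliary-moments} evaluates it exactly as
\begin{equation*}
    \bigl(\sigma_\varepsilon^2n\mathcal A_c+\sigma_{\varepsilon v}^2\bar d_c+\sigma_{\varepsilon v}^2\E[d_c^2]+2\sigma_{\varepsilon v}\E[d_c(\varepsilon'\widetilde\Pi_c-\sigma_{\varepsilon v}d_c)]\bigr)/(nh_c^2).
\end{equation*}
The last covariance is at most \(C(\sqrt{n\mathcal A_c+\bar d_c})\,\Var(d_c)^{1/2}\) after centering, by Cauchy--Schwarz. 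Since \(\Var(d_c)=o(\bar d_c^2)\), it is \(o(n S_c)\), so \(\E[(\Gamma_c-L_n)^2]=S_c(1+o(1))\). For the remaining piece of \(D_c\), Gaussian hypercontractivity gives \(\E[L_n^4]\le C\), and I need \(\E[\En[\Pi_i\widetilde\Pi_{c,i}]^4]^{1/2}\lesssim\mathcal A_c/n\); this should follow from Gaussian concentration of the one-Lipschitz map \(\mu_c\), since \(\En[\Pi_i\widetilde\Pi_{c,i}]\) has mean zero. I also need \(\E[\En[\widetilde\Pi_{c,i}v_i]^4]^{1/2}\lesssim(\bar d_c^2+n\mathcal A_c)/n^2\). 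These make the second piece \(O(\mathcal A_c/n+\bar d_c^2/n^2)=o(S_c)\), and Cauchy--Schwarz absorbs the cross product inside \(\E[D_c^2]\).

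\emph{Step 3: the cross term.} This is the main obstacle. A naive Cauchy--Schwarz bound only gives \(O(\sqrt{S_c})\), which is not \(o(S_c)\), so exact mean computations are needed.
\begin{itemize}
    \item For \(\E[(1-q)L_n(\Gamma_c-L_n)]\), expand \(\E[\varepsilon'\Pi\,\varepsilon'\widetilde\Pi_c]\) using the decomposition \eqref{eq:comparison-error-decomposition} together with first- and second-order Stein identities (Bellec--Zhang). The leading term is \(\sigma_\varepsilon^2\E[\Pi'\widetilde\Pi_c]=0\) by \eqref{eq:centered-fit-properties}. The Stein remainders involve \(\E[\Pi'\dot\mu_c\Pi]\le nH\) and divergence terms, and should be \(O(1/n+\bar d_c/n)\). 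The factor \(q\) contributes further terms of order \(n^{-1/2}\sqrt{S_c}\).
    \item For \(\E[L_n^2\En[\widetilde\Pi_{c,i}v_i]]\), apply Stein in \(v\) again; the result should be \(O(\bar d_c/n)+O(\sqrt{\mathcal A_c/n})\).
    \item For \(\E[L_n^2\En[\Pi_i\widetilde\Pi_{c,i}]]\), the same argument gives \(O(\sqrt{\mathcal A_c/n})\).
\end{itemize}
All of these are \(o(S_c)\). Indeed \(\bar d_c/n=o(\bar d_c^2/n)\) because \(\bar d_c\to\infty\). By AM--GM, \(\sqrt{\mathcal A_c/n}\le\eta\mathcal A_c+(4\eta n)^{-1}\) for any \(\eta>0\), and \(1/n=o(S_c)\) because \(nS_c\to\infty\).

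Combining Steps 2 and 3 gives \eqref{eq:Q-S-comparison}, which holds uniformly because \(J\) is fixed. Then \eqref{eq:Q-rates}, the claim \(\bar Q_c>0\) for large \(n\), and the transfer of \eqref{eq:score-ratio-bounds} from \(S_c\) to \(\bar Q_c\) all follow from Step 1.
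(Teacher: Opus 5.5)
Your decomposition matches the paper's: $D_c=U_c-\varrho_cL_n$, with $U_c:=\Gamma_c-L_n$ and $\varrho_c:=(\Pi+v)'\widetilde\Pi_c/(nh_c)$. Your Step~2 evaluation of $\E[U_c^2]$, via the centered second-order identity and $\Var(d_c)=o(\bar d_c^2)$, is sound. The gap is in the first bullet of Step~3, which you rightly flag as the crux. Conditioning on $v$ and applying \eqref{eq:stein-covariance} once gives exactly
\[
    \E[L_nU_c]
    =\frac{\sigma_{\varepsilon v}^2}{\sigma_v^4nHh_c}\E[(\Pi'v)(v'\widetilde\Pi_c)]
    =\frac{\sigma_{\varepsilon v}^2}{\sigma_v^2nHh_c}\E[d_c\,\Pi'v].
\]
Stein cannot go further from here. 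The quantity $d_c$ (equivalently $\dot\mu_c(\Pi+v)\Pi$) is piecewise constant in $v$: its a.e.\ gradient is zero but its distributional derivative is singular. So no divergence term evaluates $\E[d_c\,\Pi'v]$.

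The rate you assert is also false in general. Take an orthogonal design with $\Pi=Z_c\pi_c^0$, and let a fixed fraction of $T_c$ sit exactly at $\pi_{c,j}^0=\lambda_c$. Assumption~\ref{assm:strong-core} permits this, since it constrains only $A_c$. Each such coordinate contributes order $\sigma_v\sqrt n\lambda_c\gtrsim\sigma_v^2\sqrt{\log p_c}$ to $\E[d_c\,\Pi'v]$. Hence $\E[L_nU_c]\gtrsim\bar d_c\sqrt{\log p_c}/n$, not $O(1/n+\bar d_c/n)$, and your closing rate bookkeeping does not cover this term.

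The repair is the paper's key move. Lemma~\ref{lem:auxiliary-moments}, which rests on $\E[d_c\zeta_c]=0$ from \eqref{eq:d-zeta-zero}, gives $\E[\Gamma_c^2]=\sigma_\varepsilon^2/H+S_c$ exactly. With $\E[L_n^2]=\sigma_\varepsilon^2/H$ this says $2\E[L_nU_c]+\E[U_c^2]=S_c$. Therefore $\E[\{(1-q)L_n+U_c\}^2]-C_{0n}-S_c=-2\E[qL_nU_c]=O(\sqrt{S_c/n})$, without evaluating either piece separately; combined with your Step~2 it also gives $\E[L_nU_c]=o(S_c)$. Equivalently, $\E[d_c\zeta_c]=0$ converts $\E[d_c\,\Pi'v]$ into $-(H/h_c)\E[(d_c-\bar d_c)\widetilde\zeta_c]$, and your Step~2 Cauchy--Schwarz bound makes this $o(nS_c)$.

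Two smaller points remain.

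First, in bullets two and three, write $P:=\dot\mu_c(\Pi+v)$ and $u_n:=\Pi'v/(\sigma_v\sqrt{nH})$. The Stein step in the direction $\Pi$ leaves terms such as $\E[u_n\Pi'P\Pi]$ and $\E[u_n\Pi'Pv]$, whose naive bounds give only $O(n^{-1/2})$. That is not $o(S_c)$ when $\bar d_c\ll n^{1/4}$ and $\mathcal A_c\ll n^{-1/2}$. To reach $\sqrt{\mathcal A_c/n}$ you need $(I-P)\Pi=-(H/h_c)(I-P)\widetilde\Pi_c$, which follows from $P\widehat\Pi_c=\widehat\Pi_c$ as in \eqref{eq:directional-projection-algebra}.

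Second, $\E[\{\En[\Pi_i\widetilde\Pi_{c,i}]\}^2]$ is of order $1/n$, not $\mathcal A_c/n$, because $\Cov(\Pi'\widetilde\Pi_c,\Pi'v)=\sigma_v^2\E[\Pi'P\Pi]\sim\sigma_v^2nH$. This is harmless since $1/n=o(S_c)$. However, your fourth-moment route also requires an $L^4$ Poincar\'e inequality. The paper avoids this by conditioning on $v$ and using $\E[u_n^2\varrho_c^2]\le2\E[\varrho_c^2]+4\E[\dot\varrho_c^2]$.
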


\Needspace{6\baselineskip}
\begin{proof}
\emph{Step 1: Decomposition \(\Lambda=\Psi-\varrho L_n\).}

Suppress the candidate subscript throughout the proof.  Let
\(\widetilde\Pi(\cdot)\) be the centered-fit map of
\eqref{eq:centered-fit-definition} and
\(\widetilde\Pi:=\widetilde\Pi(v)\).  By \eqref{eq:population-bounds},
\(\underline h\le h\le C\) for all sufficiently large \(n\); constants
below absorb \(h\) and \(1/h\).  Define
\begin{align}
    \varrho&:=\frac{(\Pi+v)'\widetilde\Pi}{nh},\notag\\*
    U&:=\Gamma-L_n
       =\frac{\varepsilon'\widetilde\Pi}{\sqrt n\,h},\notag\\*
    \Psi&:=(1-q)L_n+U.
    \label{eq:comparison-local-notation}
\end{align}
Substituting \(\widehat\Pi=(h/H)\Pi+\widetilde\Pi\) and
\(q=\Pi'v/(nH)\) into \eqref{eq:sample-denominator} and
\eqref{eq:leading-term} gives
\begin{align}
    \widehat h
    &=\frac1n(\Pi+v)'
       \left(\frac hH\Pi+\widetilde\Pi\right)
      =h+hq+h\varrho,\notag\\*
    \frac{\widehat h-h}{h}&=q+\varrho,\notag\\*
    \Lambda
      &=\Gamma-(q+\varrho)L_n
      =\Psi-\varrho L_n.
    \label{eq:comparison-statistic-decomposition}
\end{align}

\Needspace{6\baselineskip}
\emph{Step 2: Moments of \(\varrho\) and \(\dot\varrho\).}

Write \(P:=\dot\mu(\Pi+v)\), noting the properties of \(P\)  from \eqref{eq:projection-properties}, and define
\begin{align}
    u_n&:=\frac{\Pi'v}{\sigma_v\sqrt{nH}},
    \qquad
    v_\perp:=\left(I-\frac{\Pi\Pi'}{nH}\right)v,\notag\\
    \intertext{So that we can rewrite}
    v&=\frac{\sigma_vu_n}{\sqrt{nH}}\Pi+v_\perp,
    \qquad
    q=\frac{\sigma_vu_n}{\sqrt{nH}}.
    \label{eq:gaussian-coordinate-split}
\end{align}
The pair \((u_n,v_\perp)\) is jointly Gaussian, and since
\(\Var(v)=\sigma_v^2I\) and \(\Pi'\Pi=nH\),
\begin{equation*}
    \Cov(u_n,v_\perp)
    =\frac{\sigma_v}{\sqrt{nH}}\Pi'
       \left(I-\frac{\Pi\Pi'}{nH}\right)
    =\frac{\sigma_v}{\sqrt{nH}}(\Pi'-\Pi')=0.
\end{equation*}
Hence \(u_n\sim N(0,1)\) independent of \(v_\perp\).
For locally Lipschitz \(g:\mathbb R^n\to\mathbb R\), view
\(g(v)=g(\sigma_vu_n\Pi/\sqrt{nH}+v_\perp)\) as a function of
\((u_n,v_\perp)\).  Holding \(v_\perp\) fixed,
\begin{equation}
    \dot g:=\frac{\partial g}{\partial u_n}
    =\frac{\sigma_v}{\sqrt{nH}}\Pi'\nabla g(v)
    \qquad\text{almost everywhere.}
    \label{eq:u-n-chain-rule}
\end{equation}

Set \(f(v):=(\Pi+v)'\widetilde\Pi\), so that we can rewrite \(\varrho\) as
\(\varrho=f(v)/(nh)\).  The map
\(\widetilde\Pi(\cdot)\) is one-Lipschitz by Lemma~\ref{lem:geometry}.
Rademacher's theorem and the chain and product rules give, almost
everywhere,
\begin{align}
    D\widetilde\Pi(v)&=D\mu(\Pi+v)=P,\notag\\
    \nabla f(v)
      &=\widetilde\Pi+D\widetilde\Pi(v)'(\Pi+v)\notag\\
      &=\widetilde\Pi+P'(\Pi+v)\notag\\
      &=\widetilde\Pi+P(\Pi+v).
    \label{eq:f-gradient}
\end{align}
where the last equality follows from \(P=P'\) in \eqref{eq:projection-properties}.  Taking expectations in the definition of \(f\),
\begin{equation}
    \E[f(v)]
      =\E[\Pi'\widetilde\Pi]+\E[v'\widetilde\Pi]
       =0+\sigma_v^2\E[d]
       =\sigma_v^2\bar d,
    \label{eq:f-mean}
\end{equation}
where \(\E[\Pi'\widetilde\Pi]=0\) by \eqref{eq:centered-fit-properties}
and, since \(\E[v'\Pi]=0\),
\(\E[v'\widetilde\Pi]=\E[v'\widehat\Pi]=\sigma_v^2\E[d]\) by
\eqref{eq:first-stein}.

Since \(\varrho=f(v)/(nh)\), \eqref{eq:centered-fit-properties},
\eqref{eq:f-mean}, and \eqref{eq:offset-product} at
\(b=(h/H)\Pi\) and \(c_0=0\) give
\begin{equation}
    \E[\varrho^2]
    \le\frac{2\sigma_v^2}{nh^2}(\mathcal A+H+\sigma_v^2)
      +\frac{\sigma_v^4\bar d^2}{n^2h^2}
    \le\frac Cn,
    \label{eq:varrho-square}
\end{equation}
where the final inequality follows from \(h\ge\underline h\), \(\mathcal A+H\le C\), and \(\bar d=o(\sqrt n)\) by Lemma~\ref{lem:primitive-consequences}.

Consider now \(\dot\varrho = \partial \varrho/\partial u_n\).  For \(m:=\Pi/(\sigma_v\sqrt{nH})\) and an arbitrary locally Lipschitz \(g:\mathbb R^n\to\mathbb R\) as above,
\[
    Dm=0,
    \qquad
    \diver m=0,
    \qquad
    v'm=u_n,
    \qquad
    \sigma_v^2m'\nabla g=\dot g.
\]
Also, by \eqref{eq:u-n-chain-rule} and Lemma~\ref{lem:stein},
\begin{equation}
    \E[u_ng(v)]=\E[\dot g(v)],
    \label{eq:directional-stein}
\end{equation}
for every locally Lipschitz \(g\) satisfying the growth condition of
Lemma~\ref{lem:stein}.  Here \(g(v)\) is viewed as the function
\(u_n\mapsto g(\sigma_vu_n\Pi/\sqrt{nH}+v_\perp)\), with
\(v_\perp\) held fixed, and \(\dot g(v)\) is its derivative with respect
to \(u_n\).  Every \(g(v)\) used below is a bounded-degree polynomial of
\(u_n\), \(\varrho\), and \(\Pi'\widetilde\Pi\).  The one-Lipschitz
property of \(\mu\) and \eqref{eq:fit-contraction-at-zero} of
Lemma~\ref{lem:geometry}, with \(\lVert\Pi\rVert_2^2=nH\), give
\begin{align*}
    \lVert \widetilde\Pi\rVert_2
      &\le C(\sqrt n+\lVert v\rVert_2),\\*
    |\varrho(v)|
      &\le C\left(1+\frac{\lVert v\rVert_2^2}{n}\right),\\*
    \lVert\nabla \varrho(v)\rVert_2
      &\le C\left(
       \frac1{\sqrt n}+\frac{\lVert v\rVert_2}{n}
      \right)
      \quad\text{almost everywhere},
\end{align*}
where the last inequality uses \eqref{eq:f-gradient}.  As functions of
\(v\), \(u_n\) is linear and \(\varrho\) and \(\Pi'\widetilde\Pi\) are
locally Lipschitz.  Thus every polynomial in \((u_n, \varrho, \Pi'\tilde\Pi)\) of finite degree \(g\) satisfies into the polynomial-growth condition of Lemma~\ref{lem:stein}.

Combine \eqref{eq:u-n-chain-rule} and \eqref{eq:f-gradient} with
\(\varrho=f(v)/(nh)\) and \(P'=P\) to see that
\begin{equation}
    \dot\varrho
    =\frac{\sigma_v}{nh\sqrt{nH}}\Pi'\{\widetilde\Pi+P(\Pi+v)\}
    =
    \frac{\sigma_v}
         {nh\sqrt{nH}}
    \left\{
       \Pi'\widetilde\Pi+(\Pi+v)'P\Pi
    \right\}.
    \label{eq:varrho-directional-derivative}
\end{equation}
Cauchy--Schwarz, \((a+b)^2\le2a^2+2b^2\), contraction of \(P\),
\(\lVert\Pi\rVert_2^2=nH\), and
\eqref{eq:centered-fit-properties} give
\begin{equation}
    \begin{aligned}
    \E[\dot\varrho^2]
    &\le\frac{2\sigma_v^2\lVert\Pi\rVert_2^2}{n^3h^2H}
      \left\{\E[\lVert\widetilde\Pi\rVert_2^2]
             +\E[\lVert P(\Pi+v)\rVert_2^2]\right\}\\
    &\le\frac{2\sigma_v^2}{n^2h^2}
      \left\{\E[\lVert\widetilde\Pi\rVert_2^2]
             +\E[\lVert\Pi+v\rVert_2^2]\right\}\\
    &=\frac{2\sigma_v^2}{nh^2}(\mathcal A+H+\sigma_v^2)
      \le\frac Cn.
    \end{aligned}
    \label{eq:varrho-directional-L2}
\end{equation}

The growth bounds above give \(\E[u_n^2\varrho^2]<\infty\) and justify
applying \eqref{eq:directional-stein} to \(u_n\varrho^2\).  Thus
\begin{align*}
    \E[u_n^2\varrho^2]
    &=\E[\varrho^2]+2\E[u_n\varrho\dot\varrho]\\
    &\le\E[\varrho^2]+\tfrac12\E[u_n^2\varrho^2]
       +2\E[\dot\varrho^2],
\end{align*}
where the inequality uses Young's inequality \(2ab\le a^2/2+2b^2\)
with \(a=u_n\varrho\) and \(b=\dot\varrho\). After moving the \(\frac{1}{2}\E[u_n^2\varrho^2]\) term to the left-hand side and multiplying by 2, we get
\[
    \E[u_n^2\varrho^2] \leq 2\E[\varrho^2] + 4\E[\dot\varrho^2]
\]
This, together with \eqref{eq:varrho-square} and \eqref{eq:varrho-directional-L2}, yields
\begin{align}
    \E[\varrho^2(1+u_n^2)]
    &\le3\E[\varrho^2]+4\E[\dot\varrho^2]
     \le\frac Cn.
    \label{eq:multiplier-square}
\end{align}

\Needspace{6\baselineskip}
\emph{Step 3: Moments of \(\E[\varrho(1-q)]\) and \(\E[\varrho(1-q)u_n^2]\).}

Applying \eqref{eq:directional-stein} to \(\varrho\), \(u_n\varrho\),
and \(u_n^2\varrho\) gives
\begin{equation}
    \begin{aligned}
        \E[u_n\varrho]&=\E[\dot\varrho],\\
        \E[u_n^2\varrho]&=\E[\varrho]+\E[u_n\dot\varrho],\\
        \E[u_n^3\varrho]&=2\E[\dot\varrho]+\E[u_n^2\dot\varrho].
    \end{aligned}
    \label{eq:signed-multiplier-identities}
\end{equation}
By \eqref{eq:varrho-directional-L2}, \(\E[u_n^4]=3\), and Cauchy--Schwarz,
\begin{align}
    |\E[\dot\varrho]|
       &\le\E[\dot\varrho^2]^{1/2}
        \le\frac C{\sqrt n},\notag\\
    |\E[u_n^2\dot\varrho]|
       &\le\E[u_n^4]^{1/2}\E[\dot\varrho^2]^{1/2}
        \le\frac C{\sqrt n}.
    \label{eq:directional-leverage-bounds}
\end{align}
For \(\E[u_n\dot\varrho]\), begin by noting that \(P\mu(\Pi+v)=\mu(\Pi+v)\) almost surely by \eqref{eq:projection-properties}.  Substitute
\(\mu(\Pi+v)=(h/H)\Pi+\widetilde\Pi\) into both sides of this identity to get
\begin{equation}
    \begin{aligned}
        (I-P)\Pi&=-\frac Hh(I-P)\widetilde\Pi,\\
        \implies P\Pi &=\Pi+\frac Hh(I-P)\widetilde\Pi,\\
        \implies(\Pi+v)'P\Pi
          &=(\Pi+v)'\Pi+\frac Hh(\Pi+v)'(I-P)\widetilde\Pi.
    \end{aligned}
    \label{eq:directional-projection-algebra}
\end{equation}
Since \(D\widetilde\Pi(v)=P\) almost everywhere by \eqref{eq:f-gradient} and \(P^2 = P\) by \eqref{eq:projection-properties}, we have both
\(\nabla(\Pi'\widetilde\Pi)=P\Pi\), and
\(0\le (P\Pi)'(P\Pi) = \Pi'P\Pi\le nH\).  Thus \eqref{eq:directional-stein} along with \eqref{eq:u-n-chain-rule} applied to the function \(\Pi'\tilde\Pi\) gives
\begin{equation}
    |\E[u_n\Pi'\widetilde\Pi]|
    =
    \frac{\sigma_v}{\sqrt{nH}}\E[\Pi'P\Pi]
    \le\sigma_v\sqrt{nH}.
    \label{eq:directional-centered-fit-identity}
\end{equation}
Meanwhile, since \(\E[u_n]=0\) and \(\Pi'v=\sigma_v\sqrt{nH}\,u_n\),
\begin{equation}
    \E[u_n(\Pi+v)'\Pi]
    =\E[u_n\Pi'v]
    =\sigma_v\sqrt{nH}\,\E[u_n^2]
    =\sigma_v\sqrt{nH}.
    \label{eq:directional-mean-identity}
\end{equation}

Because \(I-P\) is a contraction by \eqref{eq:projection-properties},
\(\lVert(I-P)\widetilde\Pi\rVert_2\le\lVert\widetilde\Pi\rVert_2\) almost surely.  Taking expectations and applying the last identity
in \eqref{eq:centered-fit-properties} gives
\begin{equation*}
    \E[\lVert(I-P)\widetilde\Pi\rVert_2^2]
    \le\E[\lVert \widetilde\Pi\rVert_2^2]
    =n\mathcal A.
\end{equation*}
The independent Gaussian components \(u_n\) and \(v_\perp\) in
\eqref{eq:gaussian-coordinate-split}
satisfy \(\Pi'v_\perp=0\) and
\[
    \E[\lVert v_\perp\rVert_2^2]
    =\sigma_v^2\operatorname{tr}\left(I-\frac{\Pi\Pi'}{nH}\right)=\sigma_v^2(n-1),
\]
where the last equality uses linearity of trace, \(\operatorname{tr}(I)=n\),
and \(\operatorname{tr}(\Pi\Pi')=\Pi'\Pi=nH\).
By \eqref{eq:gaussian-coordinate-split},
\begin{align*}
    \lVert\Pi+v\rVert_2^2
    &=\left\lVert\left(1+\frac{\sigma_vu_n}{\sqrt{nH}}\right)\Pi
       +v_\perp\right\rVert_2^2\\
    &=\left(1+\frac{\sigma_vu_n}{\sqrt{nH}}\right)^2nH
       +\lVert v_\perp\rVert_2^2\\
    &=(\sqrt{nH}+\sigma_vu_n)^2+\lVert v_\perp\rVert_2^2,
\end{align*}
where the second equality uses \(\lVert\Pi\rVert_2^2=nH\) and
\(\Pi'v_\perp=0\), which makes the cross term zero.
Thus, by independence of \(u_n\) and \(v_\perp\),
\begin{align*}
    \E[u_n^2\lVert\Pi+v\rVert_2^2]
    &=\E[u_n^2(\sqrt{nH}+\sigma_vu_n)^2]
      +\E[u_n^2]\E[\lVert v_\perp\rVert_2^2]\\
    &=nH\E[u_n^2]+2\sigma_v\sqrt{nH}\,\E[u_n^3]
      +\sigma_v^2\E[u_n^4]+\sigma_v^2(n-1)\\
    &=nH+3\sigma_v^2+\sigma_v^2(n-1)
      =nH+\sigma_v^2(n+2)\le Cn.
\end{align*}
The second equality uses the derived expression for \(\E[\|v_\perp\|_2^2]\) from above and \(\E[u_n^2]=1\). The third uses the fact that \(u_n \sim N(0,1)\) so that \(\E[u_n^2]=1\), \(\E[u_n^3]=0\), and \(\E[u_n^4]=3\). The final inequality uses \(H\le\overline H\) and \(n+2\le3n\). Cauchy--Schwarz then gives
\begin{align}
    \left|
       \E\!\left[
          u_n(\Pi+v)'(I-P)\widetilde\Pi
       \right]
    \right|
    &\le
      \E[u_n^2\lVert\Pi+v\rVert_2^2]^{1/2}
      \E[\lVert(I-P)\widetilde\Pi\rVert_2^2]^{1/2}\notag\\
    &\le Cn\sqrt{\mathcal A}.
    \label{eq:directional-cross-bound}
\end{align}
Combining the above bounds yields
\begin{align}
    |\E[u_n\dot\varrho]|
    &=\frac{\sigma_v}{nh\sqrt{nH}}
      \left|\E[u_n\Pi'\widetilde\Pi]
             +\E[u_n(\Pi+v)'P\Pi]\right|\notag\\
    &\le\frac{\sigma_v}{nh\sqrt{nH}}
      \left\{|\E[u_n\Pi'\widetilde\Pi]|
             +|\E[u_n(\Pi+v)'\Pi]|
             +\frac Hh\left|\E[u_n(\Pi+v)'(I-P)\widetilde\Pi]\right|
      \right\}\notag\\
    &\le
      \frac{\sigma_v}{nh\sqrt{nH}}
      \left\{
         2\sigma_v\sqrt{nH}
         +\frac Hh Cn\sqrt{\mathcal A}
      \right\}\notag\\
    &\le C\left\{\sqrt{\frac{\mathcal A}{n}}+\frac1n\right\},
    \label{eq:middle-directional-bound}
\end{align}
where the first equality follows from \eqref{eq:varrho-directional-derivative},
the first inequality from \eqref{eq:directional-projection-algebra} and
the triangle inequality, the second from
\eqref{eq:directional-centered-fit-identity}--\eqref{eq:directional-cross-bound},
and the last from \eqref{eq:population-bounds} and
Assumption~\ref{assm:gaussian}.

Since \(\varrho=f(v)/(nh)\), \eqref{eq:f-mean} gives
\(\E[\varrho]=\E[f(v)]/(nh)=\sigma_v^2\bar d/(nh)\).  With
\(q=\sigma_vu_n/\sqrt{nH}\), \eqref{eq:signed-multiplier-identities}
gives
\begin{align*}
    \E[\varrho(1-q)]
    &=\E[\varrho]-\frac{\sigma_v}{\sqrt{nH}}\E[\dot\varrho],\\
    \E[\varrho(1-q)u_n^2]
    &=
      \E[\varrho]+\E[u_n\dot\varrho]\\
    &\quad
      -\frac{\sigma_v}{\sqrt{nH}}
       \left\{2\E[\dot\varrho]+\E[u_n^2\dot\varrho]\right\}.
\end{align*}
By \eqref{eq:directional-leverage-bounds} and
\eqref{eq:middle-directional-bound}, respectively,
\begin{equation*}
    |\E[\dot\varrho]|+|\E[u_n^2\dot\varrho]|\le\frac C{\sqrt n},
    \qquad
    |\E[u_n\dot\varrho]|
    \le C\left\{\sqrt{\frac{\mathcal A}{n}}+\frac1n\right\}.
\end{equation*}
Applying the triangle inequality to the two identities above and
substituting these bounds and the preceding formula for \(\E[\varrho]\) gives
\begin{equation}
    |\E[\varrho(1-q)]|+|\E[\varrho(1-q)u_n^2]|
    \le
    C\left\{
       \sqrt{\frac{\mathcal A}{n}}
       +\frac{\bar d}{n}
       +\frac1n
    \right\},
    \label{eq:multiplier-signed}
\end{equation}
where the implicit constant uses \(h\ge\underline h\) by \eqref{eq:population-bounds} and \(H\ge\underline H\) by Assumption~\ref{assm:gaussian}.

\Needspace{6\baselineskip}
\emph{Step 4: Combining bounds.}

Expand \(\E[\Gamma^2]\) using \(\Gamma=L_n+U\) from
\eqref{eq:comparison-local-notation}.  Since
\(\E[L_n^2]=\sigma_\varepsilon^2/H\) by \eqref{eq:oracle-moments},
comparing with \eqref{eq:auxiliary-exact-second} leaves
\begin{equation}
    S=2\E[L_nU]+\E[U^2].
    \label{eq:S-cross-decomposition}
\end{equation}
Expanding \(\Psi\) and substituting \eqref{eq:S-cross-decomposition} and
\eqref{eq:oracle-baseline} gives
\begin{align}
    \E[\Psi^2]-C_{0n}-S
    &=2\E[(1-q)L_nU]-2\E[L_nU]
      =-2\E[qL_nU],\notag\\
    |\E[\Psi^2]-C_{0n}-S|
    &\le
      2\E[q^2L_n^2]^{1/2}\E[U^2]^{1/2}
      \le C\sqrt{\frac Sn},
    \label{eq:Psi-comparison}
\end{align}
where the second inequality uses \eqref{eq:Gamma-L-bound} and
\eqref{eq:oracle-moments}.

Substitute \(e\) and \(L_n\) as defined by
\eqref{eq:comparison-error-decomposition} and \eqref{eq:optimal-score},
and \(u_n\) as defined immediately above \eqref{eq:gaussian-coordinate-split},
to write
\begin{equation*}
    L_n
    =
    \frac{\sigma_{\varepsilon v}}{\sigma_v\sqrt H}\,u_n
    +\frac{\Pi'e}{\sqrt n\,H}.
\end{equation*}
Because \(e\) is independent of \(v\) with
\(\Var(e_i)=\sigma_\varepsilon^2-\sigma_{\varepsilon v}^2/\sigma_v^2\),
conditioning on \(v\) yields
\begin{equation}
    \E[L_n^2\mid v]
    =
    \frac{
       \sigma_\varepsilon^2
       -\sigma_{\varepsilon v}^2/\sigma_v^2
    }{H}
    +\frac{\sigma_{\varepsilon v}^2}{\sigma_v^2H}u_n^2.
    \label{eq:L-conditional-second}
\end{equation}
Since \(\varrho\), \(q\), and \(u_n\) are functions of \(v\) alone,
iterated expectations conditional on \(v\),
\eqref{eq:multiplier-square}, and \eqref{eq:L-conditional-second} give
\begin{equation}
    \E[\varrho^2L_n^2]
    =\E\!\left[\varrho^2\,\E[L_n^2\mid v]\right]
    \le C\,\E[\varrho^2(1+u_n^2)]
    \le\frac Cn.
    \label{eq:varrho-L-square-bound}
\end{equation}
The same conditioning step with \eqref{eq:L-conditional-second} bounds
\(|\E[\varrho(1-q)L_n^2]|\) by
\(C\{|\E[\varrho(1-q)]|+|\E[\varrho(1-q)u_n^2]|\}\).
Hence, by \eqref{eq:multiplier-signed},
\begin{equation}
    |\E[\varrho(1-q)L_n^2]|
    \le
    C\left\{
       \sqrt{\frac{\mathcal A}{n}}
       +\frac{\bar d}{n}
       +\frac1n
    \right\}
    \le C\left(\sqrt{\frac Sn}+\frac1n\right),
    \label{eq:varrho-L-signed-bound}
\end{equation}
where the second inequality applies the first bound in
\eqref{eq:score-ratio-bounds} of Proposition~\ref{prop:rates}.
By Cauchy--Schwarz, \eqref{eq:Gamma-L-bound}, and
\eqref{eq:varrho-L-square-bound},
\begin{equation}
    |\E[\varrho L_nU]|
    \le\E[\varrho^2L_n^2]^{1/2}\E[U^2]^{1/2}
    \le C\sqrt{\frac Sn}.
    \label{eq:varrho-L-U-bound}
\end{equation}
Combining \eqref{eq:varrho-L-signed-bound} and
\eqref{eq:varrho-L-U-bound} over the two pieces of \(\Psi\) in
\eqref{eq:comparison-local-notation} gives
\begin{equation}
    |\E[\varrho L_n\Psi]|
    \le C\left(\sqrt{\frac Sn}+\frac1n\right).
    \label{eq:cross-term-bound}
\end{equation}

By \eqref{eq:amse-definition} and
\eqref{eq:comparison-statistic-decomposition},
\begin{align}
    |\bar Q-S|
    &=
      |\E[\Lambda^2]-C_{0n}-S|\notag\\
    &\le
      |\E[\Psi^2]-C_{0n}-S|
      +2|\E[\varrho L_n\Psi]|+\E[\varrho^2L_n^2]\notag\\
    &\le C\left(\sqrt{\frac Sn}+\frac1n\right),
    \label{eq:Q-S-error-bound}
\end{align}
where \eqref{eq:Psi-comparison}, \eqref{eq:varrho-L-square-bound}, and
\eqref{eq:cross-term-bound} bound the first, third, and second terms,
respectively.  By \eqref{eq:score-rates} of
Proposition~\ref{prop:rates}, \(\min_c nS_c\to\infty\).  Since \([J]\)
is finite and the constants do not depend on the candidate,
\eqref{eq:Q-S-error-bound} yields, uniformly over candidates,
\begin{equation*}
    \left|\frac{\bar Q}{S}-1\right|
    \le C\left\{
       \frac1{\sqrt{nS}}+\frac1{nS}
    \right\}
    \to0.
\end{equation*}
Restoring the candidate subscript, \(\max_{c\in[J]}|\bar Q_c/S_c-1|\to0\), which is \eqref{eq:Q-S-comparison}.  Since \(\bar Q_c/S_c\to1\) uniformly, the conclusions in \eqref{eq:score-rates} of Proposition~\ref{prop:rates} hold with \(\bar Q_c\) in place of \(S_c\), giving \eqref{eq:Q-rates}.  Finally, since \(\min_c nS_c\to\infty\) and \(\max_c|\bar Q_c/S_c-1|\to0\), for all sufficiently large \(n\) every candidate has \(\bar Q_c>0\) and \(S_c\le2\bar Q_c\), so each bound in \eqref{eq:score-ratio-bounds} holds with \(\bar Q_c\) in place of \(S_c\) after adjusting the constant, the final statement of the lemma.
\end{proof}
\section{Supporting Results for \Cref{thm:feasible}}
\label{sec:supporting-feasible}

\begin{lemma}[Centered-Fit Moments]
\label{lem:centered-fit-moments}
Under \Cref{assm:gaussian}, for each candidate,
\begin{align}
    \E\!\left[\left\{\En[\Pi_i\widetilde\Pi_{c,i}]\right\}^2\right]^{1/2}
       &\le\sigma_v\sqrt{\frac Hn},
       \label{eq:centered-fit-direction-bound}\\
    \E\!\left[
       \left\{
          \En[\widetilde\Pi_{c,i}^2]-\mathcal A_c
       \right\}^2
    \right]^{1/2}
       &\le2\sigma_v\sqrt{\frac{\mathcal A_c}{n}}.
       \label{eq:centered-fit-norm-bound}
\end{align}
Moreover, with
\(\widetilde\zeta_c:=n\En[v_i\widetilde\Pi_{c,i}]-\sigma_v^2d_c\),
\begin{equation}
    \E[\widetilde\zeta_c^2]
    =\sigma_v^2n\mathcal A_c+\sigma_v^4\bar d_c.
    \label{eq:centered-fit-stein-bound}
\end{equation}
\end{lemma}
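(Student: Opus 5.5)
All three bounds are Gaussian Poincaré or second-order Stein computations applied to functions of \(v\) alone. Conditional on the instruments, \(\widetilde\Pi_c=\widetilde\Pi_c(v)\) is one-Lipschitz (Lemma~\ref{lem:geometry}), with Jacobian \(P:=\dot\mu_c(\Pi+v)\) almost everywhere. By \eqref{eq:projection-properties}, \(P\) is an orthogonal projection with \(\operatorname{tr}P=d_c\). Throughout I suppress the candidate subscript.

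\emph{Bound \eqref{eq:centered-fit-direction-bound}.} Set \(g(v):=\Pi'\widetilde\Pi\). By \eqref{eq:centered-fit-properties}, \(\E[g]=0\), so the second moment equals the variance. The gradient is \(\nabla g=P\Pi\), and \(\lVert P\Pi\rVert_2^2\le\lVert\Pi\rVert_2^2=nH\). The Gaussian Poincaré inequality then gives \(\Var(g)\le\sigma_v^2nH\). Dividing by \(n^2\) yields \(\E[\{\En[\Pi_i\widetilde\Pi_i]\}^2]\le\sigma_v^2H/n\).

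\emph{Bound \eqref{eq:centered-fit-norm-bound}.} Set \(g(v):=\lVert\widetilde\Pi\rVert_2^2\). By \eqref{eq:centered-fit-properties}, \(\E[g]=n\mathcal A\). The gradient is \(\nabla g=2P\widetilde\Pi\), so \(\lVert\nabla g\rVert_2^2\le4\lVert\widetilde\Pi\rVert_2^2\). Poincaré gives \(\Var(g)\le4\sigma_v^2\E\lVert\widetilde\Pi\rVert_2^2=4\sigma_v^2n\mathcal A\). Dividing by \(n^2\) gives the bound \(4\sigma_v^2\mathcal A/n\), whose square root is the claim. Both Poincaré applications only need \(g\) to be locally Lipschitz with square-integrable gradient, which follows from the growth bounds already recorded in the proof of Lemma~\ref{lem:amse-comparison}.

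\emph{Identity \eqref{eq:centered-fit-stein-bound}.} I apply the second-order Stein identity of \citet{BellecZhang-2021} to the vector field \(F:=\widetilde\Pi(v)\). That identity reads
\[
\E[(v'F-\sigma_v^2\operatorname{div}F)^2]=\sigma_v^2\E\lVert F\rVert_2^2+\sigma_v^4\E[\operatorname{tr}\{(DF)^2\}],
\]
and it is valid for \(F\) with weakly differentiable, square-integrable components. The hypothesis is met since \(\widetilde\Pi\) is Lipschitz with linear growth. Here \(DF=P\) almost everywhere, so \(\operatorname{div}F=d\), which makes \(v'F-\sigma_v^2\operatorname{div}F=\widetilde\zeta\). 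Moreover \(\operatorname{tr}(P^2)=\operatorname{tr}P=d\). Combined with \(\E\lVert\widetilde\Pi\rVert_2^2=n\mathcal A\), the identity gives exactly \(\sigma_v^2n\mathcal A+\sigma_v^4\bar d\).

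\emph{Main obstacle.} The Bellec--Zhang identity is stated for Sobolev-type \(F\), and the LASSO fit is only piecewise affine. The delicate point is that its distributional Jacobian has no singular part, so \(\operatorname{div}\) and \(\operatorname{tr}(DF)^2\) can be read off the almost-everywhere projection \(P\). This holds because a Lipschitz map lies in \(W^{1,\infty}_{\mathrm{loc}}\), so the weak and almost-everywhere derivatives coincide. I would cite Rademacher's theorem together with Lemma~\ref{lem:geometry} for this step.
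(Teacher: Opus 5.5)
Your proposal is correct and is essentially the paper's proof. The paper applies Lemma~\ref{lem:offset-fit-moments} with the nonrandom offset \(b=(h/H)\Pi\), so that \(M=\widetilde\Pi\), and reads the three claims off its Poincar\'e bounds \eqref{eq:offset-linear} (at \(g=\Pi\)) and \eqref{eq:offset-norm} and its second-order Stein identity \eqref{eq:offset-stein}; these are exactly the computations you carry out inline. One small point: cite the integrability and growth conditions directly from the one-Lipschitz property and \eqref{eq:fit-contraction-at-zero} of Lemma~\ref{lem:geometry}, rather than from the proof of Lemma~\ref{lem:amse-comparison}, since that lemma itself relies on the present one.
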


\begin{proof}
Fix a candidate and suppress its subscript.  Take \(b:=(h/H)\Pi\), which
is nonrandom because the design is fixed and \(h\) is the population
moment defined in \eqref{eq:h-def}, so that the map \(m\) of
\Cref{lem:offset-fit-moments} is the centered-fit map of
\eqref{eq:centered-fit-definition} and \(M=\widetilde\Pi\).  For
\eqref{eq:centered-fit-direction-bound}, apply \eqref{eq:offset-linear}
at \(g=\Pi\): since \(\lVert\Pi\rVert_2^2=nH\) and
\(\E[\Pi'\widetilde\Pi]=0\) by \eqref{eq:centered-fit-properties},
\begin{equation*}
    \E[(\Pi'\widetilde\Pi)^2]
    =\Var(\Pi'\widetilde\Pi)
    \le\sigma_v^2nH,
\end{equation*}
which proves \eqref{eq:centered-fit-direction-bound}.  For
\eqref{eq:centered-fit-norm-bound},
\(\E[\lVert\widetilde\Pi\rVert_2^2]=n\mathcal A\) from
\eqref{eq:centered-fit-properties} and \eqref{eq:offset-norm} give
\begin{equation*}
    \Var(\lVert\widetilde\Pi\rVert_2^2)
    \le4\sigma_v^2n\mathcal A,
\end{equation*}
which proves \eqref{eq:centered-fit-norm-bound}.  Finally,
\(\widetilde\zeta=v'\widetilde\Pi-\sigma_v^2d\), so the second identity
in \eqref{eq:offset-stein}, again with
\(\E[\lVert\widetilde\Pi\rVert_2^2]=n\mathcal A\), is exactly
\eqref{eq:centered-fit-stein-bound}.
\end{proof}

\begin{lemma}[Normalized Fitted-Moment Expansion]
\label{lem:fitted-moment-expansion}
Suppose \Cref{assm:gaussian,assm:primitive,assm:strong-core} hold, with
\(J<\infty\) fixed.  On \(\mathcal H_n\) defined by
\eqref{eq:denominator-event}, define
\begin{equation}
    R^{\mathrm{fit}}_{n,c}
    :=\frac{\En[\widehat\Pi_{c,i}^2]}{\widehat h_c^2}
       -\frac1H+\frac{2\En[\Pi_i v_i]}{H^2}
       -\frac{\mathcal A_c}{h_c^2},
    \label{eq:centered-remainder-definition}
\end{equation}
and set \(R^{\mathrm{fit}}_{n,c}:=0\) on \(\mathcal H_n^c\).  Then
\begin{equation}
    \max_{c\in[J]}\frac{|R^{\mathrm{fit}}_{n,c}|}{\bar Q_c}
    \to_p0.
    \label{eq:centered-moment-expansion}
\end{equation}
\end{lemma}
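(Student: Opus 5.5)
The plan is to use the explicit form of \(R^{\mathrm{fit}}_{n,c}\) obtained in Step~1 of the proof of Theorem~\ref{thm:feasible}. On \(\mathcal H_n\) the algebra there shows that \eqref{eq:centered-remainder-definition} equals
\[
    \frac{1}{\widehat h_c^2}\Bigl\{
      \En[\widetilde\Pi_{c,i}^2]-\mathcal A_c
      -\frac{2h_c}{H}\En[\widetilde\Pi_{c,i}v_i]
      +\Bigl(\frac{2q}{H}-\frac{\mathcal A_c}{h_c^2}\Bigr)(2h_c\delta_{h,c}+\delta_{h,c}^2)
      -\frac{\delta_{h,c}^2}{H}\Bigr\}.
\]
Since \(\widehat h_c\ge\underline h/2\) on \(\mathcal H_n\), and since \(J\) is fixed and \(\P(\mathcal H_n)\to1\) by Lemma~\ref{lem:denominator}, it is enough to show that each of the four bracketed pieces is \(o_p(\bar Q_c)\) for each fixed \(c\). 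Throughout I use the following facts from Lemma~\ref{lem:amse-comparison}: \(\bar Q_c\asymp\mathcal A_c+\bar d_c^2/n\), \(n\bar Q_c\to\infty\), and \(\bar Q_c\to0\).

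\textbf{The first piece.} By \eqref{eq:centered-fit-norm-bound},
\[
    \En[\widetilde\Pi_{c,i}^2]-\mathcal A_c=O_p\bigl(\sqrt{\mathcal A_c/n}\bigr).
\]
Since \(\mathcal A_c\lesssim\bar Q_c\), this is at most \(\bar Q_c\,(n\bar Q_c)^{-1/2}\), which is \(o_p(\bar Q_c)\).

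\textbf{The last two pieces.} By \eqref{eq:denominator-rate}, \(\delta_{h,c}=O_p(n^{-1/2})\), so \(\delta_{h,c}^2=O_p(1/n)=o_p(\bar Q_c)\). Next, \(q=O_p(n^{-1/2})\) because it is Gaussian with variance \(\sigma_v^2/(nH)\). Hence the product \(q\,\delta_{h,c}\) is \(O_p(1/n)\), which is again \(o_p(\bar Q_c)\). For the term \((\mathcal A_c/h_c^2)\delta_{h,c}\), note that \(\mathcal A_c\lesssim\bar Q_c\) and \(h_c\ge\underline h\), so it is at most \(\bar Q_c\,O_p(n^{-1/2})=o_p(\bar Q_c)\).

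\textbf{The cross term (main obstacle).} The delicate piece is \(\En[\widetilde\Pi_{c,i}v_i]\), because it is not centered. Write
\[
    \En[\widetilde\Pi_{c,i}v_i]=\frac{\widetilde\zeta_c}{n}+\frac{\sigma_v^2 d_c}{n}.
\]
\begin{itemize}
    \item \emph{The centered part.} By \eqref{eq:centered-fit-stein-bound},
    \[
        \frac{\widetilde\zeta_c}{n}=O_p\bigl(\sqrt{\mathcal A_c/n}+\sqrt{\bar d_c}/n\bigr).
    \]
    Both terms are \(o_p(\bar Q_c)\): the first by the same argument as for the first piece, and the second because \(\sqrt{\bar d_c}/n\le\bar d_c/n\le\sqrt{\bar Q_c/n}\), which is \(o(\bar Q_c)\).
    \item \emph{The mean part.} By Proposition~\ref{prop:ratio}, \(d_c/n=(\bar d_c/n)(1+o_p(1))\). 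Here \(\bar d_c/n\asymp\sqrt{\bar d_c^2/n}/\sqrt n\le\sqrt{\bar Q_c/n}\), and this is \(o(\bar Q_c)\) precisely because \(n\bar Q_c\to\infty\).
\end{itemize}

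The point that needs care is that this mean term is only \(o(\bar Q_c)\) and not smaller: it is \(O(\bar d_c/n)\), which is negligible relative to \(\bar Q_c\gtrsim\bar d_c^2/n\) only because \(\bar d_c\to\infty\). That divergence comes from Assumption~\ref{assm:strong-core} through \eqref{eq:primitive-rank-conclusions}, so this is where the argument depends on \(\bar d_c\to\infty\).

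\textbf{Conclusion.} Collecting the four pieces, dividing by \(\widehat h_c^2\ge\underline h^2/4\), and using that \(R^{\mathrm{fit}}_{n,c}=0\) off \(\mathcal H_n\), we obtain \eqref{eq:centered-moment-expansion}. Uniformity over \(c\) is automatic because \(J\) is finite.
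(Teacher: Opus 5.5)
Your proposal is correct and follows the paper's proof. It uses the same exact identity for \(\widehat h_c^2R^{\mathrm{fit}}_{n,c}\), the same second-moment bounds from Lemma~\ref{lem:centered-fit-moments} with the split \(n\En[\widetilde\Pi_{c,i}v_i]=\widetilde\zeta_c+\sigma_v^2d_c\), and the same \(O_p(n^{-1})\) treatment of the \(q\)- and \(\delta_{h,c}\)-terms. The only cosmetic difference is that you measure negligibility through \((n\bar Q_c)^{-1/2}\to0\), whereas the paper uses the ratio bounds \eqref{eq:centered-Q-ratios} of order \(1/\bar d_c\); these are equivalent here since \(n\bar Q_c\gtrsim\bar d_c^2\).
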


\Needspace{6\baselineskip}
\begin{proof}
\emph{Step 1: Exact identity for \(R^{\mathrm{fit}}_n\).}

Fix a candidate and suppress its subscript.  Recall
\(\delta_h=\widehat h-h\) and \(q=\En[\Pi_iv_i]/H\) from
\eqref{eq:denominator-event} and
\eqref{eq:oracle-statistic}.  Since
\(\widehat\Pi=(h/H)\Pi+\widetilde\Pi\), \(x=\Pi+v\), and
\(\En[\Pi_i^2]=H\),
\begin{align*}
    \En[\widehat\Pi_i^2]
    &=\frac{h^2}{H}
      +\frac{2h}{H}\En[\Pi_i\widetilde\Pi_i]
      +\En[\widetilde\Pi_i^2],\\
    \delta_h
    &=hq+\En[\Pi_i\widetilde\Pi_i]
      +\En[\widetilde\Pi_i v_i].
\end{align*}
Substituting into \eqref{eq:centered-remainder-definition} gives, on
\(\mathcal H_n\),
\begin{align}
    \widehat h^2 R^{\mathrm{fit}}_n
    &=\En[\widetilde\Pi_i^2]-\mathcal A
      -\frac{2h}{H}\En[\widetilde\Pi_i v_i]
      -\frac{2\mathcal A}{h}\delta_h\notag\\*
    &\quad
      +\frac{2q}{H}(2h\delta_h+\delta_h^2)
      -\left(\frac1H+\frac{\mathcal A}{h^2}\right)\delta_h^2.
    \label{eq:centered-algebra}
\end{align}

\Needspace{6\baselineskip}
\emph{Step 2: Bounds on the terms of \eqref{eq:centered-algebra}.}

By \eqref{eq:denominator-rate} and \eqref{eq:oracle-covariances},
\(\delta_h=O_p(n^{-1/2})\) and \(q=O_p(n^{-1/2})\).
By \eqref{eq:primitive-rank-conclusions}, \(d/\bar d=O_p(1)\) and
\(\bar d\to\infty\).  The bounds in \eqref{eq:score-ratio-bounds}
hold with \(\bar Q\) in place of \(\calS\) by \Cref{lem:amse-comparison}. Applying these resulting bounds in \(\bar Q\) gives
\begin{equation}
    \frac{\sqrt{\mathcal A/n}}{\bar Q}\le\frac C{\bar d},
    \qquad
    \frac{\bar d/n}{\bar Q}\le\frac C{\bar d},
    \qquad
    \frac{\sqrt{\bar d}/n}{\bar Q}\le\frac C{\bar d^{3/2}}.
    \label{eq:centered-Q-ratios}
\end{equation}
Together with \Cref{lem:centered-fit-moments}, these imply
\begin{align*}
    \E\!\left[
      \left\{
        \frac{\En[\widetilde\Pi_i^2]-\mathcal A}{\bar Q}
      \right\}^2
    \right]
    &\le\frac{4\sigma_v^2\mathcal A}{n\bar Q^2}
     \le\frac C{\bar d^2},\\
    \E\!\left[
       \left(\frac{\widetilde\zeta}{n\bar Q}\right)^2
    \right]
    &=\frac{\sigma_v^2n\mathcal A+\sigma_v^4\bar d}{n^2\bar Q^2}
     \le C\left(\frac1{\bar d^2}+\frac1{\bar d^3}\right).
\end{align*}
By Markov's inequality,
\begin{align*}
    \frac{|\En[\widetilde\Pi_i^2]-\mathcal A|}{\bar Q}
    &=o_p(1),\\*
    \frac{|\widetilde\zeta|}{n\bar Q}&=o_p(1).
\end{align*}
Also, by \eqref{eq:centered-Q-ratios},
\begin{equation*}
    \frac{\sigma_v^2d}{n\bar Q}
    \le C\frac d{\bar d}\frac1{\bar d}
    =o_p(1).
\end{equation*}
Since \(n\En[\widetilde\Pi_i v_i]=\widetilde\zeta+\sigma_v^2d\)
and \(\mathcal A\le C\bar Q\) by \eqref{eq:Q-rates},
\begin{equation*}
    \frac{|\En[\widetilde\Pi_i v_i]|}{\bar Q}=o_p(1),
    \qquad
    \frac{\mathcal A|\delta_h|}{\bar Q}
       \le C|\delta_h|=o_p(1).
\end{equation*}
Finally, since \(\underline h\le h\le C\) and \(\mathcal A=o(1)\) by
\Cref{lem:primitive-consequences}, while
\(\underline H\le H\le\overline H\) by \Cref{assm:gaussian}(ii),
\begin{equation*}
    \frac{2q}{H}(2h\delta_h+\delta_h^2)
      -\left(\frac1H+\frac{\mathcal A}{h^2}\right)\delta_h^2
    =O_p(n^{-1})=o_p(\bar Q),
\end{equation*}
since \(n\bar Q\to\infty\) by \eqref{eq:Q-rates}.
Divide \eqref{eq:centered-algebra} by \(\widehat h^2\bar Q\).
On \(\mathcal H_n\), \(\widehat h\ge\underline h/2\), so the bounds
above give \(R^{\mathrm{fit}}_n/\bar Q=o_p(1)\).  Since
\(R^{\mathrm{fit}}_n=0\) on
\(\mathcal H_n^c\) and \([J]\) is finite,
\eqref{eq:centered-moment-expansion} follows.
\end{proof}

\Needspace{9\baselineskip}
\begin{lemma}[Pilot and Error-Moment Rates]
\label{lem:pilot-rates}
Suppose \Cref{assm:gaussian,assm:primitive,assm:strong-core} hold and
\(J<\infty\) is fixed.  Let \(\check c\) be any (possibly
data-dependent) \([J]\)-valued index, and let
\(\check\beta:=\widehat\beta_{\check c}\) and
\(\widehat\sigma_\varepsilon^2,\widehat\sigma_{\varepsilon v}\) be as in
\eqref{eq:nuisance-estimators}.  Then
\begin{align}
    \check\beta-\beta&=O_p(n^{-1/2}),
       \label{eq:baseline-rate}\\
    \widehat\sigma_\varepsilon^2-\sigma_\varepsilon^2
       &=O_p(n^{-1/2}),
    \qquad
    \widehat\sigma_{\varepsilon v}-\sigma_{\varepsilon v}=O_p(n^{-1/2}).
       \label{eq:nuisance-rates}
\end{align}
\end{lemma}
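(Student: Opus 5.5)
The plan is to reduce the pilot to a finite union over candidates and then expand the two moment estimators around the true errors. For \eqref{eq:baseline-rate}, note that \(|\check\beta-\beta|\le\max_{c\in[J]}|\widehat\beta_c-\beta|\) holds pointwise, whatever data-dependent rule produces \(\check c\). By \eqref{eq:linear-expansion-order} of Lemma~\ref{lem:linear-expansion}, \(\max_{c\in[J]}|\sqrt n(\widehat\beta_c-\beta)|=O_p(1)\). Hence \(\check\beta-\beta=O_p(n^{-1/2})\), and no measurability or selection argument beyond this uniform bound is needed.

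For the moments, substitute \(y_i-\check\beta x_i=\varepsilon_i-(\check\beta-\beta)x_i\) into \eqref{eq:nuisance-estimators}. Expanding gives
\begin{align*}
    \widehat\sigma_\varepsilon^2
    &=\En[\varepsilon_i^2]-2(\check\beta-\beta)\En[\varepsilon_ix_i]
      +(\check\beta-\beta)^2\En[x_i^2],\\
    \widehat\sigma_{\varepsilon v}
    &=\En[\varepsilon_ix_i]-(\check\beta-\beta)\En[x_i^2].
\end{align*}
I would then control the sample moments with elementary Gaussian calculations under Assumption~\ref{assm:gaussian}. The term \(\En[\varepsilon_i^2]-\sigma_\varepsilon^2\) is an average of iid mean-zero variables with finite variance, so it is \(O_p(n^{-1/2})\) by Chebyshev. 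Next, \(\En[\varepsilon_ix_i]=\En[\varepsilon_i\Pi_i]+\En[\varepsilon_iv_i]\). Here \(\En[\varepsilon_i\Pi_i]\) has variance \(\sigma_\varepsilon^2H/n\), and \(\En[\varepsilon_iv_i]-\sigma_{\varepsilon v}\) is an iid average with finite variance; hence \(\En[\varepsilon_ix_i]=\sigma_{\varepsilon v}+O_p(n^{-1/2})\). Finally, \(\En[x_i^2]\) has mean \(H+\sigma_v^2\le\overline H+\sigma_v^2\) and is therefore \(O_p(1)\) by Markov.

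Combining these bounds, the second estimator satisfies \(\widehat\sigma_{\varepsilon v}-\sigma_{\varepsilon v}=O_p(n^{-1/2})+O_p(n^{-1/2})O_p(1)\). For the first, the cross term is \(O_p(n^{-1/2})O_p(1)\) and the quadratic term is \(O_p(n^{-1})O_p(1)\), which gives \eqref{eq:nuisance-rates}.

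I do not expect a genuine obstacle. The only delicate point is that \(\check c\) may depend on the data, which would invalidate applying first-stage results to a fixed index. Taking the maximum over the fixed finite set \([J]\) sidesteps this. It is also worth confirming that Lemma~\ref{lem:linear-expansion} indeed delivers the uniform \(O_p(1)\) bound, including off the event \(\mathcal H_n\); this is fine because \(\P(\mathcal H_n)\to1\).
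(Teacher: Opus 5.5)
Your proposal is correct and follows essentially the same route as the paper. You bound the pilot by the maximum over the fixed set \([J]\) via \eqref{eq:linear-expansion-order}, substitute \(y_i-\check\beta x_i=\varepsilon_i-(\check\beta-\beta)x_i\), and control \(\En[\varepsilon_i^2]\), \(\En[\varepsilon_ix_i]\), and \(\En[x_i^2]\) by Chebyshev; the only cosmetic difference is that you use \(\En[x_i^2]=O_p(1)\) via Markov rather than the paper's sharper \(H+\sigma_v^2+O_p(n^{-1/2})\), which is all that is needed.
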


\Needspace{6\baselineskip}
\begin{proof}
\emph{Step 1: Root-\(n\) rate for \(\check\beta-\beta\).}

Since \(\check c\in[J]\), \eqref{eq:linear-expansion-order} of
\Cref{lem:linear-expansion} gives
\[
    |\sqrt n(\check\beta-\beta)|
    \le\max_{c\in[J]}|\sqrt n(\widehat\beta_c-\beta)|
    =O_p(1).
\]
Thus \eqref{eq:baseline-rate} holds.

\Needspace{6\baselineskip}
\emph{Step 2: \(n^{-1/2}\) rates for the empirical second moments.}

Under \Cref{assm:gaussian}(i), Gaussian fourth moments give
\(\Var(g_i)\le C(1+\Pi_i^2)\) for
\(g_i\in\{\varepsilon_i^2,\varepsilon_ix_i,x_i^2\}\), with \(C\)
depending only on \(\Omega\).  Independence across \(i\) and
\(\En[\Pi_i^2]=H\le\overline H\) from \Cref{assm:gaussian}(ii) imply
\(\Var(\En[g_i])\le C/n\).  Chebyshev's inequality yields
\begin{align*}
    \En[\varepsilon_i^2]
       &=\sigma_\varepsilon^2+O_p(n^{-1/2}),\\
    \En[\varepsilon_i x_i]
       &=\sigma_{\varepsilon v}+O_p(n^{-1/2}),\\
    \En[x_i^2]
       &=H+\sigma_v^2+O_p(n^{-1/2}).
\end{align*}

\Needspace{6\baselineskip}
\emph{Step 3: \(n^{-1/2}\) rates for
\(\widehat\sigma_\varepsilon^2\) and \(\widehat\sigma_{\varepsilon v}\).}

Finally, set \(\delta_\beta:=\check\beta-\beta=O_p(n^{-1/2})\) by
\eqref{eq:baseline-rate}.  Substituting the three rates above gives
\begin{equation}
    \label{eq:nuisance-decompositions}
    \begin{split}
        \widehat\sigma_\varepsilon^2
        &=\En[\varepsilon_i^2]
          -2\delta_\beta\En[\varepsilon_i x_i]
          +\delta_\beta^2\En[x_i^2]\\
        &=\sigma_\varepsilon^2+O_p(n^{-1/2}),\\
        \widehat\sigma_{\varepsilon v}
        &=\En[\varepsilon_i x_i]
          -\delta_\beta\En[x_i^2]\\
        &=\sigma_{\varepsilon v}+O_p(n^{-1/2}).\qedhere
    \end{split}
\end{equation}
\end{proof}
\section{LASSO Fitted-Value Geometry}
\label{sec:proofs-geometry}

Fix a candidate and suppress its subscript.  Write \(z_i'\) for row \(i\)
of \(Z\), let \(u(w):=w-\mu(w)\) denote the LASSO residual, and let
\(E(w)\) be the equicorrelation set of \eqref{eq:equicorrelation}.  Since
the fitted vector \(\mu(w)\) is unique, neither \(u(w)\) nor \(E(w)\)
depends on which minimizer of \eqref{eq:lasso} is chosen.  Define the
residual polyhedron
\[
    \mathcal K
    :=\big\{
       q\in\mathbb R^n:
       \lVert\En[z_iq_i]\rVert_\infty\le\lambda
      \big\}.
\]
The set \(\mathcal K\) is a nonempty closed convex polyhedron because
\(0\in\mathcal K\) and it is the intersection of finitely many closed
linear half-spaces.  Thus its Euclidean projection is well defined and
unique.  Throughout this section \(\widehat\pi(w)\) denotes an arbitrary
minimizer of the problem in \eqref{eq:lasso};
Lemma~\ref{lem:independent-support} fixes a particular measurable choice.

\Needspace{6\baselineskip}
\begin{lemma}[LASSO Projection Geometry]
\label{lem:geometry}
For every \(w\in\mathbb R^n\) and every minimizer \(\widehat\pi(w)\) of
the problem in \eqref{eq:lasso},
\begin{equation}
    |\En[z_{ij}u_i(w)]|\le\lambda,
    \qquad
    \widehat\pi_j(w)\En[z_{ij}u_i(w)]
       =\lambda|\widehat\pi_j(w)|,
    \quad j\in[p],
    \label{eq:kkt-conditions}
\end{equation}
and hence
\(\supp\{\widehat\pi(w)\}\subseteq E(w)\) and
\begin{equation}
    \En[\mu_i(w)w_i]
    =\En[\mu_i(w)^2]
      +\lambda\lVert\widehat\pi(w)\rVert_1.
    \label{eq:kkt-denominator}
\end{equation}
Moreover,
\begin{equation}
    u(w)=P_{\mathcal K}(w),
    \qquad
    \mu(w)=w-P_{\mathcal K}(w),
    \label{eq:dual-projection}
\end{equation}
where \(P_{\mathcal K}\) is Euclidean projection onto \(\mathcal K\).
The map \(\mu\) is one-Lipschitz, \(\mu(0)=0\), and
\begin{equation}
    \lVert\mu(w)\rVert_2\le\lVert w\rVert_2.
    \label{eq:fit-contraction-at-zero}
\end{equation}
For Lebesgue-almost every \(w\in\mathbb R^n\),
\begin{equation}
    \dot\mu(w)=P_{\col(Z_{E(w)})},
    \qquad
    \dot\mu(w)'=\dot\mu(w)=\dot\mu(w)^2,
    \qquad
    \dot\mu(w)\mu(w)=\mu(w).
    \label{eq:geometry-properties}
\end{equation}
\end{lemma}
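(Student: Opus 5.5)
The plan is to deduce everything from the convex-analytic structure of the LASSO problem in \eqref{eq:lasso}, working through its dual. \emph{Step 1 (KKT).} The objective is convex. Its subgradient optimality condition at any minimizer reads
\[
\En[z_iu_i(w)] \in \lambda\,\partial\lVert\widehat\pi(w)\rVert_1 .
\]
Componentwise, this gives \(|\En[z_{ij}u_i(w)]|\le\lambda\). It also gives \(\En[z_{ij}u_i(w)]=\lambda\sign(\widehat\pi_j)\) whenever \(\widehat\pi_j\neq0\), and both parts of \eqref{eq:kkt-conditions} follow. The second part shows that every nonzero coordinate lies in \(E(w)\), so \(\supp\subseteq E(w)\). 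Summing \(\widehat\pi_j\En[z_{ij}u_i]=\lambda|\widehat\pi_j|\) over \(j\) gives \(\En[\mu_iu_i]=\lambda\lVert\widehat\pi\rVert_1\). Substituting \(u=w-\mu\) then yields \eqref{eq:kkt-denominator}.

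\emph{Step 2 (projection).} I will verify the variational characterization of \(P_{\mathcal K}(w)\), namely
\[
(w-u)'(q-u)\le0 \quad\text{for all } q\in\mathcal K .
\]
First, \(u\in\mathcal K\) by Step 1. Next, \(w-u=\mu=Z\widehat\pi\), so
\[
(w-u)'(q-u)=n\,\widehat\pi'\{\En[z_iq_i]-\En[z_iu_i]\}.
\]
By Hölder and Step 1, \(n\widehat\pi'\En[z_iq_i]\le n\lambda\lVert\widehat\pi\rVert_1=n\widehat\pi'\En[z_iu_i]\), so the inner product is nonpositive. Uniqueness of the projection then gives \eqref{eq:dual-projection}, and in particular it recovers uniqueness of \(\mu\). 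Since \(I-P_{\mathcal K}\) is firmly nonexpansive for a closed convex set, \(\mu\) is one-Lipschitz. Because \(0\in\mathcal K\), we have \(\mu(0)=0\), and \eqref{eq:fit-contraction-at-zero} follows from the Lipschitz bound applied at \(w\) and \(0\).

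\emph{Step 3 (derivative).} This is the main obstacle. Write \(\mathcal K=\{q:|z_{\cdot j}'q|\le n\lambda\ \forall j\}\), a polyhedron. Projection onto a polyhedron is piecewise affine. On the relative interior of each normal-cone region \(F+N_{\mathcal K}(F)\), where \(F\) is a face, \(P_{\mathcal K}\) is the affine projection onto \(\operatorname{aff}(F)\). The linear part of that affine map is the orthogonal projection onto the subspace \(\{q: z_{\cdot j}'q=0,\ j\in E_F\}\), where \(E_F\) is the set of constraints active on \(F\); this subspace is \(\col(Z_{E_F})^\perp\). Hence, on the interior of each such region,
\[
\dot\mu = I-P_{\col(Z_{E_F})^\perp}=P_{\col(Z_{E_F})} .
\]
The union of the region boundaries has Lebesgue measure zero, being a finite union of lower-dimensional polyhedral pieces. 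For almost every \(w\), then, \(w\) lies in the interior of one region, and there the active set of \(u(w)=P_{\mathcal K}(w)\) equals \(E_F=E(w)\). The delicate point is that interiors of the regions \(F+N_{\mathcal K}(F)\), taken over distinct faces, cover \(\mathbb R^n\) up to a null set, and that on each interior the active set is locally constant. I would cite \citet{TibshiraniTaylor-2012} for this. Symmetry and idempotence hold because \(\dot\mu\) is an orthogonal projection. Finally, \(\mu=Z_{E(w)}\widehat\pi_{E(w)}\in\col(Z_{E(w)})\) by the support inclusion of Step 1, so \(\dot\mu\,\mu=\mu\), which completes \eqref{eq:geometry-properties}.
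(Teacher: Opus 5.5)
Your proposal is correct and follows essentially the same route as the paper: the KKT conditions, the dual-projection identity, nonexpansiveness of $I-P_{\mathcal K}$, and an appeal to \citet{TibshiraniTaylor-2012} for the almost-everywhere derivative $\dot\mu(w)=P_{\col(Z_{E(w)})}$. The differences are minor. You verify $u(w)=P_{\mathcal K}(w)$ directly through the variational inequality, get $\mu(0)=0$ from $0\in\mathcal K$, and sketch the polyhedral face argument behind the derivative, whereas the paper cites Lemmas~1 and~3 and the proof of Theorem~1 of \citet{TibshiraniTaylor-2012}.
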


\begin{proof}
Begin with the KKT conditions for the LASSO problem defined by
\eqref{eq:lasso}:
\[
    \En[z_i u_i(w)]=\lambda s,
    \qquad
    s_j\in
    \begin{cases}
       \{\sign\{\widehat\pi_j(w)\}\},
          & \widehat\pi_j(w)\ne0,\\
       [-1,1],&\widehat\pi_j(w)=0.
    \end{cases}
\]
Hence \eqref{eq:kkt-conditions} holds and
\(\supp\{\widehat\pi(w)\}\subseteq E(w)\).  Summing the second equality
in \eqref{eq:kkt-conditions} over \(j\in[p]\) and using
\(\mu(w)=Z\widehat\pi(w)\) and \(u(w)=w-\mu(w)\) gives
\eqref{eq:kkt-denominator}.  Multiplying the objective in
\eqref{eq:lasso} by \(n\) puts it in the standard form,
\[
    n\left\{\frac12\En[(w_i-z_i'\pi)^2]
       +\lambda\lVert\pi\rVert_1\right\}
    =\frac12\lVert w-Z\pi\rVert_2^2
       +n\lambda\lVert\pi\rVert_1.
\]
Lemma~3 of \citet{TibshiraniTaylor-2012}, applied with \(X=Z\) and
penalty \(n\lambda\), gives \eqref{eq:dual-projection}.  In view of
Lemma~1 of the same paper and \eqref{eq:dual-projection},
\[
    \lVert\mu(w)-\mu(\widetilde w)\rVert_2
    \le\lVert w-\widetilde w\rVert_2.
\]
At \(w=0\) the objective in \eqref{eq:lasso} is
\(\tfrac12\En[(z_i'\pi)^2]+\lambda\lVert\pi\rVert_1\ge0\), and it
vanishes at \(\pi=0\), so \(\pi=0\) is a minimizer.  Therefore,
\[
    \mu(0)=0,
    \qquad
    \lVert\mu(w)\rVert_2
    =\lVert\mu(w)-\mu(0)\rVert_2
    \le\lVert w\rVert_2,
\]
which proves \eqref{eq:fit-contraction-at-zero}.

The derivative calculation in the proof of Theorem~1 of
\citet{TibshiraniTaylor-2012} yields \(\dot\mu(w)=P_{\col(Z_{E(w)})}\)
for Lebesgue-almost every \(w\); orthogonal projections are symmetric
and idempotent, so \(\dot\mu(w)'=\dot\mu(w)=\dot\mu(w)^2\).  Since
\(\supp\{\widehat\pi(w)\}\subseteq E(w)\) by \eqref{eq:kkt-conditions},
\(\mu(w)=Z_{E(w)}\widehat\pi_{E(w)}(w)\in\col(Z_{E(w)})\), and hence
\[
    \dot\mu(w)\mu(w)
    =P_{\col(Z_{E(w)})}\mu(w)
    =\mu(w),
\]
which proves \eqref{eq:geometry-properties}.
\end{proof}

\begin{lemma}[Measurable Full-Rank LASSO Solution]
\label{lem:independent-support}
There exists a Borel-measurable LASSO solution
\(w\mapsto\widehat\pi(w)\) such that, for every \(w\in\mathbb R^n\),
\begin{equation}
    \rank\!\left(
       Z_{\supp\{\widehat\pi(w)\}}
    \right)
    =
    \left|\supp\{\widehat\pi(w)\}\right|
    \le\rank(Z)
    \le n,
    \label{eq:independent-support}
\end{equation}
and \(Z\widehat\pi(w)\) equals the unique fitted vector \(\mu(w)\).
\end{lemma}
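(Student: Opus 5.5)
The plan is to construct the solution pointwise by reducing an arbitrary minimizer to one with linearly independent support columns, then choosing a canonical such reduction. Measurability follows because the construction is piecewise defined on finitely many Borel pieces.

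\emph{Step 1: reduction to an independent support.} Fix \(w\) and any minimizer \(\pi\), with \(S=\supp(\pi)\). Suppose the columns of \(Z_S\) are linearly dependent. Pick \(\eta\ne0\) supported on \(S\) with \(Z\eta=0\). On \(S\) the KKT conditions \eqref{eq:kkt-conditions} give \(\sign(\pi_j)=\En[z_{ij}u_i(w)]/\lambda\), and these signs are fixed because the residual is unique. For small \(|t|\) the signs of \(\pi+t\eta\) on \(S\) are unchanged, so
\[
\lVert\pi+t\eta\rVert_1=\lVert\pi\rVert_1+t\sum_{j\in S}\sign(\pi_j)\eta_j .
\]
The fit is unchanged, so optimality of \(\pi\) forces \(\sum_{j\in S}\sign(\pi_j)\eta_j=0\). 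Hence the objective is constant along \(\pi+t\eta\) as long as no sign flips. Increase \(|t|\) until the first coordinate of \(S\) hits zero. The result is still a minimizer, by continuity and constancy of the objective, and has strictly smaller support. Iterating at most \(p\) times gives a minimizer with \(\rank(Z_S)=|S|\). Then \(|S|\le\rank(Z)\le n\) is immediate, and \(Z\widehat\pi(w)=\mu(w)\) holds by uniqueness of the fit.

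\emph{Step 2: canonical choice and measurability.} For a uniform selection I would not iterate an arbitrary path. Instead, fix an ordering of the finitely many subsets \(B\subseteq[p]\) whose columns \(Z_B\) are linearly independent. For such \(B\) and each \(w\), let \(\pi^B(w)\) be the unique coefficient with \(Z_B\pi^B(w)=\mu(w)\), given by \((Z_B'Z_B)^{-1}Z_B'\mu(w)\) when \(\mu(w)\in\col(Z_B)\). Call \(B\) admissible at \(w\) if the following hold:
\begin{enumerate}[(a)]
\item \(\mu(w)\in\col(Z_B)\);
\item every coordinate of \(\pi^B(w)\) is nonzero;
\item \(\sign(\pi^B_j(w))\,\lambda=\En[z_{ij}u_i(w)]\) for \(j\in B\).
\end{enumerate}
Conditions (a)–(c), together with \(\lVert\En[z_iu_i(w)]\rVert_\infty\le\lambda\), are exactly the KKT conditions, so any admissible \(B\) yields a minimizer, namely \(\pi^B(w)\) extended by zeros. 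Step 1 shows that some admissible \(B\) exists for every \(w\). Define \(\widehat\pi(w)\) using the first admissible \(B\) in the fixed order.

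\emph{Step 3: Borel measurability.} The map \(\mu\) is continuous, since it is one-Lipschitz by Lemma~\ref{lem:geometry}. Hence \(u\) is continuous, and so is \(\pi^B(w)\) as a linear function of \(\mu(w)\). Condition (a) is the closed condition \((I-P_{\col(Z_B)})\mu(w)=0\). Conditions (b)–(c) are finite Boolean combinations of sets \(\{\pi^B_j>0\}\), \(\{\pi^B_j<0\}\), and \(\{\En[z_{ij}u_i]=\pm\lambda\}\), all Borel. So each admissibility set \(\mathcal W_B\) is Borel. The selection is \(\widehat\pi(w)=\sum_B \ind\{w\in\mathcal W_B\setminus\bigcup_{B'<B}\mathcal W_{B'}\}\,\pi^B(w)\), a finite sum of Borel functions.

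The main obstacle I expect is the sign-constancy argument in Step 1. It must establish that the \(\ell_1\) norm is exactly linear with zero slope along null directions, which is what guarantees the reduced vector remains a minimizer rather than merely a feasible fit. Everything else is bookkeeping.
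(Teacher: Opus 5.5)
Your proposal is correct and follows essentially the same route as the paper: a sign-preserving move along a null-space direction to reach a minimizer whose support columns are linearly independent, followed by choosing the first admissible subset, in a fixed order of the finitely many independent column sets, with coefficients \((Z_B'Z_B)^{-1}Z_B'\mu(w)\). The differences are minor. You get the zero slope \(\sum_{j\in S}\sign(\pi_j)\eta_j=0\) from optimality rather than directly from the KKT identity \(u(w)'Z\eta/(n\lambda)=0\), and your admissibility sets, with their explicit column-space and nonzero-coordinate conditions, are Borel rather than closed, which is enough for measurability.
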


\Needspace{6\baselineskip}
\begin{proof}
\emph{Step 1: Support reduction.}

The objective in \eqref{eq:lasso} is convex and coercive since
\(\lambda>0\), so the minimum is attained.  Fix \(w\), let \(b\) be a
LASSO minimizer for response \(w\), define
\(B:=\supp(b)\), and suppose that the columns of \(Z_B\) are linearly
dependent.  Choose
\(0\ne\delta\in\mathbb R^{|B|}\) such that
\(Z_B\delta=0\).  Since \(b\) is a minimizer, Lemma~\ref{lem:geometry}
gives \(\En[z_{ij}u_i(w)]=\lambda\sign(b_j)\) for \(j\in B\), so
\begin{equation*}
    \sign(b_B)'\delta
    =\frac{u(w)'Z_B\delta}{n\lambda}=0.
\end{equation*}
For \(b_B(t):=b_B+t\delta\), define
\begin{equation*}
    \mathcal I
    :=\{t:\sign(b_j)(b_j+t\delta_j)\ge0,\ j\in B\}.
\end{equation*}
The constraints in the display above are linear in \(t\), so
\(\mathcal I\) is a closed interval, and each holds strictly at \(t=0\)
since \(b_j\ne0\) for \(j\in B\).  Hence \(0\) is interior to
\(\mathcal I\).
For every \(t\in\mathcal I\), the fitted values are unchanged because
\(Z_Bb_B(t)=Z_Bb_B+tZ_B\delta=Z_Bb_B\), while
\(|b_j+t\delta_j|=\sign(b_j)(b_j+t\delta_j)\) for \(j\in B\), so that
\begin{equation*}
    \lVert b_B(t)\rVert_1
    =\lVert b_B\rVert_1+t\sign(b_B)'\delta
    =\lVert b_B\rVert_1.
\end{equation*}
Since \(\delta\ne0\) there is a \(j\in B\) with \(\delta_j\ne0\), and the
constraint for that index bounds \(\mathcal I\) on one side.  At a finite
endpoint \(t_*\) of \(\mathcal I\) at least one constraint holds with
equality, so \(b_k+t_*\delta_k=0\) for the corresponding index \(k\in B\).
Since \(\mathcal I\) is closed, \(t_*\in\mathcal I\), so neither the fit
nor \(\lVert b_B(t_*)\rVert_1\) has changed, and extending \(b_B(t_*)\)
by zeros off \(B\) yields a LASSO minimizer with strictly smaller
support.
Repeating this reduction at most \(|\supp(b)|\) times yields a LASSO
minimizer whose support indexes linearly independent columns of \(Z\).

\Needspace{6\baselineskip}
\emph{Step 2: Borel selection.}

Define
\begin{equation*}
    \mathcal B:=
    \{B\subseteq\{1,\ldots,p\}:\rank(Z_B)=|B|\}.
\end{equation*}
For nonempty \(B\in\mathcal B\), define
\begin{equation*}
    b_B^{(B)}(w):=(Z_B'Z_B)^{-1}Z_B'\mu(w),
    \qquad
    b_{B^c}^{(B)}(w):=0,
\end{equation*}
and put \(b^{(\varnothing)}(w):=0\).  Define
\begin{equation*}
    \mathcal R_B:=
    \left\{
       w:
       \begin{array}{l}
       |\En[z_{ij}\{w_i-z_i'b^{(B)}(w)\}]|\le\lambda,
          \quad j=1,\ldots,p,\\
       b_j^{(B)}(w)\En[z_{ij}\{w_i-z_i'b^{(B)}(w)\}]
          =\lambda|b_j^{(B)}(w)|,\quad j=1,\ldots,p
       \end{array}
    \right\}.
\end{equation*}
The conditions defining \(\mathcal R_B\) are \eqref{eq:kkt-conditions}
evaluated at \(b^{(B)}(w)\), and these are sufficient for a minimum of
the convex objective in \eqref{eq:lasso}.  Hence \(b^{(B)}(w)\) solves
\eqref{eq:lasso} for every \(w\in\mathcal R_B\), and since the fitted
vector is unique, \(Zb^{(B)}(w)=\mu(w)\).
The map \(\mu\) is one-Lipschitz by Lemma~\ref{lem:geometry}, so
\(b^{(B)}\) is continuous; each \(\mathcal R_B\) is then the preimage of
a closed set under a continuous map, hence closed.  By Step~1, for each
\(w\) there is a LASSO minimizer \(b\) for which \(Z_{\supp(b)}\) has
full column rank.  Set \(B:=\supp(b)\).  If \(B=\varnothing\), then
\(\mu(w)=Zb=0\), so \(b^{(\varnothing)}(w)=0=b\) and
\(w\in\mathcal R_\varnothing\) by \eqref{eq:kkt-conditions}.
Otherwise, since \(Z_Bb_B=Zb=\mu(w)\) and \(Z_B\) has full column
rank,
\begin{equation*}
    b_B=(Z_B'Z_B)^{-1}Z_B'\mu(w)=b_B^{(B)}(w),
\end{equation*}
so \(b^{(B)}(w)=b\) satisfies the conditions defining \(\mathcal R_B\)
by Lemma~\ref{lem:geometry}, and \(w\in\mathcal R_B\).  Hence
\(\bigcup_{B\in\mathcal B}\mathcal R_B=\mathbb R^n\).

Enumerate the elements of \(\mathcal B\) in lexicographic order as
\(B_1,\ldots,B_L\), and define
\begin{equation*}
    \ell(w):=\min\{\ell:w\in\mathcal R_{B_\ell}\},
    \qquad
    \widehat\pi(w):=b^{(B_{\ell(w)})}(w).
\end{equation*}
For each \(\ell\),
\begin{equation*}
    \{w:\ell(w)=\ell\}
    =\mathcal R_{B_\ell}\setminus
       \bigcup_{j<\ell}\mathcal R_{B_j}
\end{equation*}
is Borel.  Thus \(\widehat\pi\) is a Borel-measurable LASSO solution.
Moreover \(\supp\{\widehat\pi(w)\}\subseteq B_{\ell(w)}\in\mathcal B\), so
\[
    \rank(Z_{\supp\{\widehat\pi(w)\}})
    =|\supp\{\widehat\pi(w)\}|
    \le|B_{\ell(w)}|
    =\rank(Z_{B_{\ell(w)}})
    \le\rank(Z)\le n.
\]
This is \eqref{eq:independent-support}.  Uniqueness of the fitted vector
gives \(Z\widehat\pi(w)=\mu(w)\).
\end{proof}
\section{Gaussian Identities and Rank Variance}
\label{sec:proofs-stein}

By Rademacher's theorem, any locally Lipschitz
\(f:\mathbb R^n\to\mathbb R\) is differentiable almost everywhere.  Define
\(\nabla f(w)\) to be its gradient where it is differentiable and zero
otherwise, and set \(\partial_i f(w):=\{\nabla f(w)\}_i\).

\subsection{Gaussian Inequalities}

\begin{lemma}[Gaussian Poincar\'e Inequality]
\label{lem:gaussian-poincare}
Let \(W\sim N(0,\sigma^2I_n)\) with \(\sigma>0\).  If
\(g:\mathbb R^n\to\mathbb R\) is locally Lipschitz and
\begin{equation*}
    \E[g(W)^2]<\infty,
    \qquad
    \E[\lVert\nabla g(W)\rVert_2^2]<\infty,
\end{equation*}
then
\begin{equation}
    \Var\{g(W)\}
    \le\sigma^2\E[\lVert\nabla g(W)\rVert_2^2].
    \label{eq:gaussian-poincare}
\end{equation}
\end{lemma}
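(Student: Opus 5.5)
The plan is to reduce to the standard Gaussian Poincaré inequality on \(\mathbb R^n\) with identity covariance, which in turn follows from the one-dimensional case by tensorization (the Efron--Stein inequality). By rescaling, set \(W=\sigma G\) with \(G\sim N(0,I_n)\) and \(\tilde g(x):=g(\sigma x)\). Then \(\tilde g\) is locally Lipschitz, \(\nabla\tilde g(x)=\sigma\nabla g(\sigma x)\) almost everywhere, and the claim \eqref{eq:gaussian-poincare} becomes \(\Var\{\tilde g(G)\}\le\E[\lVert\nabla\tilde g(G)\rVert_2^2]\). Because \(\nabla g\) is set to zero off the differentiability set, which is Lebesgue-null by Rademacher's theorem and hence Gaussian-null, this convention has no effect on the expectations.

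\emph{One-dimensional case for smooth functions.} First take \(f\in C_c^\infty(\mathbb R)\) and \(\gamma=N(0,1)\). One route writes \(\Var f(G)=\E[f(G)^2]-\E[f(G)]^2\) and uses the Ornstein--Uhlenbeck semigroup identity \(\Var f=\int_0^\infty e^{-t}\E[f'(G)\,(P_tf')(G)]\,dt\), after which Cauchy--Schwarz and contractivity of \(P_t\) give \(\Var f\le\E[f'(G)^2]\). A more elementary alternative, which I would use, is covariance interpolation: for independent \(G,G'\), set \(G_\theta=G\cos\theta+G'\sin\theta\). Then
\[
\Var f(G)=\E[f(G)\{f(G)-f(G')\}]
\]
and differentiating along \(\theta\) yields \(\Var f\le\E[f'^2]\) after Cauchy--Schwarz. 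The same interpolation argument works directly in \(\mathbb R^n\) with \(G,G'\sim N(0,I_n)\), since \((G_\theta,\tfrac{d}{d\theta}G_\theta)\) is again a pair of independent standard Gaussians. This gives \(\Var f(G)\le\E\lVert\nabla f(G)\rVert_2^2\) for smooth, compactly supported \(f\) without tensorization.

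\emph{Approximation.} The main obstacle is extending the inequality from smooth functions to the locally Lipschitz \(g\) under only the two moment assumptions. I would proceed in two steps.

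\textbf{Truncation.} Take cutoffs \(\chi_R(x)=\chi(x/R)\) with \(\chi\in C_c^\infty\), \(0\le\chi\le1\), \(\chi=1\) on the unit ball, and \(\lVert\nabla\chi\rVert\le 2\). The function \(g\chi_R\) is Lipschitz with compact support. Its gradient is
\[
\chi_R\nabla g+g\nabla\chi_R,
\]
and the second term has norm at most \(2|g|/R\). So \(\E\lVert\nabla(g\chi_R)\rVert^2\to\E\lVert\nabla g\rVert^2\) by dominated convergence, using \(\E g^2<\infty\). Likewise \(\Var(g\chi_R)\to\Var g\).

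\textbf{Mollification.} For a compactly supported Lipschitz \(h\), set \(h_\epsilon=h*\rho_\epsilon\). Then \(h_\epsilon\to h\) uniformly, and \(\nabla h_\epsilon=(\nabla h)*\rho_\epsilon\to\nabla h\) in \(L^2(\gamma)\): the weak gradient of a Lipschitz function coincides with its a.e. gradient, and all supports lie in a fixed compact set on which the Gaussian density is bounded above and below. Passing to the limit in the smooth inequality then proves the lemma.
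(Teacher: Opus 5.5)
Your proof is correct. After the rescaling \(W=\sigma G\), \(\tilde g(x)=g(\sigma x)\), which both arguments share, you take a genuinely different route from the paper.

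\emph{The paper's route.} The paper proves nothing by hand. It notes that a locally Lipschitz \(\tilde g\) with \(\E[\tilde g(G)^2]<\infty\) and \(\E[\lVert\nabla\tilde g(G)\rVert_2^2]<\infty\) lies in the Gaussian Sobolev space \(W^{2,1}(\gamma_n)\). Its Sobolev derivative equals the a.e.\ gradient by Proposition~1.5.2 of Bogachev (1998). The paper then invokes the Poincar\'e inequality on that space, Theorem~1.6.4 of the same book.

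\emph{Your route.} You effectively reprove both cited facts. Your interpolation step is the Poincar\'e inequality on \(C_c^\infty\). Your truncation-plus-mollification step is the density statement that transfers it to \(g\) and identifies the a.e.\ gradient with the limit of smooth gradients.

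\emph{Comparison.} The citation buys brevity. Your argument buys self-containedness and shows exactly where each hypothesis enters:
\begin{itemize}
\item local Lipschitzness gives weak gradient equal to a.e.\ gradient, which drives the mollifier limit;
\item \(\E[g(W)^2]<\infty\) removes the \(g\nabla\chi_R\) term;
\item \(\E[\lVert\nabla g(W)\rVert_2^2]<\infty\) supplies the dominated convergence.
\end{itemize}

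\emph{Two small points for the write-up.}
\begin{itemize}
\item Your opening appeal to tensorization and the ``one-dimensional'' heading are superfluous, since your interpolation works directly in \(\mathbb R^n\).
\item The phrase ``differentiating along \(\theta\) yields the bound after Cauchy--Schwarz'' hides the key step, a Gaussian integration by parts. Set \(\dot G_\theta:=-G\sin\theta+G'\cos\theta\). It is independent of \(G_\theta\), and \(G=G_\theta\cos\theta-\dot G_\theta\sin\theta\). Integrating by parts in \(\dot G_\theta\) conditional on \(G_\theta\) gives \(\frac{d}{d\theta}\E[f(G)f(G_\theta)]=-\sin\theta\,\E[\nabla f(G)'\nabla f(G_\theta)]\). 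Integrating over \([0,\pi/2]\) yields \(\Var f(G)=\int_0^{\pi/2}\sin\theta\,\E[\nabla f(G)'\nabla f(G_\theta)]\,d\theta\). Cauchy--Schwarz with \(G_\theta\overset{d}{=}G\) then gives \(\Var f(G)\le\E[\lVert\nabla f(G)\rVert_2^2]\).
\end{itemize}
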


\begin{proof}
Write \(W=\sigma G\) with \(G\sim N(0,I_n)\), and set
\(\widetilde g(a):=g(\sigma a)\), so that
\(\nabla\widetilde g(a)=\sigma\nabla g(\sigma a)\) almost everywhere.
Then \(\widetilde g\) is locally Lipschitz with
\(\E[\widetilde g(G)^2]<\infty\) and
\(\E[\lVert\nabla\widetilde g(G)\rVert_2^2]<\infty\), so
\(\widetilde g\in W^{2,1}(\gamma_n)\) for \(\gamma_n:=N(0,I_n)\) by
Proposition~1.5.2 of \citet{Bogachev-1998}, with Sobolev derivative equal
almost everywhere to \(\nabla\widetilde g\).  Theorem~1.6.4 of
\citet{Bogachev-1998} then gives
\begin{equation*}
    \Var\{g(\sigma G)\}
    =\Var\{\widetilde g(G)\}
    \le\E[\lVert\nabla\widetilde g(G)\rVert_2^2]
    =\sigma^2\E[\lVert\nabla g(\sigma G)\rVert_2^2],
\end{equation*}
which is \eqref{eq:gaussian-poincare} since \(W=\sigma G\).
\end{proof}

\begin{lemma}[Gaussian Tail Bound]
\label{lem:gaussian-tail}
Let \(G\sim N(0,1)\) and define \(\Phi(u):=\P(G\le u)\).  For~\(u\ge0\),
\begin{equation}
    \Phi(-u)=\P(G\ge u)\le\exp(-u^2/2),
    \qquad
    \P(|G|>u)\le2\exp(-u^2/2).
    \label{eq:standard-gaussian-tail}
\end{equation}
\end{lemma}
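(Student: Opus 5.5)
The plan is to reduce to the standard normal case and then use the classical Chernoff bound; the second inequality will follow from the first by symmetry.

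First, since \(G\sim N(0,1)\) is symmetric about zero, \(-G\) has the same law as \(G\). Hence \(\Phi(-u)=\P(G\le -u)=\P(-G\ge u)=\P(G\ge u)\), which gives the equality in \eqref{eq:standard-gaussian-tail}.

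For the upper bound, fix \(u\ge0\) and any \(t\ge0\). Markov's inequality applied to the nonnegative variable \(e^{tG}\) gives
\begin{equation*}
    \P(G\ge u)\le e^{-tu}\E[e^{tG}].
\end{equation*}
Completing the square in the Gaussian integral gives \(\E[e^{tG}]=e^{t^2/2}\), so
\begin{equation*}
    \P(G\ge u)\le \exp(t^2/2-tu).
\end{equation*}
Choosing \(t=u\ge0\), which minimizes the exponent, yields \(\P(G\ge u)\le\exp(-u^2/2)\). At \(u=0\) the bound reads \(1/2\le1\), so no separate case is needed.

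For the two-sided bound, \(\{|G|>u\}\subseteq\{G\ge u\}\cup\{G\le -u\}\). The union bound together with the symmetry step then gives \(\P(|G|>u)\le 2\P(G\ge u)\le 2\exp(-u^2/2)\).

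There is no real obstacle here. The only point needing care is the moment generating function identity, which is a routine computation.
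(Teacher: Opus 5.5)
Your proposal is correct and matches the paper's proof: Markov's inequality applied to \(e^{tG}\), the standard normal moment generating function, the choice \(t=u\), and symmetry for the two-sided bound. You also spell out the symmetry argument for the identity \(\Phi(-u)=\P(G\ge u)\), which the paper leaves implicit.
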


\begin{proof}
For \(a\ge0\), Markov's inequality and the standard-normal moment generating
function give
\begin{equation*}
    \P(G\ge u)
    \le \exp(-au)\E[\exp(aG)]
    =\exp(a^2/2-au).
\end{equation*}
Setting \(a=u\) gives the first bound in
\eqref{eq:standard-gaussian-tail}; the second follows by symmetry of \(G\).
\end{proof}

\subsection{Stein Identities}

\begin{lemma}[Stein Identities]
\label{lem:stein}
Let \(W\sim N(0,\sigma^2I_n)\) with \(\sigma>0\), and let \(m\) be a
Lipschitz map from \(\mathbb R^n\) to \(\mathbb R^n\).  Set
\(A(w):=Dm(w)\) where \(m\) is differentiable and \(A(w):=0\) otherwise,
and define
\(\diver m(w):=\operatorname{tr}\{A(w)\}\).  Then
\begin{align}
    \E[W'm(W)]&=\sigma^2\E[\diver m(W)],
       \label{eq:stein-one-general}\\
    \E[\{W'm(W)-\sigma^2\diver m(W)\}^2]
       &=\sigma^2\E[\lVert m(W)\rVert_2^2]
         +\sigma^4\E[\operatorname{tr}\{A(W)^2\}].
       \label{eq:stein-two-general}
\end{align}
If, in addition, \(g:\mathbb R^n\to\mathbb R\) is locally Lipschitz and
there are constants \(C<\infty\) and \(r\ge0\) such that
\begin{equation*}
    |g(w)|+\lVert\nabla g(w)\rVert_2
    \le C(1+\lVert w\rVert_2^r)
    \quad\text{almost everywhere},
\end{equation*}
then the covariance identity is
\begin{equation}
    \E[\{W'm(W)-\sigma^2\diver m(W)\}g(W)]
       =\sigma^2\E[m(W)'\nabla g(W)].
    \label{eq:stein-covariance}
\end{equation}
\end{lemma}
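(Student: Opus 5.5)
Both identities follow from Gaussian integration by parts, so the plan is to reduce to standard normal coordinates and smooth approximations. Write \(W=\sigma G\) with \(G\sim N(0,I_n)\) and set \(\widetilde m(a):=m(\sigma a)\). Then \(D\widetilde m(a)=\sigma A(\sigma a)\) almost everywhere, and \(\widetilde m\) is again Lipschitz. Rescaling both sides shows each identity is equivalent to its \(\sigma=1\) version, so I will assume \(\sigma=1\) throughout.

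\emph{First-order identity \eqref{eq:stein-one-general}.} I apply the one-dimensional Stein identity \(\E[G_ig(G)]=\E[\partial_ig(G)]\) coordinatewise to \(g=m_i\). For a Lipschitz \(g\), the map \(t\mapsto g(\dots,t,\dots)\) is absolutely continuous for almost every choice of the other coordinates. Integration by parts against \(\phi'(t)=-t\phi(t)\) then works, with boundary terms vanishing because \(|g|\) grows at most linearly. The derivative used here is the a.e.\ partial derivative, which agrees with the Rademacher gradient almost everywhere by Fubini. Summing over \(i\) gives \(\E[G'm(G)]=\E[\operatorname{tr}A(G)]\). This is Stein's lemma in the form of \citet{Stein-1981}, and I would simply cite it.

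\emph{Covariance identity \eqref{eq:stein-covariance}.} Fix \(i\) and apply the one-dimensional identity twice to products with \(g\). First take \(f=m_ig\). It is locally Lipschitz with polynomial growth of both \(f\) and its gradient, and integration by parts extends to this class because the Gaussian density kills polynomial boundary terms. This gives \(\E[G_im_ig]=\E[(\partial_im_i)g]+\E[m_i\partial_ig]\). Summing over \(i\) yields
\[
\E[\{G'm-\diver m\}g]=\E[m'\nabla g],
\]
which is \eqref{eq:stein-covariance}. The only care needed is the product rule for a.e.\ derivatives of locally Lipschitz functions, which holds on the full-measure set where both factors are differentiable.

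\emph{Second-order identity \eqref{eq:stein-two-general}.} This is the main obstacle: the naive route of applying \eqref{eq:stein-covariance} with \(g=G'm-\diver m\) fails, because \(\diver m\) need not be Lipschitz or even weakly differentiable. Instead I would cite the second-order Stein formula of \citet{BellecZhang-2021} (their Theorem~2.1), which states exactly
\[
\E[(G'm-\diver m)^2]=\E\|m\|^2+\E\operatorname{tr}\{(Dm)^2\}
\]
for Lipschitz (indeed weakly differentiable, square-integrable) \(m\). If a self-contained argument is wanted, I would proceed in three steps:
\begin{enumerate}[(i)]
\item Prove the identity for smooth compactly supported \(m\). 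Expand the square and apply the first-order identity to \(g=G_jm_j\) and to \(\partial_jm_j\). The third-derivative terms cancel after one more integration by parts, leaving \(\sum_{i,j}\E[\partial_jm_i\,\partial_im_j]=\E\operatorname{tr}(A^2)\).
\item Extend to Lipschitz \(m\) by mollification \(m_\epsilon=m*\rho_\epsilon\). Here \(Dm_\epsilon\to A\) almost everywhere with \(\|Dm_\epsilon\|\le\mathrm{Lip}(m)\), and \(m_\epsilon\to m\) with linear growth bounds.
\item Conclude by dominated convergence on the right-hand side and Fatou plus \(L^2\) convergence of \(G'm_\epsilon-\diver m_\epsilon\) on the left. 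The \(L^2\) convergence follows from the identity itself applied to the differences \(m_\epsilon-m_{\epsilon'}\), which form a Cauchy sequence.
\end{enumerate}
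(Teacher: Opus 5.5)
Your proposal is correct and follows the paper's proof essentially step for step. The paper does the same three things: a coordinatewise Gaussian integration by parts along almost every one-dimensional section (identifying the Rademacher gradient with the sectional derivative via Fubini) gives the first-order identity with \(a=m_i\); the same identity with \(a=m_ig\) and the almost-everywhere product rule gives the covariance identity; and the second-order identity comes from rescaling to \(N(0,I_n)\) and citing Theorem~2.1 of \citet{BellecZhang-2021}.

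Your optional self-contained argument also works, with one small correction. Step (i) should be proved for smooth \(m\) with bounded first and second derivatives rather than compactly supported ones, because the mollifications \(m_\epsilon\) are not compactly supported. The Cauchy argument in step (iii) is also unnecessary: the left-hand side converges directly by dominated convergence, since \(|\diver m_\epsilon|\le n\,\mathrm{Lip}(m)\) and \(m_\epsilon\to m\) uniformly.
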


The identities \eqref{eq:stein-one-general}--\eqref{eq:stein-covariance}
follow from the Gaussian integration-by-parts formula of \citet{Stein-1981}
for the first-order and covariance identities, and from
\citet{BellecZhang-2021} for the second-order identity.

\begin{proof}
Let \(L\) be the Lipschitz constant of \(m\).  By Lipschitz continuity and
Rademacher's theorem,
\begin{equation}
    \lVert m(w)\rVert_2
       \le\lVert m(0)\rVert_2+L\lVert w\rVert_2,
    \qquad
    \lVert A(w)\rVert_{\mathrm{op}}\le L
    \quad\text{almost everywhere}.
    \label{eq:stein-lipschitz-bounds}
\end{equation}
Since \(W\) has a Lebesgue density, the nondifferentiability sets in
Rademacher's theorem are \(\P\)-null.  All Gaussian polynomial moments
are finite.

\Needspace{6\baselineskip}
\emph{Step 1: Coordinate identity.}

We show that for any locally Lipschitz
\(a:\mathbb R^n\to\mathbb R\) satisfying
\[
    |a(w)|+|\partial_i a(w)|\le C(1+\lVert w\rVert_2^r)
    \quad\text{almost everywhere}
\]
for some \(C<\infty\) and \(r\ge0\),
\begin{equation}
    \E[W_i a(W)]=\sigma^2\E[\partial_i a(W)].
    \label{eq:coordinate-stein}
\end{equation}
Write \(w=(w_i,w_{-i})\), and let \(\varphi_\sigma\) be the
\(N(0,\sigma^2)\) density.  By Rademacher's theorem and Fubini's theorem,
for almost every \(w_{-i}\), the section \(h(t):=a(t,w_{-i})\) satisfies
\(h'(t)=\partial_i a(t,w_{-i})\) and the polynomial bounds almost
everywhere.  Since \(h\) is Lipschitz on compact intervals, it is
absolutely continuous there by Theorem~3.3 of \citet{Heinonen-2005}.
The bound on \(|h(t)|\) extends to every \(t\) by continuity, so
\(h(t)\varphi_\sigma(t)\to0\) as \(t\to\pm\infty\).
Integration by parts, using
\(\varphi_\sigma'(t)=-t\varphi_\sigma(t)/\sigma^2\), gives
\begin{align*}
    \int_{-\infty}^\infty t h(t)\varphi_\sigma(t)\,dt
    &=-\sigma^2\int_{-\infty}^\infty h(t)\varphi_\sigma'(t)\,dt\\
    &=\sigma^2\int_{-\infty}^\infty
       \partial_i a(t,w_{-i})\varphi_\sigma(t)\,dt.
\end{align*}
The polynomial bounds ensure absolute convergence.  Integrating over
\(w_{-i}\) against the law of \(W_{-i}\), independence and Fubini's theorem
give \eqref{eq:coordinate-stein}.

\Needspace{6\baselineskip}
\emph{Step 2: First-order and covariance identities.}

For \eqref{eq:stein-one-general}, take \(a=m_i\) in
\eqref{eq:coordinate-stein}.  Its polynomial bounds follow from
\eqref{eq:stein-lipschitz-bounds}.  Thus
\begin{equation*}
    \E[W'm(W)]
    =\sum_{i=1}^n\E[W_im_i(W)]
    =\sigma^2\sum_{i=1}^n\E[\partial_im_i(W)]
    =\sigma^2\E[\diver m(W)].
\end{equation*}

Take \(a=m_ig\) instead.  By the product rule for locally Lipschitz
functions, \(\partial_i(m_ig)=(\partial_im_i)g+m_i\partial_ig\) almost
everywhere, while \eqref{eq:stein-lipschitz-bounds} with the polynomial
bound on \(g\) controls \(|m_ig|\) and \(|\partial_i(m_ig)|\) by
polynomials in \(\lVert W\rVert_2\), which are Gaussian integrable.
Summing \eqref{eq:coordinate-stein} over \(i\) therefore gives
\begin{equation*}
    \E[W'm(W)g(W)]
    =\sigma^2\E[\{\diver m(W)\}g(W)]
     +\sigma^2\E[m(W)'\nabla g(W)].
\end{equation*}
Rearranging the above display gives \eqref{eq:stein-covariance}.

\Needspace{6\baselineskip}
\emph{Step 3: Second-order identity.}

Let \(G\sim N(0,I_n)\), \(W=\sigma G\), and
\(\widetilde m(u):=\sigma m(\sigma u)\).  Then \(\widetilde m\) is
Lipschitz with constant \(\sigma^2L\), and by the chain rule
\(D\widetilde m(u)=\sigma^2A(\sigma u)\) at its differentiability points.
Hence
\begin{equation*}
    G'\widetilde m(G)-\diver\widetilde m(G)
    =W'm(W)-\sigma^2\diver m(W),
    \qquad
    D\widetilde m(G)=\sigma^2A(W)
    \quad\text{almost surely}.
\end{equation*}
Theorem~2.1(ii) of \citet{BellecZhang-2021} gives, for a Lipschitz
\(f:\mathbb R^n\to\mathbb R^n\) with
\(\E[\lVert f(G)\rVert_2^2]+\E[\lVert Df(G)\rVert_F^2]<\infty\),
\begin{equation*}
    \E[\{G'f(G)-\diver f(G)\}^2]
    =\E[\lVert f(G)\rVert_2^2]
      +\E[\operatorname{tr}\{Df(G)^2\}].
\end{equation*}
Both hypotheses hold at \(f=\widetilde m\): since
\(\widetilde m(G)=\sigma m(W)\) and \(D\widetilde m(G)=\sigma^2A(W)\),
it is enough to control the unscaled moments, and by
\eqref{eq:stein-lipschitz-bounds}, \((a+b)^2\le2a^2+2b^2\), and
\(\E[\lVert W\rVert_2^2]=n\sigma^2\),
\begin{equation*}
    \E[\lVert m(W)\rVert_2^2]
       \le2\lVert m(0)\rVert_2^2+2L^2n\sigma^2<\infty,
    \qquad
    \lVert A(W)\rVert_F^2
       \le n\lVert A(W)\rVert_{\mathrm{op}}^2
       \le nL^2.
\end{equation*}
The trace term \(\E[\operatorname{tr}\{Df(G)^2\}]\) in the Bellec--Zhang
display above is finite as well, since Cauchy--Schwarz for the Frobenius
inner product and \(\lVert A'\rVert_F=\lVert A\rVert_F\) give
\(|\operatorname{tr}(A^2)|=|\langle A',A\rangle_F|\le\lVert A\rVert_F^2\).
Hence
\begin{equation*}
    \begin{aligned}
    \E[\{G'\widetilde m(G)-\diver\widetilde m(G)\}^2]
    &=\E[\lVert\widetilde m(G)\rVert_2^2]
      +\E[\operatorname{tr}\{D\widetilde m(G)^2\}]\\
    &=\sigma^2\E[\lVert m(W)\rVert_2^2]
      +\sigma^4\E[\operatorname{tr}\{A(W)^2\}],
    \end{aligned}
\end{equation*}
which is \eqref{eq:stein-two-general}, since
\(G'\widetilde m(G)-\diver\widetilde m(G)=W'm(W)-\sigma^2\diver m(W)\)
almost surely.
\end{proof}

\subsection{Oracle Baseline}

\begin{lemma}[Oracle Baseline]
\label{lem:oracle-baseline}
Under Assumption~\ref{assm:gaussian}, with \(L_n\), \(q\), and \(C_{0n}\)
defined by \eqref{eq:optimal-score}, \eqref{eq:oracle-statistic}, and
\eqref{eq:amse-definition},
\begin{equation}
    \E[L_n^2]=\frac{\sigma_\varepsilon^2}{H},
    \qquad
    \E[qL_n^2]=0,
    \qquad
    \E[q^2L_n^2]
      =\frac{
          \sigma_\varepsilon^2\sigma_v^2
          +2\sigma_{\varepsilon v}^2
        }{nH^2},
    \label{eq:oracle-moments}
\end{equation}
and
\begin{equation}
    \E[\{(1-q)L_n\}^2]=C_{0n}.
    \label{eq:oracle-baseline}
\end{equation}
\end{lemma}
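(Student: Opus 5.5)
The plan is to reduce everything to a bivariate Gaussian computation. Write
\[
    A:=\frac{\Pi'\varepsilon}{\sqrt{nH}},
    \qquad
    B:=\frac{\Pi'v}{\sqrt{nH}},
\]
so that, by \eqref{eq:optimal-score} and \eqref{eq:oracle-statistic}, \(L_n=A/\sqrt H\) and \(q=B/\sqrt{nH}\). Throughout we condition on the instruments, so \(\Pi\) is a fixed vector with \(\Pi'\Pi=nH\).

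\emph{Step 1: joint law of \((A,B)\).} By Assumption~\ref{assm:gaussian}(i) the pairs \((\varepsilon_i,v_i)\) are i.i.d.\ \(N(0,\Omega)\). The vector \((A,B)\) is a fixed linear combination of these pairs, so it is jointly Gaussian with mean zero. Its covariance is
\[
    \frac{\Pi'\Pi}{nH}\,\Omega=\Omega,
\]
that is, \(\Var(A)=\sigma_\varepsilon^2\), \(\Var(B)=\sigma_v^2\), and \(\Cov(A,B)=\sigma_{\varepsilon v}\).

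\emph{Step 2: the three moments in \eqref{eq:oracle-moments}.}
\begin{itemize}
    \item First, \(\E[L_n^2]=\E[A^2]/H=\sigma_\varepsilon^2/H\).
    \item Next, \(\E[qL_n^2]=\E[BA^2]/(H\sqrt{nH})\). This is a third-order moment of a centered Gaussian vector, so it vanishes by symmetry: \((A,B)\) and \((-A,-B)\) have the same law.
    \item Finally, \(\E[q^2L_n^2]=\E[A^2B^2]/(nH^2)\). Isserlis' (Wick's) formula gives
    \[
        \E[A^2B^2]=\E[A^2]\E[B^2]+2\E[AB]^2=\sigma_\varepsilon^2\sigma_v^2+2\sigma_{\varepsilon v}^2,
    \]
    which is the stated expression.
\end{itemize}
All moments involved are finite Gaussian polynomial moments, so no integrability issue arises.

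\emph{Step 3: \eqref{eq:oracle-baseline}.} Expand \((1-q)^2L_n^2=L_n^2-2qL_n^2+q^2L_n^2\) and take expectations using Step~2. This gives
\[
    \frac{\sigma_\varepsilon^2}{H}-0+\frac{\sigma_\varepsilon^2\sigma_v^2+2\sigma_{\varepsilon v}^2}{nH^2},
\]
which equals \(C_{0n}\) as defined in \eqref{eq:amse-definition}.

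The only step requiring any care is the fourth-moment identity for a correlated Gaussian pair. If one prefers not to cite Isserlis, an alternative is the regression decomposition \eqref{eq:comparison-error-decomposition}. Write
\[
    A=\frac{\sigma_{\varepsilon v}}{\sigma_v^2}B+E,
\]
where \(E\) is independent of \(B\) and \(\Var(E)=\sigma_\varepsilon^2-\sigma_{\varepsilon v}^2/\sigma_v^2\). Then, using \(\E[B^2]=\sigma_v^2\) and \(\E[B^4]=3\sigma_v^4\),
\[
    \E[A^2B^2]=\frac{\sigma_{\varepsilon v}^2}{\sigma_v^4}\,3\sigma_v^4+\Var(E)\,\sigma_v^2
    =\sigma_\varepsilon^2\sigma_v^2+2\sigma_{\varepsilon v}^2.
\]
The same decomposition shows \(\E[A^2B]=0\) directly, since it reduces to odd moments of \(B\).
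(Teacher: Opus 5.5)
Your proposal is correct and follows essentially the same route as the paper: reduce to the centered Gaussian pair built from \(\Pi'\varepsilon\) and \(\Pi'v\), kill the cubic moment by odd-moment symmetry, apply Isserlis' formula for the fourth moment, and expand \((1-q)^2L_n^2\). The only differences are cosmetic. You normalize the pair so that its covariance is exactly \(\Omega\), whereas the paper divides by \(n^3H^4\) at the end. You also give an optional regression-decomposition check of the Isserlis step, and that check is correct.
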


\begin{proof}
Under Assumption~\ref{assm:gaussian}(i), the pair
\((\Pi'\varepsilon,\Pi'v)\) is centered Gaussian.  Since
\(\lVert\Pi\rVert_2^2=nH\),
\begin{equation}
    \E[(\Pi'\varepsilon)^2]=\sigma_\varepsilon^2nH,
    \qquad
    \E[(\Pi'v)^2]=\sigma_v^2nH,
    \qquad
    \E[(\Pi'\varepsilon)(\Pi'v)]=\sigma_{\varepsilon v}nH.
    \label{eq:oracle-covariances}
\end{equation}
By \eqref{eq:optimal-score} and \eqref{eq:oracle-statistic},
\(L_n=\Pi'\varepsilon/(\sqrt n\,H)\) and \(q=\Pi'v/(nH)\), so the first
identity in \eqref{eq:oracle-covariances} yields
\(\E[L_n^2]=\sigma_\varepsilon^2/H\).  Next, \(qL_n^2\) is a cubic form in
the centered Gaussian pair \((\Pi'\varepsilon,\Pi'v)\), so
\(\E[qL_n^2]=0\) because odd moments of a centered Gaussian vector
vanish.  Finally, for a centered Gaussian pair \((X,Y)\),
\(\E[X^2Y^2]=\E[X^2]\E[Y^2]+2\E[XY]^2\) \citep{Isserlis-1918}.  Applying
this identity at \(X=\Pi'v\) and \(Y=\Pi'\varepsilon\) and substituting
\eqref{eq:oracle-covariances},
\begin{equation*}
    \E[(\Pi'v)^2(\Pi'\varepsilon)^2]
    =n^2H^2\left\{
       \sigma_\varepsilon^2\sigma_v^2+2\sigma_{\varepsilon v}^2
     \right\}.
\end{equation*}
Divide by \(n^3H^4\) to obtain the third identity in
\eqref{eq:oracle-moments}.  For \eqref{eq:oracle-baseline}, expand
\(\{(1-q)L_n\}^2=L_n^2-2qL_n^2+q^2L_n^2\) and apply
\eqref{eq:oracle-moments} to get
\(\E[\{(1-q)L_n\}^2]
=\sigma_\varepsilon^2/H
+(\sigma_\varepsilon^2\sigma_v^2+2\sigma_{\varepsilon v}^2)/(nH^2)\),
which is \(C_{0n}\) by \eqref{eq:amse-definition}.
\end{proof}

\Needspace{18\baselineskip}
\subsection{LASSO Fit Moments}

\begin{lemma}[Offset-Fit Moments]
\label{lem:offset-fit-moments}
Suppose Assumption~\ref{assm:gaussian}(i) holds.  Fix a candidate,
suppress its subscript, and let \(b\in\mathbb R^n\) be nonrandom.  Define
\(m(a):=\mu(\Pi+a)-b\) for \(a\in\mathbb R^n\) and
\(M:=m(v)=\widehat\Pi-b\).  Let \(d\) be as in \eqref{eq:realized-rank}
and \(\bar d:=\E[d]\) as in \eqref{eq:rank-moments}.  Then \(m\) is
one-Lipschitz, \(\E[\lVert M\rVert_2^r]<\infty\) for every fixed
\(r>0\), and, for all nonrandom \(g,c_0\in\mathbb R^n\),
\begin{align}
    \Var(g'M)&\le\sigma_v^2\lVert g\rVert_2^2,
       \label{eq:offset-linear}\\
    \Var(\lVert M\rVert_2^2)&\le4\sigma_v^2\E[\lVert M\rVert_2^2],
       \label{eq:offset-norm}\\
    \Var\{(\Pi+v+c_0)'M\}
       &\le2\sigma_v^2\left\{
          \E[\lVert M\rVert_2^2]+\E[\lVert\Pi+v+c_0\rVert_2^2]
        \right\},
       \label{eq:offset-product}\\
    \E[v'M]&=\sigma_v^2\bar d,\notag\\
    \E[(v'M-\sigma_v^2d)^2]
       &=\sigma_v^2\E[\lVert M\rVert_2^2]+\sigma_v^4\bar d.
       \label{eq:offset-stein}
\end{align}
\end{lemma}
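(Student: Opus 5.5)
The whole lemma should follow from the one-Lipschitz structure of \(\mu\) in Lemma~\ref{lem:geometry} combined with the Gaussian tools of Lemmas~\ref{lem:gaussian-poincare} and~\ref{lem:stein}, applied with \(W=v\sim N(0,\sigma_v^2I_n)\).

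\emph{Basic properties of \(m\).} Since \(\mu\) is one-Lipschitz and \(b\) is nonrandom, \(m(a)=\mu(\Pi+a)-b\) is one-Lipschitz. By \eqref{eq:fit-contraction-at-zero},
\[
\lVert M\rVert_2\le\lVert\Pi\rVert_2+\lVert v\rVert_2+\lVert b\rVert_2 ,
\]
so every moment \(\E[\lVert M\rVert_2^r]\) is finite. Almost surely \(Dm(v)=\dot\mu(\Pi+v)=:P\), which by \eqref{eq:projection-properties} is the orthogonal projection onto \(\col(Z_{E(x)})\). Hence \(\operatorname{tr}(P)=d\) and \(\operatorname{tr}(P^2)=\operatorname{tr}(P)=d\).

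\emph{The Stein identities \eqref{eq:offset-stein}.} Apply \eqref{eq:stein-one-general} to \(m\): this gives \(\E[v'M]=\sigma_v^2\E[d]=\sigma_v^2\bar d\). Apply \eqref{eq:stein-two-general}, using \(\E[\operatorname{tr}(P^2)]=\bar d\): this gives
\[
\E[(v'M-\sigma_v^2d)^2]=\sigma_v^2\E[\lVert M\rVert_2^2]+\sigma_v^4\bar d .
\]
These two uses of Lemma~\ref{lem:stein} are immediate once the projection structure is recorded.

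\emph{The variance bounds.} Each bound comes from the Poincar\'e inequality \eqref{eq:gaussian-poincare} with an almost-everywhere gradient computed via \(P\).
\begin{itemize}
\item[(a)] For \(g'M\), the gradient is \(P'g=Pg\), and \(\lVert Pg\rVert_2\le\lVert g\rVert_2\) because \(P\) is a contraction. This gives \eqref{eq:offset-linear}.
\item[(b)] For \(\lVert M\rVert_2^2\), the gradient is \(2PM\), so \(\E\lVert 2PM\rVert_2^2\le4\E\lVert M\rVert_2^2\). This gives \eqref{eq:offset-norm}.
\item[(c)] For \(f(v)=(\Pi+v+c_0)'M\), the product rule gives \(\nabla f=M+P(\Pi+v+c_0)\). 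Using \((a+b)^2\le2a^2+2b^2\) and the contraction property of \(P\) yields \eqref{eq:offset-product}.
\end{itemize}

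\emph{Hypotheses of Lemma~\ref{lem:gaussian-poincare}.} Each function above is locally Lipschitz, being a linear or quadratic expression in the Lipschitz map \(m\) and in \(v\). Its value and gradient are bounded by a polynomial in \(\lVert v\rVert_2\), so the required second moments are finite by Gaussian integrability. The rest is bookkeeping.

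The main obstacle is one point of rigour rather than computation: \(\mu\) is only almost-everywhere differentiable, so the chain and product rules must hold almost everywhere and the Rademacher nondifferentiability set must be \(\P\)-null. Both are already handled by the a.e. gradient convention and the density argument in Lemmas~\ref{lem:stein} and~\ref{lem:gaussian-poincare}. I would simply cite those conventions and check the growth conditions.
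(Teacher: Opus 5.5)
Your proposal is correct and follows essentially the same route as the paper: one-Lipschitzness and finite moments come from the contraction property of \(\mu\) in Lemma~\ref{lem:geometry}. The three variance bounds come from the Gaussian Poincar\'e inequality with the almost-everywhere gradients \(Pg\), \(2PM\), and \(M+P(\Pi+v+c_0)\), and \eqref{eq:offset-stein} comes from the first- and second-order identities of Lemma~\ref{lem:stein} together with \(\operatorname{tr}(P)=\operatorname{tr}(P^2)=d\) almost surely. The only difference is that you present the Stein step before the Poincar\'e step, which is immaterial.
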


\begin{proof}
The map \(\mu\) is one-Lipschitz by Lemma~\ref{lem:geometry} and \(b\) is
nonrandom, so \(m\) is one-Lipschitz.  That lemma also gives
\(\lVert\widehat\Pi\rVert_2\le\lVert\Pi+v\rVert_2\), so
\(\lVert M\rVert_2\le\lVert\Pi+v\rVert_2+\lVert b\rVert_2\), and
\(\E[\lVert M\rVert_2^r]<\infty\) for every fixed \(r>0\) by finiteness
of Gaussian polynomial moments.  At almost every \(a\),
\(Dm(a)=\dot\mu(\Pi+a)\) is an orthogonal projection by
Lemma~\ref{lem:geometry}, hence a contraction.  Write \(P:=Dm(v)\); by
\eqref{eq:projection-properties}, with the candidate subscript suppressed,
\(P=P'=P^2\) and \(\operatorname{tr}(P)=d\) almost surely, hence also
\(\operatorname{tr}(P^2)=d\).

\Needspace{6\baselineskip}
\emph{Step 1: Variance bounds.}

Each map below is locally Lipschitz, and its square and squared gradient
are bounded by polynomials in \(\lVert a\rVert_2\) with nonrandom
coefficients; since Gaussian polynomial moments are finite,
Lemma~\ref{lem:gaussian-poincare} applies with
\(\sigma^2=\sigma_v^2\).  All gradients below are taken at almost every
\(a\), without further qualification.  First, for nonrandom \(g\), the map
\(a\mapsto g'm(a)\) has gradient \(Dm(a)'g\), of norm at most
\(\lVert g\rVert_2\) by contraction, so
\begin{equation*}
    \Var(g'M)
    \le\sigma_v^2\E[\lVert Dm(v)'g\rVert_2^2]
    \le\sigma_v^2\lVert g\rVert_2^2.
\end{equation*}
Next, \(a\mapsto\lVert m(a)\rVert_2^2\) has gradient \(2Dm(a)'m(a)\), of
norm at most \(2\lVert m(a)\rVert_2\), whence
\begin{equation*}
    \Var(\lVert M\rVert_2^2)
    \le4\sigma_v^2\E[\lVert M\rVert_2^2].
\end{equation*}
Finally, the map \(a\mapsto(\Pi+a+c_0)'m(a)\) has gradient
\(m(a)+Dm(a)'(\Pi+a+c_0)\), of squared norm at most
\(2\lVert m(a)\rVert_2^2+2\lVert\Pi+a+c_0\rVert_2^2\) by
\(\lVert x+y\rVert_2^2\le2\lVert x\rVert_2^2+2\lVert y\rVert_2^2\) and
contraction of \(Dm(a)\).  Hence
\begin{equation*}
    \Var\{(\Pi+v+c_0)'M\}
    \le2\sigma_v^2\left\{
       \E[\lVert M\rVert_2^2]+\E[\lVert\Pi+v+c_0\rVert_2^2]
     \right\}.
\end{equation*}

\Needspace{6\baselineskip}
\emph{Step 2: Stein identities.}

Since \(\diver m(v)=\operatorname{tr}(P)=d\) and
\(\operatorname{tr}(P^2)=d\) almost surely,
\eqref{eq:stein-one-general} and \eqref{eq:stein-two-general} with
\(\sigma^2=\sigma_v^2\) give
\begin{align*}
    \E[v'M]&=\sigma_v^2\bar d,\\*
    \E[(v'M-\sigma_v^2d)^2]
       &=\sigma_v^2\E[\lVert M\rVert_2^2]+\sigma_v^4\bar d.\qedhere
\end{align*}
\end{proof}

\Needspace{15\baselineskip}
\begin{lemma}[Gaussian LASSO Moments]
\label{lem:gaussian-moments}
Under Assumption~\ref{assm:gaussian}(i), fix a candidate and suppress
its subscript.  Then
\begin{equation}
    \E[(\varepsilon'\widehat\Pi)^2]
    =\sigma_\varepsilon^2
       \E[\lVert\widehat\Pi\rVert_2^2]
      +\sigma_{\varepsilon v}^2\E[d^2+d].
    \label{eq:structural-second}
\end{equation}
Define
\begin{equation}
    \zeta:=v'\widehat\Pi-\sigma_v^2d.
    \label{eq:stein-residual}
\end{equation}
Then, with \(\zeta\) so defined,\footnote{All quantities in
\eqref{eq:structural-second}--\eqref{eq:d-zeta-zero} depend on the LASSO
only through \(\widehat\Pi\) and \(d\), so no selection among non-unique
coefficient solutions is involved.}
\begin{align}
    \E[v'\widehat\Pi]
       &=\sigma_v^2\E[d],
       \label{eq:first-stein}\\
    \E[\zeta^2]
       &=\sigma_v^2\E[\lVert\widehat\Pi\rVert_2^2]
         +\sigma_v^4\E[d],
       \label{eq:second-stein}\\
    \E[d\zeta]&=0.
       \label{eq:d-zeta-zero}
\end{align}
\end{lemma}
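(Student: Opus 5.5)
The plan is to read all four identities off the general Stein identities in Lemma~\ref{lem:stein} and the offset-fit moments in Lemma~\ref{lem:offset-fit-moments} at \(b=0\). We then combine them with the Gaussian decomposition \eqref{eq:comparison-error-decomposition}, \(\varepsilon=(\sigma_{\varepsilon v}/\sigma_v^2)v+e\) with \(e\perp v\).

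\textbf{Identities \eqref{eq:first-stein} and \eqref{eq:second-stein}.} Take \(b=0\) in Lemma~\ref{lem:offset-fit-moments}, so that \(M=\widehat\Pi\) and \(m(a)=\mu(\Pi+a)\). The two displayed Stein identities in \eqref{eq:offset-stein} are then exactly \eqref{eq:first-stein} and \eqref{eq:second-stein}. Here the inputs are \(\diver m(v)=d\) and \(\operatorname{tr}(P^2)=d\), both from \eqref{eq:projection-properties}.

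\textbf{Identity \eqref{eq:d-zeta-zero}.} This is the one delicate step, because \(d\) is a discontinuous (integer-valued) function of \(v\). The covariance identity \eqref{eq:stein-covariance} therefore cannot be applied with \(g=d\) directly. I would instead use the fact that \(d=\rank(Z_{E(x)})\) is locally constant off a Lebesgue-null set: by \citet{TibshiraniTaylor-2012}, the equicorrelation set and the signs are locally constant for almost every \(x\). I would make this rigorous by conditioning on \(d\) and working region by region.

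Partition \(\SR^n\) into the finitely many level sets \(\{d=k\}\). I would try to show that \(\E[\zeta\ind\{d=k\}]=0\) for each \(k\).
\begin{itemize}
\item[(a)] Approximate \(\ind\{d=k\}\) by smooth mollifications \(g_\eta\).
\item[(b)] Apply \eqref{eq:stein-covariance} to each \(g_\eta\), which gives \(\E[\zeta g_\eta]=\sigma_v^2\E[\widehat\Pi'\nabla g_\eta]\).
\item[(c)] Show that the right-hand side vanishes in the limit, because boundary contributions concentrate on the transition surfaces where \(d\) jumps.
\end{itemize}

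On such a surface, either a coordinate of the coefficient hits zero or a KKT inequality becomes tight. In the first case the fit \(\widehat\Pi\) is continuous across the boundary. The flux term \(\widehat\Pi'\nu\), with \(\nu\) the unit normal, should then cancel between the two adjacent regions, or should be zero because the fit lies tangent to the boundary. This is the main obstacle, and I am not certain the cancellation holds in general. If the surface argument is intractable, my fallback is a cleaner route: express \(\E[d\zeta]\) as the limit of \(\E[\zeta\,\diver\mu(\Pi+v+\tau\xi)]\) under an independent Gaussian smoothing \(\tau\xi\) with \(\tau\to0\). The smoothed divergence is Lipschitz, so Lemma~\ref{lem:stein} applies, and one would then pass to the limit using dominated convergence and \(0\le d\le n\).

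\textbf{Identity \eqref{eq:structural-second}.} Write \(\varepsilon'\widehat\Pi=(\sigma_{\varepsilon v}/\sigma_v^2)v'\widehat\Pi+e'\widehat\Pi\). Conditioning on \(v\) and using \(e\perp v\), the cross term vanishes and
\[
\E[(e'\widehat\Pi)^2]=(\sigma_\varepsilon^2-\sigma_{\varepsilon v}^2/\sigma_v^2)\E\lVert\widehat\Pi\rVert_2^2 .
\]
For the remaining term, write \(v'\widehat\Pi=\zeta+\sigma_v^2d\), so that
\[
\E[(v'\widehat\Pi)^2]=\E[\zeta^2]+2\sigma_v^2\E[d\zeta]+\sigma_v^4\E[d^2].
\]
By \eqref{eq:second-stein} and \eqref{eq:d-zeta-zero}, this equals \(\sigma_v^2\E\lVert\widehat\Pi\rVert_2^2+\sigma_v^4\E[d+d^2]\). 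Multiplying by \(\sigma_{\varepsilon v}^2/\sigma_v^4\) and adding the \(e\)-term makes the \(\sigma_{\varepsilon v}^2\E\lVert\widehat\Pi\rVert_2^2/\sigma_v^2\) pieces cancel, which leaves exactly \eqref{eq:structural-second}.
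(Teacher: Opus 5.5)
Your derivations of \eqref{eq:first-stein} and \eqref{eq:second-stein} match the paper's, and so does your reduction of \eqref{eq:structural-second} to these and \eqref{eq:d-zeta-zero}. The gap is that \eqref{eq:d-zeta-zero} is never established, so \eqref{eq:structural-second} is not proved either.

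Step (c) is left as a conjecture, and the fallback does not close it. With \(h_\tau(v):=\E_\xi[d(v+\tau\xi)]\), dominated convergence does give \(\E[\zeta h_\tau]\to\E[d\zeta]\). But the other side of \eqref{eq:stein-covariance} is \(\sigma_v^2\E[\widehat\Pi'\nabla h_\tau]\), where \(\nabla h_\tau=\tau^{-1}\E_\xi[\xi\,d(v+\tau\xi)]\) is only \(O(1/\tau)\). Nothing dominates it, and showing it vanishes is the same boundary-flux question again.

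Your tangency guess is in fact correct. Let \(P_E\) be the projection onto the column span of \(Z_E\).
\begin{itemize}
\item Where a coefficient \(\widehat\pi_j\), \(j\in E\), hits zero, the boundary normal is proportional to \(Z_E(Z_E'Z_E)^{-1}e_j\) (for full-rank \(Z_E\)). Its inner product with \(\widehat\Pi=Z_E\widehat\pi_E\) is \(\widehat\pi_j=0\).
\item Where the KKT inequality for some \(k\notin E\) becomes tight, the normal is proportional to \((I-P_E)z_k\), which is orthogonal to \(\widehat\Pi\).
\end{itemize}
So (a)--(c) could be completed by a mollification argument over the finitely many polyhedral regions. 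It would even give \(\E[\zeta f(d)]=0\) for every \(f\). But that argument is where all the work lies, and it is missing.

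The idea you are missing is that \(d\) never needs to be differentiated. Apply \eqref{eq:stein-covariance} to the locally Lipschitz, polynomially growing function \(g(a):=a'\mu(\Pi+a)\). With \(P:=\dot\mu(\Pi+v)\), its gradient at \(v\) is \(\widehat\Pi+Pv\) almost surely. Since \(P=P'\) and \(P\widehat\Pi=\widehat\Pi\) by \eqref{eq:projection-properties}, \(\widehat\Pi'Pv=v'\widehat\Pi\). Then \eqref{eq:first-stein} yields \(\E[\zeta\,v'\widehat\Pi]=\sigma_v^2\E[\lVert\widehat\Pi\rVert_2^2]+\sigma_v^4\E[d]\).

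Because \(v'\widehat\Pi=\zeta+\sigma_v^2d\), the left side equals \(\E[\zeta^2]+\sigma_v^2\E[d\zeta]\). By \eqref{eq:second-stein}, \(\E[\zeta^2]\) equals exactly the same right-hand side, so \(\E[d\zeta]=0\). The boundary analysis you were attempting is already built into the Bellec--Zhang second-order identity behind \eqref{eq:second-stein}. That is why the paper gets \eqref{eq:d-zeta-zero} by algebra alone.
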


\Needspace{6\baselineskip}
\begin{proof}
Lemma~\ref{lem:offset-fit-moments} at \(b=0\) gives
\(\E[\lVert\widehat\Pi\rVert_2^r]<\infty\) for every fixed \(r>0\), and
\eqref{eq:offset-stein} gives \eqref{eq:first-stein} and
\eqref{eq:second-stein}.  Write \(m(a):=\mu(\Pi+a)\) and
\(P:=Dm(v)=\dot\mu(\Pi+v)\).  The map \(m\) is one-Lipschitz by that
lemma, while \eqref{eq:projection-properties} gives
\(P=P'=P^2\), \(\operatorname{tr}(P)=d\), and
\(P\widehat\Pi=\widehat\Pi\) almost surely; also \(d\le\rank(Z)\le n\).

\Needspace{6\baselineskip}
\emph{Step 1: \(\E[d\zeta]=0\).}

Define \(g(a):=a'm(a)\).  Since \(m\) is Lipschitz,
\(g\) is locally Lipschitz with \(\nabla g(a)=m(a)+Dm(a)'a\) almost
everywhere, and since
\(\lVert m(a)\rVert_2\le\lVert\Pi\rVert_2+\lVert a\rVert_2\) while
\(Dm(a)\) is an orthogonal projection by Lemma~\ref{lem:geometry},
Cauchy--Schwarz and the triangle inequality give
\begin{align*}
    |g(a)|+\lVert\nabla g(a)\rVert_2
    &\le
      \lVert a\rVert_2\lVert m(a)\rVert_2
      +\lVert m(a)\rVert_2+\lVert Dm(a)'a\rVert_2\\
    &\le
      \lVert a\rVert_2(\lVert \Pi\rVert_2+\lVert a\rVert_2)
      +\lVert \Pi\rVert_2+2\lVert a\rVert_2\\
    &\le C_\Pi(1+\lVert a\rVert_2^2)
\end{align*}
for a finite constant \(C_\Pi\) depending only on \(\lVert\Pi\rVert_2\),
verifying the growth condition of Lemma~\ref{lem:stein} at \(r=2\).  At
\(a=v\), \eqref{eq:projection-properties} gives \(P=P'\) and
\(P\widehat\Pi=\widehat\Pi\), so almost surely
\(\nabla g(v)=\widehat\Pi+Pv\) and
\(\widehat\Pi'Pv=(P\widehat\Pi)'v=\widehat\Pi'v\).
By \eqref{eq:stein-covariance} together with
\eqref{eq:first-stein}--\eqref{eq:second-stein},
\begin{equation*}
    \begin{aligned}
    \sigma_v^2\E[d\zeta]
    &=\E[(v'\widehat\Pi)\zeta]-\E[\zeta^2]\\
    &=\sigma_v^2\E[\widehat\Pi'(\widehat\Pi+Pv)]-\E[\zeta^2]\\
    &=\sigma_v^2\E[\lVert\widehat\Pi\rVert_2^2]
      +\sigma_v^4\E[d]-\E[\zeta^2]=0.
    \end{aligned}
\end{equation*}

\Needspace{6\baselineskip}
\emph{Step 2: Second moment of \(\varepsilon'\widehat\Pi\).}

Under Assumption~\ref{assm:gaussian}(i), the conditional law of
\(\varepsilon\) given \(v\) is
\begin{equation*}
    \varepsilon\mid v
    \sim N\!\left\{
       \frac{\sigma_{\varepsilon v}}{\sigma_v^2}v,
       \left(
          \sigma_\varepsilon^2
          -\frac{\sigma_{\varepsilon v}^2}{\sigma_v^2}
       \right)I_n
    \right\},
\end{equation*}
and both \(\widehat\Pi\) and \(d\) are \(\sigma(v)\)-measurable.  Hence
\begin{equation*}
    \E[(\varepsilon'\widehat\Pi)^2\mid v]
    =\widehat\Pi'\E[\varepsilon\varepsilon'\mid v]\widehat\Pi
    =\frac{\sigma_{\varepsilon v}^2}{\sigma_v^4}(v'\widehat\Pi)^2
     +\left(
        \sigma_\varepsilon^2
        -\frac{\sigma_{\varepsilon v}^2}{\sigma_v^2}
      \right)\lVert\widehat\Pi\rVert_2^2.
\end{equation*}
Moreover,
\begin{equation*}
    \begin{aligned}
    \E[(v'\widehat\Pi)^2]
    &=\E[(\zeta+\sigma_v^2d)^2]\\
    &=\E[\zeta^2]
      +2\sigma_v^2\E[d\zeta]
      +\sigma_v^4\E[d^2]\\
    &=\sigma_v^2\E[\lVert\widehat\Pi\rVert_2^2]
      +\sigma_v^4\E[d^2+d],
    \end{aligned}
\end{equation*}
where \eqref{eq:second-stein} and \eqref{eq:d-zeta-zero} respectively
evaluate \(\E[\zeta^2]\) and remove the cross term.  Taking expectations
in the conditional second moment and substituting the above display,
\begin{align*}
        \E[(\varepsilon'\widehat\Pi)^2]
        &=\frac{\sigma_{\varepsilon v}^2}{\sigma_v^4}
           \E[(v'\widehat\Pi)^2]
          +\left(
             \sigma_\varepsilon^2-\frac{\sigma_{\varepsilon v}^2}{\sigma_v^2}
           \right)\E[\lVert\widehat\Pi\rVert_2^2]\\*
        &=\sigma_\varepsilon^2\E[\lVert\widehat\Pi\rVert_2^2]
          +\sigma_{\varepsilon v}^2\E[d^2+d].\qedhere
\end{align*}
\end{proof}

\subsection{Variance of the Effective Dimension}

The rank bound below follows from the variance inequality of Theorem~2.2
of \citet{BellecZhang-2021}, applied to the LASSO fitted-value map.

\begin{lemma}[Effective-Dimension Variance Bound]
\label{lem:rank-variance}
Fix \(\Pi\), \(Z\in\mathbb R^{n\times p}\), and \(\lambda>0\), and let
\(\mu\) denote the LASSO fitted-value map defined by \eqref{eq:lasso}.  For
\(v\sim N(0,\sigma_v^2I_n)\) with \(\sigma_v^2>0\), define
\begin{equation*}
    d:=\rank(Z_{E(\Pi+v)}),
    \qquad
    \bar d:=\E[d].
\end{equation*}
Then \(d=\diver\mu(\Pi+v)\) \(\P\)-almost surely.
There is a universal constant \(C<\infty\) such that
\begin{equation}
    \Var(d)
    \le C\left[
       1+\bar d\log\!\left\{
          \frac{ep}{\bar d\vee1}
       \right\}
    \right].
    \label{eq:rank-variance-bound}
\end{equation}
\end{lemma}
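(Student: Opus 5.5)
The first claim, \(d=\diver\mu(\Pi+v)\) almost surely, follows from Lemma~\ref{lem:geometry}. Because \(\Pi+v\) has a Lebesgue density, \eqref{eq:geometry-properties} holds at \(w=\Pi+v\) almost surely. There \(\dot\mu\) is the orthogonal projection onto \(\col(Z_{E(\Pi+v)})\), and the trace of a projection equals its rank.

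For the variance bound, the plan is to invoke the variance inequality of Theorem~2.2 of \citet{BellecZhang-2021} for the divergence of a Lipschitz estimator of a Gaussian mean. Write \(w=\Pi+v\) and \(G=v/\sigma_v\). Rescale by \(\widetilde\mu(u):=\mu(\Pi+\sigma_v u)/\sigma_v\), so that \(\widetilde\mu\) is one-Lipschitz, \(D\widetilde\mu(G)=\dot\mu(\Pi+v)\), and \(\diver\widetilde\mu(G)=d\). Their result, specialized to the LASSO, bounds \(\Var(\diver\widetilde\mu(G))\) by a quantity built from \(\E\lVert D\widetilde\mu(G)\rVert_F^2=\E[d]\) and from the ``jump'' part of the derivative. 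I would check that their hypotheses hold in the present normalization: the penalty \(n\lambda\) in the standard form, arbitrary (possibly collinear) \(Z\), and fixed \(\Pi\). If the theorem is stated for the LASSO in their Section~3, I would quote it directly. The normalization step is routine.

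The substantive part is how the \(\bar d\log\{ep/(\bar d\vee1)\}\) term arises, in case it must be derived rather than quoted. The divergence \(d\) is piecewise constant in \(w\): it jumps only when \(w\) crosses the boundary of a region on which the equicorrelation set and the signs are constant. I would apply the Gaussian Poincar\'e inequality (Lemma~\ref{lem:gaussian-poincare}) to a smoothed version of \(d\). Alternatively, I would follow Bellec--Zhang's route: bound \(\Var(d)\) by \(1\) plus the expected number of active-set changes along Gaussian interpolation paths. That expected count is controlled by a union over the possible supports of size about \(\bar d\), and the log-binomial count \(\log\binom{p}{\bar d}\lesssim\bar d\log(ep/\bar d)\) gives the factor. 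When \(\bar d<1\), the bound reduces to \(C(1+\bar d\log(ep))\), and the \(\vee1\) convention covers this case.

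The main obstacle is this last step: verifying that Bellec--Zhang's Theorem~2.2 applies to the divergence of the LASSO fit under possibly non-unique solutions, which is precisely where \(d\) is defined as a rank rather than a support size. My first attempt would be to pass to the measurable full-rank solution of Lemma~\ref{lem:independent-support}. Almost surely its support size equals \(\rank(Z_{E})\), which lets me apply their support-size counting argument verbatim with \(|\supp\widehat\pi|\) replaced by \(d\).
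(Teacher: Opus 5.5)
Your first claim, the rescaling $\widetilde\mu$, and the choice of Theorem~2.2 of Bellec--Zhang are right. What is missing is the step that produces the $\bar d\log\{ep/(\bar d\vee1)\}$ term, and neither of your two mechanisms for it works.

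The inequality in that theorem (their display~(2.13)) reads $\Var\{\diver f(G)\}\le\E[\operatorname{tr}\{Df(G)^2\}]+\E[\lVert Df(G)G\rVert_2^2]$ for one-Lipschitz $f$. It has no ``jump part.'' Its whole point is that it never differentiates $\diver f$, which here is the piecewise-constant $d$. This is why your first fallback fails. Applying the Gaussian Poincar\'e inequality to a mollified $d_\epsilon$ bounds $\Var(d_\epsilon)$ by $\E\lVert\nabla d_\epsilon\rVert_2^2$. Since $\nabla d_\epsilon$ has size about $\epsilon^{-1}$ on an $\epsilon$-neighborhood of the jump set, that bound is of order $\epsilon^{-1}$ and diverges as $\epsilon\to0$. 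Your second fallback (one plus the expected number of active-set changes along interpolation paths) is not Bellec--Zhang's route, and you give no way to bound that count.

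What you need is the following. Let $P:=D\widetilde\mu(G)$, the projection onto $\operatorname{col}(Z_{E(\Pi+v)})$. The inequality then gives $\Var(d)\le\bar d+\E[\lVert PG\rVert_2^2]$, and the union bound over supports applies to the second term. On $\{d=r\}$, $P$ is one of at most $\binom pr\le(ep/r)^r$ projections onto spans of $r$ linearly independent columns of $Z$. Each such $P_0$ has $\lVert P_0G\rVert_2^2\sim\chi^2_r$. The Laurent--Massart tail and a union bound then give $\P\{\max_{P_0}\lVert P_0G\rVert_2^2>11(r\log(ep/r)+t)\}\le e^{-r-t}$.

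Two further pieces are absent from your sketch. First, because $d$ is random, you cannot restrict attention to ``supports of size about $\bar d$.'' You must sum the tail bound over all $r\ge1$, which the factor $e^{-r}$ makes summable. This yields $\E[\lVert PG\rVert_2^2]\le11\,\E[\psi_p(d)]+11/(e-1)$ with $\psi_p(x):=x\log\{ep/(x\vee1)\}$. Then use concavity of $\psi_p$ on $[0,p]$ and Jensen's inequality to replace $\E[\psi_p(d)]$ by $\psi_p(\bar d)$. Together with $\bar d\le\psi_p(\bar d)$, this gives the bound with $C=12$.

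Second, the obstacle you single out is not one. Theorem~2.2 is applied to the fitted-value map, which is unique and one-Lipschitz for any $Z$. Every rank-$r$ span of columns of $Z$ has an $r$-column basis, so the count $\binom pr$ holds directly. You need neither a measurable full-rank selection nor an identification of $|\supp\widehat\pi|$ with $d$.
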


\begin{proof}
Set
\begin{equation*}
    G:=\frac v{\sigma_v},
    \qquad
    \mathcal F(u):=\frac{\mu(\Pi+\sigma_vu)}{\sigma_v}.
\end{equation*}
Then \(G\sim N(0,I_n)\), and Lemma~\ref{lem:geometry} implies that
\(\mathcal F\) is \(1\)-Lipschitz.  At differentiability points the chain
rule yields \(D\mathcal F(u)=\dot\mu(\Pi+\sigma_vu)\), which
\eqref{eq:geometry-properties} identifies as the orthogonal projection
onto \(\col(Z_{E(\Pi+\sigma_vu)})\) at Lebesgue-almost every \(u\).  Since
\(G\) has a Lebesgue density, this identification holds almost surely, so,
as in \eqref{eq:projection-properties},
\begin{equation*}
    P:=D\mathcal F(G)
      =P_{\col(Z_{E(\Pi+\sigma_vG)})},
    \qquad
    P'=P=P^2,
    \qquad
    d=\operatorname{tr}(P).
\end{equation*}
In particular, \(d=\diver\mu(\Pi+v)\) almost surely, the first claim of
the lemma.

Theorem~2.2 of \citet{BellecZhang-2021}, in its display~(2.13), states
that for a \(1\)-Lipschitz \(f\) such that \(g(u):=u'f(u)\) and its
gradient are square integrable under \(\gamma_n:=N(0,I_n)\),
\begin{equation*}
    \Var\{\diver f(G)\}
    \le\E[\operatorname{tr}\{Df(G)^2\}]
      +\E[\lVert Df(G)G\rVert_2^2].
\end{equation*}
At \(f=\mathcal F\), the bounds
\begin{align*}
    |u'\mathcal F(u)|
       &\le \lVert u\rVert_2
          \{\lVert\mathcal F(0)\rVert_2+\lVert u\rVert_2\},\\
    \bigl\lVert\nabla\{u'\mathcal F(u)\}\bigr\rVert_2
       &\le \lVert\mathcal F(0)\rVert_2+2\lVert u\rVert_2
\end{align*}
and finiteness of Gaussian polynomial moments verify these conditions; the
transpose convention in their \(\nabla f\) is immaterial here because
\(P'=P\).  Since \(D\mathcal F(G)=P\) and
\(\operatorname{tr}(P^2)=\operatorname{tr}(P)=d\) almost surely,
\begin{equation}
    \Var(d)
    \le \E[\operatorname{tr}(P^2)]+\E[\lVert PG\rVert_2^2]
    =\bar d+\E[\lVert PG\rVert_2^2].
    \label{eq:rank-var-projection}
\end{equation}

\Needspace{6\baselineskip}
\emph{Step 1: Bounding the projection norm.}

To bound \(\E[\lVert PG\rVert_2^2]\) in
\eqref{eq:rank-var-projection}, define, for \(1\le r\le\rank(Z)\),
\begin{equation*}
    \mathscr P_r
    :=
    \left\{
       P_{\col(Z_B)}:
       B\subseteq[p],\
       |B|=\rank(Z_B)=r
    \right\}.
\end{equation*}
Each element of \(\mathscr P_r\) is determined by an \(r\)-element
subset of \([p]\).  On \(\{d=r\}\) the span \(\col(Z_{E(\Pi+v)})\) has
rank \(r\), so it admits an \(r\)-column basis \(Z_B\) with
\(B\subseteq E(\Pi+v)\).  Together these give
\begin{equation}
    |\mathscr P_r|
       \le \binom{p}{r}
       \le\left(\frac{ep}{r}\right)^r,
    \qquad
    P\in\mathscr P_r\quad\text{on }\{d=r\}.
    \label{eq:rank-projection-family}
\end{equation}
For fixed \(P_0\in\mathscr P_r\),
\(\lVert P_0G\rVert_2^2\sim\chi_r^2\).
Display~(4.3) of \citet{LaurentMassart-2000} states that,
if \(U\sim\chi_r^2\), then for every \(u>0\),
\begin{equation}
    \P\!\left\{
       U>r+2\sqrt{ru}+2u
    \right\}
    \le e^{-u}.
    \label{eq:laurent-massart-tail}
\end{equation}
For \(L_r:=\log(ep/r)\) and \(t\ge0\), a union bound over
\(\mathscr P_r\), using \eqref{eq:rank-projection-family} and
\eqref{eq:laurent-massart-tail}, gives
\begin{equation}
    \P\!\left\{
       \max_{P_0\in\mathscr P_r}\lVert P_0G\rVert_2^2
       >11\{rL_r+t\}
    \right\}
    \le e^{-r-t}.
    \label{eq:rank-projection-tail}
\end{equation}
Indeed, set \(u:=2rL_r+r+t\).  Since \(2\sqrt{ru}\le r+u\) and
\(L_r\ge1\) (as \(r\le p\)), the threshold in
\eqref{eq:laurent-massart-tail} satisfies
\begin{equation*}
    r+2\sqrt{ru}+2u
    \le2r+3u
      =5r+6rL_r+3t
      \le11(rL_r+t),
\end{equation*}
so the event in \eqref{eq:rank-projection-tail} is contained in the
union over \(\mathscr P_r\) of the Laurent--Massart events, whose
probability is at most
\begin{equation*}
    |\mathscr P_r|e^{-u}
    \le e^{-rL_r-r-t}
      \le e^{-r-t}.
\end{equation*}

\Needspace{6\baselineskip}
\emph{Step 2: Taking expectations.}

Define
\begin{equation*}
    \psi_p(x):=x\log\!\left(\frac{ep}{x\vee1}\right),
    \qquad 0\le x\le p.
\end{equation*}
For every \(t\ge0\), the event
\(\{\lVert PG\rVert_2^2>11\{\psi_p(d)+t\}\}\) is empty on \(\{d=0\}\),
because there \(P=0\) and \(\psi_p(d)=0\).  Summing
\eqref{eq:rank-projection-tail} over \(r\ge1\) therefore gives
\begin{equation*}
    \P\!\left\{
       \lVert PG\rVert_2^2>11\{\psi_p(d)+t\}
    \right\}
    \le\frac{e^{-t}}{e-1}.
\end{equation*}
For \(X:=\lVert PG\rVert_2^2\), the pointwise bound
\(X\le11\psi_p(d)+\{X-11\psi_p(d)\}_+\) and Fubini's theorem imply
\begin{equation*}
    \E[X]
    \le11\E[\psi_p(d)]
      +11\int_0^\infty\P\{X>11(\psi_p(d)+t)\}\,dt,
\end{equation*}
so the tail bound above yields
\begin{equation}
    \E[\lVert PG\rVert_2^2]
    \le11\E[\psi_p(d)]+\frac{11}{e-1}.
    \label{eq:rank-projection-moment}
\end{equation}
The function \(\psi_p\) is continuous, with
\begin{equation*}
    \psi_p'(x)=
    \begin{cases}
        \log(ep),&0<x<1,\\
        \log(p/x),&1<x<p.
    \end{cases}
\end{equation*}
Thus \(\psi_p\) is concave on \([0,p]\).  Since
\(d\le\rank(Z)\le p\) almost surely, Jensen's inequality gives
\begin{equation}
    \E[\psi_p(d)]\le\psi_p(\bar d).
    \label{eq:rank-entropy-expectation}
\end{equation}
Since \(\bar d\le\psi_p(\bar d)\), combining
\eqref{eq:rank-var-projection}, \eqref{eq:rank-projection-moment}, and
\eqref{eq:rank-entropy-expectation} yields
\begin{equation*}
    \Var(d)
    \le12\psi_p(\bar d)+\frac{11}{e-1}
    \le12\{1+\psi_p(\bar d)\},
\end{equation*}
which is
\eqref{eq:rank-variance-bound} with \(C=12\).
\end{proof}
\section{Sparse First-Stage Bounds}
\label{sec:proofs-primitive}

Following \citet{BCCH-2012}, define the score event
\begin{equation}
    \mathcal G_c
    :=\left\{
       \left\lVert
          \En[z_{c,i}v_i]
       \right\rVert_\infty
       \le\frac{\lambda_c}{c_\lambda}
     \right\}.
    \label{eq:score-event}
\end{equation}
Under \Cref{assm:gaussian,assm:primitive}, Step~1 of the proof of
\Cref{lem:primitive-consequences} shows that
\(\P(\mathcal G_c^c)\le\eta_{n,c}\to0\), where \(\eta_{n,c}\) is the
sequence defined there.

\begin{lemma}[LASSO Bounds and Strong-Core Recovery]
\label{lem:primitive-lasso}
Under \Cref{assm:primitive}, for all sufficiently large \(n\), the
following hold on the event \(\mathcal G_c\) defined by
\eqref{eq:score-event}.  First, the prediction norm satisfies
\begin{equation}
    \En[\{z_{c,i}'(\widehat\pi_c-\pi_c^0)\}^2]^{1/2}
       \le
       2\En[\xi_{c,i}^2]^{1/2}
       +\frac{2(1+c_\lambda^{-1})\lambda_c\sqrt{s_c}}
             {\kappa_0}.
       \label{eq:prediction-bound}
\end{equation}
Next, for the measurable solution \(\widehat\pi_c\) of
\Cref{lem:independent-support},
\begin{equation}
    \lVert\widehat\pi_c-\pi_c^0\rVert_2
    \le
    \frac{2}{\sqrt{\underline\phi}}
    \left\{
       \En[\xi_{c,i}^2]^{1/2}
       +\frac{(1+c_\lambda^{-1})\lambda_c\sqrt{s_c}}
              {\kappa_0}
    \right\}.
    \label{eq:l2-bound}
\end{equation}
The effective dimension \(d_c\) defined in \eqref{eq:realized-rank}
does not depend on which minimizer is taken, and satisfies
\begin{equation}
    d_c<Ms_c+1.
    \label{eq:rank-upper-bound}
\end{equation}
\Needspace{9\baselineskip}
If \Cref{assm:strong-core} holds as well, then, again on
\(\mathcal G_c\),
\begin{equation}
    k_c\le d_c<Ms_c+1,
    \label{eq:rank-event-bound}
\end{equation}
and, for the measurable solution \(\widehat\pi_c\) above,
\begin{equation}
    A_c\subseteq\supp(\widehat\pi_c),
    \qquad
    \sign(\widehat\pi_{c,j})
       =\sign(\pi_{c,j}^0),
    \quad j\in A_c.
    \label{eq:core-recovery}
\end{equation}
\end{lemma}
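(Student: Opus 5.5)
The plan is to follow the standard Bickel--Ritov--Tsybakov / BCCH argument on \(\mathcal G_c\), then layer the sparsity and strong-core conclusions on top. Suppress the candidate subscript and set \(\delta:=\widehat\pi-\pi^0\).

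\emph{Prediction bound \eqref{eq:prediction-bound}.}
Start from the basic inequality, which compares the LASSO objective at \(\widehat\pi\) with its value at \(\pi^0\). With \(x=Z\pi^0+\xi+v\) this gives
\[
\tfrac12\En[(z_i'\delta)^2]\le \En[(\xi_i+v_i)z_i'\delta]+\lambda(\lVert\pi^0\rVert_1-\lVert\widehat\pi\rVert_1).
\]
Three ingredients bound the right-hand side:
\begin{itemize}
\item on \(\mathcal G\), the noise term satisfies \(|\En[v_iz_i'\delta]|\le(\lambda/c_\lambda)\lVert\delta\rVert_1\);
\item Cauchy--Schwarz gives \(|\En[\xi_iz_i'\delta]|\le\En[\xi_i^2]^{1/2}\En[(z_i'\delta)^2]^{1/2}\);
\item the usual decomposition \(\lVert\pi^0\rVert_1-\lVert\widehat\pi\rVert_1\le\lVert\delta_T\rVert_1-\lVert\delta_{T^c}\rVert_1\).
\end{itemize}
Now split into two cases, exactly as in BCCH (2012).

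\emph{Case (a):} \(\En[(z_i'\delta)^2]^{1/2}\le 2\En[\xi_i^2]^{1/2}\). The bound then holds trivially.

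\emph{Case (b):} the prediction norm exceeds \(2\En[\xi_i^2]^{1/2}\). Then \(\delta\) lies in the cone \(\lVert\delta_{T^c}\rVert_1\le L_\lambda\lVert\delta_T\rVert_1\), so the restricted eigenvalue applies: \(\lVert\delta_T\rVert_1\le\sqrt s\,\En[(z_i'\delta)^2]^{1/2}/\kappa_0\). Dividing through by the prediction norm yields \eqref{eq:prediction-bound}, with the constants \(2\) and \(2(1+c_\lambda^{-1})\) coming from the factor \(\tfrac12\).

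\emph{Sparsity \eqref{eq:rank-upper-bound}.}
This is the BCCH sparsity lemma. Take any \(j\in E\); by the KKT condition \eqref{eq:kkt-conditions}, \(|\En[z_{ij}u_i]|=\lambda\). Write \(u=\xi+v-Z\delta\) and use \(\mathcal G\) to get
\[
(1-c_\lambda^{-1})\lambda\,|E|^{1/2}\le \lVert Z_E'(Z\delta-\xi)/n\rVert_2.
\]
Bound the right-hand side by \(\sqrt{\phi_{\max}(|E|)}\) times \(\En[(z_i'\delta)^2]^{1/2}+\En[\xi_i^2]^{1/2}\). Then insert \eqref{eq:prediction-bound} and Assumption~\ref{assm:primitive}(ii), together with \(\lambda\sqrt{n}\gtrsim\sqrt{\log p}\), so that the \(\xi\) term is of order \(\lambda\sqrt s\). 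This gives
\[
|E|\le \phi_{\max}(|E|)\,\frac{4L_\lambda^2}{\kappa_0^2}\,s\,(1+o(1)).
\]
To control the \(\phi_{\max}(|E|)\) factor on the right, use the sublinearity of sparse eigenvalues, \(\phi_{\max}(\ell m)\le\lceil\ell\rceil\phi_{\max}(m)\). Combined with \(M>4L_\lambda^2\overline\phi/\kappa_0^2\), a contradiction argument then rules out \(|E|\ge Ms\). Because \(d=\rank(Z_E)\le|E|\), we obtain \(d<Ms+1\). Independence of \(d\) from the choice of minimizer is immediate, since \(E\) depends only on the unique residual. I expect this step to be the \emph{main obstacle}: the precise bookkeeping of constants (absorbing the \(\xi\) term and the "\(\wedge p\)" and "\(+1\)" edge cases) is delicate, and it must be arranged so that the threshold is exactly \(M\).

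\emph{\(\ell_2\) bound \eqref{eq:l2-bound} and strong-core recovery.}
By Lemma~\ref{lem:independent-support}, the measurable solution has \(|\supp\widehat\pi|\le d<Ms+1\), so \(\lVert\delta\rVert_0\le\lceil(M+1)s\rceil\). The lower sparse eigenvalue then gives \(\lVert\delta\rVert_2\le\En[(z_i'\delta)^2]^{1/2}/\sqrt{\underline\phi}\), and combining with \eqref{eq:prediction-bound} yields \eqref{eq:l2-bound}. The right-hand side of \eqref{eq:l2-bound} is at most \(\underline b_n/2<\underline b_n\). Hence for every \(j\in A\),
\[
|\widehat\pi_j-\pi^0_j|<|\pi^0_j|,
\]
which forces \(\widehat\pi_j\ne0\) with the correct sign; this is \eqref{eq:core-recovery}. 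The selected columns \(A\subseteq\supp\widehat\pi\) are linearly independent, again by Lemma~\ref{lem:independent-support}. Since \(\supp\widehat\pi\subseteq E\), this gives \(k=|A|\le\rank(Z_E)=d\), which is \eqref{eq:rank-event-bound}.
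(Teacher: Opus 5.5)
Your prediction-norm step (your case split is the contrapositive of the paper's), the $\ell_2$ bound, and the strong-core and rank-lower-bound steps all match the paper's proof. The gap is in the sparsity step, where you suspected it, and it is not just bookkeeping.

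Because you apply the KKT equalities on all of $E$, you must control $\phi_{\max}(|E|)$. The sublinearity bound $\phi_{\max}(\ell m)\le\lceil\ell\rceil\phi_{\max}(m)$ loses up to a factor of two, and the stated threshold on $M$ cannot absorb that loss. Concretely:
\begin{itemize}
\item Write $\bar s:=\lceil Ms\rceil$, and let your display read $|E|\le\phi_{\max}(|E|)K_ns$ with $K_n\to4L_\lambda^2/\kappa_0^2$. Then $\overline\phi K_ns<\bar s$ for large $n$.
\item If $|E|\ge\bar s$, sublinearity only gives $|E|\le\lceil|E|/\bar s\rceil\,\overline\phi K_ns$.
\item That inequality is not contradicted by, e.g., $|E|\in(\bar s,2\bar s)$ once $\overline\phi K_ns>\bar s/2$. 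This happens when $M$ lies between $4L_\lambda^2\overline\phi/\kappa_0^2$ and roughly twice that value, a range the assumption permits. The usual sublinearity-based sparsity lemma pays exactly this factor.
\end{itemize}
So you only get $|E|\le Cs$ for some constant $C$ that may exceed $M$. The loss then propagates: your $\ell_2$ step needs $\lVert\delta\rVert_0\le\lceil(M+1)s\rceil$ to invoke $\phi_{\min}(\lceil(M+1)s\rceil\wedge p)\ge\underline\phi$, and that is all \Cref{assm:primitive}(iv) supplies.

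The missing idea is to never use all of $E$. Argue by contradiction as follows.
\begin{itemize}
\item Suppose $d\ge\bar s$ (the same argument works for $|E|\ge\bar s$). Pick $D\subseteq E$ with $|D|=\bar s$ and use the KKT equalities only on $D$, so $\lVert Z_D'u/n\rVert_2=\lambda\sqrt{\bar s}$.
\item On $\mathcal G$ this gives $(1-c_\lambda^{-1})\lambda\sqrt{\bar s}\le\sqrt{\phi_{\max}(\bar s)}\,(a+t)$, where $a=\En[\xi_i^2]^{1/2}$ and $t$ is the prediction norm.
\item Since $\bar s\le|E|\le p$, \Cref{assm:primitive}(iv) gives $\phi_{\max}(\bar s)\le\overline\phi$ directly. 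No sublinearity is needed, and there is no $\wedge p$ edge case.
\item Inserting \eqref{eq:prediction-bound} and $a/(\lambda\sqrt s)\to0$ yields $\bar s\le\overline\phi(1-c_\lambda^{-1})^{-2}\{o(1)+2(1+c_\lambda^{-1})/\kappa_0\}^2s$. The constant tends to $4L_\lambda^2\overline\phi/\kappa_0^2<M$, which contradicts $\bar s\ge Ms$ for large $n$.
\end{itemize}
Hence $d<\lceil Ms\rceil<Ms+1$. This is exactly the sparsity level the rest of your argument needs, and everything downstream then goes through unchanged.
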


\begin{proof}
Suppress the candidate subscript \(c\) throughout.  Define
\(\Delta:=\widehat\pi-\pi^0\), \(a:=\En[\xi_i^2]^{1/2}\), and
\[
    t:=\En[(z_i'\Delta)^2]^{1/2},
    \qquad
    \theta:=c_\lambda^{-1},
    \qquad
    \theta_+:=1+\theta,
    \qquad
    \theta_-:=1-\theta.
\]
By construction, \(L_\lambda=\theta_+/\theta_-\), while
\Cref{assm:primitive}(ii)--(iii) give
\begin{equation}
    \rho_n
    :=\frac{a}{\lambda\sqrt s}
    \le
    \frac{C_a}{
       c_\lambda\sqrt{2}\sigma_v
       \{\log(2p)+\alpha\log\log(p\vee n)\}^{1/2}}
    \to0.
    \label{eq:approximation-penalty-ratio}
\end{equation}
The limit holds because
\(\log(2p)+\alpha\log\log(p\vee n)\ge\alpha\log\log n\to\infty\).

\Needspace{6\baselineskip}
\emph{Step 1: Prediction norm.}

Substituting \(x_i=z_i'\pi^0+\xi_i+v_i\) from \eqref{eq:model} and
\eqref{eq:sparse-approximation} into the objective of \eqref{eq:lasso},
which \(\widehat\pi\) minimizes, gives
\begin{align}
    0
    &\ge
      \frac12\En[
         (x_i-z_i'\widehat\pi)^2-(x_i-z_i'\pi^0)^2
      ]
      +\lambda\{\lVert\widehat\pi\rVert_1
                    -\lVert\pi^0\rVert_1\}\notag\\
    &=\frac12\En[(z_i'\Delta)^2]
      -\En[(\xi_i+v_i)z_i'\Delta]
      +\lambda\{\lVert\pi^0+\Delta\rVert_1
                    -\lVert\pi^0\rVert_1\},\notag\\
    \frac12t^2
    &\le\En[(\xi_i+v_i)z_i'\Delta]
       +\lambda\{\lVert\pi^0\rVert_1
                    -\lVert\pi^0+\Delta\rVert_1\}.
    \label{eq:basic-inequality}
\end{align}
On \(\mathcal G\), H\"older's inequality bounds the score term,
\[
    |\En[v_i z_i'\Delta]|
    \le\theta\lambda\lVert\Delta\rVert_1,
\]
while Cauchy--Schwarz bounds the approximation term,
\[
    |\En[\xi_i z_i'\Delta]|\le at.
\]
Because \(\pi^0_{T^c}=0\), the triangle inequality gives
\begin{align*}
    \lVert\pi^0+\Delta\rVert_1
    &=\lVert\pi_T^0+\Delta_T\rVert_1
      +\lVert\Delta_{T^c}\rVert_1\\
    &\ge\lVert\pi_T^0\rVert_1
      -\lVert\Delta_T\rVert_1
      +\lVert\Delta_{T^c}\rVert_1,
\end{align*}
and hence
\begin{equation*}
    \lVert\pi^0\rVert_1-\lVert\pi^0+\Delta\rVert_1
    \le
    \lVert\Delta_T\rVert_1-\lVert\Delta_{T^c}\rVert_1.
\end{equation*}
Combining these bounds with \eqref{eq:basic-inequality} yields
\begin{equation}
    \frac12t^2+\theta_-\lambda\lVert\Delta_{T^c}\rVert_1
    \le \theta_+\lambda\lVert\Delta_T\rVert_1+at.
    \label{eq:cone-inequality}
\end{equation}
Assume \(t>0\), the bound \eqref{eq:prediction-bound} being trivial
otherwise.  If \(\lVert\Delta_{T^c}\rVert_1\le
L_\lambda\lVert\Delta_T\rVert_1\), then \(\Delta\) lies in the cone of
\eqref{eq:restricted-eigenvalue}, so \(\kappa\ge\kappa_0\) from
\Cref{assm:primitive}(iv) gives
\(\lVert\Delta_T\rVert_1\le\sqrt s\,t/\kappa_0\); dropping the
nonnegative \(\theta_-\lambda\lVert\Delta_{T^c}\rVert_1\) term in
\eqref{eq:cone-inequality} and dividing by \(t\) leaves
\(\tfrac12t\le a+\theta_+\lambda\sqrt s/\kappa_0\), so that
\[
    t\le2a+\frac{2\theta_+\lambda\sqrt s}{\kappa_0}.
\]
If instead \(\lVert\Delta_{T^c}\rVert_1>L_\lambda\lVert\Delta_T\rVert_1\),
then
\(\theta_+\lVert\Delta_T\rVert_1-\theta_-\lVert\Delta_{T^c}\rVert_1<0\),
so \eqref{eq:cone-inequality} gives \(\tfrac12t^2<at\) and hence
\(t<2a\).  In either case \(t\) satisfies \eqref{eq:prediction-bound}.

\Needspace{6\baselineskip}
\emph{Step 2: Rank upper bound.}

For \eqref{eq:rank-upper-bound}, define \(E:=E(x)\), which, like the
LASSO residual, does not depend on which minimizer of \eqref{eq:lasso}
is taken (\Cref{sec:proofs-geometry}), so neither does
\(d:=\rank(Z_E)=d_c\) of \eqref{eq:realized-rank}.  Set
\(\bar s:=\lceil Ms\rceil\) and \(u(x):=x-Z\widehat\pi\).  Suppose
\(d\ge\bar s\).  Then \(\bar s\le d\le p\), so the upper
sparse-eigenvalue restriction in \Cref{assm:primitive}(iv) applies at
\(\bar s\).  Choose \(D\subseteq E\) with \(|D|=\bar s\) such that
\(Z_D\) has full column rank.  Since \(D\subseteq E\) and \(E\) is
defined by \eqref{eq:equicorrelation},
\(\lVert Z_D'u(x)/n\rVert_2=\lambda\sqrt{\bar s}\).  Each coordinate of
\(Z_D'v/n\) is bounded in absolute value by
\(\lVert\En[z_iv_i]\rVert_\infty\), so on \(\mathcal G\)
\[
    \left\lVert\frac{Z_D'v}{n}\right\rVert_2
    \le\sqrt{\bar s}\,\lVert\En[z_iv_i]\rVert_\infty
    \le\theta\lambda\sqrt{\bar s}.
\]
Since \(u(x)=\xi+v-Z\Delta\), the triangle inequality and the above
display give
\begin{align}
    \theta_-\lambda\sqrt{\bar s}
    &\le
      \left\lVert\frac{Z_D'\xi}{n}\right\rVert_2
      +\left\lVert\frac{Z_D'Z\Delta}{n}\right\rVert_2\notag\\
    &\le
      \left\lVert\frac{Z_D}{\sqrt n}\right\rVert_{\mathrm{op}}(a+t)
      \notag\\
    &\le\sqrt{\phi_{\max}(\bar s)}(a+t)\notag\\
    &\le
      \sqrt{\overline\phi}
      \left(3\rho_n+\frac{2\theta_+}{\kappa_0}\right)
      \lambda\sqrt s,
    \label{eq:rank-contradiction}
\end{align}
where we use
\(\lVert Z_D/\sqrt n\rVert_{\mathrm{op}}^2
\le\phi_{\max}(\bar s)\le\overline\phi\) from
\Cref{assm:primitive}(iv) and \eqref{eq:sparse-eigenvalue}, together
with \eqref{eq:prediction-bound} and
\eqref{eq:approximation-penalty-ratio}.
Since \(\rho_n\to0\) by \eqref{eq:approximation-penalty-ratio} and
\(M>4L_\lambda^2\overline\phi/\kappa_0^2
=(\overline\phi/\theta_-^2)(2\theta_+/\kappa_0)^2\) by the constants in
\Cref{assm:primitive}, for all sufficiently large \(n\),
\begin{equation*}
    \bar s
    \le
    \frac{\overline\phi}{\theta_-^2}
       \left(3\rho_n+\frac{2\theta_+}{\kappa_0}\right)^2s
    <Ms\le \bar s,
\end{equation*}
a contradiction, so \(d<\bar s=\lceil Ms\rceil<Ms+1\).

\Needspace{6\baselineskip}
\emph{Step 3: \(\ell_2\) bound.}

The bounds of Steps~1 and~2 hold for every minimizer; for the remainder
of the proof, take \(\widehat\pi\) to be the measurable solution from
\Cref{lem:independent-support}, retain \(\Delta:=\widehat\pi-\pi^0\), and
define \(\widehat T:=\supp(\widehat\pi)\).  By \Cref{lem:geometry} and
\eqref{eq:independent-support},
\begin{equation*}
    |\widehat T|
    =\rank(Z_{\widehat T})
    \le\rank(Z_E)
    =d.
\end{equation*}
Moreover, \(\supp(\Delta)\subseteq T\cup\widehat T\), so
\(\lVert\Delta\rVert_0\le s+|\widehat T|\le s+d\le s+\lceil Ms\rceil\),
since \(d<\bar s=\lceil Ms\rceil\) by Step~2.
Since also
\(\lVert\Delta\rVert_0\le p\) and \(s\) is an integer,
\begin{equation*}
    \lVert\Delta\rVert_0
    \le \min\{p,s+\lceil Ms\rceil\}
    =\min\{p,\lceil(M+1)s\rceil\}.
\end{equation*}
If \(\Delta=0\) the \(\ell_2\) bound is immediate, so take
\(\Delta\ne0\).  The prior display,
\(\phi_{\min}(\lceil(M+1)s\rceil\wedge p)\ge\underline\phi\) from
\Cref{assm:primitive}(iv), and \eqref{eq:prediction-bound} yield
\begin{equation*}
    \sqrt{\underline\phi}\lVert\Delta\rVert_2
    \le t
    \le2a+\frac{2\theta_+\lambda\sqrt s}{\kappa_0}.
\end{equation*}
Dividing by \(\sqrt{\underline\phi}\) gives \eqref{eq:l2-bound}.

\Needspace{6\baselineskip}
\emph{Step 4: Strong-core recovery and the rank lower bound.}

Suppose now that \Cref{assm:strong-core} also holds.  For
\eqref{eq:core-recovery}, \Cref{assm:primitive}(ii), the definition of
\(\underline b_n\) in \Cref{assm:strong-core}, and \eqref{eq:l2-bound} give
\begin{equation*}
    |\Delta_j|
    \le\lVert\Delta\rVert_2
    \le\frac{2}{\sqrt{\underline\phi}}
       \left\{
          C_a\sqrt{\frac{s}{n}}
          +\frac{\theta_+\lambda\sqrt s}{\kappa_0}
       \right\}
    =\frac{\underline b_n}{2},
    \qquad j\in A.
\end{equation*}
Since \(|\pi_j^0|\ge\underline b_n\) for \(j\in A\) by
\Cref{assm:strong-core},
\begin{equation*}
    \widehat\pi_j\pi_j^0
    =(\pi_j^0)^2+\Delta_j\pi_j^0
    \ge |\pi_j^0|
       \{ |\pi_j^0|-|\Delta_j|\}
    \ge\frac{\underline b_n^2}{2}>0,
    \qquad j\in A,
\end{equation*}
so \(\widehat\pi_j\ne0\) and \(\sign(\widehat\pi_j)=\sign(\pi_j^0)\) for
every \(j\in A\).

Since \(A\subseteq\widehat T\), Step~3 gives
\begin{equation*}
    k=|A|\le|\widehat T|
    =\rank(Z_{\widehat T})
    \le\rank(Z_E)=d.
\end{equation*}
Together with
\eqref{eq:rank-upper-bound}, this gives \eqref{eq:rank-event-bound}.
\end{proof}

\begin{lemma}[First-Stage Moments and Rank Ratios]
\label{lem:primitive-consequences}
Suppose \Cref{assm:gaussian,assm:primitive} hold, with \(J<\infty\)
fixed.  Define
\begin{equation*}
    \eta_{n,c}
    :=2p_c\Phi\!\left(
       -\frac{\sqrt n\lambda_c}{c_\lambda\sigma_v}
    \right)
\end{equation*}
and
\begin{equation*}
    a_{n,c}
    :=\En[\xi_{c,i}^2]+s_c\lambda_c^2+\sqrt{\eta_{n,c}}.
\end{equation*}
Uniformly over candidates,
\begin{equation*}
    \eta_{n,c}\to0,
    \qquad
    a_{n,c}\to0,
\end{equation*}
and
\begin{align}
    \Ebar[(\widehat\Pi_{c,i}-\Pi_i)^2]
       &=O(a_{n,c}),
       \label{eq:prediction-moment}\\
    h_c-H&=O(\sqrt{a_{n,c}}),
       \qquad
    \Ebar[\widehat\Pi_{c,i}^2]-H=O(\sqrt{a_{n,c}}),
       \label{eq:fitted-moment-rates}\\
    \mathcal A_c&=O(a_{n,c}).
       \label{eq:approximation-rate}
\end{align}
There are constants \(\underline h>0\) and \(C<\infty\), independent of
\(n\) and \(c\), such that, for all sufficiently large \(n\),
\begin{equation}
    0<\underline h\le\min_c h_c,
    \qquad
    \max_c\{h_c+\Ebar[\widehat\Pi_{c,i}^2]\}\le C.
    \label{eq:population-bounds}
\end{equation}
\Needspace{12\baselineskip}
If \Cref{assm:strong-core} also holds, then, for all sufficiently large
\(n\),
\begin{equation}
    k_c(1-\eta_{n,c})
    \le\bar d_c
    \le2(Ms_c+1),
    \label{eq:mean-rank-bounds}
\end{equation}
and, uniformly over candidates,
\begin{equation}
    \begin{aligned}
        \bar d_c&\asymp k_c,
        &\bar d_c&\to\infty,\\
        \frac{\Var(d_c)}{\bar d_c^2}&\to0,
        &\frac{d_c}{\bar d_c}&\to_p1,\\
        \frac{R_{d,c}}{\bar d_c^2}&\to1,
        &\frac{\bar d_c}{\sqrt n}&\to0.
    \end{aligned}
    \label{eq:primitive-rank-conclusions}
\end{equation}
\end{lemma}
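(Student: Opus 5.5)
Fix a candidate and suppress the subscript; the statement is uniform over a fixed finite $J$, so it suffices to prove each bound with constants independent of $c$. The plan has four steps.

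\emph{Step 1: the score event and the first-stage rates.} By Lemma~\ref{lem:gaussian-tail} and a union bound over the $p_c$ normalized columns, each $\sqrt n\,\En[z_{c,ij}v_i]/\sigma_v$ is standard normal. Hence $\P(\mathcal G_c^c)\le\eta_{n,c}$. Under Assumption~\ref{assm:primitive}(iii),
\[
\frac{n\lambda_c^2}{c_\lambda^2\sigma_v^2}\ge 2\{\log(2p_c)+\alpha\log\log(p_c\vee n)\}.
\]
This gives $\eta_{n,c}\le \{\log(p_c\vee n)\}^{-\alpha}\to0$. Together with Assumption~\ref{assm:primitive}(ii),(v), it also gives $a_{n,c}\to0$.

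For \eqref{eq:prediction-moment}, I split the expectation into $\mathcal G_c$ and $\mathcal G_c^c$.
\begin{itemize}
\item On $\mathcal G_c$, apply \eqref{eq:prediction-bound} and add the approximation error $\xi$. This contributes $O(\En[\xi^2]+s\lambda^2)$.
\item On $\mathcal G_c^c$, use Cauchy--Schwarz: the contribution is at most $\E[\lVert\widehat\Pi-\Pi\rVert_2^4/n^2]^{1/2}\P(\mathcal G_c^c)^{1/2}$. Here $\lVert\widehat\Pi\rVert_2\le\lVert x\rVert_2$ by \eqref{eq:fit-contraction-at-zero}, and Gaussian moments bound the fourth moment by a constant. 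This yields $O(\sqrt{\eta_{n,c}})$.
\end{itemize}

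Then \eqref{eq:fitted-moment-rates} follows from Cauchy--Schwarz:
\[
|h_c-H|=|\Ebar[(\widehat\Pi_i-\Pi_i)\Pi_i]|\le \Ebar[(\widehat\Pi_i-\Pi_i)^2]^{1/2}H^{1/2}.
\]
The fitted second moment is handled the same way. Next, \eqref{eq:approximation-rate} is immediate from the minimization in \eqref{eq:amse-approx} with $a=1$. Finally, \eqref{eq:population-bounds} follows because $h_c\to H\ge\underline H$, so one may take $\underline h=\underline H/2$.

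\emph{Step 2: mean-rank bounds \eqref{eq:mean-rank-bounds}.} On $\mathcal G_c$, Lemma~\ref{lem:primitive-lasso} gives $k_c\le d_c<Ms_c+1$. For the lower bound, $\bar d_c\ge\E[d_c\ind_{\mathcal G_c}]\ge k_c(1-\eta_{n,c})$. For the upper bound, $d_c\le Ms_c+1$ on $\mathcal G_c$, and $d_c\le n$ always, so
\[
\bar d_c\le Ms_c+1+n\,\eta_{n,c}.
\]
I then need $n\eta_{n,c}\le Ms_c+1$. The $\log\log$ term in the penalty only gives a polylogarithmic decay of $\eta$, not decay faster than $1/n$, so this is the delicate point. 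I would first try bounding $d_c\le\rank(Z_c)\le p_c$ instead of $n$. If $p_c\eta_{n,c}$ is still not small enough, I would fall back on a Chebyshev argument: use Lemma~\ref{lem:rank-variance} to show that $\E[d_c\ind_{\mathcal G_c^c}]\le(\E d_c^2)^{1/2}\eta^{1/2}$ is $o(\bar d_c)$, solving the resulting self-bounding inequality in $\bar d_c$. Either route gives $\bar d_c\asymp k_c$, using $s_c\le C_sk_c$. Then $\bar d_c\to\infty$ because $k_c\to\infty$, and $\bar d_c/\sqrt n\to0$ because $s_c/\sqrt n\to0$.

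\emph{Step 3: variance of $d_c$.} Apply Lemma~\ref{lem:rank-variance}:
\[
\frac{\Var(d_c)}{\bar d_c^2}\le \frac{C\bigl[1+\bar d_c\log\{ep_c/\bar d_c\}\bigr]}{\bar d_c^2}.
\]
With $\bar d_c\asymp k_c$, the right side is of order $k_c^{-2}+k_c^{-1}\{1+\log(p_c/k_c)\}$. This tends to zero by the condition $\log(p_c/k_c)/k_c\to0$ in Assumption~\ref{assm:strong-core}. Chebyshev's inequality then gives $d_c/\bar d_c\to_p1$. Finally,
\[
\frac{R_{d,c}}{\bar d_c^2}=1+\frac{\Var(d_c)}{\bar d_c^2}+\frac{1}{\bar d_c}\to1.
\]

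\emph{Main obstacle.} I expect the upper bound on $\bar d_c$ off the score event to be the main difficulty. This is the contribution of $d_c$ on $\mathcal G_c^c$, where the penalty gives only polylogarithmic control of $\P(\mathcal G_c^c)$. The variance/self-bounding argument of Step~2 is my intended resolution if the crude bound fails.
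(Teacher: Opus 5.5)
Your proposal is correct and matches the paper's proof in structure. Both use the same score-event bound, the same good/bad-event split with Cauchy--Schwarz and a fourth-moment envelope for \eqref{eq:prediction-moment}, Cauchy--Schwarz for $h_c-H$, the lower bound $\bar d_c\ge k_c(1-\eta_{n,c})$, and Lemma~\ref{lem:rank-variance} with the growth condition for concentration. Your $a=1$ shortcut for $\mathcal A_c$ is equivalent to the paper's identity.

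The one real divergence is the upper bound on $\bar d_c$. Drop the $p_c$ route, not just hedge it. Assumption~\ref{assm:primitive}(iii) only forces $\eta_{n,c}$ to decay polylogarithmically, while $p_c$ may be as large as $k_c e^{o(k_c)}$. So $p_c\eta_{n,c}$ need not be $O(s_c)$, and ``either route'' is not right.

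Your Cauchy--Schwarz fallback does work, and it is cleanest if you note one thing first. The limit $\Var(d_c)/\bar d_c^2\to0$ needs only the lower bound $\bar d_c\ge k_c/2$, which gives $\log(ep_c/\bar d_c)\le\log(2ep_c/k_c)$; it does not need $\bar d_c\asymp k_c$. This removes any circularity with your Step~3. Then $\E[d_c^2]=(1+o(1))\bar d_c^2$, so
\[
\bar d_c\le Ms_c+1+\sqrt{\eta_{n,c}}\,(1+o(1))\,\bar d_c ,
\]
which gives $\bar d_c\le(1+o(1))(Ms_c+1)$.

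The paper instead argues by contradiction. If $\bar d_c>2(Ms_c+1)$, then since $d_c<Ms_c+1$ on $\mathcal G_c$,
\[
\Var(d_c)\ge(1-\eta_{n,c})(\bar d_c-Ms_c-1)^2\ge\bar d_c^2/8 ,
\]
which Lemma~\ref{lem:rank-variance} and $\log(p_c/k_c)/k_c\to0$ rule out.

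Both arguments rest on the same two facts: concentration of $d_c$, and $d_c<Ms_c+1$ on a high-probability event. The paper's version needs only $\eta_{n,c}\le 1/2$. Yours uses $\eta_{n,c}\to0$ and gets a sharper constant.
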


\begin{proof}
Throughout Step~4, \(C\) is the universal constant of
\Cref{lem:rank-variance}, held fixed; elsewhere \(C\) is generic.

\Needspace{6\baselineskip}
\emph{Step 1: \(\P(\mathcal G_c^c)\le\eta_{n,c}\) and
\(a_{n,c}\to0\).}

Under \Cref{assm:gaussian}(i) the scores are mean-zero Gaussian, and the
column normalization \(\En[z_{c,ij}^2]=1\) of \Cref{assm:primitive}(i)
fixes their variance, so
\begin{equation*}
    \En[z_{c,ij}v_i]
    \sim N\!\left(0,\frac{\sigma_v^2}{n}\right),
    \qquad j=1,\ldots,p_c.
\end{equation*}
A union bound over \(j\in[p_c]\) then gives
\begin{equation}
    \P\!\left(
       \left\lVert\En[z_{c,i}v_i]\right\rVert_\infty
       >\frac{\lambda_c}{c_\lambda}
    \right)
    \le
    2p_c\Phi\!\left(
       -\frac{\sqrt n\lambda_c}{c_\lambda\sigma_v}
    \right)
    =\eta_{n,c}.
    \label{eq:score-probability}
\end{equation}
By \Cref{lem:gaussian-tail} and the penalty restriction in
\Cref{assm:primitive}(iii),
\[
    \eta_{n,c}
    \le
    2p_c\exp\!\left\{
       -\frac{n\lambda_c^2}{2c_\lambda^2\sigma_v^2}
    \right\}
    \le\{\log(p_c\vee n)\}^{-\alpha}
    \to0.
\]
Parts (ii) and (v) of \Cref{assm:primitive}, together with
\(\eta_{n,c}\to0\), then give \(a_{n,c}\to0\).  Since \([J]\) is finite,
\(\eta_{n,c}\to0\) and \(a_{n,c}\to0\) uniformly over candidates.

\Needspace{6\baselineskip}
\emph{Step 2: Bound for
\(\Ebar[(\widehat\Pi_{c,i}-\Pi_i)^2]\).}

Define
\(U_c:=\En[(\widehat\Pi_{c,i}-\Pi_i)^2]^{1/2}\).  Since
\(\widehat\Pi_{c,i}=z_{c,i}'\widehat\pi_c\) and
\(\Pi_i=z_{c,i}'\pi_c^0+\xi_{c,i}\),
\begin{equation*}
    \widehat\Pi_{c,i}-\Pi_i
    =z_{c,i}'(\widehat\pi_c-\pi_c^0)-\xi_{c,i}.
\end{equation*}
On \(\mathcal G_c\), the triangle inequality and
\Cref{lem:primitive-lasso} give
\begin{equation}
    \En[(\widehat\Pi_{c,i}-\Pi_i)^2]^{1/2}
    \le
    3\En[\xi_{c,i}^2]^{1/2}
    +\frac{2(1+c_\lambda^{-1})\lambda_c\sqrt{s_c}}
           {\kappa_0}.
    \label{eq:fit-bound-event}
\end{equation}
A crude envelope, needed only off \(\mathcal G_c\), holds for every
realization: \(\En[\widehat\Pi_{c,i}^2]\le\En[x_i^2]\) by
\eqref{eq:fit-contraction-at-zero}, so the triangle inequality and
\(x_i=\Pi_i+v_i\) imply
\begin{equation}
    U_c
    \le\En[\widehat\Pi_{c,i}^2]^{1/2}+H^{1/2}
    \le\En[x_i^2]^{1/2}+H^{1/2}
    \le2H^{1/2}+\En[v_i^2]^{1/2}.
    \label{eq:prediction-envelope}
\end{equation}
By \eqref{eq:prediction-envelope} and \((a+b)^4\le8(a^4+b^4)\),
\(U_c^4\le8\{16H^2+\En[v_i^2]^2\}\), while
\Cref{assm:gaussian}(i) gives
\(\E[\En[v_i^2]^2]=\sigma_v^4(1+2/n)\le3\sigma_v^4\); with
\(H\le\overline H\) from \eqref{eq:strength-endogeneity}, this gives
\(\E[U_c^4]\le C\).  Decompose
\begin{equation*}
    \E[U_c^2]
    =\E[U_c^2\ind\{\mathcal G_c\}]
     +\E[U_c^2\ind\{\mathcal G_c^c\}].
\end{equation*}
By \eqref{eq:fit-bound-event} and \((a+b)^2\le2a^2+2b^2\),
\begin{equation*}
    \E[U_c^2\ind\{\mathcal G_c\}]
    \le
    18\En[\xi_{c,i}^2]
    +\frac{8(1+c_\lambda^{-1})^2\lambda_c^2s_c}{\kappa_0^2}.
\end{equation*}
For the off-event term, Cauchy--Schwarz gives
\begin{equation*}
    \E[U_c^2\ind\{\mathcal G_c^c\}]
    \le
    \E[U_c^4]^{1/2}\P(\mathcal G_c^c)^{1/2}.
\end{equation*}
Adding the two bounds and using \(\E[U_c^4]\le C\),
\(\P(\mathcal G_c^c)\le\eta_{n,c}\), and the definition of \(a_{n,c}\)
gives
\begin{align}
    \Ebar[(\widehat\Pi_{c,i}-\Pi_i)^2]
    &=\E[U_c^2]\notag\\
    &\le
      C\{\En[\xi_{c,i}^2]+s_c\lambda_c^2\}
      +\E[U_c^4]^{1/2}\P(\mathcal G_c^c)^{1/2}\notag\\
    &\le
      C\{\En[\xi_{c,i}^2]+s_c\lambda_c^2+\sqrt{\eta_{n,c}}\}
      =O(a_{n,c}),
    \label{eq:integrated-prediction}
\end{align}
which is \eqref{eq:prediction-moment}.

\Needspace{6\baselineskip}
\emph{Step 3: Bounds for \(h_c-H\), \(\mathcal A_c\), and
\(\Ebar[\widehat\Pi_{c,i}^2]-H\).}

With \(h_c\) defined by \eqref{eq:h-def},
\begin{align*}
    h_c-H
       &=\Ebar[(\widehat\Pi_{c,i}-\Pi_i)\Pi_i],\\
    \Ebar[\widehat\Pi_{c,i}^2]-H
       &=2(h_c-H)
         +\Ebar[(\widehat\Pi_{c,i}-\Pi_i)^2],\\
    \mathcal A_c
       &=\Ebar[(\widehat\Pi_{c,i}-\Pi_i)^2]
         -\frac{(h_c-H)^2}{H}.
\end{align*}
Here the first two identities expand \eqref{eq:h-def} and
\(\En[\Pi_i^2]=H\), while the third follows from the representation
\eqref{eq:approximation-representation}.
By Cauchy--Schwarz under \(\Ebar\), \(\mathcal A_c\ge0\) from
\eqref{eq:approximation-representation}, and
\eqref{eq:integrated-prediction},
\begin{equation*}
    |h_c-H|^2
    \le
    H\Ebar[(\widehat\Pi_{c,i}-\Pi_i)^2]
    =O(a_{n,c}),
    \qquad
    0\le \mathcal A_c
    \le\Ebar[(\widehat\Pi_{c,i}-\Pi_i)^2]
    =O(a_{n,c}).
\end{equation*}
Taking square roots in the first bound gives the rate for \(h_c-H\) in
\eqref{eq:fitted-moment-rates}; the second bound is
\eqref{eq:approximation-rate}.  Moreover, for the rate on
\(\Ebar[\widehat\Pi_{c,i}^2]-H\),
\begin{equation*}
    |\Ebar[\widehat\Pi_{c,i}^2]-H|
    \le2|h_c-H|
      +\Ebar[(\widehat\Pi_{c,i}-\Pi_i)^2]
    =O(\sqrt{a_{n,c}}+a_{n,c})
    =O(\sqrt{a_{n,c}}),
\end{equation*}
where the last equality uses \(a_{n,c}\to0\), so
\eqref{eq:fitted-moment-rates} holds.  Finally,
\begin{equation*}
    h_c\ge H-C\sqrt{a_{n,c}},
    \qquad
    h_c+\Ebar[\widehat\Pi_{c,i}^2]
       \le2H+C\sqrt{a_{n,c}},
\end{equation*}
so the bounds on \(H\) in \eqref{eq:strength-endogeneity} and
\(a_{n,c}\to0\) give \eqref{eq:population-bounds}.

\Needspace{6\baselineskip}
\emph{Step 4: Two-sided bounds for \(\bar d_c\).}

Suppose now that \Cref{assm:strong-core} holds.  By
\eqref{eq:rank-event-bound}, \(k_c\le d_c<Ms_c+1\) on \(\mathcal G_c\).
Set \(b_c:=Ms_c+1\).  Since
\(\P(\mathcal G_c)\ge1-\eta_{n,c}\),
\begin{equation}
    \bar d_c
       \ge\E[d_c\ind\{\mathcal G_c\}]
       \ge k_c\P(\mathcal G_c)
       \ge k_c(1-\eta_{n,c}).
    \label{eq:rank-mean-variance-bridge}
\end{equation}
Moreover, \(d_c\le b_c\) on \(\mathcal G_c\) by
\eqref{eq:rank-event-bound}, so \((d_c-\bar d_c)^2\ge(\bar d_c-b_c)_+^2\)
on that event, and hence
\begin{equation*}
    (1-\eta_{n,c})(\bar d_c-b_c)_+^2
       \le
       \E[(d_c-\bar d_c)^2\ind\{\mathcal G_c\}]
       \le\Var(d_c).
\end{equation*}
For all sufficiently large \(n\), \(\eta_{n,c}\le1/2\) and
\(\bar d_c\ge k_c/2\ge1\), while
\(16C\log(2ep_c/k_c)<k_c/2\) by the growth condition
\eqref{eq:strong-core-growth}.  Fix such an \(n\) and suppose
\(\bar d_c>2b_c\).  Then \(\bar d_c-b_c>\bar d_c/2\), so the prior
display and \(\eta_{n,c}\le1/2\) give \(\bar d_c^2/8\le\Var(d_c)\).
Since \(d_c\le p_c\) we have \(\log(ep_c/\bar d_c)\ge1\), so
\Cref{lem:rank-variance} gives
\begin{align*}
    \frac{\bar d_c^2}{8}
    &\le
    C\left[1+\bar d_c\log\!\left(\frac{ep_c}{\bar d_c}\right)\right]
    \le
    2C\bar d_c\log\!\left(\frac{ep_c}{\bar d_c}\right),\\
    \bar d_c
    &\le16C\log\!\left(\frac{ep_c}{\bar d_c}\right)
    \le16C\log\!\left(\frac{2ep_c}{k_c}\right)
    <\frac{k_c}{2}\le\bar d_c,
\end{align*}
which is impossible.  Hence \(\bar d_c\le2b_c\); together with
\eqref{eq:rank-mean-variance-bridge}, this gives
\eqref{eq:mean-rank-bounds}.
Since \(s_c\le C_sk_c\) and \(k_c\to\infty\) by
\eqref{eq:strong-core-growth}, the bounds \eqref{eq:mean-rank-bounds}
imply \(\bar d_c\asymp k_c\) and \(\bar d_c\ge k_c/2\to\infty\).

\Needspace{6\baselineskip}
\emph{Step 5: Concentration of \(d_c\) and the remaining rank limits.}

Since \(\bar d_c\asymp k_c\) by Step~4 and
\(k_c\le p_c\), \(\log(ep_c/\bar d_c)=\log(ep_c/k_c)+O(1)\), so the
growth condition \eqref{eq:strong-core-growth} in
\Cref{assm:strong-core} yields
\begin{equation}
    \frac{1+\log\{ep_c/\bar d_c\}}{\bar d_c}
    \le
    C\frac{1+\log\{p_c/k_c\}}{k_c}
    \to0.
    \label{eq:rank-growth-bridge}
\end{equation}
For all sufficiently large \(n\), \Cref{lem:rank-variance} gives
\begin{equation*}
    \frac{\Var(d_c)}{\bar d_c^2}
    \le
    C\left[
       \frac1{\bar d_c^2}
       +\frac{1+\log\{ep_c/\bar d_c\}}
              {\bar d_c}
    \right]
    \to0,
\end{equation*}
where the convergence uses \(\bar d_c\to\infty\) and
\eqref{eq:rank-growth-bridge}.  Chebyshev's inequality then yields
\(d_c/\bar d_c\to_p1\).
By \eqref{eq:rank-moments},
\begin{equation*}
    \left|
       \frac{R_{d,c}}{\bar d_c^2}-1
    \right|
    =
    \frac{\Var(d_c)}{\bar d_c^2}
      +\frac1{\bar d_c}
    \to0.
\end{equation*}
By \Cref{assm:primitive}(v) and \eqref{eq:mean-rank-bounds},
\begin{equation*}
    \frac{\bar d_c}{\sqrt n}
    \le2M\frac{s_c}{\sqrt n}+2n^{-1/2}
    \to0.
\end{equation*}
These limits, together with \(\bar d_c\asymp k_c\) and
\(\bar d_c\to\infty\) from Step~4, hold uniformly over the finitely
many candidates, giving \eqref{eq:primitive-rank-conclusions}.
\end{proof}

\begin{prop}[Infeasible Criterion Rates]
\label{prop:rates}
Suppose \Cref{assm:gaussian,assm:primitive,assm:strong-core} hold, with
\(J<\infty\) fixed.  Then
\eqref{eq:mean-rank-bounds} and \eqref{eq:primitive-rank-conclusions}
hold and, uniformly over candidates,
\begin{equation}
    S_c
    \asymp \mathcal A_c+\frac{R_{d,c}}{n}
    \asymp \mathcal A_c+\frac{\bar d_c^2}{n},
    \qquad
    \max_c S_c\to0,
    \qquad
    \min_c nS_c\to\infty.
    \label{eq:score-rates}
\end{equation}
Moreover, for all sufficiently large \(n\), uniformly over candidates,
\begin{equation}
    \begin{aligned}
        \sqrt{\frac{\mathcal A_c}{n}}+\frac{\bar d_c}{n}
           &\le C\sqrt{\frac{S_c}{n}},\\
        \sqrt{\frac{\mathcal A_c}{n}}+\frac{\bar d_c}{n}
           &\le\frac{CS_c}{\bar d_c},\\
        \frac{\sqrt{\bar d_c}}{n}
           &\le\frac{CS_c}{\bar d_c^{3/2}}.
    \end{aligned}
    \label{eq:score-ratio-bounds}
\end{equation}
\end{prop}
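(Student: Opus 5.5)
The plan is to read everything off the definition \(S_c=\sigma_\varepsilon^2\mathcal A_c/h_c^2+\sigma_{\varepsilon v}^2R_{d,c}/(nh_c^2)\) and the rank conclusions of Lemma~\ref{lem:primitive-consequences}, which are available under the stated assumptions.

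\emph{Two-sided comparison.} By \eqref{eq:population-bounds}, \(\underline h\le h_c\le C\) for all large \(n\). By Assumption~\ref{assm:gaussian}, \(\sigma_\varepsilon^2>0\) and \(\sigma_{\varepsilon v}^2>0\) are fixed. Hence \(S_c\asymp\mathcal A_c+R_{d,c}/n\) with constants free of \(n\) and \(c\). By \eqref{eq:primitive-rank-conclusions}, \(R_{d,c}/\bar d_c^2\to1\) uniformly, so \(R_{d,c}\asymp\bar d_c^2\) for large \(n\); this gives the second \(\asymp\) in \eqref{eq:score-rates}.

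\emph{Limits.} For \(\max_cS_c\to0\), use \(\mathcal A_c=O(a_{n,c})\to0\) from \eqref{eq:approximation-rate} together with \(\bar d_c^2/n=(\bar d_c/\sqrt n)^2\to0\) from \eqref{eq:primitive-rank-conclusions}. For \(\min_c nS_c\to\infty\), drop \(\mathcal A_c\ge0\): then \(nS_c\ge C\bar d_c^2\to\infty\), since \(\bar d_c\to\infty\). Since \(J\) is finite, all of this is uniform over candidates.

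\emph{Ratio bounds \eqref{eq:score-ratio-bounds}.} Use only \(\mathcal A_c\le CS_c\) and \(\bar d_c^2/n\le CS_c\).
\begin{enumerate}[(i)]
\item First line: \(\sqrt{\mathcal A_c/n}\le\sqrt{CS_c/n}\) and \(\bar d_c/n=\sqrt{\bar d_c^2/n}\cdot n^{-1/2}\le\sqrt{CS_c/n}\).
\item Second line: \(\bar d_c/n=(\bar d_c^2/n)/\bar d_c\le CS_c/\bar d_c\). For the other term, \(\sqrt{\mathcal A_c/n}\le\sqrt{S_c}\,\sqrt{C/n}\). Then \(\sqrt{1/n}\le C\sqrt{S_c}/\bar d_c\) because \(\bar d_c^2/n\le CS_c\). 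Multiplying gives \(\sqrt{\mathcal A_c/n}\le CS_c/\bar d_c\).
\item Third line: \(\sqrt{\bar d_c}/n=(\bar d_c^2/n)\,\bar d_c^{-3/2}\le CS_c/\bar d_c^{3/2}\).
\end{enumerate}

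The argument is essentially routine. The only step that needs care is the second line, where the \(\sqrt{\mathcal A_c/n}\) term must be converted into \(S_c/\bar d_c\) using the geometric-mean trick above. Everything also depends on having the uniform lower bound \(h_c\ge\underline h\) and the uniform rank limits in place before the constants are fixed.
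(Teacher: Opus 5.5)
Your proposal is correct and follows essentially the same route as the paper: a two-sided bound on \(S_c\) from \(\underline h\le h_c\le C\) and \(\sigma_\varepsilon^2,\sigma_{\varepsilon v}^2>0\), the limits from \(\mathcal A_c\to0\), \(\bar d_c/\sqrt n\to0\) and \(\bar d_c\to\infty\), and the ratio bounds from \(\mathcal A_c\le CS_c\) and \(\bar d_c^2/n\le CS_c\) (the latter because \(R_{d,c}\ge\bar d_c^2\)). The one cosmetic difference is the bound \(\sqrt{\mathcal A_c/n}\le CS_c/\bar d_c\). You multiply \(\sqrt{\mathcal A_c}\le\sqrt{CS_c}\) by \(n^{-1/2}\le\sqrt{CS_c}/\bar d_c\), while the paper applies \(2ab\le a^2+b^2\) with \(a=\sqrt{\mathcal A_c}\) and \(b=\bar d_c/\sqrt n\); the two arguments are equivalent.
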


\begin{proof}
The hypotheses of \Cref{lem:primitive-consequences} hold, giving
\eqref{eq:mean-rank-bounds} and \eqref{eq:primitive-rank-conclusions}.
Since \(\sigma_\varepsilon^2>0\) by the nonsingularity of \(\Omega\) in
\Cref{assm:gaussian}(i) and \(\sigma_{\varepsilon v}\ne0\) by
\eqref{eq:strength-endogeneity}, \eqref{eq:closed-form-criterion},
\eqref{eq:population-bounds}, and \eqref{eq:primitive-rank-conclusions} give
\begin{equation}
    C^{-1}\left(\mathcal A_c+\frac{R_{d,c}}n\right)
    \le S_c
    \le C\left(\mathcal A_c+\frac{R_{d,c}}n\right),
    \qquad
    \frac{R_{d,c}}{\bar d_c^2}\to1.
    \label{eq:S-order-bounds}
\end{equation}
Moreover, \Cref{assm:primitive}(v), the growth condition
\eqref{eq:strong-core-growth}, and \Cref{lem:primitive-consequences}
give
\begin{equation*}
    \mathcal A_c\to0,
    \qquad
    \frac{\bar d_c}{\sqrt n}\to0,
    \qquad
    \bar d_c\ge C^{-1}k_c\to\infty.
\end{equation*}
Combining these bounds with \eqref{eq:S-order-bounds} yields
\begin{equation*}
    S_c
    \le C\left(\mathcal A_c+\frac{\bar d_c^2}{n}\right)
    \to0,
    \qquad
    nS_c
    \ge C^{-1}\bar d_c^2
    \to\infty.
\end{equation*}
As \([J]\) is fixed, these bounds hold uniformly over candidates, proving
\eqref{eq:score-rates}.

It remains to prove \eqref{eq:score-ratio-bounds}.  By
\eqref{eq:rank-moments}, \(R_{d,c}\ge\bar d_c^2\), while the lower bound in
\eqref{eq:S-order-bounds} gives \(\mathcal A_c\le CS_c\) and
\(R_{d,c}/n\le CS_c\), so \(\bar d_c^2/n\le CS_c\).  The first bound
follows from
\[
    \sqrt{\frac{\mathcal A_c}{n}}
    \le\sqrt{\frac{CS_c}{n}},
    \qquad
    \frac{\bar d_c}{n}
    =\frac{1}{\sqrt n}\sqrt{\frac{\bar d_c^2}{n}}
    \le\sqrt{\frac{CS_c}{n}}.
\]
For the second, \(\bar d_c\ge1\) for large \(n\) by
\eqref{eq:mean-rank-bounds}, and \(2ab\le a^2+b^2\) with
\(a=\sqrt{\mathcal A_c}\) and \(b=\bar d_c/\sqrt n\) gives
\(2\bar d_c\sqrt{\mathcal A_c/n}\le\mathcal A_c+\bar d_c^2/n\le CS_c\),
so \(\sqrt{\mathcal A_c/n}\le CS_c/\bar d_c\), while
\(\bar d_c/n=(\bar d_c^2/n)/\bar d_c\le CS_c/\bar d_c\).  The third
bound follows from
\(\sqrt{\bar d_c}/n=(\bar d_c^2/n)/\bar d_c^{3/2}
\le CS_c/\bar d_c^{3/2}\).
\end{proof}
\section{Simulation Details and Complete Results}
\label{sec:sim-diagnostics}

\subsection{Implementation Details}

We use the residualized movie sales and weather data for the 1,671
opening-weekend observations in \citet{gilchrist-glassberg-2016}.
The original date controls have already been partialled out.  Four
linearly dependent weather indicators are removed, leaving 34
temperature instruments and 14 rain, snow, and precipitation
instruments.  The Expanded dictionary adds all \(34\times14=476\)
interactions between these two groups.  Each instrument is centered
and scaled to have unit empirical root mean square; this normalization
is repeated within the fixed subsample when \(n=800\).

For a dictionary with \(p_c\) columns, we use the homoskedastic
BCCH plug-in penalty
\[
  \lambda_c=\frac{1.1\,\varkappa_c}{\sqrt n}
    \Phi^{-1}\!\left(1-\frac{\gamma_c}{2p_c}\right),
  \qquad \gamma_c=\frac{0.1}{\log(\max\{p_c,n\})},
\]
where \(\Phi\) is the standard normal distribution function and
the first-stage error standard deviation is known to be one.
Each dictionary uses the multipliers \(\varkappa_c=2^a\) for
\(a=-4,-3.5,\ldots,2\).
The four identification strengths in \Cref{sec:application-simulation} correspond to
\(H=0.0803,0.1606,0.3156,0.6885\).  At \(n=800\), we restrict the
full-sample first stage to one fixed subsample and rescale it to preserve
\(H\), without re-estimating its direction.  The corresponding expected
\(F\) statistics are 2.345, 3.687, 6.276, and 12.509.

We calibrate the error correlation using cumulative sales in weeks 2--6.
The estimated error correlation in the data is
\(\widehat\rho=\widehat\sigma_{\varepsilon v}/
(\widehat\sigma_\varepsilon\widehat\sigma_v)=-0.212\), with the moments
in \eqref{eq:nuisance-estimators} computed from pilot residuals and
\(\widehat\sigma_v\) the residual standard deviation from the
least-squares regression on the Original instruments.  To account for
attenuation from estimating the pilot coefficient, we choose
\(\rho=-0.291\) so that the mean of \(\widehat\rho\) over 2,000
independent replications of the \(n=1{,}671\), \(F=3.8\) Gaussian
design equals \(-0.212\).  For each sample size and identification
strength, the same 1,000 evaluation draws are used at all seven
correlations.

We select the pilot on the Original dictionary over a finer grid of
29 penalty multipliers from \(2^{-12}\) to \(2^2\), choosing the fit
with the largest absolute IV denominator \(\widehat h_c\) in
\eqref{eq:sample-denominator} among those with finite
\(|\widehat h_c|\ge n^{-1/2}\).  The pilot estimate is used to construct estimators of the error moments and to replace a zero estimate for the first-stage when BCCH selects no instruments.  This replacement occurs in 27.1\% of Gaussian replications at \(n=800\) and \(F=3.8\), and in fewer than 1\% in the other Gaussian designs.

Cross-validation uses one prespecified ten-fold assignment, keeping
observations from the same opening weekend in the same fold.
Each training fit uses the full-sample penalty and instrument
normalization, and candidates are compared by the average prediction
error across held-out observations.  The selected candidate is then
refit on the full sample.  Effective dimension is computed as the rank
of the numerical equicorrelation set in \eqref{eq:realized-rank}.
We include all instruments with nonzero coefficients in this set to
avoid omitting an active instrument because of numerical error.

For each observation in the Laplace designs, we draw mutually
independent \(W_i\sim\mathrm{Exp}(1)\) and \(a_i,b_i\sim N(0,1)\), and set
\[
v_i=\sqrt{W_i}\,a_i,\qquad
\varepsilon_i=\sqrt{W_i}\{\rho a_i+\sqrt{1-\rho^2}\,b_i\}.
\]
The draws are independent across observations.  Both errors have
variance-one Laplace marginals and correlation \(\rho\).
Repeating the calibration under these errors
gives \(\rho=-0.290\).  The simulation details are otherwise left completely unchanged. The same draws of \((W_i,a_i,b_i)\) are used
across correlations at each sample size and identification strength.

The oracle candidate minimizes the infeasible criterion \(S_c\) in
\eqref{eq:closed-form-criterion}, with its population moments estimated
from a separate set of 1,000 independent replications.  The Known
moments rule instead evaluates \(\widehat S_c\) at
\(\sigma_\varepsilon^2=1\) and \(\sigma_{\varepsilon v}=\rho\).
Both benchmarks are infeasible.

\subsection{Complete Results}

The following tables report relative risk and mean effective dimension
for all Gaussian and Laplace designs.  Risk is the mean of
\(n(\widehat\beta-\beta)^2\) over 1,000 replications.
Calibrated denotes the correlation obtained
above.  The oracle and Known moments columns use the infeasible
benchmarks described in the preceding subsection.

\begin{table}[!htbp]
\centering
\begin{threeparttable}
\caption{Relative risk by simulation design under Gaussian errors.}
\label{tab:gs-cells-risk-gaussian}
\small
\begin{tabular}{l S[table-format=1.3] S[table-format=1.3] S[table-format=1.3] S[table-format=1.3] S[table-format=1.3] S[table-format=1.3]}
\toprule
 & \multicolumn{3}{c}{$n=800$} & \multicolumn{3}{c}{$n=1{,}671$} \\
\cmidrule(lr){2-4}\cmidrule(lr){5-7}
Correlation & \multicolumn{1}{c}{Criterion} & \multicolumn{1}{c}{\shortstack{Cross-\\validation}} & \multicolumn{1}{c}{BCCH} & \multicolumn{1}{c}{Criterion} & \multicolumn{1}{c}{\shortstack{Cross-\\validation}} & \multicolumn{1}{c}{BCCH} \\
\midrule
\multicolumn{7}{l}{\textit{$F=3.8$}} \\
Calibrated (0.29) & 1.266 & 1.000 & 1.532 & 1.128 & 1.000 & 2.045 \\
0.40 & 1.226 & 1.000 & 1.273 & 1.047 & 1.000 & 1.579 \\
0.50 & 1.105 & 1.000 & 1.111 & 1.000 & 1.009 & 1.275 \\
0.60 & 1.000 & 1.008 & 1.007 & 1.000 & 1.097 & 1.133 \\
0.70 & 1.007 & 1.086 & 1.000 & 1.000 & 1.212 & 1.048 \\
0.80 & 1.000 & 1.174 & 1.017 & 1.019 & 1.352 & 1.000 \\
0.90 & 1.000 & 1.244 & 1.030 & 1.086 & 1.551 & 1.000 \\
\addlinespace
\multicolumn{7}{l}{\textit{$F=6.6$}} \\
Calibrated (0.29) & 1.117 & 1.000 & 1.833 & 1.044 & 1.000 & 1.684 \\
0.40 & 1.010 & 1.000 & 1.403 & 1.050 & 1.000 & 1.378 \\
0.50 & 1.000 & 1.043 & 1.169 & 1.000 & 1.000 & 1.152 \\
0.60 & 1.000 & 1.150 & 1.059 & 1.000 & 1.098 & 1.072 \\
0.70 & 1.026 & 1.284 & 1.000 & 1.000 & 1.231 & 1.039 \\
0.80 & 1.084 & 1.479 & 1.000 & 1.000 & 1.363 & 1.013 \\
0.90 & 1.149 & 1.665 & 1.000 & 1.025 & 1.501 & 1.000 \\
\addlinespace
\multicolumn{7}{l}{\textit{$F=12$}} \\
Calibrated (0.29) & 1.050 & 1.000 & 1.676 & 1.004 & 1.000 & 1.445 \\
0.40 & 1.040 & 1.000 & 1.371 & 1.035 & 1.000 & 1.300 \\
0.50 & 1.000 & 1.025 & 1.175 & 1.055 & 1.000 & 1.173 \\
0.60 & 1.000 & 1.118 & 1.089 & 1.019 & 1.000 & 1.062 \\
0.70 & 1.000 & 1.227 & 1.034 & 1.000 & 1.039 & 1.006 \\
0.80 & 1.001 & 1.348 & 1.000 & 1.000 & 1.130 & 1.005 \\
0.90 & 1.053 & 1.505 & 1.000 & 1.000 & 1.214 & 1.000 \\
\addlinespace
\multicolumn{7}{l}{\textit{$F=25$}} \\
Calibrated (0.29) & 1.005 & 1.000 & 1.525 & 1.000 & 1.000 & 1.379 \\
0.40 & 1.027 & 1.000 & 1.379 & 1.000 & 1.001 & 1.289 \\
0.50 & 1.055 & 1.000 & 1.249 & 1.013 & 1.000 & 1.201 \\
0.60 & 1.038 & 1.000 & 1.132 & 1.035 & 1.000 & 1.116 \\
0.70 & 1.000 & 1.046 & 1.077 & 1.038 & 1.000 & 1.035 \\
0.80 & 1.000 & 1.146 & 1.080 & 1.055 & 1.040 & 1.000 \\
0.90 & 1.000 & 1.241 & 1.077 & 1.072 & 1.117 & 1.000 \\
\midrule
Average & 1.045 & 1.135 & 1.189 & 1.026 & 1.106 & 1.194 \\
Maximum & 1.266 & 1.665 & 1.833 & 1.128 & 1.551 & 2.045 \\
\bottomrule
\end{tabular}
 \begin{tabnotes}
Risk is relative to the lowest risk among the three rules in each
design.  Summary rows report the average and maximum
of the 28 ratios in each column.
\end{tabnotes}
\end{threeparttable}
\end{table}

\begin{table}[!htbp]
\centering
\begin{threeparttable}
\caption{Relative risk by simulation design under Laplace errors.}
\label{tab:gs-cells-risk-laplace}
\small
\begin{tabular}{l S[table-format=1.3] S[table-format=1.3] S[table-format=1.3] S[table-format=1.3] S[table-format=1.3] S[table-format=1.3]}
\toprule
 & \multicolumn{3}{c}{$n=800$} & \multicolumn{3}{c}{$n=1{,}671$} \\
\cmidrule(lr){2-4}\cmidrule(lr){5-7}
Correlation & \multicolumn{1}{c}{Criterion} & \multicolumn{1}{c}{\shortstack{Cross-\\validation}} & \multicolumn{1}{c}{BCCH} & \multicolumn{1}{c}{Criterion} & \multicolumn{1}{c}{\shortstack{Cross-\\validation}} & \multicolumn{1}{c}{BCCH} \\
\midrule
\multicolumn{7}{l}{\textit{$F=3.8$}} \\
Calibrated (0.29) & 1.261 & 1.000 & 1.586 & 1.134 & 1.000 & 1.683 \\
0.40 & 1.204 & 1.000 & 1.335 & 1.046 & 1.000 & 1.310 \\
0.50 & 1.065 & 1.000 & 1.172 & 1.000 & 1.038 & 1.107 \\
0.60 & 1.000 & 1.003 & 1.058 & 1.000 & 1.137 & 1.017 \\
0.70 & 1.000 & 1.093 & 1.062 & 1.024 & 1.297 & 1.000 \\
0.80 & 1.000 & 1.156 & 1.052 & 1.087 & 1.468 & 1.000 \\
0.90 & 1.000 & 1.218 & 1.052 & 1.134 & 1.627 & 1.000 \\
\addlinespace
\multicolumn{7}{l}{\textit{$F=6.6$}} \\
Calibrated (0.29) & 1.170 & 1.000 & 2.010 & 1.058 & 1.000 & 1.694 \\
0.40 & 1.057 & 1.000 & 1.524 & 1.058 & 1.000 & 1.415 \\
0.50 & 1.000 & 1.010 & 1.219 & 1.023 & 1.000 & 1.207 \\
0.60 & 1.000 & 1.111 & 1.090 & 1.000 & 1.068 & 1.115 \\
0.70 & 1.005 & 1.221 & 1.000 & 1.000 & 1.179 & 1.081 \\
0.80 & 1.088 & 1.427 & 1.000 & 1.000 & 1.288 & 1.051 \\
0.90 & 1.169 & 1.631 & 1.000 & 1.000 & 1.385 & 1.018 \\
\addlinespace
\multicolumn{7}{l}{\textit{$F=12$}} \\
Calibrated (0.29) & 1.071 & 1.000 & 1.653 & 1.003 & 1.000 & 1.423 \\
0.40 & 1.039 & 1.000 & 1.365 & 1.011 & 1.000 & 1.268 \\
0.50 & 1.000 & 1.019 & 1.177 & 1.030 & 1.000 & 1.143 \\
0.60 & 1.000 & 1.117 & 1.108 & 1.000 & 1.033 & 1.069 \\
0.70 & 1.000 & 1.238 & 1.071 & 1.000 & 1.098 & 1.040 \\
0.80 & 1.000 & 1.343 & 1.030 & 1.000 & 1.166 & 1.020 \\
0.90 & 1.017 & 1.453 & 1.000 & 1.000 & 1.245 & 1.018 \\
\addlinespace
\multicolumn{7}{l}{\textit{$F=25$}} \\
Calibrated (0.29) & 1.004 & 1.000 & 1.462 & 1.005 & 1.000 & 1.435 \\
0.40 & 1.033 & 1.000 & 1.308 & 1.013 & 1.000 & 1.355 \\
0.50 & 1.042 & 1.000 & 1.175 & 1.021 & 1.000 & 1.269 \\
0.60 & 1.008 & 1.000 & 1.058 & 1.036 & 1.000 & 1.180 \\
0.70 & 1.000 & 1.080 & 1.038 & 1.049 & 1.000 & 1.091 \\
0.80 & 1.000 & 1.212 & 1.068 & 1.031 & 1.000 & 1.007 \\
0.90 & 1.000 & 1.296 & 1.059 & 1.040 & 1.079 & 1.000 \\
\midrule
Average & 1.044 & 1.130 & 1.205 & 1.029 & 1.111 & 1.179 \\
Maximum & 1.261 & 1.631 & 2.010 & 1.134 & 1.627 & 1.694 \\
\bottomrule
\end{tabular}
 \begin{tabnotes}
Risk is relative to the lowest risk among the three rules in each
design.  The calibrated correlation is recomputed
under Laplace errors.
\end{tabnotes}
\end{threeparttable}
\end{table}

\begin{table}[!htbp]
\centering
\begin{threeparttable}
\caption{Mean effective dimension under Gaussian errors.}
\label{tab:gs-cells-dimension-gaussian}
\small
\begin{tabular}{l S[table-format=3.1] S[table-format=3.1] S[table-format=3.1] S[table-format=3.1] S[table-format=3.1] S[table-format=3.1]}
\toprule
 & \multicolumn{3}{c}{$n=800$} & \multicolumn{3}{c}{$n=1{,}671$} \\
\cmidrule(lr){2-4}\cmidrule(lr){5-7}
Correlation & \multicolumn{1}{c}{Criterion} & \multicolumn{1}{c}{\shortstack{Cross-\\validation}} & \multicolumn{1}{c}{BCCH} & \multicolumn{1}{c}{Criterion} & \multicolumn{1}{c}{\shortstack{Cross-\\validation}} & \multicolumn{1}{c}{BCCH} \\
\midrule
\multicolumn{7}{l}{\textit{$F=3.8$}} \\
Calibrated (0.29) & 73.1 & 19.7 & 14.2 & 39.8 & 30.6 & 3.4 \\
0.40 & 43.7 & 19.7 & 14.2 & 25.7 & 30.6 & 3.4 \\
0.50 & 27.8 & 19.7 & 14.2 & 19.7 & 30.6 & 3.4 \\
0.60 & 18.9 & 19.7 & 14.2 & 16.1 & 30.6 & 3.4 \\
0.70 & 15.6 & 19.7 & 14.2 & 13.5 & 30.6 & 3.4 \\
0.80 & 13.0 & 19.7 & 14.2 & 11.9 & 30.6 & 3.4 \\
0.90 & 11.3 & 19.7 & 14.2 & 10.5 & 30.6 & 3.4 \\
\addlinespace
\multicolumn{7}{l}{\textit{$F=6.6$}} \\
Calibrated (0.29) & 42.7 & 30.3 & 3.4 & 35.5 & 38.2 & 6.0 \\
0.40 & 27.2 & 30.3 & 3.4 & 27.6 & 38.2 & 6.0 \\
0.50 & 21.2 & 30.3 & 3.4 & 21.5 & 38.2 & 6.0 \\
0.60 & 17.4 & 30.3 & 3.4 & 17.3 & 38.2 & 6.0 \\
0.70 & 14.5 & 30.3 & 3.4 & 14.7 & 38.2 & 6.0 \\
0.80 & 12.5 & 30.3 & 3.4 & 13.0 & 38.2 & 6.0 \\
0.90 & 11.4 & 30.3 & 3.4 & 11.8 & 38.2 & 6.0 \\
\addlinespace
\multicolumn{7}{l}{\textit{$F=12$}} \\
Calibrated (0.29) & 35.4 & 37.4 & 5.9 & 38.2 & 39.8 & 9.9 \\
0.40 & 27.8 & 37.4 & 5.9 & 34.4 & 39.8 & 9.9 \\
0.50 & 21.8 & 37.4 & 5.9 & 27.5 & 39.8 & 9.9 \\
0.60 & 18.0 & 37.4 & 5.9 & 21.1 & 39.8 & 9.9 \\
0.70 & 15.4 & 37.4 & 5.9 & 17.2 & 39.8 & 9.9 \\
0.80 & 13.8 & 37.4 & 5.9 & 15.0 & 39.8 & 9.9 \\
0.90 & 12.7 & 37.4 & 5.9 & 13.8 & 39.8 & 9.9 \\
\addlinespace
\multicolumn{7}{l}{\textit{$F=25$}} \\
Calibrated (0.29) & 38.3 & 39.7 & 10.2 & 40.4 & 41.2 & 12.6 \\
0.40 & 35.0 & 39.7 & 10.2 & 39.3 & 41.2 & 12.6 \\
0.50 & 28.3 & 39.7 & 10.2 & 37.0 & 41.2 & 12.6 \\
0.60 & 22.0 & 39.7 & 10.2 & 32.2 & 41.2 & 12.6 \\
0.70 & 18.1 & 39.7 & 10.2 & 26.0 & 41.2 & 12.6 \\
0.80 & 16.1 & 39.7 & 10.2 & 20.4 & 41.2 & 12.6 \\
0.90 & 14.9 & 39.7 & 10.2 & 17.6 & 41.2 & 12.6 \\
\midrule
Average & 23.9 & 31.8 & 8.4 & 23.5 & 37.4 & 8.0 \\
Maximum & 73.1 & 39.7 & 14.2 & 40.4 & 41.2 & 12.6 \\
\bottomrule
\end{tabular}
 \begin{tabnotes}
Mean effective dimension over 1,000 replications.  Cross-validation and the
BCCH penalty use the first stage alone, so their entries do not vary
with the correlation.  When BCCH selects no instruments, the entries
include the effective dimension of the pilot fit used instead.
\end{tabnotes}
\end{threeparttable}
\end{table}

\begin{table}[!htbp]
\centering
\begin{threeparttable}
\caption{Mean effective dimension under Laplace errors.}
\label{tab:gs-cells-dimension-laplace}
\small
\begin{tabular}{l S[table-format=3.1] S[table-format=3.1] S[table-format=3.1] S[table-format=3.1] S[table-format=3.1] S[table-format=3.1]}
\toprule
 & \multicolumn{3}{c}{$n=800$} & \multicolumn{3}{c}{$n=1{,}671$} \\
\cmidrule(lr){2-4}\cmidrule(lr){5-7}
Correlation & \multicolumn{1}{c}{Criterion} & \multicolumn{1}{c}{\shortstack{Cross-\\validation}} & \multicolumn{1}{c}{BCCH} & \multicolumn{1}{c}{Criterion} & \multicolumn{1}{c}{\shortstack{Cross-\\validation}} & \multicolumn{1}{c}{BCCH} \\
\midrule
\multicolumn{7}{l}{\textit{$F=3.8$}} \\
Calibrated (0.29) & 75.9 & 19.9 & 14.1 & 41.6 & 31.2 & 3.5 \\
0.40 & 45.7 & 19.9 & 14.1 & 26.7 & 31.2 & 3.5 \\
0.50 & 25.9 & 19.9 & 14.1 & 20.7 & 31.2 & 3.5 \\
0.60 & 19.0 & 19.9 & 14.1 & 16.2 & 31.2 & 3.5 \\
0.70 & 14.8 & 19.9 & 14.1 & 13.3 & 31.2 & 3.5 \\
0.80 & 12.7 & 19.9 & 14.1 & 11.7 & 31.2 & 3.5 \\
0.90 & 11.4 & 19.9 & 14.1 & 10.6 & 31.2 & 3.5 \\
\addlinespace
\multicolumn{7}{l}{\textit{$F=6.6$}} \\
Calibrated (0.29) & 44.9 & 29.7 & 3.3 & 35.6 & 38.0 & 6.1 \\
0.40 & 27.5 & 29.7 & 3.3 & 28.0 & 38.0 & 6.1 \\
0.50 & 21.2 & 29.7 & 3.3 & 21.4 & 38.0 & 6.1 \\
0.60 & 16.9 & 29.7 & 3.3 & 17.0 & 38.0 & 6.1 \\
0.70 & 14.4 & 29.7 & 3.3 & 14.5 & 38.0 & 6.1 \\
0.80 & 12.4 & 29.7 & 3.3 & 12.8 & 38.0 & 6.1 \\
0.90 & 11.1 & 29.7 & 3.3 & 11.7 & 38.0 & 6.1 \\
\addlinespace
\multicolumn{7}{l}{\textit{$F=12$}} \\
Calibrated (0.29) & 37.6 & 37.5 & 5.9 & 38.2 & 39.8 & 9.9 \\
0.40 & 28.1 & 37.5 & 5.9 & 34.5 & 39.8 & 9.9 \\
0.50 & 22.4 & 37.5 & 5.9 & 27.6 & 39.8 & 9.9 \\
0.60 & 18.2 & 37.5 & 5.9 & 21.1 & 39.8 & 9.9 \\
0.70 & 15.5 & 37.5 & 5.9 & 17.1 & 39.8 & 9.9 \\
0.80 & 13.7 & 37.5 & 5.9 & 15.2 & 39.8 & 9.9 \\
0.90 & 12.6 & 37.5 & 5.9 & 14.0 & 39.8 & 9.9 \\
\addlinespace
\multicolumn{7}{l}{\textit{$F=25$}} \\
Calibrated (0.29) & 38.6 & 39.7 & 10.2 & 40.3 & 41.3 & 12.6 \\
0.40 & 34.7 & 39.7 & 10.2 & 39.1 & 41.3 & 12.6 \\
0.50 & 29.2 & 39.7 & 10.2 & 37.0 & 41.3 & 12.6 \\
0.60 & 22.3 & 39.7 & 10.2 & 32.7 & 41.3 & 12.6 \\
0.70 & 18.4 & 39.7 & 10.2 & 26.4 & 41.3 & 12.6 \\
0.80 & 16.0 & 39.7 & 10.2 & 20.7 & 41.3 & 12.6 \\
0.90 & 15.1 & 39.7 & 10.2 & 17.6 & 41.3 & 12.6 \\
\midrule
Average & 24.1 & 31.7 & 8.4 & 23.7 & 37.6 & 8.0 \\
Maximum & 75.9 & 39.7 & 14.1 & 41.6 & 41.3 & 12.6 \\
\bottomrule
\end{tabular}
 \begin{tabnotes}
Mean effective dimension over 1,000 replications.  When BCCH selects no
instruments, the entries include the effective dimension of the pilot
fit used instead.
\end{tabnotes}
\end{threeparttable}
\end{table}

\begin{table}[!htbp]
\centering
\begin{threeparttable}
\caption{Risk relative to the oracle under Gaussian errors.}
\label{tab:gs-cells-oracle-gaussian}
\small
\setlength{\tabcolsep}{4.5pt}
\begin{tabular}{l S[table-format=1.3] S[table-format=1.3] S[table-format=1.3] S[table-format=1.3] S[table-format=1.3] S[table-format=1.3] S[table-format=1.3] S[table-format=1.3]}
\toprule
 & \multicolumn{4}{c}{$n=800$} & \multicolumn{4}{c}{$n=1{,}671$} \\
\cmidrule(lr){2-5}\cmidrule(lr){6-9}
Correlation & \multicolumn{1}{c}{Criterion} & \multicolumn{1}{c}{\shortstack{Known\\moments}} & \multicolumn{1}{c}{\shortstack{Cross-\\validation}} & \multicolumn{1}{c}{BCCH} & \multicolumn{1}{c}{Criterion} & \multicolumn{1}{c}{\shortstack{Known\\moments}} & \multicolumn{1}{c}{\shortstack{Cross-\\validation}} & \multicolumn{1}{c}{BCCH} \\
\midrule
\multicolumn{9}{l}{\textit{$F=3.8$}} \\
Calibrated (0.29) & 1.311 & 1.005 & 1.036 & 1.587 & 1.108 & 1.012 & 0.983 & 2.010 \\
0.40 & 1.440 & 1.126 & 1.175 & 1.496 & 1.131 & 1.099 & 1.080 & 1.706 \\
0.50 & 1.418 & 1.182 & 1.283 & 1.425 & 1.218 & 1.153 & 1.229 & 1.553 \\
0.60 & 1.399 & 1.221 & 1.410 & 1.408 & 1.274 & 1.208 & 1.397 & 1.443 \\
0.70 & 1.403 & 1.269 & 1.513 & 1.394 & 1.278 & 1.223 & 1.549 & 1.340 \\
0.80 & 1.359 & 1.279 & 1.595 & 1.382 & 1.268 & 1.247 & 1.683 & 1.245 \\
0.90 & 1.333 & 1.289 & 1.658 & 1.373 & 1.259 & 1.243 & 1.797 & 1.159 \\
\addlinespace
\multicolumn{9}{l}{\textit{$F=6.6$}} \\
Calibrated (0.29) & 1.117 & 1.002 & 1.000 & 1.833 & 1.048 & 1.036 & 1.005 & 1.691 \\
0.40 & 1.147 & 1.087 & 1.135 & 1.592 & 1.074 & 1.052 & 1.022 & 1.409 \\
0.50 & 1.218 & 1.117 & 1.270 & 1.424 & 1.151 & 1.096 & 1.150 & 1.325 \\
0.60 & 1.247 & 1.166 & 1.434 & 1.321 & 1.159 & 1.097 & 1.272 & 1.243 \\
0.70 & 1.262 & 1.196 & 1.580 & 1.230 & 1.165 & 1.129 & 1.434 & 1.210 \\
0.80 & 1.250 & 1.213 & 1.705 & 1.153 & 1.164 & 1.148 & 1.587 & 1.179 \\
0.90 & 1.250 & 1.221 & 1.811 & 1.088 & 1.215 & 1.206 & 1.779 & 1.185 \\
\addlinespace
\multicolumn{9}{l}{\textit{$F=12$}} \\
Calibrated (0.29) & 1.026 & 1.017 & 0.977 & 1.637 & 1.004 & 1.010 & 1.000 & 1.446 \\
0.40 & 1.071 & 1.041 & 1.030 & 1.412 & 1.035 & 1.032 & 1.000 & 1.300 \\
0.50 & 1.114 & 1.058 & 1.141 & 1.309 & 1.070 & 1.054 & 1.014 & 1.190 \\
0.60 & 1.163 & 1.121 & 1.300 & 1.267 & 1.124 & 1.111 & 1.104 & 1.173 \\
0.70 & 1.184 & 1.147 & 1.453 & 1.224 & 1.158 & 1.138 & 1.203 & 1.165 \\
0.80 & 1.184 & 1.183 & 1.594 & 1.182 & 1.153 & 1.141 & 1.302 & 1.159 \\
0.90 & 1.203 & 1.190 & 1.719 & 1.142 & 1.135 & 1.126 & 1.377 & 1.135 \\
\addlinespace
\multicolumn{9}{l}{\textit{$F=25$}} \\
Calibrated (0.29) & 1.005 & 1.013 & 1.000 & 1.525 & 1.000 & 0.999 & 1.000 & 1.379 \\
0.40 & 1.027 & 1.041 & 1.000 & 1.379 & 0.999 & 0.999 & 1.000 & 1.288 \\
0.50 & 1.053 & 1.046 & 0.998 & 1.247 & 1.017 & 1.020 & 1.005 & 1.207 \\
0.60 & 1.112 & 1.081 & 1.071 & 1.213 & 1.043 & 1.054 & 1.008 & 1.125 \\
0.70 & 1.116 & 1.079 & 1.167 & 1.202 & 1.075 & 1.080 & 1.036 & 1.072 \\
0.80 & 1.097 & 1.079 & 1.257 & 1.184 & 1.115 & 1.098 & 1.099 & 1.057 \\
0.90 & 1.112 & 1.104 & 1.380 & 1.198 & 1.128 & 1.122 & 1.175 & 1.052 \\
\midrule
Average & 1.201 & 1.128 & 1.310 & 1.351 & 1.127 & 1.105 & 1.225 & 1.301 \\
Maximum & 1.440 & 1.289 & 1.811 & 1.833 & 1.278 & 1.247 & 1.797 & 2.010 \\
\bottomrule
\end{tabular}
 \begin{tabnotes}
Risk is relative to the oracle candidate in each design.
The oracle is chosen using 1,000 independent
replications to estimate \(S_c\).  Known moments uses the population
error moments in the feasible criterion.
\end{tabnotes}
\end{threeparttable}
\end{table}

\begin{table}[!htbp]
\centering
\begin{threeparttable}
\caption{Risk relative to the oracle under Laplace errors.}
\label{tab:gs-cells-oracle-laplace}
\small
\setlength{\tabcolsep}{4.5pt}
\begin{tabular}{l S[table-format=1.3] S[table-format=1.3] S[table-format=1.3] S[table-format=1.3] S[table-format=1.3] S[table-format=1.3] S[table-format=1.3] S[table-format=1.3]}
\toprule
 & \multicolumn{4}{c}{$n=800$} & \multicolumn{4}{c}{$n=1{,}671$} \\
\cmidrule(lr){2-5}\cmidrule(lr){6-9}
Correlation & \multicolumn{1}{c}{Criterion} & \multicolumn{1}{c}{\shortstack{Known\\moments}} & \multicolumn{1}{c}{\shortstack{Cross-\\validation}} & \multicolumn{1}{c}{BCCH} & \multicolumn{1}{c}{Criterion} & \multicolumn{1}{c}{\shortstack{Known\\moments}} & \multicolumn{1}{c}{\shortstack{Cross-\\validation}} & \multicolumn{1}{c}{BCCH} \\
\midrule
\multicolumn{9}{l}{\textit{$F=3.8$}} \\
Calibrated (0.29) & 1.345 & 1.014 & 1.067 & 1.692 & 1.154 & 1.032 & 1.018 & 1.713 \\
0.40 & 1.432 & 1.138 & 1.190 & 1.588 & 1.179 & 1.087 & 1.127 & 1.476 \\
0.50 & 1.369 & 1.187 & 1.285 & 1.506 & 1.168 & 1.081 & 1.212 & 1.293 \\
0.60 & 1.396 & 1.246 & 1.400 & 1.477 & 1.202 & 1.133 & 1.367 & 1.223 \\
0.70 & 1.366 & 1.276 & 1.493 & 1.451 & 1.190 & 1.170 & 1.508 & 1.162 \\
0.80 & 1.357 & 1.285 & 1.568 & 1.427 & 1.209 & 1.186 & 1.633 & 1.112 \\
0.90 & 1.336 & 1.288 & 1.627 & 1.405 & 1.215 & 1.202 & 1.743 & 1.071 \\
\addlinespace
\multicolumn{9}{l}{\textit{$F=6.6$}} \\
Calibrated (0.29) & 1.083 & 0.974 & 0.926 & 1.860 & 1.058 & 1.036 & 1.001 & 1.695 \\
0.40 & 1.149 & 1.082 & 1.087 & 1.657 & 1.074 & 1.045 & 1.016 & 1.437 \\
0.50 & 1.227 & 1.133 & 1.239 & 1.496 & 1.155 & 1.105 & 1.129 & 1.364 \\
0.60 & 1.260 & 1.178 & 1.400 & 1.373 & 1.143 & 1.098 & 1.220 & 1.274 \\
0.70 & 1.271 & 1.229 & 1.545 & 1.265 & 1.153 & 1.126 & 1.360 & 1.246 \\
0.80 & 1.274 & 1.246 & 1.672 & 1.172 & 1.160 & 1.142 & 1.495 & 1.220 \\
0.90 & 1.278 & 1.260 & 1.784 & 1.093 & 1.192 & 1.187 & 1.651 & 1.213 \\
\addlinespace
\multicolumn{9}{l}{\textit{$F=12$}} \\
Calibrated (0.29) & 1.052 & 1.004 & 0.983 & 1.624 & 1.003 & 1.005 & 1.000 & 1.423 \\
0.40 & 1.088 & 1.059 & 1.047 & 1.429 & 1.012 & 1.023 & 1.000 & 1.268 \\
0.50 & 1.142 & 1.078 & 1.163 & 1.343 & 1.098 & 1.073 & 1.066 & 1.218 \\
0.60 & 1.176 & 1.115 & 1.314 & 1.303 & 1.124 & 1.106 & 1.161 & 1.202 \\
0.70 & 1.178 & 1.161 & 1.458 & 1.262 & 1.146 & 1.136 & 1.258 & 1.192 \\
0.80 & 1.186 & 1.176 & 1.593 & 1.221 & 1.157 & 1.141 & 1.350 & 1.181 \\
0.90 & 1.200 & 1.206 & 1.715 & 1.180 & 1.151 & 1.137 & 1.434 & 1.172 \\
\addlinespace
\multicolumn{9}{l}{\textit{$F=25$}} \\
Calibrated (0.29) & 1.004 & 1.005 & 1.000 & 1.462 & 1.005 & 1.006 & 1.000 & 1.435 \\
0.40 & 1.033 & 1.033 & 1.000 & 1.308 & 1.013 & 1.016 & 1.000 & 1.355 \\
0.50 & 1.090 & 1.091 & 1.046 & 1.229 & 1.021 & 1.022 & 1.000 & 1.270 \\
0.60 & 1.142 & 1.099 & 1.133 & 1.199 & 1.017 & 1.024 & 0.982 & 1.158 \\
0.70 & 1.141 & 1.081 & 1.233 & 1.184 & 1.059 & 1.061 & 1.010 & 1.102 \\
0.80 & 1.120 & 1.102 & 1.358 & 1.197 & 1.096 & 1.084 & 1.063 & 1.070 \\
0.90 & 1.131 & 1.124 & 1.466 & 1.198 & 1.110 & 1.108 & 1.153 & 1.068 \\
\midrule
Average & 1.208 & 1.138 & 1.314 & 1.379 & 1.117 & 1.092 & 1.213 & 1.272 \\
Maximum & 1.432 & 1.288 & 1.784 & 1.860 & 1.215 & 1.202 & 1.743 & 1.713 \\
\bottomrule
\end{tabular}
 \begin{tabnotes}
Risk is relative to the oracle candidate in each design.
Known moments uses the population error moments in
the feasible criterion.
\end{tabnotes}
\end{threeparttable}
\end{table}

\end{document}